\def\checkmark{\tikz\fill[scale=0.4](0,.35) -- (.25,0) -- (1,.7) -- (.25,.15) -- cycle;} 
\newtheorem{theorem}{Theorem}
\newtheorem{lemma}{Lemma}
\newtheorem{corollary}{Corollary}[theorem]
\newtheorem{definition}{Definition}
      \theoremstyle{plain}
      \newtheorem{assumption}{Assumption}
\theoremstyle{definition}
\newtheorem{example}{Example}
\DeclareMathOperator*{\argmin}{argmin}
\definecolor{bgrd}{rgb}{1,1,1}
\definecolor{grey}{rgb}{0.9,0.9,0.6}
\definecolor{gray}{rgb}{0.5,0.5,0.5}
\definecolor{dkr}{rgb}{0.6,0.2,0.2}
\definecolor{dkg}{rgb}{0,0.5,0}
\definecolor{dkb}{rgb}{0.0,0.1,0.7}
\definecolor{light-gray}{gray}{0.85}
\begin{document}
%
\title{Non-Stochastic Information Theory}
%
%
%
\author{%
  \IEEEauthorblockN{Anshuka Rangi }
  \and
 \IEEEauthorblockN{Massimo Franceschetti}\thanks{The authors are with the Department of Electrical and Computer Engineering, University of California San Diego, email: arangi@ucsd.edu, massimo@ece.ucsd.edu. A subset of the results has been presented at ISIT 2019. This work was partially supported by NSF grant CCF 1917177}\\
}

\maketitle

\begin{abstract}
In an effort to develop the foundations for a non-stochastic theory of information,
 the notion of 
$\delta$-mutual information  between uncertain variables is introduced as a generalization of Nair's non-stochastic   information functional. Several properties of this new quantity are illustrated, and used to prove a channel coding theorem in a non-stochastic setting. Namely, it is shown that the largest $\delta$-mutual information between 
received and transmitted codewords over $\epsilon$-noise channels equals
the $(\epsilon, \delta)$-capacity. This notion of capacity generalizes
the Kolmogorov $\epsilon$-capacity to packing sets of overlap at most $\delta$, and is a variation of a previous definition proposed by one of the authors. Results are then extended to more general  noise models, and to non-stochastic, memoryless, stationary channels. Finally, 
   sufficient conditions are established for the factorization of the $\delta$-mutual information and to obtain a  single letter capacity expression. 
 Compared to previous non-stochastic approaches, 
the presented theory admits the possibility of decoding errors as in   Shannon's  probabilistic setting, while retaining a  worst-case,   non-stochastic character.
\end{abstract}
\begin{IEEEkeywords}
Shannon capacity, Kolmogorov capacity, 
zero-error capacity,  $\epsilon$-capacity, $(\epsilon,\delta)$-capacity,  mutual information, coding theorem.
\end{IEEEkeywords}
\section{Introduction}
This paper introduces elements of a non-stochastic information theory that parallels Shannon's  probabilistic theory of information,   but that provides strict deterministic guarantees for every codeword transmission.   
When Shannon laid the mathematical foundations of communication theory   he embraced a probabilistic approach \cite{shannon1948mathematical}. 
A tangible consequence of this choice is that in today's  communication systems  performance is   guaranteed in an average sense, or with high probability. Occasional violations from a specification are permitted,  and cannot be avoided. This approach is well suited for   consumer-oriented digital communication devices, where the occasional loss of  data packets is not critical, and made Shannon's theory the golden standard to describe communication limits, and to construct codes that achieve these limits.  
The probabilistic approach, however, has also prevented Shannon's theory to be relevant in systems where occasional decoding errors can result in catastrophic failures;  or  in adversarial settings,  where the   behavior of the channel   may be unknown and cannot be described by a probability distribution. 
The basic consideration that is the leitmotiv of this paper  is that the probabilistic framework is not a fundamental component of Shannon's theory,   and that the path laid  by Shannon's work can be  extended to embrace a non-stochastic setting. 

The idea of adopting a non-stochastic approach in information theory is not new. A few years after introducing the notion of capacity of a communication system~\cite{shannon1948mathematical}, Shannon   introduced the  zero-error capacity~\cite{shannon1956zero}. While the first notion corresponds to the largest   rate  of communication such that the probability of decoding error \emph{tends to zero}, the second corresponds to the largest   rate  of communication such that the probability of decoding error \emph{equals zero}. Both definitions of capacity satisfy  coding theorems: Shannon's channel coding theorem states that the capacity is the supremum of the mutual information between the input and the output of the channel~\cite{shannon1948mathematical}.   Nair introduced a non-stochastic   mutual information functional  and established an analogous coding theorem for the zero-error capacity in a non-stochastic setting~\cite{nair2013nonstochastic}.   While Shannon's theorem leads to a single letter expression, Nair's result  is multi-letter, involving the non-stochastic information between codeword blocks of $n$ symbols.  The zero-error capacity can also be formulated as a graph-theoretic property  and the absence of a single-letter expression for general graphs is well known~\cite{shannon1956zero,rosenfeld1967problem}.  Extensions of Nair's nonstochastic approach to characterize   the zero-error capacity in the presence of feedback from the receiver to the transmitter using  nonstochastic directed mutual information  have also been considered~\cite{nair2012nonstochastic}. 

A parallel non-stochastic approach is due to Kolmogorov who,
motivated by Shannon's results,  introduced the   notions of $\epsilon$-entropy and $\epsilon$-capacity in the context of functional spaces~\cite{kolmogorov}. He defined the $\epsilon$-entropy  as the logarithm base two of the \emph{covering number} of the space, namely the logarithm of the minimum number of balls of radius $\epsilon$ that can cover the space. Determining this number is analogous to designing a \emph{source codebook} such that the distance between any signal in the space and a codeword is at most $\epsilon$. In this way, any transmitted signal can be represented by a codeword point with at most $\epsilon$-distortion.  Notions related to the $\epsilon$-entropy  are the Hartley entropy~\cite{Hartley}   and the R\'{e}nyi differential (0th-order) entropy~\cite{Renyi}. They  arise  for random variables with  known range but unknown distribution,  and are defined by taking  the logarithm of the cardinality (for discrete variables), or Lebesgue measure (for continuous variables)  of their range. Thus,  their definition does not require any statistical structure.
Using  these entropies, non-stochastic measures of mutual information have been constructed~\cite{ota,klir}.
Unfortunately,  the absence of coding theorems    makes the  operational significance   of these definitions lacking.

Rather than using mutual information and entropy, Kolmogorov  gave   an operational definition of the $\epsilon$-capacity     as the logarithm base two of the \emph{packing number} of the space, namely the logarithm of the maximum number of balls of radius $\epsilon$ that can be placed in the space without overlap. Determining this number is analogous to designing  a \emph{channel codebook}  such that the distance between any two codewords is at least $2\epsilon$. In this way, any transmitted codeword  that is subject to a perturbation   of at most $\epsilon$   can  be recovered at the receiver without error. It follows that the $\epsilon$-capacity   corresponds to the zero-error capacity of an additive  channel having arbitrary,   bounded noise of support at most $[0,\epsilon]$. 
Lim and  Franceschetti extended this concept introducing the   $(\epsilon,\delta)$ capacity~\cite{lim2017information}, defined as  the logarithm base two of the largest number of balls of radius $\epsilon$ that can be placed  in the space with  average codeword  overlap of at most $\delta$. In this setting, $\delta$   measures   the amount of error that can be tolerated when designing a codebook in a non-stochastic setting. Neither the Kolmogorov capacity, nor its $(\epsilon,\delta)$ generalization 
have a corresponding information-theoretic characterization in terms of mutual information and an associated coding theorem. This is offered in the present paper.
Some possible applications of  non-stochastic approaches arising in the context of  estimation,   control, security, communication over non-linear optical channels, and robustness of neural networks are described in~\cite{saberi2018estimation,saberi2019state,wiese2016uncertain,borujeny2020signal,weng2018evaluating,verma2019error,ferng2011multi}; and some   are also discussed in the context of the presented theory at the end of the paper.  

The rest of the paper is organized as follows. Section II provides a summary of our contributions; Section III introduces the mathematical framework of non-stochastic uncertain variables that is used throughout the paper.   Section IV introduces the concept of non-stochastic mutual information. Section V gives an operational definition of capacity of a communication channel and relates it to the mutual information.  Section VI extends results to more general channel models; and section VII concentrates on the special case of stationary, memoryless, uncertain channels. Sufficient conditions are obtained to obtain single-letter expressions for this case. Section VIII considers some examples of channels   and computes the corresponding capacity. Finally, Section IX discusses some possible application of the developed theory, and Section X draws conclusions and discusses future directions. A  subset of the results has been presented in \cite{rangi2019towards}. 
\section{Contributions}
 
 We   introduce a   notion of $\delta$-mutual information between non-stochastic, uncertain variables. In contrast to Nair's definition ~\cite{nair2013nonstochastic}, which only allows to measure information with \emph{full confidence},
 our definition considers the information revealed by one   variable regarding the other \emph{with  a given level of confidence}. We  then introduce a  notion of $(\epsilon,\delta)$-capacity,   defined as the logarithm base two of the largest number of balls of radius $\epsilon$ that can be placed in the   space such that the overlap between  any two balls  is    at most a ratio  of $\delta$ and the total number of balls. In contrast to the definition of Lim and Franceschetti~\cite{lim2017information}, which requires  the average overlap among all the balls  to be at most $\delta$, our definition requires to bound the overlap between any  pair of balls.   
 For $\delta=0$, our capacity definition reduces to the Kolmogorov $\epsilon$-capacity, or equivalently to the zero-error capacity of an additive, bounded noise channel, and our mutual information  definition reduces to Nair's one~\cite{nair2013nonstochastic}.
We establish  a  channel coding theorem    in this non-stochastic setting, showing that the largest mutual information, with confidence at least $(1-\delta)$,  
 between a transmitted codeword and its received version  corrupted  with noise   at most $\epsilon$,    is the $(\epsilon,\delta)$-capacity.    We then extend this result to more general non-stochastic channels, where the  noise   is expressed in terms of a set-valued map $N(\cdot)$ associating each transmitted codeword to a noise region in the received codeword space, that is not necessarily a ball of radius $\epsilon$.

 Next, we  consider the class of non-stochastic, memoryless,  stationary uncertain channels. In this case,  the noise $N(\cdot)$ experienced by a codeword  of $n$ symbols factorizes into $n$ identical terms describing the noise experienced by each codeword symbol. This  is the non-stochastic analogous of a  discrete memoryless channel (DMC), where  the  current  output symbol depends only on the current input symbol and not on any of the previous input symbols, and the noise distribution is constant across symbol transmissions. It differs from Kolmogorov's $\epsilon$-noise channel, where the noise experienced by one symbol affects the noise experienced by  other symbols.  In  Kolmogorov's setting,   the noise occurs within a ball of radius $\epsilon$. It follows that for any realization where the noise along one dimension (\emph{viz.} symbol)    is close to $\epsilon$, the noise  experienced by all other symbols lying in the remaining dimensions   must be close to zero. In contrast, for non-stochastic, memoryless,  stationary channels, the noise experienced by any transmitted symbol is   described by a single, non-stochastic set-value map from the transmitted alphabet to the received symbol space. We provide   coding theorems in this setting in terms of the $\delta$-mutual information rate  between received and transmitted codewords. Finally, we provide sufficient conditions for the factorization of the mutual information and to obtain a  single-letter expression for the  non-stochastic capacity of stationary, memoryless, uncertain channels.  We provide   examples in which these conditions are satisfied and compute the corresponding capacity, and we conclude with a discussion of some possible applications of the presented theory.

\section{Uncertain variables}\label{sec:Background}
We start by  reviewing  the mathematical framework used in~\cite{nair2013nonstochastic} to describe non-stochastic uncertain variables (UVs). An UV $X$ is a mapping from a  sample space $\Omega$ to a set $\mathscr{X}$, i.e. for all $\omega\in \Omega$,  we have $x=X(\omega) \in \mathscr{X}$. Given an UV $X$, the marginal range of $X$ is 
\begin{equation}
    \llbracket X\rrbracket=\{X(\omega):\omega\in \Omega\}.
\end{equation}
The joint range of two UVs $X$ and $Y$ is
\begin{equation}
    \llbracket X, Y\rrbracket=\{(X(\omega),Y(\omega)):\omega\in \Omega\}.
\end{equation}
Given an UV $Y$, the conditional range of $X$ given $Y=y$ is 
\begin{equation}
    \llbracket X|y\rrbracket=\{X(\omega):Y(\omega)=y,\omega\in \Omega\},
\end{equation}
and the conditional range of $X$ given $Y$ is
\begin{equation}
    \llbracket X|Y\rrbracket=\{\llbracket X|y\rrbracket: y\in \llbracket Y\rrbracket\}.
\end{equation}
Thus, $\llbracket X|Y\rrbracket$ denotes the uncertainty in $X$ given the realization  of $Y$ and  $\llbracket X, Y\rrbracket$ represents the total joint uncertainty of  $X$ and $Y$, namely 
\begin{equation}
    \llbracket X, Y\rrbracket= \cup_{y\in \llbracket Y\rrbracket}\llbracket X|y\rrbracket\times\{y\}.
\end{equation}
Finally, two UVs $X$ and $Y$ are independent if for all $x\in\llbracket X\rrbracket$
\begin{equation}
    \llbracket Y|x\rrbracket =\llbracket Y\rrbracket,
\end{equation}
which also implies that for all $y \in \llbracket Y\rrbracket$
\begin{equation}
    \llbracket X|y \rrbracket =\llbracket X\rrbracket.
\end{equation}

\section{$\delta$-Mutual information}\label{section:MaxMinInfo}
\subsection{Uncertainty function }\label{sec:ErrorRegion}
We now introduce a   class of functions that are   used  to express the amount of uncertainty in determining one UV given another.
In our setting, an uncertainty function associates a positive number to a given set, which expresses the ``massiveness'' or ``size'' of that set. 
\begin{definition}
Given any set $\mathscr{X}$,  
$m_{\mathscr{X}}: \mathscr{S}\subseteq \mathscr{X}\to \mathbb{R}_{0}^{+}$ is an uncertainty  function if it is  finite and strongly transitive, namely:

 For all $\mathscr{S}\subseteq \mathscr{X}, \mathscr{S} \not = \emptyset,$ we have 
\begin{align} \label{ass:finiteness}
 0 < m_{\mathscr{X}}(\mathscr{S})<\infty,  \; 
 m_{\mathscr{X}}(\emptyset)=0.
\end{align}

For all $ \mathscr{S}_1,\mathscr{S}_2 \subseteq \mathscr{X},$   we have  
\begin{equation} \label{eq:strongtransitivity}
\max\{m_{\mathscr{X}}(\mathscr{S}_1), m_{\mathscr{X}}(\mathscr{S}_2) \} \leq m_{\mathscr{X}}(\mathscr{S}_1\cup \mathscr{S}_2).
\end{equation}
\end{definition}
In the case  $\mathscr{X}$ is  measurable, an   uncertainty  function can easily be constructed using a measure. In the case  $\mathscr{X}$ is a bounded (not necessarily measurable) metric space and the input set $\mathscr{S}$ contains at least two points, an example of uncertainty function is the diameter.

\subsection{Association and dissociation between UVs}
We now introduce   notions of association and dissociation between UVs. In the following definitions, we let $m_{\mathscr{X}}(.)$ and $m_{\mathscr{Y}}(.)$ be  uncertainity functions  defined over  sets $\mathscr{X}$ and $\mathscr{Y}$ corresponding to UVs $X$ and $Y$. We use the notation $\mathscr{A} \succ\delta$ to indicate that for all $a \in \mathscr{A}$ we have   $a>\delta$.  Similarly, we use $\mathscr{A} \preceq \delta$ to indicate that for all $a \in \mathscr{A}$ we have $a \leq \delta$.  For $\mathscr{A}=\emptyset$, we assume $\mathscr{A} \preceq \delta$ is always satisfied, while   $\mathscr{A} \succ \delta$  is not. 
Whenever we consider  $i \not = j$, we also assume that  $y_i \not= y_j$ and $x_i \not = x_j$.  
\begin{definition}\label{def:AssSets}
The  sets of association for UVs $X$ and $Y$ are
\begin{align}\label{eqdef0}
 \mathscr{A}(X;Y) &= \bigg\{  \frac{m_{\mathscr{X}}(\llbracket X|y_{1}\rrbracket\cap \llbracket X|y_{2}\rrbracket|)}{m_{\mathscr{X}}(\llbracket X\rrbracket)}:    y_1,y_2\in\llbracket Y\rrbracket \bigg\} \setminus \big\{0\big\} ,
\end{align}
\begin{align} \label{eqdef1}
  \mathscr{A}(Y;X) &= \bigg\{ \frac{m_{\mathscr{Y}}(\llbracket Y|x_{1}\rrbracket\cap \llbracket Y|x_{2}\rrbracket|)}{m_{\mathscr{Y}}(\llbracket Y\rrbracket)}:     \;  x_{1},x_{2}\in \llbracket X \rrbracket  \bigg\} \setminus\big\{0\big\}.
\end{align}
\end{definition}
\begin{definition}\label{defn:association}
For any $\delta_{1},\delta_{2} \in [0,1)$,
 UVs $X$ and $Y$ are disassociated at levels $(\delta_{1},\delta_{2})$ if the following inequalities hold:
\begin{equation}\label{eq:AssocX}
 \mathscr{A}(X;Y)\succ\delta_{1},
\end{equation}
\begin{equation}\label{eq:AssocY}
 \mathscr{A}(Y;X)\succ \delta_{2},
\end{equation}
and  this case  we write $(X, Y) \stackrel{d}{ \leftrightarrow } (\delta_{1},\delta_{2})$.
\end{definition}
Having UVs $X$ and $Y$ be disassociated at levels  $(\delta_1,\delta_2)$    indicates that at least two conditional ranges $\llbracket X|y_1\rrbracket$ and $\llbracket X|y_2\rrbracket$  have nonzero overlap, and that given any two conditional ranges, either they do not overlap  or the uncertainty associated to their overlap is greater than a $\delta_1$ fraction of the total uncertainty associated to $\llbracket X \rrbracket$; and that  the same   holds for conditional ranges  $\llbracket Y|x_1\rrbracket$ and $\llbracket Y|x_2\rrbracket$ and level $\delta_2$.
The levels of disassociation    can be viewed as     lower bounds on the amount of  residual uncertainty in each  variable when the other is known. If $X$ and $Y$ are independent, then  all the conditional ranges completely overlap, $ \mathscr{A}(X;Y)$ and $ \mathscr{A}(Y;X)$ contain only the element one, and the variables  are  maximally disassociated (see Figure \ref{fig:combined}a). 
\begin{figure*}
\begin{center}
  \includegraphics[width=.7 \textwidth]{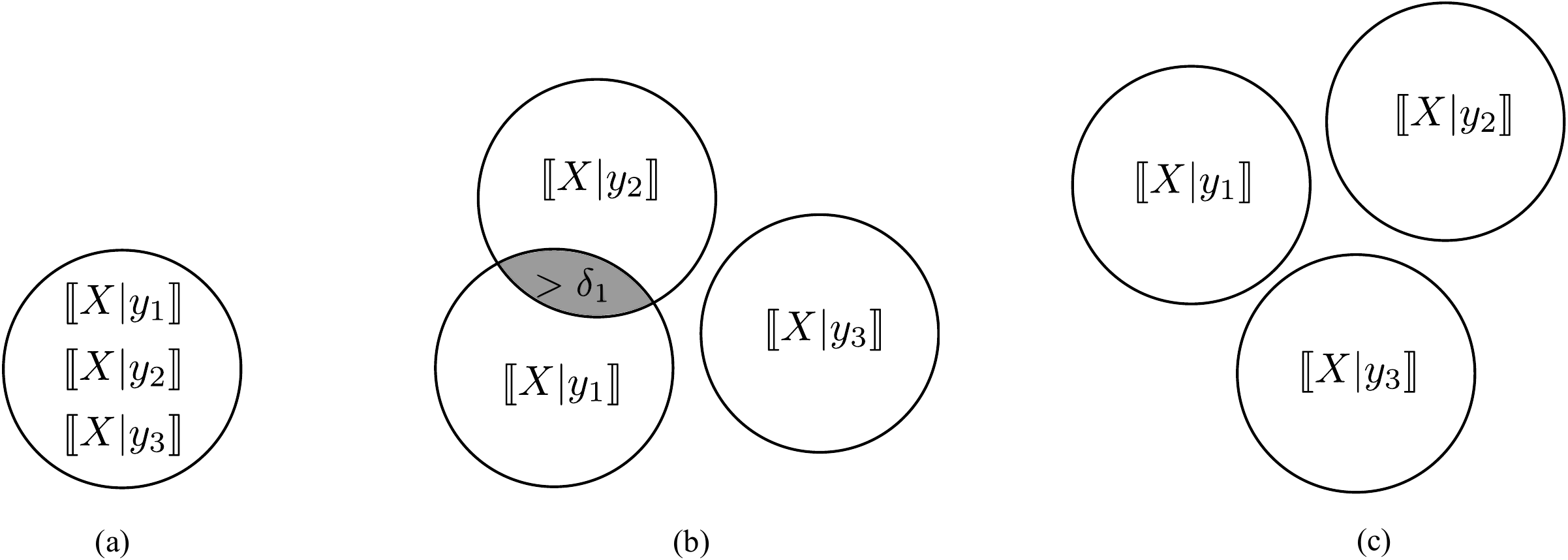}
  \end{center}
  \caption{ Illustration of disassociation between UVs. Case (a):  variables are maximally disassociated and all conditional ranges completely overlap. Case (b): variables are disassociated at some levels  $(\delta_1,\delta_2)$, and there is some overlap between at least two conditional ranges. Case (c): variables are not disassociated at any  levels, and there is no overlap between the conditional ranges.
 }
\label{fig:combined}
\end{figure*}
In this case,  
knowledge of $Y$ does not reduce the uncertainty of $X$, and vice versa. On the other hand, when the uncertainty associated to any of the non-zero intersections of the  conditional ranges  decreases, but remains positive, then   $X$ and $Y$  become less disassociated,  in the sense that knowledge of $Y$ can reduce the   residual uncertainty of $X$, and vice versa  (see Figure \ref{fig:combined}b).  When the intersection between every pair of conditional ranges becomes  empty, the variables cease being disassociated (see Figure \ref{fig:combined}c).

 An analogous definition of association is given to provide upper bounds on the residual uncertainty of one uncertain variable when the other is known.
 \begin{definition}\label{defn:dissassociation}
For any $\delta_{1},\delta_{2} \in [0,1]$,
 we say that UVs $X$ and $Y$ are  associated at levels $(\delta_{1},\delta_{2})$ if the following inequalities hold:
\begin{equation}\label{eq:DissassX}
 \mathscr{A}(X;Y)\preceq\delta_{1},
\end{equation}
\begin{equation}\label{eq:DissassY}
 \mathscr{A}(Y;X)\preceq \delta_{2},
\end{equation}
and in this case we write $(X, Y) \stackrel{a}{ \leftrightarrow } (\delta_{1},\delta_{2})$.
\end{definition}

The following lemma provides   necessary and sufficient conditions for  association at  given levels to hold. 
 These conditions   are stated for all points in the  marginal ranges $\llbracket Y\rrbracket$ and $\llbracket X\rrbracket$. They show that in the case of association one can also include in the definition the conditional ranges that have zero intersection. 
This is not the case for disassociation.
\begin{lemma}\label{lemma:Dissassociation}
  For any $\delta_{1},\delta_{2} \in [0,1]$, $(X, Y) \stackrel{a}{ \leftrightarrow } (\delta_{1},\delta_{2})$  if and only if for all $y_{1},y_{2}\in \llbracket Y\rrbracket$, we have
\begin{equation}\label{eq:DissX1}
\frac{m_{\mathscr{X}}(\llbracket X|y_{1}\rrbracket\cap \llbracket X|y_{2}\rrbracket|)}{m_{\mathscr{X}}(\llbracket X\rrbracket)}\leq \delta_{1},
\end{equation}
and for all  $x_{1},x_{2}\in \llbracket X\rrbracket$, we have
\begin{equation}\label{eq:DissX}
\frac{m_{\mathscr{Y}}(\llbracket Y|x_{1}\rrbracket\cap \llbracket Y|x_{2}\rrbracket|)}{m_{\mathscr{Y}}(\llbracket Y\rrbracket)}\leq \delta_{2}.
\end{equation}
\end{lemma}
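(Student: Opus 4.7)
The plan is to prove both directions by unpacking the definitions and carefully accounting for the fact that the sets $\mathscr{A}(X;Y)$ and $\mathscr{A}(Y;X)$ in Definition~\ref{def:AssSets} explicitly exclude the value $0$, while the inequalities \eqref{eq:DissX1}--\eqref{eq:DissX} in the lemma quantify over all distinct pairs with no such exclusion. Since the two arguments (for $X$ given $Y$, and for $Y$ given $X$) are structurally identical, I would prove the statement only for $\mathscr{A}(X;Y)$ and \eqref{eq:DissX1}; the symmetric argument handles the other case.

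For the forward direction, I would assume $(X,Y)\stackrel{a}{\leftrightarrow}(\delta_1,\delta_2)$, i.e.\ $\mathscr{A}(X;Y)\preceq \delta_1$, and fix any distinct $y_1,y_2\in\llbracket Y\rrbracket$. The ratio
\begin{equation*}
r \;=\; \frac{m_{\mathscr{X}}(\llbracket X|y_1\rrbracket \cap \llbracket X|y_2\rrbracket)}{m_{\mathscr{X}}(\llbracket X\rrbracket)}
\end{equation*}
is well defined because $m_{\mathscr{X}}(\llbracket X\rrbracket)>0$ by \eqref{ass:finiteness}. If $r=0$, then \eqref{eq:DissX1} holds trivially since $\delta_1\ge 0$. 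If $r>0$, then by construction $r\in \mathscr{A}(X;Y)$, and the assumption $\mathscr{A}(X;Y)\preceq \delta_1$ yields $r\le \delta_1$.

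For the reverse direction, I would assume \eqref{eq:DissX1} for all distinct $y_1,y_2\in\llbracket Y\rrbracket$. If $\mathscr{A}(X;Y)=\emptyset$, then by the convention stated just before Definition~\ref{def:AssSets}, the relation $\mathscr{A}(X;Y)\preceq \delta_1$ holds automatically. Otherwise, every element of $\mathscr{A}(X;Y)$ is of the form $r=m_{\mathscr{X}}(\llbracket X|y_1\rrbracket \cap \llbracket X|y_2\rrbracket)/m_{\mathscr{X}}(\llbracket X\rrbracket)$ for some distinct $y_1,y_2\in\llbracket Y\rrbracket$ with $r>0$, and the hypothesis gives $r\le \delta_1$. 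Hence $\mathscr{A}(X;Y)\preceq \delta_1$, and combining with the analogous conclusion for $\mathscr{A}(Y;X)$ yields $(X,Y)\stackrel{a}{\leftrightarrow}(\delta_1,\delta_2)$.

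There is no substantive obstacle: the lemma is essentially a bookkeeping statement clarifying that removing the zero from the association set in Definition~\ref{def:AssSets} does not alter the upper-bound characterization. The only care needed is in the two edge cases, namely the zero-intersection pairs (handled because $\delta_1\ge 0$) and the possibility that $\mathscr{A}(X;Y)$ or $\mathscr{A}(Y;X)$ is empty (handled by the stated convention). This is in contrast to the disassociation case, where the exclusion of $0$ is essential and cannot be dropped, which explains the remark preceding the lemma.
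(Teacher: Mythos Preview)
Your proposal is correct and follows essentially the same approach as the paper's proof: both split the forward direction into the cases $r=0$ (trivial since $\delta_1\ge 0$) and $r>0$ (so $r\in\mathscr{A}(X;Y)$), and both obtain the reverse direction directly from the definition of $\mathscr{A}(X;Y)$. Your treatment is in fact slightly more explicit in handling the empty-set convention, but there is no substantive difference.
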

\begin{proof} The proof is given in   Appendix \ref{sec:Lemma1}.
\end{proof}



An immediate, yet important  consequence of our definitions  is that   both association and disassociation at given levels $(\delta_1, \delta_2)$ cannot hold simultaneously.  We also have that, given any two UVs, one can always choose $\delta_1$ and $\delta_2$ to be large enough  such that they are associated at levels $(\delta_1,\delta_2)$. In contrast,  as the smallest value in the sets $\mathcal{A}(X;Y)$ and $\mathcal{A}(Y;X)$ tends to zero, 
the variables eventually cease being disassociated. Finally, it is possible that two  uncertain variables are neither associated nor disassociated at given levels   $(\delta_1, \delta_2)$.

\begin{example}
 Consider three individuals $a$, $b$ and $c$ going for a walk along a path. Assume they  take at most $15$, $20$ and $10$ minutes to finish their walk, respectively. 
Assume $a$ starts walking at time 5:00, $b$ starts  walking at 5:10 and $c$ starts walking at 5:20. Figure \ref{fig:example1} 
\begin{figure}[t]
\begin{center}
  \includegraphics[width=.70 \columnwidth]{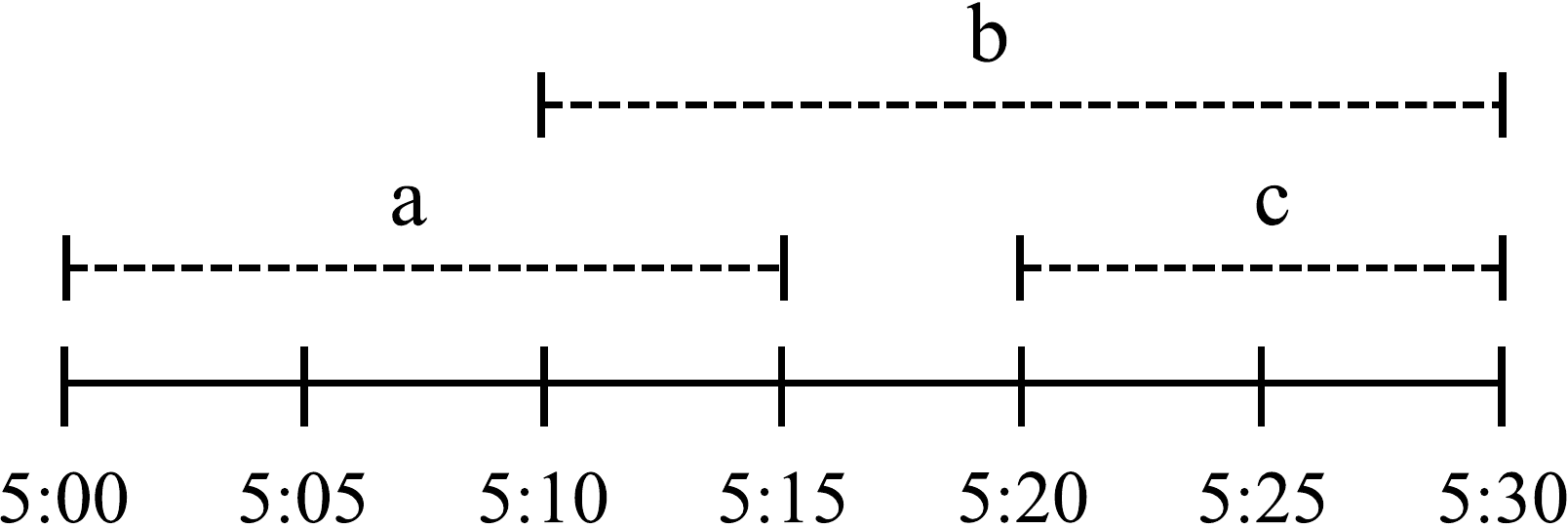}
  \end{center}
  \caption{ Illustration of the possible time intervals for the walkers on the path.
 }
\label{fig:example1}
\end{figure}
shows the possible time intervals for the walkers on the path. Let an uncertain variable $W$ represent  the set of walkers that are present on the path at any time, and  an uncertain variable $T$ represent  the time at which any walker on the path finishes its walk.
Then, we have the marginal ranges
\begin{equation}
    \llbracket W\rrbracket=\{\{a\},\{b\},\{c\},\{a,b\},\{b,c\}\}, 
\end{equation}
\begin{equation}
    \llbracket T\rrbracket=[\mbox{5:00},\mbox{5:30}].
\end{equation}
We also have the conditional ranges
\begin{equation}
    \llbracket T|\{a\}\rrbracket=[\mbox{5:00},\mbox{5:15}],
\end{equation}
\begin{equation}
    \llbracket T|\{b\}\rrbracket=[\mbox{5:10},\mbox{5:30}],
\end{equation}
\begin{equation}
    \llbracket T|\{c\}\rrbracket=[\mbox{5:20},\mbox{5:30}],
\end{equation}
\begin{equation}
    \llbracket T|\{a,b\}\rrbracket=[\mbox{5:10},\mbox{5:15}],
\end{equation}
\begin{equation}
    \llbracket T|\{b,c\}\rrbracket=[\mbox{5:20},\mbox{5:30}].
\end{equation}
For all $t\in [\mbox{5:00},\mbox{5:10})$, we   have
 \begin{equation}
     \llbracket W|t\rrbracket=\{\{a\}\},
 \end{equation}
 for all $t\in [\mbox{5:10},\mbox{5:15}]$, we have
 \begin{equation}
     \llbracket W|t\rrbracket=\{\{a,b\},\{a\},\{b\}\},
 \end{equation}
  for all $t\in (\mbox{5:15},\mbox{5:20})$, we have
 \begin{equation}
     \llbracket W|t\rrbracket=\{\{b\}\},
 \end{equation}
 and for all $t\in [\mbox{5:20},\mbox{5:30}]$, we have
 \begin{equation}
     \llbracket W|t\rrbracket=\{\{b,c\},\{b\},\{c\}\}.
 \end{equation}
Now, let the uncertainty function   of a time set $\mathscr{S}$ be
\begin{equation}
    m_{\mathscr{T}}(\mathscr{S})=\begin{cases} \mathcal{L}(\mathscr{S})+10 \mbox{ if }\mathscr{S}\neq \emptyset,\\
    0 \mbox{ otherwise },
    \end{cases}
\end{equation}
where  $\mathcal{L}(\cdot)$ is the Lebesgue measure. 
Let the uncertainty function $m_{\mathscr{W}}(.)$ associated to a set of individuals be the cardinality of the set. 
Then,  the sets of association are
\begin{equation}
    \mathscr{A}(W;T)=\{1/5,3/5\},
\end{equation}
\begin{equation}
    \mathscr{A}(T;W)=\{3/8,1/2\}.
\end{equation}
It follows that that for all $\delta_1<1/5$ and $\delta_2<3/8$, we have
\begin{equation}\label{eq:ref122}
    (W,T)\stackrel{d}{\leftrightarrow}(\delta_1,\delta_2),
\end{equation}
and the residual uncertainty in $W$ given $T$  is at least $\delta_1$ fraction of the total uncertainty in $W$, while the residual uncertainty in $T$ given $W$ is at least  $\delta_2$ fraction of the total uncertainty in $T$.
On the other hand, for all $\delta_1\geq 3/5$ and $\delta_2\geq 1/2$  we have 
\begin{equation}\label{eq:ref123}
    (W,T)\stackrel{a}{\leftrightarrow}(\delta_1,\delta_2),
\end{equation}
and the residual uncertainty in $W$ given $T$  is at most $\delta_1$ fraction of the total uncertainty in $W$, while the residual uncertainty   in $T$ given $W$ is at most $\delta_2$ fraction of the total uncertainty in $T$.

Finally, if  $1/5\leq \delta_1<3/5$ or $3/8\leq \delta_2<1/2$, then $W$ and $T$ are neither associated nor disassociated. 

\end{example} 

\subsection{$\delta$-mutual information}
We now introduce the mutual information between uncertain variables in terms of some structural properties of  covering  sets. Intuitively, for any $\delta \in [0,1]$ the $\delta$-mutual information, expressed in bits, represents the most refined knowledge  that one uncertain variable provides about the other, at a given level of confidence $(1-\delta)$.  We express this idea   by considering the   quantization of the range of uncertainty of one variable, induced by the knowledge of  the other. Such quantization ensures that the variable can be identified  with uncertainty at most $\delta$.
The notions of association and disassociation introduced above are used to ensure that the mutual information is well defined, namely it can be positive, and enjoys a certain symmetric property.   
\begin{definition}\label{defn:overlapCon}{$\delta$-Connectedness and $\delta$-isolation.} 
\begin{itemize}
\item For any $\delta \in [0,1]$, points $x_{1}, x_{2} \in \llbracket X\rrbracket$ are $\delta$-connected via $\llbracket X|Y\rrbracket$, and are denoted by $x_{1}\stackrel{\delta}{\leftrightsquigarrow} x_{2}$, if there exists a finite  sequence $\{\llbracket X|y_{i}\rrbracket\}_{i=1}^{N}$ of conditional sets such that $x_{1}\in \llbracket X|y_{1}\rrbracket$, $x_{2}\in \llbracket X|y_{N}\rrbracket$ and 
for all $1<i\leq N$, we have 
\begin{equation}\label{eq:condtionConnectedness}
    \frac{m_{\mathscr{X}}(\llbracket X|y_{i}\rrbracket\cap \llbracket X|y_{i-1}\rrbracket)}{m_{\mathscr{X}}(\llbracket X\rrbracket)}> \delta.
\end{equation}
If $x_{1}\stackrel{\delta}{\leftrightsquigarrow} x_{2}$ and  $N=1$, then we say that $x_{1}$ and $x_{2}$ are  singly $\delta$-connected via $\llbracket X|Y\rrbracket$, i.e. there exists a $y$ such that $x_{1}, x_{2}\in \llbracket X|y\rrbracket$.
\item A set $\mathscr{S}\subseteq\llbracket X\rrbracket$ is (singly) $\delta$-connected via $\llbracket X|Y\rrbracket$ if every pair of points in the set is (singly) $\delta$-connected via $\llbracket X|Y\rrbracket$. 
\item Two sets $\mathscr{S}_1,\mathscr{S}_2\subseteq\llbracket X\rrbracket$ are $\delta$-isolated via $\llbracket X|Y\rrbracket$ if no point in $\mathscr{S}_1$ is  $\delta$-connected to any point in $\mathscr{S}_2$.
\end{itemize}
\end{definition}

\begin{definition}{$\delta$-overlap family.}  \label{defoverlap}\\
 For any $\delta \in [0,1]$, a  $\llbracket X|Y \rrbracket$ 
$\delta$-overlap family of $\llbracket X\rrbracket$, denoted by $\llbracket X|Y\rrbracket^*_{\delta}$, is a largest family of distinct sets covering $\llbracket X\rrbracket$ such that:
\begin{enumerate}
\item  Each set in the family is $\delta$-connected and contains at least one singly $\delta$-connected set  of  the form $\llbracket X|y\rrbracket$.
\item The measure of overlap between any two distinct sets in the family is at most $\delta m_{\mathscr{X}}(\llbracket X\rrbracket)$.
\item For every singly $\delta$-connected set,  
there exist a set in the family containing it.
 \end{enumerate}
 \end{definition}

The first   property of the $\delta$-overlap family ensures  that points in the same set of the family  \emph{cannot} be distinguished with confidence at least $(1-\delta)$, while also ensuring that each set cannot be arbitrarily small. The second and third properties ensure that points  that are not covered by the same set of the family \emph{can} be distinguished with confidence at least $(1-\delta)$.  
It follows that the cardinality of the covering family represents the most refined knowledge at a given level of confidence $(1-\delta)$ that we can have about $X$, given the knowledge of $Y$. This also corresponds to  the most refined quantization of the set $\llbracket X\rrbracket$  induced by   $Y$. This interpretation is analogous to the one in~\cite{nair2013nonstochastic}, extending the concept of overlap partition introduced there to a $\delta$-overlap family in this work. The stage is now set to introduce the $\delta$-mutual information in terms of the $\delta$-overlap family. 
\begin{definition}\label{defn:information}
The $\delta$-mutual information  provided by  $Y$  about   $X$   is  
\begin{equation}
    I_{\delta}(X ; Y)=\log_2 |\llbracket X|Y\rrbracket^*_{\delta}| \mbox{  bits},
\end{equation}
if a $\llbracket X|Y \rrbracket$ $\delta$-overlap family of $\llbracket X \rrbracket$ exists,
otherwise it is zero.
\end{definition}

We now show that when variables are associated at level $(\delta,\delta_2)$, then there exists a $\delta$-overlap family, so that the mutual information is well defined.
\begin{theorem}\label{thm:associativityPArtition} 
If $( X, Y)\stackrel{a}{\leftrightarrow}(\delta,\delta_{2})$, then 
there exists a  $\delta$-overlap family  $\llbracket X|Y\rrbracket^*_{\delta}$. 
\end{theorem}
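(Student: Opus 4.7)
The plan is to construct a specific candidate family and verify the three defining properties of Definition \ref{defoverlap}, then invoke a maximality argument to upgrade existence of \emph{a} valid family to existence of a \emph{largest} one.

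The natural starting point is the base collection $\mathcal{F}_0 = \{\llbracket X|y\rrbracket : y \in \llbracket Y\rrbracket\}$, which covers $\llbracket X\rrbracket$. Each $\llbracket X|y\rrbracket$ is trivially singly $\delta$-connected by taking the length-one chain $N=1$ in Definition \ref{defn:overlapCon}, hence $\delta$-connected, and it contains a singly $\delta$-connected set of the required form (namely itself), so property (1) is verified. For property (2), the association hypothesis $(X,Y)\stackrel{a}{\leftrightarrow}(\delta,\delta_2)$ combined with Lemma \ref{lemma:Dissassociation} immediately yields
\[
m_{\mathscr{X}}\bigl(\llbracket X|y_1\rrbracket\cap\llbracket X|y_2\rrbracket\bigr)\;\leq\;\delta\, m_{\mathscr{X}}(\llbracket X\rrbracket)
\]
for every $y_1,y_2\in\llbracket Y\rrbracket$, so any two distinct members of $\mathcal{F}_0$ overlap by at most $\delta\, m_{\mathscr{X}}(\llbracket X\rrbracket)$.

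The delicate condition is property (3): every singly $\delta$-connected subset $\mathscr{S}\subseteq\llbracket X\rrbracket$ must lie inside some family member. Since singly $\delta$-connectedness only requires each pair of points in $\mathscr{S}$ to share some $\llbracket X|y\rrbracket$ --- with $y$ allowed to depend on the pair --- a singly $\delta$-connected set may in principle straddle several conditional ranges and fail to be covered by any single $\llbracket X|y\rrbracket\in\mathcal{F}_0$. My plan is to enlarge $\mathcal{F}_0$ by replacing each cluster of conditional ranges meeting such a stray $\mathscr{S}$ with their union, and then to check that this operation preserves properties (1) and (2). Connectedness is immediate for the enlarged members. For the overlap bound, I would exploit the association hypothesis pairwise at the level of the constituent $\llbracket X|y\rrbracket$'s together with the chain structure of singly $\delta$-connectedness to argue that the enlarged members still pairwise overlap by at most $\delta\, m_{\mathscr{X}}(\llbracket X\rrbracket)$. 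Existence of a \emph{largest} such family then follows by a standard maximality argument: the collection of valid families is non-empty (from the construction just given), so either direct cardinality maximization when $\llbracket Y\rrbracket$ is finite, or Zorn's lemma in general, delivers the required $\llbracket X|Y\rrbracket^*_\delta$.

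The main obstacle is property (3) under the limited structural assumptions on $m_{\mathscr{X}}$. Strong transitivity provides only a lower bound on the measure of a union, so one cannot invoke sub-additivity to bound the overlap between two enlarged union-members by summing the pairwise bounds on their constituent conditional ranges. The crux of the argument must instead be structural: the association bound forces all conditional ranges to intersect lightly, and any stray singly $\delta$-connected set can be absorbed into a single enlarged member in a way that does not force two enlarged members to intersect in more than one ``thin'' conditional-range-overlap region at a time, keeping the total overlap within the $\delta\, m_{\mathscr{X}}(\llbracket X\rrbracket)$ budget.
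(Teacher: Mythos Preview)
Your candidate family $\mathcal{F}_0=\{\llbracket X|y\rrbracket:y\in\llbracket Y\rrbracket\}$ and your verification of Properties (1) and (2) match the paper's proof exactly. The divergence is entirely at Property (3).

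You read ``singly $\delta$-connected set'' in Property (3) of Definition~\ref{defoverlap} literally, as any $\mathscr{S}\subseteq\llbracket X\rrbracket$ in which every pair of points shares some conditional range (with the $y$ allowed to vary with the pair). That reading forces you into the enlargement procedure, and you correctly diagnose that without sub-additivity of $m_{\mathscr{X}}$ there is no clean way to recover the overlap bound for the enlarged members. The paper, however, intends Property (3) to quantify only over the conditional ranges $\llbracket X|y\rrbracket$ themselves. Its one-line verification is: ``Property 3 of Definition~\ref{defoverlap} holds, since $\llbracket X|Y\rrbracket$ contains all sets $\llbracket X|y\rrbracket$.'' This narrower reading is used consistently throughout the paper --- every invocation of Property (3) (e.g., in the proofs of Theorem~\ref{lemma:overlap}, Lemma~\ref{thm:tensorization}, and the claims supporting Theorem~\ref{thm:sufficientConditiondeltaerror}) applies it only to sets of the form $\llbracket X|y\rrbracket$.

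So the obstacle you identify is an artefact of the broader reading; under the paper's intended interpretation, $\mathcal{F}_0$ already satisfies all three properties and no enlargement is needed. Neither you nor the paper explicitly argue the ``largest'' clause in Definition~\ref{defoverlap} here; the paper is content to exhibit one family satisfying (1)--(3) and treat existence of a maximal one as implicit.
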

\begin{proof}We  show that 
\begin{equation}
    \llbracket X|Y\rrbracket=\{\llbracket X|y\rrbracket:y\in \llbracket Y\rrbracket\}
\end{equation}
is a  $\delta$-overlap family.  First, note that $\llbracket X|Y\rrbracket$ is a cover of $\llbracket X\rrbracket$, since 
$\llbracket X\rrbracket=\cup_{y\in\llbracket Y\rrbracket}\llbracket X|y\rrbracket$. Second, each set in the family $\llbracket X|Y\rrbracket$  is singly $
\delta$-connected via $\llbracket X|Y\rrbracket$, since trivially any two points $x_1,x_2 \in \llbracket X|y \rrbracket$ are singly $\delta$-connected via the same set.
It follows that Property 1 of Definition~\ref{defoverlap} holds. 

Now, since  $(X, Y) \stackrel{a}{ \leftrightarrow } (\delta,\delta_{2})$,  then by Lemma \ref{lemma:Dissassociation} for all $y_1,y_2 \in \llbracket Y\rrbracket$  we have
\begin{equation}
    \frac{m_{\mathscr{X}}(\llbracket X|y_1\rrbracket\cap\llbracket X|y_2\rrbracket)}{m_{\mathscr{X}}(\llbracket X\rrbracket)}\leq \delta,
\end{equation}
which shows that Property 2 of Definition~\ref{defoverlap} holds.
Finally, it is also easy to see that Property 3 of Definition~\ref{defoverlap} holds, since $\llbracket X|Y\rrbracket$  contains all   sets $\llbracket X|y\rrbracket$.
\end{proof}

Next, we show that   a $\delta$-overlap family also exists when variables are disassociated at level $(\delta, \delta_2)$.  In this case, we also characterize the mutual information in terms of a partition of $\llbracket X\rrbracket$.
\begin{definition}{$\delta$-isolated partition.} \\ 
A   $\llbracket X|Y\rrbracket$  $\delta$-isolated partition of $\llbracket X\rrbracket$, denoted by $\llbracket X|Y\rrbracket_{\delta}$,  is a partition of  $\llbracket X\rrbracket$ such that any two sets in the partition are $\delta$-isolated via  $\llbracket X|Y\rrbracket$. 
\end{definition}

\begin{theorem}\label{lemma:overlap} If $( X, Y)\stackrel{d}{\leftrightarrow}(\delta,\delta_{2})$, then we have:
\begin{enumerate}
\item There exists a unique $\delta$-overlap family  $\llbracket X|Y\rrbracket^*_{\delta}$.
\item The $\delta$-overlap family is the $\delta$-isolated partition of largest cardinality, namely for any 
$\llbracket X|Y\rrbracket_{\delta}$, we have
\begin{equation}\label{eq:1.2}
    |\llbracket X|Y\rrbracket_{\delta}|\leq |\llbracket X|Y\rrbracket^*_{\delta}|,
\end{equation}
where the  equality holds if and only if $\llbracket X|Y\rrbracket_{\delta}= \llbracket X|Y\rrbracket^*_{\delta}$.
\end{enumerate}
\end{theorem}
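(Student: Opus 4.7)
My plan is to derive both conclusions from a single object: the partition $\mathcal{C}$ of $\llbracket X\rrbracket$ into equivalence classes of the relation $\stackrel{\delta}{\leftrightsquigarrow}$. I would first verify that $\stackrel{\delta}{\leftrightsquigarrow}$ is an equivalence relation on $\llbracket X\rrbracket$ --- reflexivity holds because every $x$ lies in some $\llbracket X|y\rrbracket$ (yielding the trivial chain with $N=1$), symmetry follows by reversing a chain, and transitivity by concatenation. Call the resulting equivalence classes the $\delta$-components. The key consequence of $(X,Y)\stackrel{d}{\leftrightarrow}(\delta,\delta_2)$ is the dichotomy that for distinct $y_1,y_2$ the ratio $m_{\mathscr{X}}(\llbracket X|y_1\rrbracket\cap\llbracket X|y_2\rrbracket)/m_{\mathscr{X}}(\llbracket X\rrbracket)$ is either $0$ or strictly greater than $\delta$; so whenever two conditional ranges intersect with positive measure they automatically satisfy the connectedness inequality~(\ref{eq:condtionConnectedness}). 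From this it follows that each component is a union of whole conditional ranges: if $x\in C$ and $x\in\llbracket X|y\rrbracket$, then every $x'\in\llbracket X|y\rrbracket$ is singly $\delta$-connected to $x$, so $\llbracket X|y\rrbracket\subseteq C$.

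For Part~1, I would first show that $\mathcal{C}$ itself is a $\delta$-overlap family and then argue uniqueness. The family $\mathcal{C}$ covers $\llbracket X\rrbracket$ because $\llbracket X\rrbracket=\bigcup_y \llbracket X|y\rrbracket$; each component is $\delta$-connected by construction and contains a conditional range, verifying Property~1; distinct components are disjoint, so their overlap measure is zero and Property~2 holds; and Property~3 follows because the points of a singly $\delta$-connected set pairwise share a common $\llbracket X|y\rrbracket$ and therefore all lie in one component. For uniqueness, let $\mathcal{F}$ be any $\delta$-overlap family. Each $F\in\mathcal{F}$ is $\delta$-connected and so is contained in a single component. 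By Property~3, every $\llbracket X|y\rrbracket$ is contained in some member of $\mathcal{F}$, and whenever two such ranges directly overlap with positive measure, the dichotomy above makes their intersection exceed $\delta m_{\mathscr{X}}(\llbracket X\rrbracket)$, so placing them in distinct members would violate Property~2. Iterating along chains, all conditional ranges of a component $C$ lie in a common $F\in\mathcal{F}$, so $C\subseteq F$; combined with $F\subseteq C$, this forces $F=C$ and hence $\mathcal{F}=\mathcal{C}$.

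For Part~2, distinct components are $\delta$-isolated by the very definition of the equivalence classes, so $\mathcal{C}$ is itself a $\delta$-isolated partition. Given any $\delta$-isolated partition $\llbracket X|Y\rrbracket_\delta$, each component must lie entirely inside one of its parts, because if a component intersected two parts those two parts would contain $\delta$-connected points and fail to be $\delta$-isolated. Hence $\llbracket X|Y\rrbracket_\delta$ is coarser than $\mathcal{C}$, yielding the cardinality bound (\ref{eq:1.2}); equality forces the two partitions to coincide, since a coarsening of equal cardinality must be identical. The main obstacle I expect is the uniqueness step of Part~1, where Properties~2 and~3 must be carefully combined with the strict dichotomy supplied by disassociation in order to rigidly force every admissible family to agree with $\mathcal{C}$.
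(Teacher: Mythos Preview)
Your proposal is correct and follows essentially the same route as the paper: the paper also builds the candidate family as $\mathcal{C}=\{\mathscr{C}(x):x\in\llbracket X\rrbracket\}$ with $\mathscr{C}(x)=\{x'\in\llbracket X\rrbracket:x\stackrel{\delta}{\leftrightsquigarrow}x'\}$, verifies Properties~1--3, proves uniqueness by the two-sided inclusion $\mathscr{D}(x)\subseteq\mathscr{C}(x)$ and $\mathscr{C}(x)\subseteq\mathscr{D}(x)$ via a chain argument, and then shows that any $\delta$-isolated partition is a coarsening of $\mathcal{C}$. One point to make explicit in your write-up: transitivity of $\stackrel{\delta}{\leftrightsquigarrow}$ is \emph{not} free ``by concatenation''---the junction step between two chains needs exactly the dichotomy you state (non-empty intersection of two conditional ranges forces overlap $>\delta$), and the paper singles this out as a separate lemma (Lemma~\ref{lemma:commutativeProperty}) invoked repeatedly; tie that dependence in when you prove transitivity.
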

\begin{proof}
First, we show the existence of a ${\delta}$-overlap family. For all $x\in \llbracket X \rrbracket$, let $\mathscr{C}(x)$ be the set of points that are $\delta$-connected to $x$ via $\llbracket X|Y \rrbracket$, namely
\begin{equation} \label{eq:C1}
\mathscr{C}(x)=\{x_{1}\in\llbracket X\rrbracket: x\stackrel{\delta}{\leftrightsquigarrow}x_{1}\}.
\end{equation}
Then, we let
\begin{equation} \label{eq:C}
\mathcal{C}=\{\mathscr{C}(x): x\in \llbracket X \rrbracket \}, 
\end{equation}
and show that this  is a $\delta$-overlap family. 
First, 
note that  since $\llbracket X\rrbracket=\cup_{\mathscr{S}\in\mathcal{C}}\mathscr{S}$, we have that $\mathcal{C}$
is a cover of $ \llbracket X \rrbracket$. Second,
for all   $\mathscr{C}(x) \in \mathcal{C}$ there exists a $y \in \llbracket Y \rrbracket$ such that $x \in \llbracket X | y \rrbracket$, and since
 any two points $x_1,x_2\in \llbracket X|y\rrbracket$ are  singly $\delta$-connected via $\llbracket X|Y \rrbracket$, we have that    $\llbracket X | y \rrbracket \subseteq \mathscr{C}(x)$. It follows that every set in the family $\mathcal{C}$ contains at least one singly $\delta$-connected set.
For all $x_1,x_2\in \mathscr{C}(x)$, we also have  $x_1\stackrel{\delta}{\leftrightsquigarrow}x$ and $x\stackrel{\delta}{\leftrightsquigarrow}x_2$. Since   $( X, Y)\stackrel{d}{\leftrightarrow}(\delta,\delta_{2})$,  by  Lemma \ref{lemma:commutativeProperty} in   Appendix \ref{sec:AuxResult} this implies $x_1\stackrel{\delta}{\leftrightsquigarrow}x_2$. It follows that  every set in the family $\mathcal{C}$ is $\delta$-connected and contains at least one singly $\delta$-connected set, and we conclude that Property~1 of Definition \ref{defoverlap} is satisfied. 

We now claim that for all  $x_{1},x_{2}\in \llbracket X \rrbracket$, if 
\begin{equation}
    \mathscr{C}(x_{1})\neq \mathscr{C}(x_{2}),
\end{equation}
then  
\begin{equation}\label{eq:deltaIsolated}
    m_{\mathscr{X}}(\mathscr{C}(x_{1})\cap \mathscr{C}(x_{2}))=0.
\end{equation}
 This can be proven by contradiction. Let $\mathscr{C}(x_{1})\neq \mathscr{C}(x_{2})$ and assume that $ m_{\mathscr{X}}(\mathscr{C}(x_{1})\cap \mathscr{C}(x_{2}))\neq 0$. By \eqref{ass:finiteness} this implies that  $\mathscr{C}(x_{1})\cap \mathscr{C}(x_{2}) \neq \emptyset$. We can then pick $z\in \mathscr{C}(x_{1})\cap \mathscr{C}(x_{2})$, such that we have $z\stackrel{\delta}\leftrightsquigarrow x_{1}$
and $z\stackrel{\delta}\leftrightsquigarrow x_{2}$. Since    $( X, Y)\stackrel{d}{\leftrightarrow}(\delta,\delta_{2})$, by Lemma  \ref{lemma:commutativeProperty}   in  Appendix \ref{sec:AuxResult} this also implies $x_{1}\stackrel{\delta}{\leftrightsquigarrow} x_{2}$ , and therefore $\mathscr{C}(x_{1})= \mathscr{C}(x_{2})$, which is a contradiction. It follows that if $\mathscr{C}(x_{1})\neq \mathscr{C}(x_{2})$, then we must have $m_{\mathscr{X}}(\mathscr{C}(x_{1})\cap \mathscr{C}(x_{2}))=0$, 
and therefore
\begin{equation} \label{ventotto}
    \frac{m_{\mathscr{X}}(\mathscr{C}(x_{1})\cap \mathscr{C}(x_{2}))}{m_{\mathscr{X}}(\llbracket X\rrbracket)}=0\leq \delta. 
\end{equation}
We conclude that   Property~2 of Definition \ref{defoverlap} is satisfied. 

 Finally, 
  we have that  for any singly $\delta$-connected set  $\llbracket X|y \rrbracket $,  there exist an $x \in \llbracket X	\rrbracket$ such that $x \in \llbracket X|y \rrbracket$, which by   \eqref{eq:C1} implies $ \llbracket X|y \rrbracket  \subseteq \mathscr{C}(x)$. Namely, for every singly $\delta$-connected set,  
there exist a set in the family containing it. We can then conclude that $\mathcal{C}$ satisfies all the properties of a $\delta$-overlap family.

Next, we show that  $\mathcal{C}$ is a unique $\delta$-overlap family.  
 By contradiction, consider another $\delta$-overlap family  $\mathcal{D}$. For all $x\in \llbracket X\rrbracket$,
let $\mathscr{D}(x)$ denote a set in $\mathcal{D}$ containing $x$. Then, using the definition of $\mathscr{C}(x)$ and the fact that $\mathscr{D}(x)$ is $\delta$-connected,  it follows that
\begin{equation}\label{eq:1}
    \mathscr{D}(x)\subseteq \mathscr{C}(x).
\end{equation}
Next, we show that for all $x\in\llbracket X\rrbracket $, we also have 
\begin{equation} \label{eq:2}
    \mathscr{C}(x)\subseteq \mathscr{D}(x),
\end{equation}
from which we conclude that $\mathcal{D}=\mathcal{C}$.

The proof of \eqref{eq:2}  is also  obtained by contradiction. Assume there exists a point $\tilde{x}\in \mathscr{C}(x) \setminus \mathscr{D}(x)$.   
Since both $x$ and $\tilde{x}$ are contained in $\mathscr{C}(x)$, we have $\tilde{x}\stackrel{\delta}{\leftrightsquigarrow} x$.
Let $x^*$ be a point in a singly-connected set that is contained in $\mathscr{D}(x)$, namely $x^* \in \llbracket X|y^* \rrbracket \subseteq \mathscr{D}(x)$. Since both $x$ and $x^*$ are in $\mathscr{D}(x)$, we have that $x \stackrel{\delta}{\leftrightsquigarrow} x^*$.   Since     $( X, Y)\stackrel{d}{\leftrightarrow}(\delta,\delta_{2})$, we can apply Lemma  \ref{lemma:commutativeProperty}   in   Appendix \ref{sec:AuxResult} to conclude that $\tilde{x} \stackrel{\delta}{\leftrightsquigarrow} x^*$.
It follows that there exists a sequence of conditional ranges $\{\llbracket X|y_i\rrbracket\}_{i=1}^{N}$  such that $\tilde{x} \in \llbracket X|y_1 \rrbracket$ and $x^* \in \llbracket X|y_N \rrbracket$, which satisfies  \eqref{eq:condtionConnectedness}.  Since $x^*$ is in both $\llbracket X|y_N \rrbracket$ and  $\llbracket X|y^* \rrbracket $, we have  $\llbracket X|y_N \rrbracket \cap \llbracket X|y^* \rrbracket  \not = \emptyset$ and since $(X,Y) \stackrel{d}{\leftrightarrow}(\delta,\delta_2)$, we have
\begin{equation}
\frac{m_{\mathscr{X}}( \llbracket X|y_N \rrbracket \cap \llbracket X|y^*\rrbracket)}{m_{\mathscr{X}}(\llbracket X\rrbracket)} > \delta.
\end{equation}
Without loss of generality, we can then assume that the last element of our sequence is   $\llbracket X|y^* \rrbracket $. By Property 3 of Definition~\ref{defoverlap}, every conditional range in the sequence must be contained in some set of the $\delta$-overlap family $\mathcal{D}$. Since
 $\llbracket X|y^* \rrbracket \subseteq \mathscr{D}(x)$ and  $\llbracket X|y_1 \rrbracket \not \subseteq \mathscr{D}(x)$, it follows that there exist
two consecutive conditional ranges along the sequence and two sets of the $\delta$-overlap family covering them, such that $\llbracket X|y_{i-1}\rrbracket \subseteq \mathscr{D}(x_{i-1})$, $\llbracket X|y_{i}\rrbracket  \subseteq \mathscr{D}(x_{i})$, and  $\mathscr{D}(x_{i-1}) \not = \mathscr{D}(x_{i})$.
Then, we have
\begin{equation}\label{eq:lowerboundOverlap}
 \begin{split}
& \; m_{\mathscr{X}}( \mathscr{D}(x_{i-1}) \cap \mathscr{D}(x_{i}) )    \\
= & \; m_\mathscr{X}( (\llbracket X|y_{i-1}  \rrbracket \cap \llbracket X|y_{i} \rrbracket)    \cup  (\mathscr{D}(x^*_{i-1})\cap \mathscr{D}(x^*_{i}))) \\
\stackrel{(a)}{\geq} & \; m_\mathscr{X} ( \llbracket X|y_{i-1} \rrbracket \cap \llbracket X|y_{i}\rrbracket) \\
\stackrel{(b)}{>} & \; \delta m_{\mathscr{X}}(\llbracket X\rrbracket),
 \end{split}   
 \end{equation}
 where $(a)$ follows from \eqref{eq:strongtransitivity} and  $(b)$ follows from \eqref{eq:condtionConnectedness}. 
 It follows that 
\begin{equation}
    \frac{m_{\mathscr{X}}( \mathscr{D}(x_{i-1})\cap \mathscr{D}({x_{i}}))}{m_{\mathscr{X}}(\llbracket X\rrbracket)} > \delta,
    \end{equation}
 and Property~2 of Definition \ref{defoverlap}  is violated. Thus, $\tilde{x}$ does not exists, which implies $\mathscr{C}(x)\subseteq \mathscr{D}(x)$. 
Combining (\ref{eq:1}) and (\ref{eq:2}), 
we conclude that the $\delta$-overlap family $\mathcal{C}$ is  unique. 



We now turn to the proof of the second part of the Theorem. Since by \eqref{ventotto} the uncertainty associated to the  overlap between any two   sets of the $\delta$-overlap family $\mathcal{C}$ is zero,  it follows that   $\mathcal{C}$ is also a  partition.   

Now, we show that $\mathcal{C}$ is also a $\delta$-isolated partition. This can be proven by contradiction. Assume $\mathcal{C}$ is not a $\delta$-isolated partition. Then, there exists two distinct sets $\mathscr{C}(x_1),\mathscr{C}(x_2)\in \mathcal{C}$ such that $\mathscr{C}(x_1)$ and $\mathscr{C}(x_2)$ are not $\delta$-isolated. This implies that there exists a point $\bar{x}_1\in \mathscr{C}(x_1)$ and 
$\bar{x}_2\in \mathscr{C}(x_2)$ such that $\bar{x}_1\stackrel{\delta}{\leftrightsquigarrow}\bar{x}_2$. Using the fact that $\mathscr{C}(x_1)$ and $\mathscr{C}(x_2)$ are $\delta$-connected and   Lemma \ref{lemma:commutativeProperty} in   Appendix \ref{sec:AuxResult}, this implies that all points in the set $\mathscr{C}(x_1)$ are $\delta$-connected to all points in the set $\mathscr{C}(x_2)$. Now, let $x_1^*$ and $x_2^*$ be   points in a singly $\delta$-connected set contained in $\mathscr{C}(x_1)$ and $\mathscr{C}(x_2)$ respectively, namely  $x^*_1\in \llbracket X|y^*_1\rrbracket\subseteq \mathscr{C}(x_1)$ and $x^*_2\in \llbracket X|y^*_2\rrbracket\subseteq \mathscr{C}(x_2)$. Since $x^*_1\stackrel{\delta}{\leftrightsquigarrow}x^*_2$, there exists a sequence of conditional ranges $\{\llbracket X|y_i\rrbracket\}_{i=1}^N $ satisfying \eqref{eq:condtionConnectedness}, such that $x_1\in \llbracket X|y_1\rrbracket$ and $x_2\in \llbracket X|y_N\rrbracket$. Without loss of generality, we can assume $\llbracket X|y_1\rrbracket=\llbracket X|y^*_1\rrbracket$ and $\llbracket X|y_2\rrbracket=\llbracket X|y^*_2\rrbracket$. Since $\mathcal{C}$ is a partition, we have that
 $\llbracket X|y^*_1 \rrbracket \subseteq \mathscr{C}(x_1)$ and  $\llbracket X|y^*_2 \rrbracket \not \subseteq \mathscr{C}(x_1)$. It follows that there exist
two consecutive conditional ranges along the sequence $\{\llbracket X|y_i\rrbracket\}_{i=1}^N $ and two sets of the $\delta$-overlap family $\mathcal{C}$ covering them, such that $\llbracket X|y_{i-1}\rrbracket \subseteq \mathscr{C}(x_{i-1})$ and $\llbracket X|y_{i}\rrbracket  \subseteq \mathscr{C}(x_{i})$, and  $\mathscr{C}(x_{i-1}) \not = \mathscr{C}(x_{i})$. Similar to \eqref{eq:lowerboundOverlap}, we have 
\begin{equation}
    \frac{m_{\mathscr{X}}( \mathscr{C}(x_{i-1})\cap \mathscr{C}({x_{i}}))}{m_{\mathscr{X}}(\llbracket X\rrbracket)} > \delta,
    \end{equation}
and Property~2 of Definition \ref{defoverlap}  is violated. Thus, $\mathscr{C}(x_1)$ and $\mathscr{C}(x_2)$ do  not exist, which implies $\mathcal{C}$ is $\delta$-isolated partition. 

Let $\mathcal{P}$ be  any other  $\delta$-isolated partition.  We wish to show that $|\mathcal{C}| \geq |\mathcal{P}|$, and that  the equality holds if and only if $\mathcal{P}=\mathcal{C}$.
First, note that every set $\mathscr{C}(x)\in \mathcal{C}$ can intersect   at most one set in $ \mathcal{P}$, otherwise the sets in $\mathcal{P}$ would not be $\delta$-isolated. 
Second, since $\mathcal{C}$ is a cover of $\llbracket X\rrbracket$,   every set in $\mathcal{P}$ must be intersected by at least one set in $\mathcal{C}$. 
 It follows that
\begin{equation}
| \mathcal{C} |\geq |\mathcal{P}|. 
\end{equation}
Now, assume the equality holds. In this case, there is a one-to-one  correspondence $\mathscr{P}: \mathcal{C} \rightarrow \mathcal{P}$, 
such that for all $x \in \llbracket X \rrbracket$, we have $\mathscr{C}(x) \subseteq\mathscr{P}( \mathscr{C}(x))$, and since both $\mathcal{C}$ and $\mathcal{P}$ are partitions of  $\llbracket X \rrbracket$, it follows that $\mathcal{C}=\mathcal{P}$.
%
%
%
Conversely, assuming $\mathcal{C}=\mathcal{P}$, then $|\mathcal{C}|= |\mathcal{P}|$   follows trivially.  
\end{proof}

We have introduced the notion of  mutual information  from $Y$ to $X$ in terms of the conditional range $\llbracket X|Y\rrbracket$.
 Since in general we have   $\llbracket X|Y\rrbracket\neq  \llbracket Y|X\rrbracket$, one may   expect  the definition of mutual information to be asymmetric in its arguments. Namely, the amount of information provided about $X$ by the knowledge of $Y$ may not be the same as the amount of information provided about $Y$ by the knowledge of $X$. Although this is true in general, we show that for   disassociated UVs   symmetry is retained, provided that when  swapping $X$ with $Y$ one also rescales $\delta$ appropriately. 
The following   theorem establishes the symmetry in the mutual information under the appropriate scaling of the parameters $\delta_1$ and $\delta_2$. 
The proof  requires the introduction of the notions of taxicab connectedness, taxicab family, and taxicab partition, which are given in   Appendix \ref{sec:taxicabSection}.

\begin{theorem}\label{corr:symmetricPropOfInfo}
If  $( X, Y)\stackrel{d}{\leftrightarrow}(\delta_{1},\delta_{2})$, and a $(\delta_{1},\delta_{2})$-taxicab family of $\llbracket X,Y\rrbracket$ exists, then we have
\begin{equation}
      I_{\delta_1}(X;Y) = I_{\delta_2} (Y;X).
\end{equation}
\end{theorem}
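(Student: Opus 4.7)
The plan is to use the $(\delta_1,\delta_2)$-taxicab family of $\llbracket X,Y\rrbracket$ as a common ``pivot'' and exhibit two bijective projections, one onto $\llbracket X|Y\rrbracket^*_{\delta_1}$ and one onto $\llbracket Y|X\rrbracket^*_{\delta_2}$, so that both overlap families inherit the same cardinality. Under the disassociation hypothesis, Theorem~\ref{lemma:overlap} guarantees that both overlap families exist and, crucially, are unique and coincide with the $\delta$-isolated partitions of largest cardinality; this rigidity is what will let us identify the projections with the overlap families themselves.

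The first step is to make precise the projection $\pi_X$ from the taxicab family to subsets of $\llbracket X\rrbracket$ by $\pi_X(\mathscr{T}) = \{x : (x,y)\in \mathscr{T} \text{ for some } y\in\llbracket Y\rrbracket\}$, and the symmetric projection $\pi_Y$ onto subsets of $\llbracket Y\rrbracket$. I would then show that taxicab-connectedness in $\llbracket X,Y\rrbracket$ translates horizontally into $\delta_1$-connectedness via $\llbracket X|Y\rrbracket$: a horizontal move within a common $y$-row gives a singly $\delta_1$-connected piece inside some $\llbracket X|y\rrbracket$, while a vertical taxicab move contributes an overlap in the $X$-marginal that is bounded below by the $\delta_1$ threshold imposed by the taxicab overlap condition. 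Symmetrically for $\pi_Y$ with threshold $\delta_2$. Together with the third property of a taxicab family (each taxicab set contains a singly connected seed in both coordinates), this shows that the images $\pi_X(\mathscr{T})$ and $\pi_Y(\mathscr{T})$ satisfy Property~1 of Definition~\ref{defoverlap}.

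Next, I would verify Property~2 (bounded pairwise overlap) of the projected families. Disassociation together with strong transitivity \eqref{eq:strongtransitivity} and the partition/uniqueness part of Theorem~\ref{lemma:overlap} ensures that distinct projected sets cannot overlap beyond the allowed $\delta_1$ (resp.\ $\delta_2$) fraction; otherwise one could, as in \eqref{eq:lowerboundOverlap}, chain conditional ranges across projections and merge the pre-images into a single taxicab-connected set, contradicting distinctness in the taxicab family. Property~3 follows since every $\llbracket X|y\rrbracket$ sits inside the projection of whichever taxicab set contains the row $\{y\}\times \llbracket X|y\rrbracket$. Consequently $\{\pi_X(\mathscr{T})\}$ satisfies all three properties of Definition~\ref{defoverlap}, and by uniqueness (Theorem~\ref{lemma:overlap}) it must equal $\llbracket X|Y\rrbracket^*_{\delta_1}$; symmetrically the $Y$-projection equals $\llbracket Y|X\rrbracket^*_{\delta_2}$.

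The main obstacle will be showing that the projections $\pi_X$ and $\pi_Y$ are injective on the taxicab family, so that $|\llbracket X|Y\rrbracket^*_{\delta_1}| = |\text{taxicab family}| = |\llbracket Y|X\rrbracket^*_{\delta_2}|$. Suppose two distinct taxicab sets $\mathscr{T}_1\neq \mathscr{T}_2$ satisfied $\pi_X(\mathscr{T}_1)=\pi_X(\mathscr{T}_2)$. Then for any $x$ in this common projection, both $\mathscr{T}_1$ and $\mathscr{T}_2$ contain a point with first coordinate $x$, and a single vertical taxicab move links them. Iterating this through the singly $\delta_1$-connected seeds inside each projection, and invoking the commutative/transitive behavior of $\delta$-connectedness (Lemma~\ref{lemma:commutativeProperty} in the appendix, together with its taxicab analogue in Appendix~\ref{sec:taxicabSection}), one concludes that $\mathscr{T}_1\cup \mathscr{T}_2$ is itself taxicab-connected, contradicting the distinctness built into the taxicab family. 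Pinning down this chaining argument cleanly, so that it works uniformly across the two coordinates and respects the asymmetric thresholds $\delta_1$ and $\delta_2$, is the delicate technical point, and it is exactly where the joint disassociation hypothesis $(X,Y)\stackrel{d}{\leftrightarrow}(\delta_1,\delta_2)$ is used in full strength.
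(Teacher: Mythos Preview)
Your strategy coincides with the paper's: project the given taxicab family $\llbracket X,Y\rrbracket^*_{(\delta_1,\delta_2)}$ onto each axis, verify that the projected families satisfy the three properties of Definition~\ref{defoverlap}, and then invoke the uniqueness clause of Theorem~\ref{lemma:overlap} (available under disassociation) to identify the projections with $\llbracket X|Y\rrbracket^*_{\delta_1}$ and $\llbracket Y|X\rrbracket^*_{\delta_2}$, so that both overlap families inherit the cardinality of the taxicab family.

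Two corrections are worth making. First, your verification of Property~2 is more elaborate than needed: Property~2 of Definition~\ref{def:taxicabFamily} already asserts $m_{\mathscr{X}}(\mathscr{S}_{1,x}^+\cap\mathscr{S}_{2,x}^+)\leq\delta_1\, m_{\mathscr{X}}(\llbracket X\rrbracket)$ and $m_{\mathscr{Y}}(\mathscr{S}_{1,y}^+\cap\mathscr{S}_{2,y}^+)\leq\delta_2\, m_{\mathscr{Y}}(\llbracket Y\rrbracket)$ for distinct taxicab sets, so Property~2 for the projected families is immediate and no chaining argument via \eqref{eq:lowerboundOverlap} is required there. Second, the fact that each taxicab set carries a singly-connected seed of both types $\llbracket X|y\rrbracket\times\{y\}$ and $\{x\}\times\llbracket Y|x\rrbracket$ is Property~1 of Definition~\ref{def:taxicabFamily}, not Property~3; Property~3 is what gives you the third overlap-family property for the projections. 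Your explicit treatment of the injectivity of $\pi_X$ and $\pi_Y$ is a useful addition---the paper passes directly from $|\mathcal{D}|=|\llbracket X|Y\rrbracket^*_{\delta_1}|$ to $|\llbracket X,Y\rrbracket^*_{(\delta_1,\delta_2)}|=|\llbracket X|Y\rrbracket^*_{\delta_1}|$ with a bare ``which implies''---and your single-vertical-move argument using disassociation is the right mechanism, though the cleanest way to reach the contradiction is again through Property~2 of the taxicab family rather than an appeal to ``distinctness.''
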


\section{($\epsilon$,$\delta$)-Capacity} \label{sec:cap}
We now give an operational definition of capacity of a communication channel and relate it to the notion of mutual information between UVs introduced above.
Let $\mathscr{X}$ be a totally bounded, normed  metric space such that for all $x\in \mathscr{X}$ we have $\|x\|\leq 1$,  where $\|.\|$ represents    norm.
This normalization is for convenience of notation and all results can   easily be extended to  metric spaces of any bounded norm. 
Let $\mathcal{X} \subseteq \mathscr{X}$ be a discrete set of points in the space, which represents a codebook. Any point $x \in \mathcal{X}$ represents a codeword that can be selected at the transmitter, sent over the channel, and  received with noise perturbation at most $\epsilon$. Namely, for any     transmitted codeword $x \in \mathcal{X}$,  we receive a point in the set 

\begin{figure}[t]
\begin{center}
\includegraphics[width=.35\textwidth ]{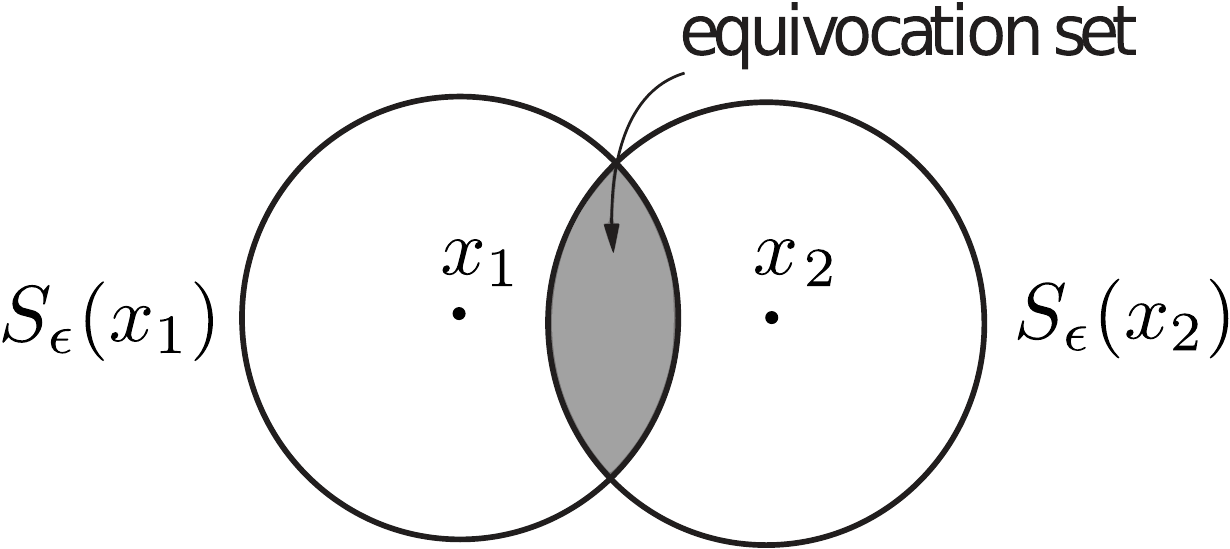}
\end{center}
\caption{
The size of the equivocation  set is inversely proportional to the amount of adversarial effort required to induce an error. }
\label{fig:equivocation}
\end{figure}
\begin{equation}\label{eq:setOfReceivedCodewords}
    S_\epsilon(x)=\{y \in   \mathscr{X} : \|x-y \|\leq \epsilon\}.
\end{equation}

It follows that all received codewords lie in the set $\mathcal{Y}=\bigcup_{x\in \mathcal{X}}S_\epsilon(x)$, where  $\mathcal{Y}\subseteq \mathscr{Y}= \mathscr{X}$. 
Transmitted codewords can be decoded correctly as long as the corresponding uncertainty sets at the receiver do not overlap. This can be done by simply associating the received codeword to the point in the codebook that is closest to it.  

For any $x_{1},x_{2}$ $\in$ $\mathcal{X}$, we now let 
\begin{equation}\label{eqerr}
   { e_\epsilon(x_{1},x_{2})=\frac{m_{\mathscr{Y}}(S_\epsilon(x_{1})\cap S_\epsilon(x_{2}))}{m_{\mathscr{Y}}(\mathscr{Y})}},
\end{equation}
where $m_{\mathscr{Y}}(.)$ is an uncertainity function defined over the  space $\mathscr{Y}$. We also assume without loss of generality that the uncertainty associated to the whole space $\mathscr{Y}$ of  received codewords is $m_\mathscr{Y}(\mathscr{Y}) = 1$.  
Finally, 
{ we let  $V_{\epsilon} \subseteq \mathscr{Y}$  be the smallest uncertainty set corresponding to a transmitted codeword, namely $V_{\epsilon}=S_{\epsilon}(x^*)$, where $x^*=\argmin_{x\in\mathscr{X}}m_{\mathscr{Y}}(S_{\epsilon}(x))$}.
 The quantity  $1-e_\epsilon(x_1,x_2)$ can be viewed   as the confidence we have of not confusing $x_1$ and $x_2$ in any transmission, or equivalently as the amount of adversarial effort required to induce a confusion between the two codewords. For example, if the uncertainty function  is constructed using a measure, then all the erroneous codewords generated by an adversary to decode $x_2$ instead than $x_1$ must  lie inside the equivocation set depicted in Figure~\ref{fig:equivocation}, whose relative size is given by \eqref{eqerr}. The smaller the equivocation set is, the larger must  be the effort  required by the adversary to induce an error. If the uncertainty function represents the diameter of the set, then all the erroneous codewords generated by an adversary to decode $x_2$ instead than $x_1$ will  be   close to each other, in the sense of  \eqref{eqerr}.  Once again, the closer the possible erroneous codewords are, the harder must  be for the adversary to generate an error, since any small deviation  allows the decoder  to correctly identify the transmitted codeword.   

We now introduce the notion of \emph{distinguishable codebook}, ensuring that every codeword cannot be confused with any other codeword, rather than with a specific one, at a given level of confidence.



\begin{definition}\label{def:distin}{$(\epsilon,\delta)$-distinguishable codebook.}\\
For any $0< \epsilon\leq 1$, $0\leq \delta < m_{\mathscr{Y}}(V_{\epsilon})$, a codebook $\mathcal{X} \subseteq \mathscr{X}$ is $(\epsilon,\delta)$-distinguishable if for all $x_{1},x_{2}\in \mathcal{X}$, we have $e_\epsilon(x_{1},x_{2}) \leq {\delta/|\mathcal{X}|}$.
\end{definition} 
For any   $(\epsilon,\delta)$-distinguishable codebook $\mathcal{X}$ and $x \in \mathcal{X}$,  we   let
\begin{equation}
    e_{\epsilon}(x)=\sum_{x'\in\mathcal{X}:x'\neq x}e_{\epsilon}(x,x').
\end{equation}
It now follows from   Definition~\ref{def:distin} that
\begin{equation}
    e_{\epsilon}(x) \leq \delta,
\end{equation}
and each codeword in   an $(\epsilon,\delta)$-distinguishable codebook can be decoded correctly with confidence at least $1-\delta$. Definition~\ref{def:distin}   guarantees even more, namely that the confidence of not confusing any pair of codewords is uniformly bounded by $1-\delta/|\mathcal{X}|$. This stronger constraint implies that we cannot ``balance'' the error associated to a codeword transmission by  allowing some decoding pair to have a lower confidence and enforcing other pairs to have higher confidence. This  is the main difference between our definition and the one used in~\cite{lim2017information} which bounds the average confidence, and allows us to relate the notion of  capacity to the mutual information between   pairs of codewords. 

\begin{definition}\label{defn:capacity}{$(\epsilon, \delta)$-capacity.}\\ For any totally bounded, normed metric space $\mathscr{X}$, $0< \epsilon\leq 1$, and $0\leq \delta <  m_{\mathscr{Y}}(V_{\epsilon})$,   the $(\epsilon,\delta)$-capacity of  $\mathscr{X}$ is 
\begin{equation}
    C_{\epsilon}^{\delta}=\sup_{\mathcal{X} \in \mathscr{X}_{\epsilon}^{\delta}} \log_2 |\mathcal{X}| \mbox{ bits},
\end{equation}
where $\mathscr{X}_{\epsilon}^{\delta}=\{\mathcal{X}: \mathcal{X} \mbox{ is } (\epsilon,\delta)\mbox{-distinguishable}\}$ is the set of  $(\epsilon,\delta)$-distinguishable codebooks.
\end{definition}

The $(\epsilon,\delta)$-capacity represents the largest number of bits that can be communicated by using any $(\epsilon, \delta)$-distinguishable codebook. 
The corresponding geometric picture is illustrated in Figure~\ref{fig:capacity}. For $\delta=0$, our notion of capacity reduces to Kolmogorov's $\epsilon$-capacity, that is the logarithm of the packing number of the space with balls of radius $\epsilon$. 
\begin{figure}
\begin{center}
\includegraphics[width=.45 \columnwidth ]{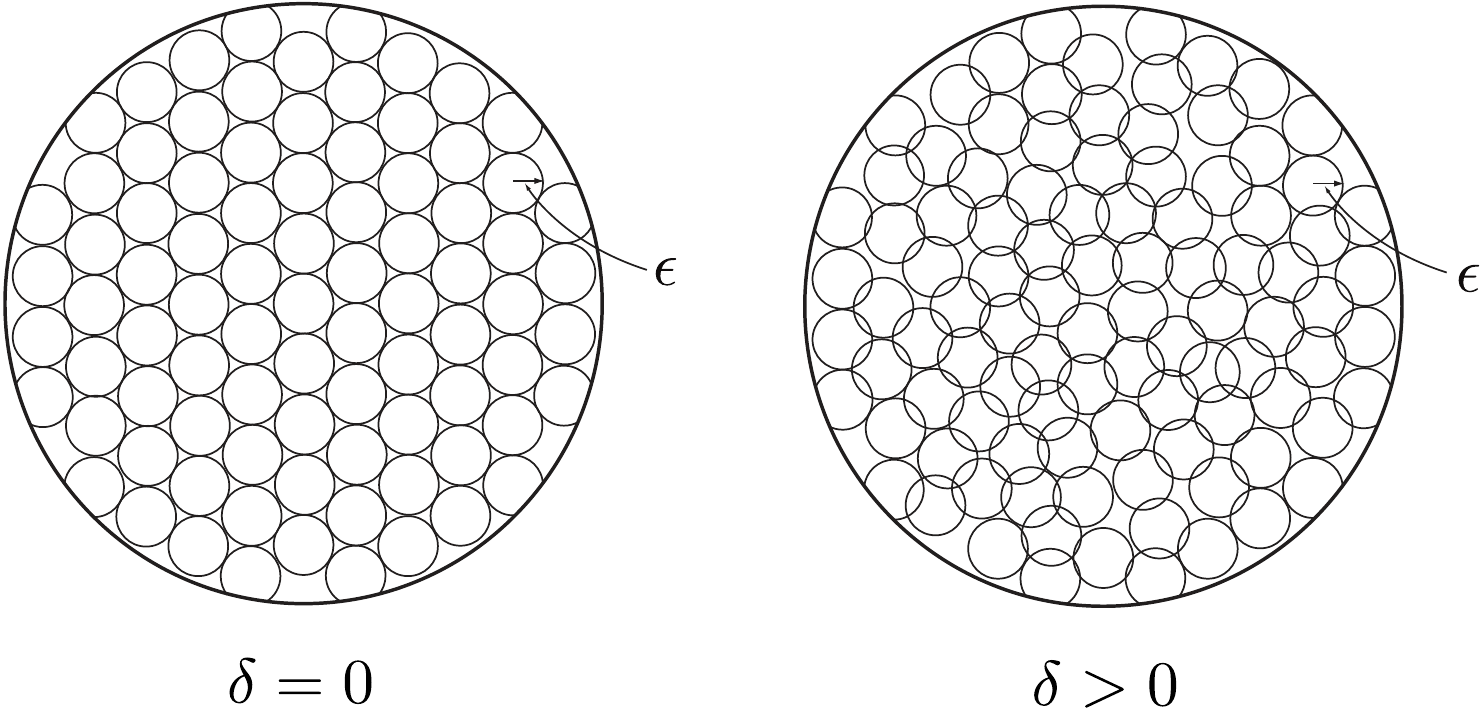}
\end{center}
\caption{Illustration of the $(\epsilon,\delta)$-capacity in terms of packing $\epsilon$-balls with maximum overlap $\delta$.}
\label{fig:capacity}
\end{figure}
In the   definition of capacity, we have restricted $\delta < m_{\mathscr{Y}}(V_{\epsilon}) $ to rule out the  case  when the decoding error
can be at least as large as the error introduced by the channel, and  the $(\epsilon,\delta)$-capacity is infinite. Also, note that $m_{\mathscr{Y}}(V_\epsilon)\leq 1$ since $V_\epsilon\subseteq\mathscr{Y}$ and \eqref{eq:strongtransitivity} holds.

We now relate our operational definition of capacity to the notion of UVs and mutual information introduced in Section~\ref{section:MaxMinInfo}.
Let $X$ be the UV corresponding to the transmitted { codeword}. 
This is a map $X: \mathscr{X}\to \mathcal{X}$ and $\llbracket X\rrbracket=\mathcal{X}\subseteq \mathscr{X}$.
Likewise, let $Y$ be the UV corresponding to the received codeword. This is a map $Y: \mathscr{Y}\to \mathcal{Y}$ and $\llbracket Y\rrbracket=\mathcal{Y}\subseteq \mathscr{Y}$. 
For our $\epsilon$-perturbation channel, these UVs are such that for all $y\in \llbracket Y\rrbracket$ and $x\in\llbracket X\rrbracket$, we have
\begin{align}
 \llbracket Y|x\rrbracket &= \{y\in   \llbracket Y \rrbracket: \|x-y\|\leq \epsilon\},  \label{uno}  \\
\llbracket X|y\rrbracket &= \{x\in \llbracket{X}\rrbracket: \|x-y\|\leq \epsilon\},  \label{due}
\end{align}
see Figure~\ref{points}. Clearly, the set in \eqref{uno} is continuous, while the set in \eqref{due} is discrete.
\begin{figure}[t]
\begin{center}
\includegraphics[width=.31 \textwidth ]{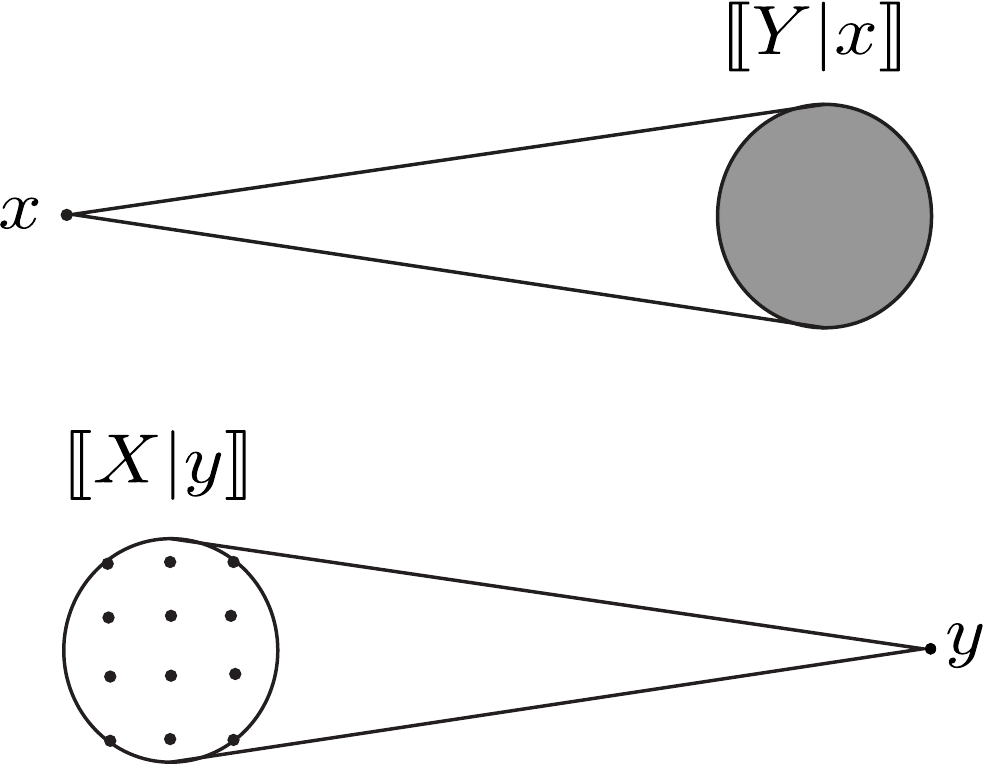}
\end{center}
\caption{Conditional ranges $\llbracket Y|x\rrbracket$ and $ \llbracket X|y \rrbracket$ due to the $\epsilon$-perturbation channel. }
\label{points}
\end{figure}

To measure the levels of association and disassociation between   $X$ and $Y$, we use an uncertainty function $m_{\mathscr{X}}(.)$  defined over $\mathscr{X}$, and  $m_{\mathscr{Y}}(.)$  defined over $\mathscr{Y}$. We 
introduce the feasible set   
{
\begin{equation}\label{eq:feasibleSet}
\begin{split}
    \mathscr{F}_{\delta}&= \{ X :
    \llbracket X\rrbracket\subseteq \mathscr{X}, 
    \mbox{ and either } \\
    &\qquad( X, Y)\stackrel{d}{\leftrightarrow}(0,{{\delta}}/|\llbracket X\rrbracket|) \\
&   \qquad   \mbox{ or } ( X, Y)\stackrel{a}{\leftrightarrow}(1,{{\delta}}/|\llbracket X\rrbracket|) \},
\end{split}
\end{equation}
}
representing the set of UVs $X$ such that the marginal range $\llbracket X\rrbracket$ is a discrete set representing a codebook, and the UV can either achieve   $(0,\delta/|\llbracket X\rrbracket|)$ levels of disassociation or $(1,\delta/|\llbracket X\rrbracket|)$ levels of  association with $Y$. In our channel model, this feasible set also depends on the $\epsilon$-perturbation   through \eqref{uno} and \eqref{due}.

We can now state the non-stochastic channel coding theorem for our $\epsilon$-perturbation channel.
\begin{theorem}\label{thm:channelCodingTheorem}   For any totally bounded,  normed metric space $\mathscr{X}$,  $\epsilon$-perturbation channel satisfying \eqref{uno} and \eqref{due}, $0< \epsilon\leq 1$ and $0\leq \delta < m_{\mathscr{Y}}(V_{\epsilon})$,
we have
\begin{equation}\label{eq:channelCodingTheorem}
       C_{\epsilon}^{\delta}
        = \sup_{ X \in \mathscr{F}_{\tilde{\delta}}, \tilde{\delta}\leq\delta/m_{\mathscr{Y}}(\llbracket Y\rrbracket)} I_{\tilde{\delta}/|\llbracket X\rrbracket|}(Y;X)\mbox{ bits}.
\end{equation}
\end{theorem}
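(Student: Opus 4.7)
The plan is to prove the two inequalities $C_{\epsilon}^{\delta} \leq \sup I$ and $C_{\epsilon}^{\delta} \geq \sup I$ separately, by constructing an uncertain variable from any distinguishable codebook and conversely extracting a distinguishable codebook from any uncertain variable in the feasible set.

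For the direction $C_{\epsilon}^{\delta} \leq \sup I$, I would start from an arbitrary $(\epsilon,\delta)$-distinguishable codebook $\mathcal{X}$ and define the UV $X$ via $\llbracket X\rrbracket=\mathcal{X}$. Setting $\tilde{\delta}=\delta/m_{\mathscr{Y}}(\llbracket Y\rrbracket)$, the distinguishability bound $e_{\epsilon}(x_1,x_2)\le \delta/|\mathcal{X}|$ together with $m_{\mathscr{Y}}(\mathscr{Y})=1$ rearranges to $m_{\mathscr{Y}}(\llbracket Y|x_1\rrbracket\cap \llbracket Y|x_2\rrbracket)/m_{\mathscr{Y}}(\llbracket Y\rrbracket)\le \tilde{\delta}/|\llbracket X\rrbracket|$ for every pair, which via Lemma~\ref{lemma:Dissassociation} (together with the monotonicity consequence of \eqref{eq:strongtransitivity} applied to the other direction) gives $(X,Y)\stackrel{a}{\leftrightarrow}(1,\tilde{\delta}/|\llbracket X\rrbracket|)$, so $X\in \mathscr{F}_{\tilde{\delta}}$. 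I would then exhibit the family $\{\llbracket Y|x\rrbracket: x\in\mathcal{X}\}$ of $|\mathcal{X}|$ distinct sets and verify that it satisfies the three properties of Definition~\ref{defoverlap}: each $\llbracket Y|x\rrbracket$ is trivially singly $\delta$-connected and contains itself (Property 1), the overlap bound above yields Property 2, and every singly $\delta$-connected set is by construction contained in some $\llbracket Y|x\rrbracket$ (Property 3). Since $|\llbracket Y|X\rrbracket^*_{\tilde{\delta}/|\llbracket X\rrbracket|}|$ is the cardinality of a \emph{largest} such family, $I_{\tilde{\delta}/|\mathcal{X}|}(Y;X)\ge \log_2|\mathcal{X}|$, and supremizing over all distinguishable codebooks yields the inequality.

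For the converse direction $C_{\epsilon}^{\delta} \geq \sup I$, I would fix $X\in \mathscr{F}_{\tilde{\delta}}$ with $\tilde{\delta}\le \delta/m_{\mathscr{Y}}(\llbracket Y\rrbracket)$ and let $\{\mathscr{D}_i\}_{i=1}^{N}$ be a $(\tilde{\delta}/|\llbracket X\rrbracket|)$-overlap family of $\llbracket Y\rrbracket$, whose existence is guaranteed by Theorem~\ref{thm:associativityPArtition} in the associated case and by Theorem~\ref{lemma:overlap} in the disassociated case. By Property 1 of Definition~\ref{defoverlap}, I can pick $x_i\in\llbracket X\rrbracket$ with $\llbracket Y|x_i\rrbracket \subseteq \mathscr{D}_i$ for each $i$. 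The critical step is to show the $x_i$ are pairwise distinct, which follows from the strict inequality $\delta<m_{\mathscr{Y}}(V_\epsilon)$: if $\llbracket Y|x\rrbracket\subseteq \mathscr{D}_i\cap \mathscr{D}_j$ for some $i\ne j$, then by \eqref{eq:strongtransitivity}
\begin{equation}
m_{\mathscr{Y}}(\mathscr{D}_i\cap \mathscr{D}_j) \ge m_{\mathscr{Y}}(\llbracket Y|x\rrbracket) \ge m_{\mathscr{Y}}(V_\epsilon) > \delta \ge \tilde{\delta}\, m_{\mathscr{Y}}(\llbracket Y\rrbracket),
\end{equation}
which contradicts the overlap bound $m_{\mathscr{Y}}(\mathscr{D}_i\cap \mathscr{D}_j)\le (\tilde{\delta}/|\llbracket X\rrbracket|)\, m_{\mathscr{Y}}(\llbracket Y\rrbracket)$. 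Consequently $N\le |\llbracket X\rrbracket|$. For any two such codewords,
\begin{equation}
e_\epsilon(x_i,x_j) = m_{\mathscr{Y}}(\llbracket Y|x_i\rrbracket\cap \llbracket Y|x_j\rrbracket) \le m_{\mathscr{Y}}(\mathscr{D}_i\cap \mathscr{D}_j) \le \frac{\tilde{\delta}\, m_{\mathscr{Y}}(\llbracket Y\rrbracket)}{|\llbracket X\rrbracket|} \le \frac{\delta}{|\llbracket X\rrbracket|} \le \frac{\delta}{N},
\end{equation}
so $\{x_1,\dots,x_N\}$ is $(\epsilon,\delta)$-distinguishable, giving $C_\epsilon^\delta \ge \log_2 N = I_{\tilde{\delta}/|\llbracket X\rrbracket|}(Y;X)$. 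Taking the supremum over $X$ and $\tilde{\delta}$ closes the argument.

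The main obstacle I anticipate is the bookkeeping around the two different normalizations: the overlap condition in Definition~\ref{defoverlap} normalizes by $m_{\mathscr{Y}}(\llbracket Y\rrbracket)$, whereas the distinguishability condition in Definition~\ref{def:distin} normalizes by $m_{\mathscr{Y}}(\mathscr{Y})=1$, so one must thread the substitution $\tilde{\delta}=\delta/m_{\mathscr{Y}}(\llbracket Y\rrbracket)$ carefully through both directions. The other delicate point is recognizing that the constraint $\delta<m_{\mathscr{Y}}(V_\epsilon)$ is precisely what is needed to guarantee that each $\llbracket Y|x\rrbracket$ is covered by a unique element of the overlap family, and hence that the selection step in the converse produces a codebook whose size does not exceed $|\llbracket X\rrbracket|$.
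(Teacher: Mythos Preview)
Your proposal is correct and follows essentially the same approach as the paper: the paper's three steps (non-emptiness, $\sup I\le C_\epsilon^\delta$, and achievability) amount to the same two inequalities you prove, with the ``extract one codeword per overlap-family member and use $\delta<m_{\mathscr{Y}}(V_\epsilon)$ to force distinctness'' argument packaged in the paper as Lemmas~\ref{lemma:equivalenceAssDiss} and~\ref{lemma:RelationToCardinality}, and the ``$\{\llbracket Y|x\rrbracket\}$ is itself an admissible overlap family'' argument packaged as Lemma~\ref{lemma:cardinalityAssociation}. Your inlined version is slightly more transparent; the only cosmetic point to add is a one-line check that the sets $\llbracket Y|x\rrbracket$, $x\in\mathcal{X}$, are pairwise distinct (which follows from the same $\delta<m_{\mathscr{Y}}(V_\epsilon)$ bound you already use in the converse).
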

\begin{proof}
{ First, we show that there exists a UV $X$ and $\tilde{\delta}\leq \delta/m_{\mathscr{Y}}(\llbracket Y\rrbracket)$ such that ${X}\in\mathscr{F}_{\tilde{\delta}}$, which implies that the supremum is well defined. }
Second, for all $X$ and $\tilde{\delta}$ such that
\begin{equation}
    X \in \mathscr{F}_{\tilde{\delta}},
\end{equation}
and
\begin{equation}
    \tilde{\delta}\leq\delta/m_{\mathscr{Y}}(\llbracket Y\rrbracket),
\end{equation}
we show that 
\begin{equation}
I_{\tilde{\delta}/|\llbracket X\rrbracket|}(Y;X)   \leq C_{\epsilon}^{\delta}.
\end{equation}
 Finally, we show the existence of    $X\in  \mathscr{F}_{\tilde{\delta}}$ and $\tilde{\delta}\leq \delta/m_{\mathscr{Y}}(\llbracket Y\rrbracket)$ such that $I_{\tilde{\delta}/|\llbracket X\rrbracket|}(Y;X)=C_\epsilon^\delta$.

{Let us begin with the first step. Consider a point $x\in\mathscr{X}$. Let $X$ be a UV such that 
\begin{equation}\label{eq:Xval}
    \llbracket X\rrbracket=\{x\}. 
\end{equation}
 Then, we have that the marginal range of the UV $Y$ corresponding to the received variable  is 
\begin{equation}\label{eq:Yval}
    \llbracket Y\rrbracket =\llbracket Y|x\rrbracket,
\end{equation}
and therefore for all $y\in\llbracket Y\rrbracket$, we have
\begin{equation}\label{eq:Yval1.1}
    \llbracket X|y\rrbracket=\{x\}.
\end{equation}
Using Definition \ref{def:AssSets} and \eqref{eq:Xval}, we have that
\begin{equation}\label{eq:Xsetdef}
    \mathscr{A}(Y;X)=\emptyset,
\end{equation}
because  $\llbracket X\rrbracket$ consists of a single point, and therefore the set in \eqref{eqdef1} is empty. 

On the other hand, using Definition \ref{def:AssSets} and \eqref{eq:Yval1.1}, we have
\begin{equation}\label{eq:Ysetdef}
    \mathscr{A}(X;Y)=\begin{cases}
    \{1\}  \mbox{ if   $\exists  y_1,y_2 \in \llbracket Y\rrbracket$,}\\
    \emptyset \quad\mbox{ otherwise}.
    \end{cases}
\end{equation}
Using \eqref{eq:Xsetdef} and since $\mathscr{A} \preceq \delta$ holds for  $\mathscr{A}=\emptyset$,   we have 
\begin{equation}\label{eq:cond1_1}
    \mathscr{A}(Y;X)\preceq \delta/(|\llbracket X\rrbracket|m_{\mathscr{Y}}(\llbracket Y\rrbracket)).
\end{equation}
Similarly, using \eqref{eq:Ysetdef},   we have 
\begin{equation}\label{eq:cond1_2}
    \mathscr{A}(X;Y)\preceq 1.
\end{equation}
Now, combining \eqref{eq:cond1_1}  and \eqref{eq:cond1_2}, we have
\begin{equation} (X,Y)\stackrel{a}{\leftrightarrow}(1,\delta/(|\llbracket X\rrbracket|m_{\mathscr{Y}}(\llbracket Y\rrbracket))).
\end{equation}
Letting $\tilde{\delta}=\delta/m_{\mathscr{Y}}(\llbracket Y\rrbracket)$, this implies that  $X\in \mathscr{F}_{\tilde\delta}$ and the first step of the proof is complete. 
}

To prove the second step, we define the set of discrete UVs
\begin{equation}
\begin{split}
    \mathscr{G}&=\{X:\llbracket X\rrbracket\subseteq \mathscr{X}, \exists \tilde{\delta}\leq \delta/m_{\mathscr{Y}}(\llbracket Y\rrbracket)\\
    &\qquad\qquad\mbox{ such that } \forall \mathscr{S}_{1},\mathscr{S}_{2}\in \llbracket Y|X\rrbracket,\\
    &\qquad \qquad  m_{\mathscr{Y}}(\mathscr{S}_{1}\cap \mathscr{S}_{2})/m_{\mathscr{Y}}(\llbracket Y\rrbracket)\leq \tilde{\delta}/|\llbracket X\rrbracket|\},
\end{split}
\end{equation}
which is a larger set than the one containing all UVs $X$ that are $(1,\tilde{\delta}/|\llbracket X\rrbracket|)$ associated to $Y$. Now, we will show that if an UV $X\in \mathscr{G}$, then the corresponding  codebook $\mathcal{X}\in \mathscr{X}_{\epsilon}^{\delta}$. 
If $X\in\mathscr{G}$, then there exists a $\tilde\delta\leq \delta/m_{\mathscr{Y}}(\llbracket Y\rrbracket)$ such that 
for all $\mathscr{S}_1,\mathscr{S}_2\in\llbracket Y|X\rrbracket$ we have
\begin{equation}
   \frac{ m_{\mathscr{Y}}(\mathscr{S}_{1}\cap \mathscr{S}_{2})}{m_{\mathscr{Y}}(\llbracket Y\rrbracket)}\leq \frac{\tilde{\delta}}{|\llbracket X\rrbracket|}.
\end{equation}
It follows that
for all $x_1,x_2\in\llbracket X\rrbracket$, we have
\begin{equation}
    \frac{m_{\mathscr{Y}}(\llbracket Y|x_1\rrbracket\cap \llbracket Y|x_2\rrbracket)}{m_{\mathscr{Y}}(\llbracket Y\rrbracket)}\leq \frac{\tilde\delta}{|\llbracket{X}\rrbracket|}.
\end{equation}
Using $\mathcal{X}=\llbracket X\rrbracket$, \eqref{uno}, $\llbracket Y\rrbracket=\mathcal{Y}=\bigcup_{x\in \mathcal{X}}S_\epsilon(x)$ and $m_{\mathscr{Y}}(\mathscr{Y})=1$,  for all $x_1,x_2\in\mathcal{X}$  we have  
\begin{equation}
\begin{split}
    \frac{m_{\mathscr{Y}}(S_\epsilon(x_{1})\cap S_\epsilon(x_{2}))}{m_{\mathscr{Y}}(\mathscr{Y})}&\leq \frac{\tilde{\delta} m_{\mathscr{Y}}(\llbracket Y\rrbracket)}{|\mathcal{X}|},\\
    &\stackrel{(a)}{\leq} \frac{\delta}{|\mathcal{X}|},
\end{split}
\end{equation}
where $(a)$ follows from     $\tilde{\delta}\leq \delta/m_{\mathscr{Y}}(\llbracket Y\rrbracket)$. 
Putting things together, it follows that
\begin{equation}\label{eq:TwoSetsEqual_1}
X\in \mathscr{G} \implies \mathcal{X}\in \mathscr{X}_{\epsilon}^{\delta}  
\end{equation}
{ Consider now a pair  $X$ and $\tilde{\delta}$ such  that  $\tilde{\delta}\leq\delta/m_{\mathscr{Y}}(\llbracket Y\rrbracket)$  and 
\begin{equation}\label{eq:setin_1}
    X \in \mathscr{F}_{\tilde{\delta}}.
\end{equation}
If $(X,Y)\stackrel{d}{\leftrightarrow}(0,\tilde{\delta}/|\llbracket X\rrbracket|)$, then using Lemma \ref{lemma:equivalenceAssDiss} in   Appendix \ref{sec:AuxResult}   there exist   two UVs $\bar{X}$ and $\bar{Y}$, and   $\bar{\delta}\leq \delta/m_{\mathscr{Y}}(\llbracket\bar{Y}\rrbracket)$ such that
\begin{equation}\label{eq:che12}
    (\bar X, \Bar{Y})  
\stackrel{a} \leftrightarrow (1,\bar\delta/|\llbracket\bar{X}\rrbracket|),
\end{equation}  and
\begin{equation}\label{eq:che13}
    |\llbracket Y|X\rrbracket^*_{\tilde\delta/|\llbracket X\rrbracket|}|= |\llbracket \Bar Y|\bar{X} \rrbracket_{\bar\delta/|\llbracket\bar{X}\rrbracket|}^*|.
\end{equation}
On the other hand, if $ ( X, {Y})  
\stackrel{a} \leftrightarrow (1,\tilde\delta/|\llbracket{X}\rrbracket|)$, then \eqref{eq:che12} and \eqref{eq:che13} also trivially hold. It then follows that \eqref{eq:che12} and \eqref{eq:che13} hold for all $X\in \mathscr{F}_{\tilde{\delta}}$.
We now have 
\begin{equation}
\begin{split}
I_{\tilde{\delta}/|\llbracket X\rrbracket|}(Y;X)&=\log(|\llbracket Y|X\rrbracket_{\tilde{\delta}/|\llbracket X\rrbracket|}^*|)\\
&\stackrel{(a)}{=}\log(|\llbracket \bar Y|\bar X\rrbracket_{\bar{\delta}/|\llbracket \bar X\rrbracket|}^*|)\\
&\stackrel{(b)}{\leq} \log(|\llbracket \bar X\rrbracket|),\\
&\stackrel{(c)}{=} \log(|\bar{\mathcal{X}}|),\\ 
&\stackrel{(d)}{\leq} C^\delta_\epsilon,
\end{split}
\end{equation}
where $(a)$ follows from  \eqref{eq:che12} and \eqref{eq:che13}, $(b)$ follows from Lemma \ref{lemma:RelationToCardinality} in   Appendix \ref{sec:AuxResult} since $\bar{\delta}\leq\delta/m_{\mathscr{Y}}(\llbracket \bar Y\rrbracket) < m_{\mathscr{Y}}(V_{\epsilon})/m_{\mathscr{Y}}(\llbracket \bar Y\rrbracket)$, $(c)$ follows by defining the codebook $\bar{\mathcal{X}}$ corresponding  to the UV $\bar{X}$, and $(d)$ follows from the fact that using \eqref{eq:che12} and Lemma \ref{lemma:Dissassociation}, we have  $\bar{X}\in \mathcal{G}$, which implies by  \eqref{eq:TwoSetsEqual_1} that $\bar{\mathcal{X}}\in\mathscr{X}_{\epsilon}^{\delta}$. 
}

Finally, 
let 
\begin{equation}
\mathcal{{X}}^* =\mbox{argsup}_{\mathcal{X}\in\mathscr{X}_{\epsilon}^{\delta}}\log(|\mathcal{X}|), 
\end{equation}
which achieves the capacity $C_{\epsilon}^{\delta}$. Let ${{X}}^*$ be the UV whose marginal range corresponds to the codebook $\mathcal{{X}}^*$.
It follows that for all $\mathscr{S}_{1},\mathscr{S}_{1}\in \llbracket Y^*|{X}^*\rrbracket$, we have 
\begin{equation}
   \frac{ m_{\mathscr{Y}}(\mathscr{S}_{1}\cap \mathscr{S}_{1})}{m_{\mathscr{Y}}(\mathscr{Y})}\leq \frac{\delta}{|\llbracket X^*\rrbracket|},
\end{equation}
which implies using the fact that $m_{\mathscr{Y}}(\mathscr{Y})=1$,
\begin{equation}
    \frac{m_{\mathscr{Y}}(\mathscr{S}_{1}\cap \mathscr{S}_{1})}{m_{\mathscr{Y}}(\llbracket Y^*\rrbracket)}\leq \frac{\delta}{|\llbracket X^*\rrbracket| m_{\mathscr{Y}}(\llbracket Y^*\rrbracket)}.
\end{equation}
Letting   $\delta^*=\delta/m_{\mathscr{Y}}(\llbracket Y^*\rrbracket)$, and using Lemma \ref{lemma:Dissassociation}, we have that $( X^*,{Y}^*) \stackrel{a}\leftrightarrow (1,\delta^*/|\llbracket X^*\rrbracket|)$, which implies  ${{X}^*}\in \cup_{\tilde{\delta}\leq\delta/m_{\mathscr{Y}}(\llbracket Y^*\rrbracket} \mathscr{F}_{\tilde{\delta}}$ and the proof is complete. 

\end{proof}

Theorem~\ref{thm:channelCodingTheorem} characterizes the capacity as the supremum of the mutual information over all {UVs} in the feasible set.  The following theorem shows that the same characterization is obtained if we optimize the right hand side in \eqref{eq:channelCodingTheorem} over all {UVs} in the space. It follows that  by Theorem~\ref{thm:channelCodingTheorem}, rather than optimizing over all {UVs  representing all the codebooks in the space},  a capacity achieving codebook can  be found within the  smaller class $\cup_{\tilde{\delta}\leq \delta/m_{\mathscr{Y}}(V_\epsilon)} \mathscr{F}_{\tilde{\delta}}$ of feasible sets with error at most $\delta/m_{\mathscr{Y}}(V_\epsilon)$, since for all $\llbracket Y\rrbracket\subseteq\mathscr{Y}$, $m_{\mathscr{Y}}(V_\epsilon)\leq m_{\mathscr{Y}}(\llbracket Y\rrbracket)$. 

\begin{theorem} \label{cor}
The $(\epsilon,\delta)$-capacity in \eqref{eq:channelCodingTheorem} can also be written as
\begin{equation}
       C_{\epsilon}^{\delta}
        = \sup_{\substack{ X: \llbracket X\rrbracket \subseteq \mathscr{X},\\\tilde{\delta}\leq\delta/m_{\mathscr{Y}}(\llbracket Y\rrbracket)}} I_{\tilde{\delta}/|\llbracket X\rrbracket|}(Y;X)\mbox{ bits}.
\end{equation}
\end{theorem}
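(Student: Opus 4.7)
The plan is to establish the two inequalities separately.

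The ``$\leq$'' direction follows immediately from Theorem~\ref{thm:channelCodingTheorem}: since $\mathscr{F}_{\tilde{\delta}} \subseteq \{X : \llbracket X\rrbracket \subseteq \mathscr{X}\}$, enlarging the domain of the supremum can only increase it, so $C_\epsilon^\delta$ is bounded above by the right-hand side of the claim.

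For the reverse inequality, I would fix an arbitrary UV $X$ with $\llbracket X\rrbracket \subseteq \mathscr{X}$ and $\tilde{\delta}\leq\delta/m_{\mathscr{Y}}(\llbracket Y\rrbracket)$, and show $I_{\tilde{\delta}/|\llbracket X\rrbracket|}(Y;X) \leq C_\epsilon^\delta$. If no $\tilde{\delta}/|\llbracket X\rrbracket|$-overlap family of $\llbracket Y\rrbracket$ via $\llbracket Y|X\rrbracket$ exists, Definition~\ref{defn:information} gives $I = 0$ and the bound is trivial. Otherwise, let $N$ denote its cardinality. Using Property~1 of Definition~\ref{defoverlap}, I would select from each of the $N$ family sets one singly-connected subset of the form $\llbracket Y|x_k\rrbracket$, with $x_k \in \llbracket X\rrbracket$, forming a candidate codebook $\mathcal{X}^* = \{x_1,\ldots,x_N\}$.

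The core step is to verify that the chosen $x_k$'s are distinct and that $\mathcal{X}^* \in \mathscr{X}_\epsilon^\delta$. Distinctness would follow by contradiction: if $x_i = x_j$ for two different family sets, the intersection of those family sets would contain $\llbracket Y|x_i\rrbracket$, whose $m_{\mathscr{Y}}$-measure is at least $m_{\mathscr{Y}}(V_\epsilon)$ via \eqref{uno} together with the worst-case identification $\llbracket Y\rrbracket = \bigcup_{x\in\llbracket X\rrbracket} S_\epsilon(x)$; but Property~2 of Definition~\ref{defoverlap} caps this intersection at $(\tilde{\delta}/|\llbracket X\rrbracket|) m_{\mathscr{Y}}(\llbracket Y\rrbracket) \leq \delta$, which is strictly less than $m_{\mathscr{Y}}(V_\epsilon)$ by the standing assumption on $\delta$ in Definition~\ref{defn:capacity}, a contradiction. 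For distinguishability, for any $i \neq j$ monotonicity of $m_{\mathscr{Y}}$ (a direct consequence of strong transitivity) applied to the inclusion $\llbracket Y|x_i\rrbracket \cap \llbracket Y|x_j\rrbracket \subseteq (\text{family})_i \cap (\text{family})_j$ together with Property~2 yields $m_{\mathscr{Y}}(\llbracket Y|x_i\rrbracket \cap \llbracket Y|x_j\rrbracket) \leq \delta/|\llbracket X\rrbracket| \leq \delta/N$, since $N \leq |\llbracket X\rrbracket|$. Invoking \eqref{uno} and $m_{\mathscr{Y}}(\mathscr{Y}) = 1$ once more, this equals $e_\epsilon(x_i,x_j)$, so $\mathcal{X}^* \in \mathscr{X}_\epsilon^\delta$ by Definition~\ref{def:distin}, and $\log_2 N \leq C_\epsilon^\delta$ follows from Definition~\ref{defn:capacity}.

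The main obstacle will be the careful bookkeeping that links the abstract overlap family to an operational codebook; in particular, the distinctness argument exploits the restriction $\delta < m_{\mathscr{Y}}(V_\epsilon)$ in an essential way, since this is precisely what prevents two distinct family sets from sharing the same singly $\tilde{\delta}/|\llbracket X\rrbracket|$-connected representative, and hence what allows $N$ codewords to be extracted from the $N$ family sets.
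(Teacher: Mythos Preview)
Your argument is correct and uses the same core construction as the paper: from each set $\mathscr{S}_i$ of the $\tilde\delta/|\llbracket X\rrbracket|$-overlap family pick an $x_i\in\llbracket X\rrbracket$ with $\llbracket Y|x_i\rrbracket\subseteq\mathscr{S}_i$, and use these as codewords. The paper, however, does not bound $I_{\tilde\delta/|\llbracket X\rrbracket|}(Y;X)$ by $C_\epsilon^\delta$ directly; instead it shows that the extracted UV $\bar X$ (with $\llbracket\bar X\rrbracket=\{x_1,\dots,x_K\}$) lies in some $\mathscr{F}_{\bar\delta}$ with $\bar\delta\le\delta/m_{\mathscr Y}(\llbracket\bar Y\rrbracket)$, invokes Lemma~\ref{lemma:cardinalityAssociation} to get $|\llbracket\bar Y|\bar X\rrbracket^*_{\bar\delta/|\llbracket\bar X\rrbracket|}|=|\llbracket\bar X\rrbracket|=K$, and then appeals to Theorem~\ref{thm:channelCodingTheorem}. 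Your route is shorter because it bypasses the feasible set entirely and checks $(\epsilon,\delta)$-distinguishability straight from Property~2 of Definition~\ref{defoverlap}; the paper's route has the side benefit of exhibiting, for every $X$, an equivalent $\bar X$ inside the feasible class. You also make explicit the distinctness of the $x_i$'s (via $\delta<m_{\mathscr Y}(V_\epsilon)$), a point the paper leaves implicit when it writes $|\llbracket\bar X\rrbracket|=K$.
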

\begin{proof}
Consider an UV $X \not \in \cup_{\tilde{\delta}\leq \delta/m_{\mathscr{Y}}(\llbracket Y\rrbracket)}\mathscr{F}_{\tilde{\delta}}$, where $Y$ is the  corresponding UV at the receiver. The idea of the proof is to show the existence of an UV $\bar{X}\in \cup_{\tilde{\delta}\leq \delta/m_{\mathscr{Y}}(\llbracket \bar{Y}\rrbracket)}\mathscr{F}_{\tilde{\delta}}$ and the corresponding UV $\bar{Y}$ at the receiver, and  
\begin{equation}
\Bar{\delta}=\tilde{\delta}  m_{\mathscr{Y}}(\llbracket {Y}\rrbracket)/m_{\mathscr{Y}}(\llbracket \bar{Y}\rrbracket) \leq \delta/m( \llbracket \bar{Y} \rrbracket),
\end{equation}
such that the cardinality of the overlap partitions    
\begin{equation}\label{eq:CardinalityCheck2}
|\llbracket \bar Y|\bar{X}\rrbracket^{*}_{\bar{\delta}/|\llbracket \bar{X}\rrbracket|}|=|\llbracket  Y|{X}\rrbracket^{*}_{\tilde{\delta}/|\llbracket X\rrbracket|}|.
\end{equation}

Let the cardinality
\begin{equation}\label{eq:CardCodebok}
    |\llbracket Y|X\rrbracket_{{\tilde{\delta}}/|\llbracket X\rrbracket|}^{*}|=K.
\end{equation}
  By Property 1 of Definition~\ref{defoverlap}, we have that for all $\mathscr{S}_{i}\in\llbracket Y|X\rrbracket_{\tilde{\delta}/|\llbracket X\rrbracket|}^{*}$, there exists an $x_{i}\in \llbracket X\rrbracket$ such that $\llbracket Y|x_i\rrbracket \subseteq \mathscr{S}_{i}$. Now, consider another UV $\Bar{X}$ whose marginal range is composed of $K$ elements of $\llbracket {X}\rrbracket$, namely 
\begin{equation}\label{eq:newCodebook}
    \llbracket \bar{X}\rrbracket=\{x_1,\ldots x_K\}.
\end{equation}
 Let $\bar{Y}$ be the UV corresponding to the   received variable. Using the fact that for all $x\in\mathscr{X}$, we have $\llbracket \Bar{Y}|x\rrbracket=\llbracket{Y}|x\rrbracket$ since  \eqref{uno} holds, and using Property 2 of Definition~\ref{defoverlap},  for all $x,x^{\prime}\in \llbracket\bar{X}\rrbracket$,   we have that
\begin{equation}\label{eq:existence1}
\begin{split}
      \frac{m_{\mathscr{Y}}(\llbracket \bar{Y}|x\rrbracket\cap \llbracket \bar{Y}|x^{\prime}\rrbracket)}{m_{\mathscr{Y}}(\llbracket {Y}\rrbracket)}&\leq \frac{\tilde{\delta}}{|\llbracket X\rrbracket|},\\
      &\stackrel{(a)}{\leq}  \frac{\tilde{\delta}}{|\llbracket \bar{X}\rrbracket|},
\end{split}
\end{equation}
where $(a)$ follows from the fact that $\llbracket\bar{X}\rrbracket\subseteq\llbracket X\rrbracket$ using \eqref{eq:newCodebook}. Then,  for all $x,x^{\prime}\in \llbracket\bar{X}\rrbracket$, we have that
\begin{equation}\label{eq:existence2}
\begin{split}
    \frac{m_{\mathscr{Y}}(\llbracket \bar{Y}|x\rrbracket\cap \llbracket \bar{Y}|x^{\prime}\rrbracket)}{m_{\mathscr{Y}}(\llbracket \bar{Y}\rrbracket)}&\leq
    \frac{\tilde{\delta} m_{\mathscr{Y}}(\llbracket {Y}\rrbracket)}{|\llbracket \bar X\rrbracket| m_{\mathscr{Y}}(\llbracket \bar{Y}\rrbracket)}\\
    &=\frac{\bar{\delta}}{|\llbracket \bar X\rrbracket|},
\end{split}
\end{equation}
since $\bar\delta=\tilde{\delta} m_{\mathscr{Y}}(\llbracket {Y}\rrbracket)/m_{\mathscr{Y}}(\llbracket \bar{Y}\rrbracket)$. Then, by Lemma~\ref{lemma:Dissassociation} it follows  that
 \begin{equation}
     (\bar X,\bar{Y})\stackrel{a}{\leftrightarrow}(1,\bar{\delta}/|\llbracket \bar X\rrbracket|). \label{eq:asc}
 \end{equation} 
Since $\tilde{\delta}\leq \delta/m_{\mathscr{Y}}(\llbracket {Y}\rrbracket)$, we have
\begin{equation}\label{eq:rangeDelta1}
    \bar\delta\leq \delta/m_{\mathscr{Y}}(\llbracket \bar{Y}\rrbracket)<m_{\mathscr{Y}}(V_{\epsilon})/m_{\mathscr{Y}}(\llbracket \bar{Y}\rrbracket). 
\end{equation}
 Therefore, $\bar{X}\in \mathscr{F}_{\bar{\delta}}$ and $\bar{\delta}\leq \delta/m_{\mathscr{Y}}(\llbracket \bar{Y}\rrbracket)$.
 We now have that 
 \begin{equation}\label{eq:toShowStepTwo}
 \begin{split}
     |\llbracket \bar Y|\bar{X}\rrbracket^{*}_{\bar{\delta}/|\llbracket \bar X\rrbracket|}| &\stackrel{(a)}{=}|\llbracket\bar{X}\rrbracket| \\
&     \stackrel{(b)}{=}  | \llbracket Y|X\rrbracket_{{\tilde{\delta}}/|\llbracket X\rrbracket|}^{*}|,
 \end{split}
 \end{equation}
 where $(a)$ follows by applying Lemma 
 \ref{lemma:cardinalityAssociation} in   Appendix \ref{sec:AuxResult} using    \eqref{eq:asc} and \eqref{eq:rangeDelta1}, and $(b)$ follows from \eqref{eq:CardCodebok} and \eqref{eq:newCodebook}.
  Combining \eqref{eq:toShowStepTwo}  with Theorem \ref{thm:channelCodingTheorem}, the  proof is complete.
\end{proof}

 
 Finally, we make some considerations with respect to previous results in the literature.
First, we note that for $\delta=0$, all of our definitions reduce to Nair's ones and  
Theorem \ref{thm:channelCodingTheorem} recovers Nair's coding theorem \cite[Theorem 4.1]{nair2013nonstochastic}  for the zero-error capacity    of an additive $\epsilon$-perturbation channel. 

Second, we point out  that the $(\epsilon,\delta)$-capacity  considered in \cite{lim2017information}  defines the set of $(\epsilon,\delta)$-distinguishable codewords  
such that the \emph{average} overlap among all codewords is at most $\delta$.
In contrast,   our definition  requires the  overlap for \emph{each} pair of codewords to be at most $\delta/|\mathcal{X}|$. 
The following theorem provides the relationship between our  $C_{\epsilon}^{\delta}$ and the capacity    $\tilde{C}_{\epsilon}^\delta$   considered in~\cite{lim2017information}, which is defined using the Euclidean norm.
\begin{theorem}\label{lemma:relationbetweenTwoCapacity}Let 
$\tilde{C}_{\epsilon}^\delta$   be the $(\epsilon, \delta)$-capacity defined  in~\cite{lim2017information}. 
 We have
\begin{equation} \label{eqi}
C_{\epsilon}^{\delta}\leq \tilde{C}_{\epsilon}^{\delta /(2 m_{\mathscr{Y}}(V_{\epsilon}))},
\end{equation}
 and 
\begin{equation} \label{eqii}
    \tilde{C}^{\delta}_{\epsilon}\leq  {C}^{\delta m_{\mathscr{Y}}(V_{\epsilon}) 2^{2\tilde{C}^{\delta}_{\epsilon}+1}}_{\epsilon}. 
\end{equation}
\end{theorem}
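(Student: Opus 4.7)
The plan is to prove the two inequalities separately, exploiting the elementary fact that worst-case and average pairwise overlaps of a codebook are related through the number of codeword pairs. For \eqref{eqi}, the direction is from stronger (per-pair) to weaker (average) overlap control, so no loss factor depending on the codebook size is needed; for \eqref{eqii}, the direction is reversed and requires inflating $\delta$ by roughly $|\mathcal{X}|^{2}$, which accounts for the factor $2^{2\tilde{C}^{\delta}_{\epsilon}+1}$ appearing in the statement.

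For \eqref{eqi}, I would start with an arbitrary codebook $\mathcal{X}\in\mathscr{X}^{\delta}_{\epsilon}$, so by Definition~\ref{def:distin} every pair satisfies $e_{\epsilon}(x_{i},x_{j})\leq \delta/|\mathcal{X}|$. The goal is to show that $\mathcal{X}$ also meets the average-overlap criterion of~\cite{lim2017information} with parameter $\delta/(2m_{\mathscr{Y}}(V_{\epsilon}))$. The computation is to sum $e_{\epsilon}(x_{i},x_{j})$ over all distinct pairs, which is bounded by $\binom{|\mathcal{X}|}{2}\cdot\delta/|\mathcal{X}|\leq |\mathcal{X}|\delta/2$, and then renormalize by $m_{\mathscr{Y}}(V_{\epsilon})$ (since~\cite{lim2017information} measures overlap relative to the volume of an $\epsilon$-ball rather than to the whole receive space) and by $|\mathcal{X}|$. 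This yields an average overlap of at most $\delta/(2m_{\mathscr{Y}}(V_{\epsilon}))$, so $\mathcal{X}$ is admissible in the sense of~\cite{lim2017information} at the inflated parameter, and taking suprema over both codebook families gives \eqref{eqi}.

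For \eqref{eqii}, I would start with a codebook $\tilde{\mathcal{X}}$ achieving (or approaching) $\tilde{C}^{\delta}_{\epsilon}$, so $|\tilde{\mathcal{X}}|=2^{\tilde{C}^{\delta}_{\epsilon}}$ and the average pairwise overlap is at most $\delta$ in the Lim--Franceschetti sense. The worst-case pairwise overlap is crudely upper-bounded by the total sum of pairwise overlaps, which after multiplying out the average by the number of pairs $\binom{|\tilde{\mathcal{X}}|}{2}$ and correcting for the $m_{\mathscr{Y}}(V_{\epsilon})$ normalization gives a per-pair bound of order $\delta m_{\mathscr{Y}}(V_{\epsilon})|\tilde{\mathcal{X}}|^{2}$. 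To cast this into the form required by Definition~\ref{def:distin}, I divide through by $|\tilde{\mathcal{X}}|$ so that the per-pair quota becomes $\delta'/|\tilde{\mathcal{X}}|$ with $\delta'=\delta\, m_{\mathscr{Y}}(V_{\epsilon})\, 2^{2\tilde{C}^{\delta}_{\epsilon}+1}$, matching the exponent in \eqref{eqii} exactly. This shows $\tilde{\mathcal{X}}\in\mathscr{X}^{\delta'}_{\epsilon}$, whence $\tilde{C}^{\delta}_{\epsilon}=\log_{2}|\tilde{\mathcal{X}}|\leq C^{\delta'}_{\epsilon}$.

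The main obstacle I anticipate is bookkeeping the precise normalizations across the two definitions --- in particular, whether \cite{lim2017information} averages pairwise overlaps over $\binom{|\mathcal{X}|}{2}$, over $|\mathcal{X}|$, or over $|\mathcal{X}|^{2}$, and whether overlaps are measured against $m_{\mathscr{Y}}(V_{\epsilon})$ or against the whole space (where $m_{\mathscr{Y}}(\mathscr{Y})=1$ in our setting). Once that accounting is fixed, both directions follow from elementary counting, with \eqref{eqi} being a straightforward worst-case-implies-average estimate and \eqref{eqii} being the corresponding average-implies-worst-case estimate with an inflation factor. A secondary technical point for \eqref{eqii} is that $\tilde{C}^{\delta}_{\epsilon}$ may only be approached in the supremum rather than attained, so one may need to argue with an approximating sequence of codebooks and pass to the limit using monotonicity of $C_{\epsilon}^{(\cdot)}$ in its upper parameter.
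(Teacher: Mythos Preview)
Your proposal is correct and follows essentially the same approach as the paper's proof: both directions are obtained by elementary summation bounds relating the per-pair overlap constraint of Definition~\ref{def:distin} to the average overlap $\Delta=\frac{1}{2|\mathcal{X}|m_{\mathscr{Y}}(V_{\epsilon})}\sum_{x}e_{\epsilon}(x)$ from~\cite{lim2017information}, with the $m_{\mathscr{Y}}(V_{\epsilon})$ factor arising exactly from the different normalizations you anticipated. The paper's argument is slightly looser in that it bounds the double sum by $|\mathcal{X}|^{2}$ rather than $\binom{|\mathcal{X}|}{2}$ terms, but the resulting constants are identical to yours.
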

\begin{proof}
For every codebook $\mathcal{X}\in\mathscr{X}^{\delta}_{\epsilon}$ and $x_1,x_2\in \mathcal{X}$, we have
\begin{equation}
    e_{\epsilon}(x_1,x_2)\leq \delta/|\mathcal{X}|. 
\end{equation}
Since $m_{\mathscr{Y}}(\mathscr{Y})=1$, this implies that for all $x_1,x_2\in \mathcal{X}$, we have
\begin{equation}\label{eq:uppBoundCap}
\begin{split}
    m_{\mathscr{Y}}(S_\epsilon(x_{1})\cap S_\epsilon(x_{2}))&\leq \delta /|\mathcal{X}|.
\end{split}
\end{equation}

 For all $\mathcal{X}\in\mathscr{X}$, the average overlap   defined in \cite[(53)]{lim2017information} is
{
 \begin{equation}
     \Delta=\frac{1}{|\mathcal{X}|}\sum_{x\in\mathcal{X}}\frac{e_{\epsilon}(x)}{2 m_{\mathscr{Y}}(V_\epsilon)}.
 \end{equation}
 }
 Then, we have
\begin{equation}
\begin{split}
    \Delta&=\frac{1}{2|\mathcal{X}|m_{\mathscr{Y}}(V_{\epsilon})}\sum_{x_1,x_2\in\mathcal{X}} m_{\mathscr{Y}}(S_\epsilon(x_{1})\cap S_\epsilon(x_{2})),\\
    &\stackrel{(a)}{\leq}\frac{\delta  |\mathcal{X}|^2 }{2|\mathcal{X}|^2 m_{\mathscr{Y}}(V_{\epsilon}) },\\
    &\leq \frac{\delta }{2m_{\mathscr{Y}}(V_{\epsilon}) },
\end{split}
\end{equation}
where $(a)$ follows from \eqref{eq:uppBoundCap}.
Thus, we have 
\begin{equation}
    C_{\epsilon}^{\delta}\leq \tilde{C}_{\epsilon}^{\delta /(2 m_{\mathscr{Y}}(V_{\epsilon}))},
\end{equation}
 and  \eqref{eqi} follows.

Now, let $\mathcal{X}$ be a codebook  with average overlap at most $\delta$, namely 
\begin{equation}
    \frac{1}{2 |\mathcal{X}| m_{\mathscr{Y}}(V_{\epsilon})} \sum_{x_1,x_2\in\mathcal{X}}m_{\mathscr{Y}}(S_\epsilon(x_{1})\cap S_\epsilon(x_{2}))\leq \delta.
\end{equation}
This implies that for all $x_1,x_2\in \mathcal{X}$, we have
\begin{equation}
\begin{split}
\frac{|\mathcal{X}|  m_{\mathscr{Y}}(S_\epsilon(x_{1})\cap S_\epsilon(x_{2}))}{  m_{\mathscr{Y}} (\mathscr{Y})}&\leq \frac{2 \delta |\mathcal{X}|^2 m_{\mathscr{Y}}(V_{\epsilon})}{m_{\mathscr{Y}} (\mathscr{Y})},\\
&\stackrel{(a)}{=} {2\delta |\mathcal{X}|^2 m_{\mathscr{Y}}(V_{\epsilon})},\\
&\leq{\delta  2^{2 \tilde{C}^{\delta}_{\epsilon}+1}} m_{\mathscr{Y}}(V_{\epsilon}),
\end{split}
\end{equation}
where $(a)$ follows from the fact that   $m_{\mathscr{Y}} (\mathscr{Y})=1$. Thus, we have
\begin{equation}
    \tilde{C}^{\delta}_{\epsilon}\leq  {C}^{\delta 2^{2\tilde{C}^{\delta}_{\epsilon}+1}m_{\mathscr{Y}}(V_{\epsilon})}_{\epsilon}, 
\end{equation}
and \eqref{eqii} follows.
\end{proof}

\section{$(N,\delta)$-Capacity of General Channels} \label{sec:general}
We now extend our results to more general channels where the noise  can be different across   codewords, and not necessarily contained within a ball of radius $\epsilon$.

Let $\mathcal{X} \subseteq \mathscr{X}$ be a discrete set of points in the space, which represents a codebook. Any point $x \in \mathcal{X}$ represents a codeword that can be selected at the transmitter, sent over the channel, and  received with perturbation. 
A channel with transition mapping $N:\mathscr{X}\to \mathscr{Y}$ associates to any  point in $\mathscr{X}$   a set in $\mathscr{Y}$, such that the received codeword lies in the set 
\begin{equation}
    S_{N}(x)= \{y\in \mathscr{Y}: y\in N(x) \}. \label{eqSN}
\end{equation}
Figure~\ref{fig:points1} illustrates  possible uncertainty sets associated to three different codewords.
\begin{figure}
\begin{center}
\includegraphics[width=.34 \textwidth ]{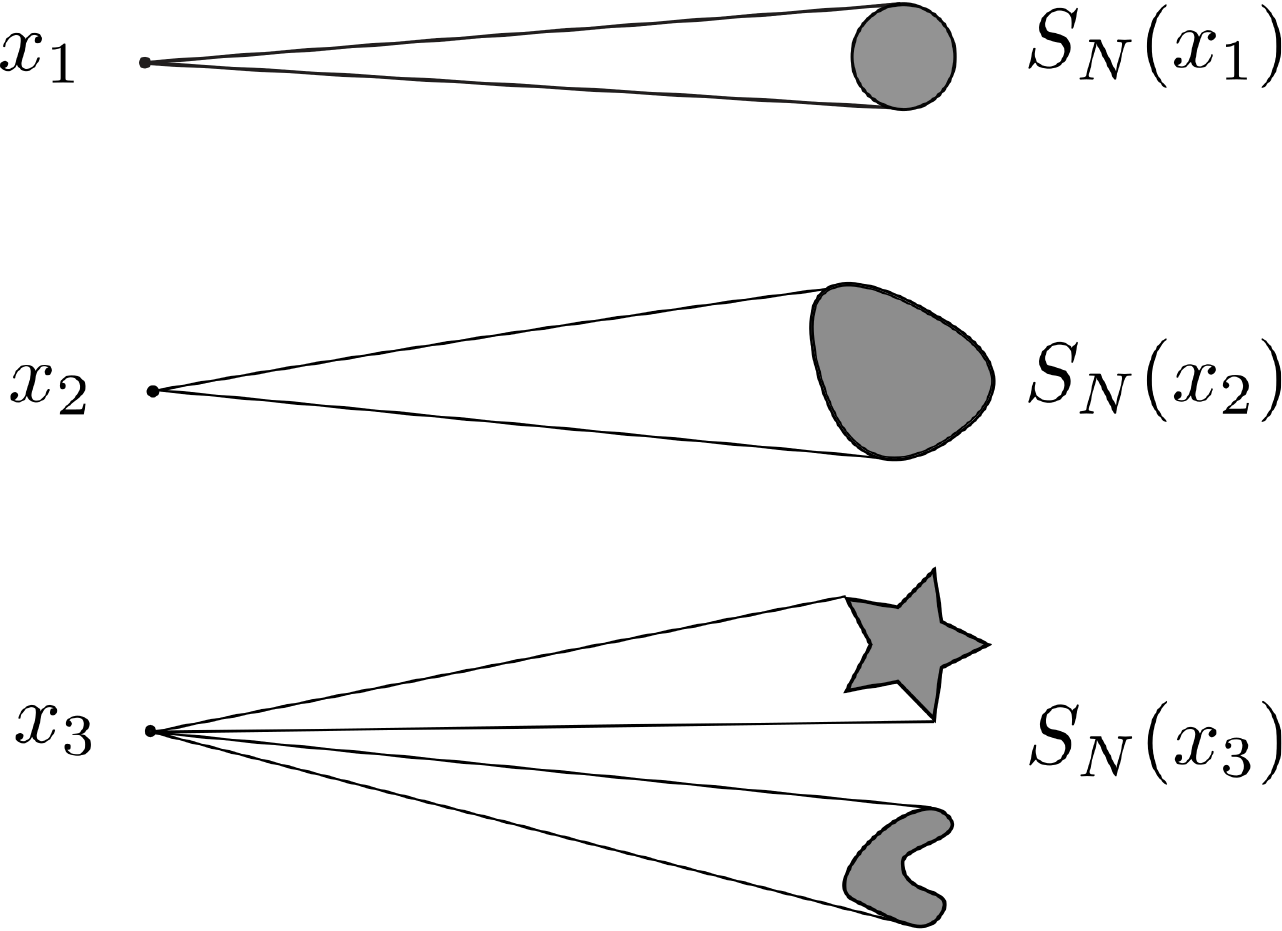}
\end{center}
\caption{Uncertainty sets associated to three different codewords. Sets are not necessarily balls, they can be different across codewords, and also be composed of disconnected subsets.}
\label{fig:points1}
\end{figure}

All received codewords lie in the set $\mathcal{Y}=\bigcup_{x\in \mathcal{X}}S_N(x)$, where  $\mathcal{Y}\subseteq \mathscr{Y}$. 
For any $x_{1},x_{2}$ $\in$ $\mathcal{X}$, we now let
\begin{equation}
    e_N(x_{1},x_{2})=\frac{m_{\mathscr{Y}}(S_N(x_{1})\cap S_N(x_{2}))}{m_{\mathscr{Y}}(\mathscr{Y})},
\end{equation}
where $m_{\mathscr{Y}}(.)$ is an uncertainty function defined over $\mathscr{Y}$.  We also assume without loss of generality that the uncertainty associated with the space $\mathscr{Y}$ of received codewords is $m_{\mathscr{Y}}(\mathscr{Y})=1$. We  also let  $V_{N}=N(x^*)$, where $x^*=\mbox{argmin}_{x\in \mathscr{X}} m_{\mathscr{Y}}(N(x))$. Thus, $V_{N}$ is the set corresponding to the minimum uncertainty introduced by the noise mapping $N$.  

\begin{definition}{$(N,\delta)$-distinguishable codebook.}\\
For any  $0\leq \delta < m_{\mathscr{Y}}(V_{N})$, a codebook $\mathcal{X} \subseteq \mathscr{X}$ is $(N,\delta)$-distinguishable if for all $x_{1},x_{2}\in \mathcal{X}$, we have $e_N(x_{1},x_{2}) \leq \delta/|\mathcal{X}|$. 
\end{definition}

\begin{definition}{$(N,\delta)$-capacity.} \label{defn:capacityN} \\ For any  totally bounded, normed metric  space $\mathscr{X}$, channel with transition mapping $N$, and $0\leq \delta <  m_{\mathscr{Y}}(V_{N})$,  the $(N,\delta)$-capacity of  $\mathscr{X}$ is 
\begin{equation}
    C_{N}^{\delta}=\sup_{\mathcal{X} \in \mathscr{X}_{N}^{\delta}} \log_2 |\mathcal{X}| \mbox{ bits},
\end{equation}
where $\mathscr{X}_{N}^{\delta}=\{\mathcal{X}: \mathcal{X} \mbox{ is } (N,\delta)\mbox{-distinguishable}\}$.
\end{definition}

We now relate our operational definition of capacity to the notion of UVs and mutual information introduced in Section~\ref{section:MaxMinInfo}.
As usual, let $X$ be the UV corresponding to the transmitted {codeword}  
and $Y$ be the UV corresponding to the received {codeword}. 
For a channel with transition mapping $N$, these UVs are such that for all $y\in \llbracket Y\rrbracket$ and $x\in \llbracket X\rrbracket$, we have
\begin{align}
\llbracket Y|x\rrbracket &= \{y\in \llbracket Y\rrbracket: y\in N(x)\},  \label{uno1}  \\
\llbracket X|y\rrbracket &= \{x\in \llbracket{X}\rrbracket: y\in N(x)\}.  \label{due1}
\end{align}
To measure the levels of association and disassociation between UVs $X$ and $Y$, we use an uncertainty function $m_{\mathscr{X}}(.)$  defined over  $\mathscr{X}$, and  $m_{\mathscr{Y}}(.)$ is defined over  $\mathscr{Y}$. 
The definition of feasible set is the same  as the one given in \eqref{eq:feasibleSet}.  
In our channel model, this feasible set   depends on the transition mapping $N$   through \eqref{uno1} and \eqref{due1}.

We can now state the non-stochastic channel coding theorem for channels with transition mapping $N$.
\begin{theorem}\label{thm:channelCodingTheorem_general}   For any  totally bounded, normed metric space $\mathscr{X}$, channel with transition mapping $N$ satisfying \eqref{uno1} and \eqref{due1}, and $0\leq \delta <  m_{\mathscr{Y}}(V_{N})$, 
we have
\begin{equation}\label{eq:channelCodingTheorem_general}
       C_{N}^{\delta}
        = \sup_{ X \in \mathscr{F}_{\tilde{\delta}},\tilde{\delta}\leq\delta/m_{\mathscr{Y}}(\llbracket Y\rrbracket)} I_{\tilde{\delta}/|\llbracket X\rrbracket|}(Y;X) \; \mbox{ bits}.
\end{equation}
\end{theorem}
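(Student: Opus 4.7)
The plan is to follow essentially the same three-step strategy used in the proof of Theorem \ref{thm:channelCodingTheorem}, since replacing the ball $S_\epsilon(x)$ by the noise set $S_N(x)$ from \eqref{eqSN} and the smallest ball $V_\epsilon$ by the smallest noise set $V_N$ leaves the structural arguments intact. The key observations that carry over verbatim are: (i) the received alphabet $\llbracket Y\rrbracket=\bigcup_{x\in\mathcal{X}}S_N(x)$ mirrors the $\epsilon$-perturbation case; (ii) the relation $\llbracket Y|x\rrbracket=S_N(x)$ from \eqref{uno1} is the exact analogue of \eqref{uno}; and (iii) the normalization $m_\mathscr{Y}(\mathscr{Y})=1$ and the restriction $\delta<m_\mathscr{Y}(V_N)$ ensure that every cardinality/association lemma from the Appendix applies with $V_N$ in place of $V_\epsilon$.

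First, I would show that the supremum is well defined by exhibiting an $X$ in $\mathscr{F}_{\tilde\delta}$ for some admissible $\tilde\delta$. Taking $\llbracket X\rrbracket=\{x\}$ for an arbitrary $x\in\mathscr{X}$ makes $\mathscr{A}(Y;X)=\emptyset$ and $\mathscr{A}(X;Y)\subseteq\{1\}$, so $(X,Y)\stackrel{a}{\leftrightarrow}(1,\delta/(|\llbracket X\rrbracket|m_\mathscr{Y}(\llbracket Y\rrbracket)))$, giving $X\in\mathscr{F}_{\delta/m_\mathscr{Y}(\llbracket Y\rrbracket)}$.

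Second, for the converse bound I would introduce the enlarged set
\begin{equation*}
\begin{split}
\mathscr{G}=\{X:\;&\llbracket X\rrbracket\subseteq\mathscr{X},\ \exists\,\tilde\delta\leq\delta/m_\mathscr{Y}(\llbracket Y\rrbracket)\\
&\text{s.t. }\forall\,\mathscr{S}_1,\mathscr{S}_2\in\llbracket Y|X\rrbracket,\\
&m_\mathscr{Y}(\mathscr{S}_1\cap\mathscr{S}_2)/m_\mathscr{Y}(\llbracket Y\rrbracket)\leq\tilde\delta/|\llbracket X\rrbracket|\}.
\end{split}
\end{equation*}
Using \eqref{uno1} and $m_\mathscr{Y}(\mathscr{Y})=1$, the inclusion $X\in\mathscr{G}\Rightarrow\mathcal{X}=\llbracket X\rrbracket\in\mathscr{X}_N^\delta$ follows exactly as in Theorem \ref{thm:channelCodingTheorem}. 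Then, for any $X\in\mathscr{F}_{\tilde\delta}$, I would apply Lemma \ref{lemma:equivalenceAssDiss} (from Appendix \ref{sec:AuxResult}) to produce an associated $(\bar X,\bar Y)\stackrel{a}{\leftrightarrow}(1,\bar\delta/|\llbracket\bar X\rrbracket|)$ with $\bar\delta\leq\delta/m_\mathscr{Y}(\llbracket\bar Y\rrbracket)$ and the same overlap-family cardinality. The chain
\begin{equation*}
I_{\tilde\delta/|\llbracket X\rrbracket|}(Y;X)=\log|\llbracket\bar Y|\bar X\rrbracket^*_{\bar\delta/|\llbracket\bar X\rrbracket|}|\leq\log|\llbracket\bar X\rrbracket|\leq C_N^\delta,
\end{equation*}
where the middle inequality uses Lemma \ref{lemma:RelationToCardinality} (valid because $\bar\delta<m_\mathscr{Y}(V_N)/m_\mathscr{Y}(\llbracket\bar Y\rrbracket)$), and the last uses $\bar X\in\mathscr{G}$, closes the converse.

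Third, I would establish achievability by taking $\mathcal{X}^*=\operatorname{argsup}_{\mathcal{X}\in\mathscr{X}_N^\delta}\log|\mathcal{X}|$ and the associated UV $X^*$ with received $Y^*$. Setting $\delta^*=\delta/m_\mathscr{Y}(\llbracket Y^*\rrbracket)$, the distinguishability condition combined with $m_\mathscr{Y}(\mathscr{Y})=1$ and Lemma \ref{lemma:Dissassociation} yields $(X^*,Y^*)\stackrel{a}{\leftrightarrow}(1,\delta^*/|\llbracket X^*\rrbracket|)$, so $X^*\in\mathscr{F}_{\delta^*}$ with $\delta^*<m_\mathscr{Y}(V_N)/m_\mathscr{Y}(\llbracket Y^*\rrbracket)$, and Lemma \ref{lemma:cardinalityAssociation} gives $I_{\delta^*/|\llbracket X^*\rrbracket|}(Y^*;X^*)=\log|\llbracket X^*\rrbracket|=C_N^\delta$. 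The main delicate point is purely bookkeeping: one must verify that the appendix lemmas on the equivalence between disassociation and association and on the cardinality of the overlap family under association are formulated only in terms of the uncertainty function and the conditional ranges, so they transfer without change to arbitrary $N$ provided the bound $\delta<m_\mathscr{Y}(V_N)$ is enforced; no geometric property of $\epsilon$-balls is actually needed in Theorem \ref{thm:channelCodingTheorem}'s argument beyond \eqref{uno}--\eqref{due}.
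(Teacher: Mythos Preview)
Your proposal is correct and mirrors exactly what the paper intends: the authors state that the proof ``is along the same lines as the one of Theorem \ref{thm:channelCodingTheorem} and is omitted,'' and you have faithfully reproduced that three-step argument with the substitutions $S_\epsilon\mapsto S_N$ and $V_\epsilon\mapsto V_N$. Your closing remark about the appendix lemmas is also apt: Lemma \ref{lemma:equivalenceAssDiss} is nominally stated for \eqref{uno}--\eqref{due}, but its proof uses only the property that $\llbracket \bar Y|x\rrbracket=\llbracket Y|x\rrbracket$ for any codebook containing $x$, which holds equally under \eqref{uno1}, so the transfer is indeed purely notational.
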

The proof   is along the same lines as the one of Theorem \ref{thm:channelCodingTheorem} and is omitted.

Theorem~\ref{thm:channelCodingTheorem_general} characterizes the capacity as the supremum of the mutual information over all codebooks in the feasible set.  The following theorem shows that the same characterization is obtained if we optimize the right hand side in \eqref{eq:channelCodingTheorem_general} over all codebooks in the space. It follows that  by Theorem~\ref{thm:channelCodingTheorem_general}, rather than optimizing over all codebooks,  a capacity achieving codebook can  be found within the  smaller class $\cup_{\tilde{\delta}\leq \delta/m_{\mathscr{Y}}(V_N)} \mathscr{F}_{\tilde{\delta}}$ of feasible sets with error at most $\delta/m_{\mathscr{Y}}(V_N)$.

\begin{theorem} \label{cor3}
The $(N,\delta)$-capacity in \eqref{eq:channelCodingTheorem_general} can also be written as
\begin{equation}
       C_{N}^{\delta}
        =\sup_{\substack{ X: \llbracket X\rrbracket \subseteq \mathscr{X},\\\tilde{\delta}\leq\delta/m_{\mathscr{Y}}(\llbracket Y\rrbracket)}} I_{\tilde{\delta}/|\llbracket X\rrbracket|}(Y;X)\mbox{ bits}.
\end{equation}
\end{theorem}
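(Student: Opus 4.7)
The plan is to mirror the proof of Theorem~\ref{cor}, which handled exactly the analogous upgrade from Theorem~\ref{thm:channelCodingTheorem} to the unrestricted supremum, but now in the more general setting where the noise is captured by an arbitrary transition mapping $N(\cdot)$ rather than the specific $\epsilon$-ball structure. The key leverage is that Theorem~\ref{thm:channelCodingTheorem_general} already identifies the capacity as a supremum over the feasible sets $\mathscr{F}_{\tilde{\delta}}$, so it suffices to show that enlarging the domain to \emph{all} UVs $X$ with $\llbracket X\rrbracket\subseteq\mathscr{X}$ (and $\tilde{\delta}\le\delta/m_{\mathscr{Y}}(\llbracket Y\rrbracket)$) cannot increase the supremum of $I_{\tilde{\delta}/|\llbracket X\rrbracket|}(Y;X)$.

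The approach is as follows. Take an arbitrary UV $X$ with $\llbracket X\rrbracket\subseteq \mathscr{X}$ and a $\tilde{\delta}\le\delta/m_{\mathscr{Y}}(\llbracket Y\rrbracket)$. If $X\in\cup_{\tilde{\delta}'\le \delta/m_{\mathscr{Y}}(\llbracket Y\rrbracket)}\mathscr{F}_{\tilde{\delta}'}$, Theorem~\ref{thm:channelCodingTheorem_general} directly bounds $I_{\tilde{\delta}/|\llbracket X\rrbracket|}(Y;X)\le C_N^{\delta}$. Otherwise, invoke the $\delta$-overlap family $\llbracket Y|X\rrbracket^*_{\tilde{\delta}/|\llbracket X\rrbracket|}$ and, as in the proof of Theorem~\ref{cor}, build a pruned UV $\bar{X}$ by picking one representative $x_i\in\llbracket X\rrbracket$ from each set $\mathscr{S}_i$ of the overlap family (Property~1 of Definition~\ref{defoverlap} guarantees such an $x_i$ exists with $\llbracket Y|x_i\rrbracket\subseteq\mathscr{S}_i$). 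Set $K=|\llbracket Y|X\rrbracket^*_{\tilde{\delta}/|\llbracket X\rrbracket|}|$ and $\llbracket\bar{X}\rrbracket=\{x_1,\ldots,x_K\}$.

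Because the channel acts pointwise through \eqref{uno1}, conditional ranges at retained codewords are unchanged: $\llbracket\bar{Y}|x_i\rrbracket=\llbracket Y|x_i\rrbracket$. Property~2 of Definition~\ref{defoverlap} then yields $m_{\mathscr{Y}}(\llbracket\bar{Y}|x_i\rrbracket\cap\llbracket\bar{Y}|x_j\rrbracket)\le \tilde{\delta}/|\llbracket X\rrbracket|\le\tilde{\delta}/|\llbracket\bar{X}\rrbracket|$ (the second inequality holding since $\llbracket\bar{X}\rrbracket\subseteq\llbracket X\rrbracket$). Rescaling to $\bar{\delta}=\tilde{\delta}\,m_{\mathscr{Y}}(\llbracket Y\rrbracket)/m_{\mathscr{Y}}(\llbracket\bar{Y}\rrbracket)$ as in the proof of Theorem~\ref{cor} and invoking Lemma~\ref{lemma:Dissassociation} gives $(\bar{X},\bar{Y})\stackrel{a}{\leftrightarrow}(1,\bar{\delta}/|\llbracket\bar{X}\rrbracket|)$, hence $\bar{X}\in\mathscr{F}_{\bar{\delta}}$. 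Also, $\bar{\delta}\le\delta/m_{\mathscr{Y}}(\llbracket\bar{Y}\rrbracket)<m_{\mathscr{Y}}(V_N)/m_{\mathscr{Y}}(\llbracket\bar{Y}\rrbracket)$. Lemma~\ref{lemma:cardinalityAssociation} then forces $|\llbracket\bar{Y}|\bar{X}\rrbracket^*_{\bar{\delta}/|\llbracket\bar{X}\rrbracket|}|=|\llbracket\bar{X}\rrbracket|=K=|\llbracket Y|X\rrbracket^*_{\tilde{\delta}/|\llbracket X\rrbracket|}|$, so the $\delta$-mutual information transfers intact to the feasible UV $\bar{X}$, and Theorem~\ref{thm:channelCodingTheorem_general} bounds it by $C_N^{\delta}$. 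The reverse inequality is immediate since the unrestricted supremum dominates the feasible-set supremum in \eqref{eq:channelCodingTheorem_general}.

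The main obstacle, and the only point where the generality of $N(\cdot)$ could have caused trouble, is verifying that pruning $\llbracket X\rrbracket$ down to $\llbracket\bar{X}\rrbracket$ does not alter the conditional ranges of the retained codewords. This is handled by the locality built into \eqref{uno1}: the set $N(x)$ depends on $x$ alone, not on the codebook around it, so balls being replaced by arbitrary (possibly disconnected) noise regions as in Figure~\ref{fig:points1} poses no new difficulty. Apart from this verification, the argument is a routine rewrite of the proof of Theorem~\ref{cor} with $\epsilon$ replaced by $N$ and $V_\epsilon$ replaced by $V_N$, which is presumably why the authors may again simply state that the proof follows the same lines.
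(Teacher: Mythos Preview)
Your proposal is correct and follows essentially the same approach as the paper: the paper omits the proof of Theorem~\ref{cor3} and simply states that it is along the same lines as the proof of Theorem~\ref{cor}, which is precisely the pruning-to-representatives argument you reproduce (with $N$ in place of $\epsilon$ and $V_N$ in place of $V_\epsilon$). Your identification of the one channel-specific step---that $\llbracket\bar{Y}|x_i\rrbracket=\llbracket Y|x_i\rrbracket$ because \eqref{uno1} makes $N(x)$ depend only on $x$---is exactly the point that justifies transferring the argument verbatim; the only minor slip is that the bound from Property~2 of Definition~\ref{defoverlap} carries a factor of $m_{\mathscr{Y}}(\llbracket Y\rrbracket)$ on the right-hand side, which you implicitly recover anyway via your rescaling $\bar{\delta}=\tilde{\delta}\,m_{\mathscr{Y}}(\llbracket Y\rrbracket)/m_{\mathscr{Y}}(\llbracket\bar{Y}\rrbracket)$.
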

The proof is along the same lines as the one of Theorem \ref{cor} and is omitted.

\section{Capacity of Stationary Memoryless Uncertain Channels}
We now consider the special case of stationary, memoryless, uncertain channels.

Let $\mathscr{X}^{\infty}$ be the space of $\mathscr{X}$-valued discrete-time functions $x:\mathbb{Z}_{> 0}\to \mathscr{X}$, where $\mathbb{Z}_{> 0}$ is the set of positive integers denoting the time step. Let $x(a:b)$ denote the function $x\in \mathscr{X}^{\infty}$ restricted over the time interval $[a,b]$. 
Let $\mathcal{X}\subseteq \mathscr{X}^{\infty}$ be a discrete set   which represents a codebook. Also, let $\mathcal{X}(1:n)=\cup_{x\in \mathcal{X}}x(1:n)$ denote the set of all codewords  up to time $n$, and $\mathcal{X}(n)=\cup_{x\in \mathcal{X}}x(n)$    the set of all codeword symbols in the codebook at time $n$. The codeword symbols can be viewed as the coefficients representing a  continuous signal in an infinite-dimensional space.  For example,  transmitting  one symbol per time step
can be viewed as  transmitting a signal of unit spectral support over time. The perturbation of the signal   at the receiver due to the noise can be described as a  displacement   experienced by  the corresponding codeword symbols  $x(1), x(2), \ldots$. 
To describe this perturbation 
we consider the set-valued map
$N^{\infty}:\mathscr{X}^{\infty}\to \mathscr{Y}^{\infty}$, associating any point in $\mathscr{X}^{\infty}$ to a set in $\mathscr{Y}^{\infty}$.
For any transmitted  codeword $x\in \mathcal{X}\subseteq \mathscr{X}^{\infty}$, the corresponding received   codeword lies in the set 
\begin{equation}
    S_{N^{\infty}}(x)=\{y\in\mathscr{Y}^{\infty}:y\in N^{\infty}(x)\}.
\end{equation}
Additionally, the noise set associated to $x(1:n)\in \mathcal{X}(1:n)$ is 
\begin{equation}
    S_{N^{\infty}}(x(1:n))=\{y(1:n)\in \mathscr{Y}^{n}: y\in N^{\infty}(x)\},
\end{equation}
where 
\begin{equation}
    \mathscr{Y}^n = \underbrace{\mathscr{Y} \times \mathscr{Y} \times  \cdots \times \mathscr{Y}}_n.
\end{equation}
We are now ready to define  stationary, memoryless, uncertain channels.
\begin{definition}
A stationary memoryless uncertain channel is   a transition mapping $N^{\infty}:\mathscr{X}^{\infty}\to \mathscr{Y}^{\infty}$
that can be factorized into   identical terms describing the noise experienced by the codeword symbols. Namely, there
 exists a set-valued map $N:\mathscr{X}\to \mathscr{Y}$ such that
for all $n \in \mathbb{Z}_{>0}$ and $x(1:n)\in \mathscr{X}^{\infty}$, we have
\begin{equation} \label{eq:facnoise}
    S_{N^{\infty}}(x(1:n))=N(x(1))\times\ldots\times N(x(n)).
\end{equation}
\end{definition}
According to the definition, a stationary memoryless uncertain channel  maps the $n$th input symbol into the $n$th output symbol in a way  that does not depend on the symbols at other time steps, and the mapping is the same at all   time steps. 
Since the channel can be characterized by the mapping $N$ instead of $N^{\infty}$, to  simplify the notation in the following we use $S_{N}(.)$ instead of $S_{N^{\infty}}(.)$.

Another important observation is that   the   $\epsilon$-noise channel that we considered in Section~\ref{sec:cap}  may not   admit a factorization like the one in \eqref{eq:facnoise}.  For example, consider  the space to be equipped with the $L^2$ norm, the codeword symbols to represent the coefficients of an orthogonal   representation of a transmitted signal,   and   the noise experienced by any codeword to be within a ball of radius $\epsilon$. In this case,    if a  codeword symbol   is perturbed by a value close to  $\epsilon$    the perturbation of all other symbols must be close to zero. On the other hand,  the general channels considered in Section~\ref{sec:general} can be stationary and memoryless, if the noise   acts on the coefficients in a way that satisfies \eqref{eq:facnoise}.

For stationary memoryless uncertain channels, all received codewords lie in the set $\mathcal{Y}=\cup_{x\in \mathcal{X}} S_{N}(x)$, and the received codewords up to time $n$ lie in the set $\mathcal{Y}(1:n)=\cup_{x\in \mathcal{X}} S_{N}(x(1:n))$. Then, for any $x_{1}(1:n),x_{2}(1:n)\in \mathcal{X}(1:n)$, we let
\begin{equation}
\begin{split}
&e_{N}(x_1(1:n),x_2(1:n))\\
&=\frac{m_{\mathscr{Y}}(S_N(x_1(1:n))\cap S_N(x_2(1:n)))}{m_{\mathscr{Y}}(\mathscr{Y}^n)}.
\end{split}
\end{equation}
We also assume 
without loss of generality that at any time step $n$, the uncertainty associated to the space $\mathscr{Y}^n$ of received codewords is $m_{\mathscr{Y}}(\mathscr{Y}^n)=1$. 
We also let $V_{N}=N(x^*)$, where $x^*=\mbox{argmin}_{x\in \mathscr{X}}m_{\mathscr{Y}}(N(x))$. Thus, $V_N$ is the set corresponding to the minimum uncertainty introduced by the noise mapping at a single time step.  Finally, we let 
\begin{align}
V_N^n  &= \underbrace{V_N \times V_N \times  \cdots \times V_N}_n.
 \end{align}

\begin{definition}{$(N,\delta_n)$-distinguishable codebook.} \label{defdistinguishable} \\
For all $n \in \mathbb{Z}_{>0}$ and $0\leq\delta_n< m_{\mathscr{Y}}(V_N^n)$, a codebook $\mathcal{X}_n=\mathcal{X}(1:n)$
is   $(N,\delta_n)$-distinguishable if for all  $x_{1}(1:n),x_2(1:n)\in \mathcal{X}_n$, we have
\begin{equation}
e_{N}(x_1(1:n),x_2(1:n))\leq \delta_n/|\mathcal{X}_n|.
\end{equation} 
\end{definition}
It immediately follows that for any $(N,\delta_n)$-distinguishable codebook $\mathcal{X}_n$, we have
\begin{equation}
e_N(x(1:n))=\sum_{\substack{x'(1:n) \in \mathcal{X}_n:\\ x'(1:n)\neq x(1:n)}} e_{N}(x(1:n),x'(1:n)) \leq \delta_n,
\end{equation}
so that each codeword in $\mathcal{X}_n$ can be decoded correctly with confidence at least $1-\delta_n$. Definion \ref{defdistinguishable} guarantees even more namely, that the confidence of not confusing any pair of codewords is at least $1-\delta_n/|\mathcal{X}_n|$.

We now associate to any   sequence $\{\delta_n\}$   the largest distinguishable   rate sequence $\{R_{\delta_n}\}$, whose elements represent the largest rates that satisfy that confidence sequence.
\begin{definition}{Largest $\{\delta_n\}$-distinguishable rate sequence.} \label{deflargestachievable}\\ 
For any   sequence $\{\delta_n\}$, the  largest  $\{\delta_n\}$-distinguishable rate sequence $\{R_{\delta_n}\}$ is such that for all $n \in \mathbb{Z}_{>0}$ we have 
\begin{equation} \label{eqachievablerate}
R_{\delta_n} = \sup_{\mathcal{X}_n\in \mathscr{X}^{\delta_n}_{N}(n)}\frac{\log |\mathcal{X}_n|}{n} \mbox{ bits per symbol},
\end{equation} 
where
\begin{equation}
    \mathscr{X}_{N}^{\delta_n}(n)=\{\mathcal{X}_n: \mathcal{X}_n \mbox{ is } (N,\delta_n)\mbox{-distinguishable}\}.
\end{equation}
\end{definition}
We say that any constant rate $R$ that lays below the largest $\{\delta_n\}$-distinguishable rate sequence is $\{\delta_n\}$-distinguishable. Such a $\{\delta_n\}$-distinguishable rate   ensures the existence of a sequence of distinguishable codes that, for all $n \in \mathbb{Z}_{>0}$, have  rate at least $R$ and confidence at least $1-\delta_n$.
\begin{definition}{$\{\delta_n\}$-distinguishable rate.}\\
For any sequence $\{\delta_n\}$, a constant rate $R$ is said to be  $\{\delta_n\}$-distinguishable if for all $n \in \mathbb{Z}_{>0}$, we have
\begin{equation}
R \leq R_{\delta_n}.
    \end{equation}
    \end{definition}
We call    any $\{\delta_n\}$-distinguishable rate $R$ achievable, if $\delta_n \rightarrow 0$ as $n \rightarrow \infty$. An achievable rate $R$ ensures the existence of a sequence of distinguishable codes of rate at least $R$ whose confidence tends to one as $n \rightarrow \infty$. It follows that in this case  we can achieve   communication at rate $R$ with arbitrarily high confidence by choosing a sufficiently large codebook.
 \begin{definition}{Achievable rate.}\\
 A constant rate $R$ is  achievable if there exists a  sequence $\{\delta_n\}$ such that $\delta_n \rightarrow 0$ as $n \rightarrow \infty$, and $R$ is $\{\delta_n\}$-distinguishable.
 \end{definition}

 We now give a first  definition of capacity as the supremum of the $\{\delta_n\}$-distinguishable rates. Using this definition,  transmitting at constant rate below capacity ensures the existence of a sequence of codes that, for all $n \in \mathbb{Z}_{>0}$, have confidence at least $1-\delta_n$.

\begin{definition}\label{defour}{$(N,\{\delta_n\})_*$ capacity.} \\
For any stationary memoryless uncertain channel with transition mapping $N$,  and any given sequence  $\{\delta_n\}$, we let   
\begin{align}
C_{N}(\{ \delta_n \})_*& =
 \sup \{R: R \mbox{ is } \{\delta_n\}\mbox{-distinguishable}\} \\
&= \inf_{n \in \mathbb{Z}_{>0}}    R_{\delta_n}\;\; \mbox{ bits per symbol}.
 \end{align}
\end{definition}

Another definition of capacity arises if rather than the largest lower bound to the sequence of rates one considers the least upper bound  for which we can transmit satisfying a given confidence sequence. Using this definition, transmitting at constant rate below capacity ensures the existence of a code that satisfies at least one confidence value along the sequence $\{ \delta_n\}$.
 
\begin{definition}\label{defive}{$(N,\{\delta_n\})^*$ capacity.} \\
 For any stationary memoryless uncertain channel with transition mapping $N$,  and any given sequence  $\{\delta_n\}$,
we define   
\begin{align}
 C_{N}(\{ \delta_n \})^*& = \sup_{n \in \mathbb{Z}_{>0}}   R_{\delta_n}\;\; \mbox{ bits per symbol}.
 \end{align}
\end{definition}

Definitions~\ref{defour} and \ref{defive} lead to  non-stochastic analogues of Shannon's probabilistic  and zero-error capacities, respectively.

First,  consider Definition~\ref{defour} and take
 the supremum of the achievable rates, rather than the supremum of the $\{\delta_n\}$-distinguishable rates. This  means that we can pick any confidence sequence such that $\delta_n$ tends to zero as $n \rightarrow \infty$. In this way, we obtain the non-stochastic analogue of Shannon's probabilistic capacity, where $\delta_n$ plays the role of the probability of error and the capacity is the largest rate that can be achieved by a sequence of codebooks  with an arbitrarily high confidence level. 

\begin{definition}\label{defS}{$(N,\{\downarrow 0\})_*$ capacity.} \\
 For 
 any stationary memoryless uncertain channel with transition mapping $N$,   
we define the $(N,\{ \downarrow 0\})_*$  capacity   as 
\begin{align}
 C_{N}(\{\downarrow 0\})_* &= \sup \{R: R \mbox{ is  achievable}\} \\
 &=\sup_{\{\delta_n\}:\delta_n=o(1)} C_N(\{\delta_n\})_*.
 \end{align}
\end{definition}

Next, consider Definition~\ref{defive} in the case $\{\delta_n\}$ is  a constant sequence, namely for all $n \in \mathbb{Z}_{>0}$ we have   $\delta_n=\delta \geq 0$. In this case, transmitting below capacity ensures  the existence of   a finite-length code that has confidence at least $1-\delta$. 

\begin{definition}\label{defive1}{$(N,\delta)^*$ capacity.} \\
 For any stationary memoryless uncertain channel with transition mapping $N$,  and any   sequence  $\{\delta_n\}$ such that for all $n \in \mathbb{Z}_{>0}$ we have $\delta_n= \delta \geq 0$,
we define   
\begin{align}
 C_{N}^{\delta*}& = \sup_{n \in \mathbb{Z}_{>0}}    R_{\delta_n}\;\; \mbox{ bits per symbol}.
 \end{align}
\end{definition}

Letting $\delta=0$, we obtain the zero-error capacity. In this case,    below capacity there exists at a   code with which we can transmit with full confidence. 

  We point out the key difference between Definitions~\ref{defS} and \ref{defive1}. Transmitting below the $(N, \{\downarrow 0\})_*$ capacity, allows to achieve arbitrarily high confidence   by increasing the codeword size.  In contrast, transmitting below the $(N,\delta)^*$ capacity, ensures the existence of a fixed codebook  that has confidence  at least $1-\delta$.

We now relate our notions of capacity to the \emph{mutual information rate} between transmitted and received { codewords}. 
Let $X$ be the UV corresponding to the transmitted { codeword}. This is a  map $X:\mathscr{X}^\infty\to\mathcal{X}$ and $\llbracket X\rrbracket=\mathcal{X}\subseteq \mathscr{X}^\infty$.
Restricting this map  to a finite time $n\in \mathbb{Z}_{> 0}$ yields another UV $X(n)$ and $\llbracket X(n)\rrbracket=\mathcal{X}(n) \subseteq \mathscr{X}$.
Likewise, a codebook segment is an UV $X(a:b)=\{X(n)\}_{a \leq n \leq b}$, of marginal range
\begin{equation}
   \llbracket X(a:b)\rrbracket= \mathcal{X}(a:b)\subseteq \mathscr{X}^{b-a+1}.
\end{equation}
Likewise, let $Y$ be the UV corresponding to the received {codeword}. It is a map $Y:\mathscr{Y}^\infty\to \mathcal{Y}$ and $\llbracket Y\rrbracket=\mathcal{Y}\subseteq \mathscr{Y}^\infty$. $Y(n)$ and $Y(a:b)$ are UVs, and $\llbracket Y(n)\rrbracket = \mathcal{Y}\subseteq \mathscr{Y}^\infty$ and $\llbracket Y(a:b)\rrbracket=\mathcal{Y}(a:b) \subseteq \mathscr{Y}^{b-a+1}$. For a stationary memoryless channel with transition mapping $N$, these UVs are such that for all $n\in \mathbb{Z}_{>0}$, $y(1:n) \in \llbracket Y(1:n)\rrbracket$ and $x(1:n) \in \llbracket X(1:n)\rrbracket$, and we have
\begin{equation}
\begin{split}
     \llbracket Y(1:n)|x(1:n)\rrbracket =&\{y(1:n)\in \llbracket Y(1:n)\rrbracket:\\
     & y(1:n)\in S_{N}(x(1:n))\},
\end{split}\label{eq:statcond1}
\end{equation}
\begin{equation}\label{eq:statcond2}
\begin{split}
     \llbracket X(1:n)|y(1:n)\rrbracket =&\{x(1:n)\in \llbracket X(1:n)\rrbracket:\\
     & y(1:n)\in S_{N}(x(1:n))\}.
\end{split}
\end{equation}

Now, we define the largest  $\delta_n$-\emph{mutual information rate}  as the supremum mutual information per unit-symbol  transmission that  a codeword $X(1:n)$  can provide about $Y(1:n)$ with confidence at least $1-\delta_n/|\llbracket X(1:n)\rrbracket|$.

\begin{definition}{Largest $\delta_n$-information rate.}\\
For all  $n \in \mathbb{Z}_{>0}$, the largest    $\delta_n$-information rate from { $X(1:n)$ to $Y(1:n)$} is 
\begin{equation} \label{depends}
    R^I_{\delta_n}=\sup_{\substack{X(1:n):\llbracket X
    (1:n)\rrbracket\subseteq \mathscr{X}^n, \\\tilde{\delta}\leq \delta_n/m_{\mathscr{Y}}(\llbracket Y(1:n)\rrbracket)}} \frac{I_{\tilde{\delta}/|\llbracket X(1:n)\rrbracket|}(Y(1:n); X(1:n))}{n}.
\end{equation}
\end{definition}
We let the feasible set at time $n$ be
\begin{equation}\label{eq:feasibleSetforRate}
\begin{split}
    \mathscr{F}_{\delta}(n)&= \{X(1:n): \llbracket X(1:n)\rrbracket\subseteq \mathscr{X}^{n},\mbox{ and either }\\
    &( X(1:n), Y(1:n))\stackrel{d}{\leftrightarrow}(0,{{\delta}}/|\llbracket X(1:n)\rrbracket|)\mbox{ or}\\
     &  ( X(1:n), Y(1:n))\stackrel{a}{\leftrightarrow}(1,{{\delta}}/|\llbracket X(1:n)\rrbracket|) \}. 
\end{split}
\end{equation}
In the following theorem  we   establish the relationship between $R_{\delta_n}$ and $R^{I}_{\delta_n}$. 

\begin{theorem}\label{thm:relBwRatenInfoRate}
For any totally bounded, normed metric space $\mathscr{X}$, disrete-time space $\mathscr{X}^\infty$, stationary memoryless uncertain channel with transition mapping $N$ satisfying \eqref{eq:statcond1} and \eqref{eq:statcond2}, and sequence  $\{\delta_n\}$ such that for all $n \in \mathbb{Z}_{>0}$ we have $0\leq\delta_n<m_{\mathscr{Y}}(V_{N}^n)$,   we have
\begin{equation}\label{eq:optimizationAtEachTimeN}
    R_{\delta_n}= \sup_{ \substack{ X(1:n)\in \mathscr{F}_{\tilde \delta}(n),\\ \tilde{\delta}\leq \delta_n/m_{\mathscr{Y}}(\llbracket Y(1:n)\rrbracket)}}\frac{I_{\tilde{\delta}/|\llbracket X(1:n)\rrbracket|}(Y(1:n); X(1:n))}{n}.
\end{equation}
We also have
\begin{equation}\label{eq:RateNInfoRate}
    R_{\delta_n}=R^I_{\delta_n}.
\end{equation}
\end{theorem}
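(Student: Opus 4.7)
The plan is to obtain both equalities by reducing to the single-shot coding theorems of Section VI, applied to the $n$-th order block extension of the stationary memoryless channel.

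First, I would introduce the block channel $N^{(n)}: \mathscr{X}^n \to \mathscr{Y}^n$ defined by $S_{N^{(n)}}(x(1:n)) = N(x(1)) \times \cdots \times N(x(n))$, which by \eqref{eq:facnoise} is exactly the restriction of $N^{\infty}$ to length-$n$ blocks. Equip $\mathscr{X}^n$ with the natural product norm inherited from $\mathscr{X}$; since $\mathscr{X}$ is totally bounded, so is $\mathscr{X}^n$. The minimum uncertainty set for this block channel is $V_N^n$, so the hypothesis $0 \leq \delta_n < m_{\mathscr{Y}}(V_N^n)$ matches precisely the condition required to invoke Theorem~\ref{thm:channelCodingTheorem_general} for the channel $N^{(n)}$ on the metric space $\mathscr{X}^n$.

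Second, I would observe that Definition~\ref{defdistinguishable} of an $(N,\delta_n)$-distinguishable codebook $\mathcal{X}_n \subseteq \mathscr{X}^n$ coincides verbatim with the $(N^{(n)}, \delta_n)$-distinguishability from Definition~6 applied to the block channel, because $e_N(x_1(1:n), x_2(1:n))$ equals $e_{N^{(n)}}(x_1(1:n), x_2(1:n))$ under the identification $m_{\mathscr{Y}}(\mathscr{Y}^n) = 1$. Hence by Definitions~\ref{deflargestachievable} and \ref{defn:capacityN},
\begin{equation}
n R_{\delta_n} \;=\; \sup_{\mathcal{X}_n \in \mathscr{X}^{\delta_n}_N(n)} \log|\mathcal{X}_n| \;=\; C_{N^{(n)}}^{\delta_n} \mbox{ bits}.
\end{equation}
Also, the conditional ranges \eqref{eq:statcond1}--\eqref{eq:statcond2} are exactly the instances of \eqref{uno1}--\eqref{due1} for $N^{(n)}$, and the feasible set $\mathscr{F}_{\delta}(n)$ in \eqref{eq:feasibleSetforRate} is precisely $\mathscr{F}_{\delta}$ from \eqref{eq:feasibleSet} applied to the pair $(X(1:n),Y(1:n))$.

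Third, I would invoke Theorem~\ref{thm:channelCodingTheorem_general} on the block channel to obtain
\begin{equation}
C_{N^{(n)}}^{\delta_n} \;=\; \sup_{\substack{X(1:n) \in \mathscr{F}_{\tilde\delta}(n) \\ \tilde\delta \leq \delta_n/m_{\mathscr{Y}}(\llbracket Y(1:n) \rrbracket)}} I_{\tilde\delta/|\llbracket X(1:n)\rrbracket|}(Y(1:n); X(1:n)),
\end{equation}
and dividing both sides by $n$ yields \eqref{eq:optimizationAtEachTimeN}. For \eqref{eq:RateNInfoRate}, I would repeat this reduction with Theorem~\ref{cor3} in place of Theorem~\ref{thm:channelCodingTheorem_general}, which removes the restriction to the feasible set and replaces it with the supremum over all UVs $X(1:n)$ whose marginal range lies in $\mathscr{X}^n$; this is exactly the supremum defining $R^I_{\delta_n}$ in \eqref{depends}, so $R_{\delta_n} = R^I_{\delta_n}$.

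The only real obstacle is verifying that the single-shot theorems apply cleanly to $N^{(n)}$: one must check that $\mathscr{X}^n$ with the product norm remains totally bounded and normed (routine), that the normalization $m_{\mathscr{Y}}(\mathscr{Y}^n)=1$ is consistent with the normalization assumed in Section~VI, and that the conditional-range formulas, the feasible set, and the distinguishability condition all specialize to those on the block channel. Once these identifications are made, the theorem reduces to a direct citation of Theorems~\ref{thm:channelCodingTheorem_general} and \ref{cor3}, and no new combinatorial or information-theoretic work is needed.
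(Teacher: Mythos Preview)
Your reduction is correct and is in fact a cleaner route than the paper's. The paper (Appendix~\ref{sec:Theorem9}) does not invoke Theorems~\ref{thm:channelCodingTheorem_general} and~\ref{cor3}; instead it reruns the three-step argument of Theorem~\ref{thm:channelCodingTheorem} verbatim with $X$ replaced by $X(1{:}n)$ throughout: nonemptiness of the feasible set via a singleton codebook, the upper bound on $I_{\tilde\delta/|\llbracket X(1:n)\rrbracket|}(Y(1{:}n);X(1{:}n))$ via Lemmas~\ref{lemma:equivalenceAssDiss} and~\ref{lemma:RelationToCardinality}, and exhibition of an achieving codebook. Your approach trades all of that repetition for a single identification of the $n$-block channel $N^{(n)}$ as an instance of the Section~\ref{sec:general} framework on $\mathscr{X}^n$, after which both \eqref{eq:optimizationAtEachTimeN} and \eqref{eq:RateNInfoRate} are direct citations. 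This is more modular and makes explicit that the memoryless product structure of $N^{(n)}$ plays no role at this stage; the paper's approach is self-contained but essentially duplicates proofs already written for the general channel.

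One point deserves more care than you give it: you assert that ``the minimum uncertainty set for this block channel is $V_N^n$'', but Section~\ref{sec:general} defines the minimum-uncertainty set of a channel as $N(x^{**})$ at a global minimizer $x^{**}$, and without Assumption~\ref{assumption:productrule} one cannot conclude that $m_{\mathscr{Y}}(N(x_1)\times\cdots\times N(x_n))$ is minimized at $(x^*,\ldots,x^*)$; a priori one only has $m_{\mathscr{Y}}(V_{N^{(n)}}) \le m_{\mathscr{Y}}(V_N^n)$, so the hypothesis $\delta_n < m_{\mathscr{Y}}(V_N^n)$ does not immediately yield the block-level hypothesis of Theorem~\ref{thm:channelCodingTheorem_general}. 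This is not, however, a flaw peculiar to your route: the paper's own proof makes the identical implicit identification when it invokes Lemma~\ref{lemma:RelationToCardinality} with the bound $\bar\delta < m_{\mathscr{Y}}(V_N^n)/m_{\mathscr{Y}}(\llbracket \bar Y(1{:}n)\rrbracket)$.
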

\begin{proof}
The proof of the theorem is similar to the one of Theorem \ref{thm:channelCodingTheorem} and is given in   Appendix \ref{sec:Theorem9}.
\end{proof}

The following coding theorem is now an immediate consequence of Theorem~\ref{thm:relBwRatenInfoRate}  and of our capacity definitions. 
\begin{theorem}\label{thm:capacityandinformationrate}
For any totally bounded, normed metric space $\mathscr{X}$, disrete-time space $\mathscr{X}^\infty$, stationary memoryless uncertain channel with transition mapping $N$ satisfying \eqref{eq:statcond1} and \eqref{eq:statcond2}, and sequence  $\{\delta_n\}$ such that for all $n  \in \mathbb{Z}_{>0}$, $0\leq\delta_n<m_{\mathscr{Y}}(V_{N}^n)$ and $0\leq\delta<m_{\mathscr{Y}}(V_{N}^n)$,   we have

\begin{align}
1) \;\;\;\; &
    {C}_{N}(\{\delta_n\})_*=\inf_{n \in \mathbb{Z}_{>0}} R^{I}_{\delta_n},  \label{eq:ChannelCodingTheorem2} \\
 2) \; \; \; \; &
     {C}_{N}(\{\delta_n\})^*=\sup_{n \in \mathbb{Z}_{>0}} R^{I}_{\delta_n}, \label{eq:ChannelCodingTheorem3} \\
     3) \;\;\;\; &{C}_{N}(\{\downarrow 0\})_*=\sup_{\{\delta_n\}:\delta_n=o(1)}\inf_{n \in \mathbb{Z}_{>0}} R^{I}_{\delta_n}, \label{eq:ChannelCodingTheorem4} \\
  4) \;\;\;\; &{C}_{N}^{\delta*}=\sup_{n \in \mathbb{Z}_{>0}} R^{I}_{\delta_n}: \forall  n \in \mathbb{Z}_{>0}, \delta_n=\delta. \label{eq:ChannelCodingTheorem5}
\end{align}

\end{theorem}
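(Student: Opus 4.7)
The plan is to obtain each of the four claimed identities as an immediate consequence of Theorem~\ref{thm:relBwRatenInfoRate}, which already established the per-$n$ equality $R_{\delta_n}=R^I_{\delta_n}$. All four capacities in Definitions~\ref{defour}--\ref{defive1} are defined as specific aggregations (infima, suprema, or iterated optimizations) of the sequence $\{R_{\delta_n}\}$, so the strategy is simply to unfold each capacity definition, substitute $R_{\delta_n}$ by $R^I_{\delta_n}$, and read off the stated formula. The heavy combinatorial/measure-theoretic work is all absorbed into Theorem~\ref{thm:relBwRatenInfoRate}, so what remains is purely definitional bookkeeping.

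Concretely, for (\ref{eq:ChannelCodingTheorem2}) I invoke Definition~\ref{defour}, which gives $C_N(\{\delta_n\})_*=\inf_{n\in\mathbb{Z}_{>0}} R_{\delta_n}$, and then replace $R_{\delta_n}$ by $R^I_{\delta_n}$ via \eqref{eq:RateNInfoRate}. For (\ref{eq:ChannelCodingTheorem3}) and (\ref{eq:ChannelCodingTheorem5}) the definitions of $C_N(\{\delta_n\})^*$ and $C_N^{\delta*}$ in Definitions~\ref{defive} and \ref{defive1} are both of the form $\sup_{n\in\mathbb{Z}_{>0}} R_{\delta_n}$ (with $\delta_n=\delta$ being constant in the case of (\ref{eq:ChannelCodingTheorem5})), so the same substitution applies. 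For (\ref{eq:ChannelCodingTheorem4}) I unfold Definition~\ref{defS} to write $C_N(\{\downarrow 0\})_* = \sup_{\{\delta_n\}:\delta_n = o(1)} C_N(\{\delta_n\})_*$, then apply the identity already proven in (\ref{eq:ChannelCodingTheorem2}) inside, yielding the double optimization $\sup_{\{\delta_n\}:\delta_n=o(1)}\inf_{n} R^I_{\delta_n}$.

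The only subtlety I would flag is checking that the admissibility condition $0\le\delta_n<m_{\mathscr{Y}}(V_N^n)$ required by Theorem~\ref{thm:relBwRatenInfoRate} is preserved in each case. For (\ref{eq:ChannelCodingTheorem2}), (\ref{eq:ChannelCodingTheorem3}) and (\ref{eq:ChannelCodingTheorem4}) this is explicitly part of the hypotheses of the present theorem; for (\ref{eq:ChannelCodingTheorem5}) I just note that the hypothesis $0\le\delta<m_{\mathscr{Y}}(V_N^n)$ for every $n$ ensures that the constant sequence $\delta_n=\delta$ satisfies it. Beyond this, there is no genuine obstacle, since the argument is essentially a translation from the operational language of $R_{\delta_n}$ to the information-theoretic language of $R^I_{\delta_n}$ already provided by Theorem~\ref{thm:relBwRatenInfoRate}. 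The proof will therefore be very short and can reasonably be presented as four one-line derivations, one per item.
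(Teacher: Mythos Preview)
Your proposal is correct and matches the paper's approach exactly: the paper states that the theorem is ``an immediate consequence of Theorem~\ref{thm:relBwRatenInfoRate} and of our capacity definitions,'' which is precisely the substitution $R_{\delta_n}=R^I_{\delta_n}$ inside Definitions~\ref{defour}--\ref{defive1} that you describe. Your added remark about verifying the admissibility condition $0\le\delta_n<m_{\mathscr{Y}}(V_N^n)$ in each case is a helpful clarification beyond what the paper writes.
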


Theorem~\ref{thm:capacityandinformationrate}  provides   multi-letter expressions of capacity,  since $R^I_{\delta_n}$  depends on $I_{\tilde\delta/|\llbracket X(1:n)\rrbracket|}(Y(1:n);X(1:n))$ according to \eqref{depends}. Next, we establish conditions on the uncertainty function, confidence sequence, and class of stationary, memoryless channels leading to the factorization of the mutual information and to single-letter expressions.

\subsection{Factorization of the Mutual Information}
 To obtain sufficient conditions for the factorization of the mutual information  we need to assume to work with a \emph{product uncertainty function}. 
\begin{assumption} (Product uncertainty function). \label{assumption:productrule}
The uncertainty function  
of a cartesian product of $n$ sets can be factorized in the product of its terms,  namely for any $n \in \mathbb{Z}_{>0}$ and $\mathscr{S} \subseteq \llbracket Y\rrbracket$ such that
\begin{equation}
   \mathscr{S}= \mathscr{S}_1\times\mathscr{S}_2\times\ldots\times\mathscr{S}_n,
\end{equation}
we have 
\begin{equation}
    m_{\mathscr{Y}}(\mathscr{S})= m_{\mathscr{Y}}(\mathscr{S}_1)\times  m_{\mathscr{Y}}(\mathscr{S}_2)\times\ldots\times m_{\mathscr{Y}}(\mathscr{S}_n).
\end{equation}
\end{assumption}

We also need to assume that the product uncertainty function   satisfies a union bound.

\begin{assumption}(Union bound).\label{assumption:triangleInequality}
For all $\mathscr{S}_1,\mathscr{S}_2\subseteq \llbracket Y\rrbracket$, we have
\begin{equation}
     m_{\mathscr{Y}}(\mathscr{S}_1\cup\mathscr{S}_2)\leq m_{\mathscr{Y}}(\mathscr{S}_1)+m_{\mathscr{Y}}(\mathscr{S}_2).
\end{equation}
\end{assumption}

Before stating the main result of this section, we prove the following useful lemma.


\begin{lemma}\label{thm:tensorization}
Let $X(1:n)$ and $Y(1:n)$ be two   UVs  such that 
\begin{equation}\label{eq:reference1}
    \llbracket X(1:n)\rrbracket=\llbracket X(1)\rrbracket\times\llbracket X(2)\rrbracket\ldots\times \llbracket X(n)\rrbracket,
\end{equation}
and for all $x(1:n)\in \llbracket X(1:n)\rrbracket$, we have
\begin{equation}\label{eq:reference3}
    \llbracket Y(1:n)|x(1:n)\rrbracket=\llbracket Y(1)|x(1)\rrbracket\times\ldots \llbracket Y(n)|x(n)\rrbracket.
\end{equation}
Let 
\begin{equation}\label{eq:rangeDeltaCheck}
    0\leq\delta< \min_{\substack{1\leq i\leq n, \\ x(i)\in \llbracket X(i)\rrbracket}}m_{\mathscr{Y}}(\llbracket Y(i)|x(i)\rrbracket).
\end{equation}
Finally, let either 
\begin{align}
(X(1:n),Y(1;n))\stackrel{d}{\leftrightarrow}(0,\delta^n), & \mbox{ or}\\ (X(1:n),Y(1;n))\stackrel{a}{\leftrightarrow}(1,\delta^n).
\end{align}
Under Assumption \ref{assumption:productrule},
we have:
\begin{enumerate}
    \item The cartesian product $\prod_{i=1}^{n}\llbracket Y(i)|X(i)\rrbracket^*_{\delta}$ is a covering of $\llbracket Y(1:n)\rrbracket$.
    \item Every $\mathscr{S}\in\prod_{i=1}^{n}\llbracket Y(i)|X(i)\rrbracket^*_{\delta}$ is  $\delta^n$-connected and 
contains at least one singly $\delta^n$-connected set of the form $\llbracket Y(1:n)|x(1:n)\rrbracket$. 
\item For every singly $\delta^n$-connected set of the form $\llbracket Y(1:n)|x(1:n)\rrbracket$, there exist a set in  $\prod_{i=1}^{n}\llbracket Y(i)|X(i)\rrbracket^*_{\delta}$ containing it, namely for all $x(1:n)\in\llbracket X(1:n)\rrbracket$, there exists a set $\mathscr{S}\in \prod_{i=1}^{n}\llbracket Y(i)|X(i)\rrbracket^*_{\delta}$ such that $\llbracket Y(1:n)|x(1:n)\rrbracket\subseteq \mathscr{S}$.
\item For all $\mathscr{S}_1,\mathscr{S}_2\in \prod_{i=1}^{n}\llbracket Y(i)|X(i)\rrbracket^*_{\delta}$, we have
\begin{equation}
    \frac{m_{\mathscr{Y}}(\mathscr{S}_1\cap \mathscr{S}_2)}{m_{\mathscr{Y}}(\llbracket Y(1:n)\rrbracket)}\leq \delta(\hat{\delta}(n))^{n-1},
\end{equation}
where 
\begin{equation}\label{eq:upperBoundDeltahat}
    \hat{\delta}(n)=\max_{\substack{1\leq i\leq n,\\ \mathscr{S}\in \llbracket Y(i)|X(i)\rrbracket^*_{\delta}}} \frac{m_{\mathscr{Y}}(\mathscr{S})}{m_{\mathscr{Y}}(\llbracket Y(i)\rrbracket)}. 
\end{equation}
\end{enumerate}
\end{lemma}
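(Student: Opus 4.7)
The plan is to verify the four claims for sets of the form $\mathscr{S}=\mathscr{S}_1\times\cdots\times\mathscr{S}_n$ with each $\mathscr{S}_i\in\llbracket Y(i)|X(i)\rrbracket^*_{\delta}$, exploiting the product structures (\ref{eq:reference1})--(\ref{eq:reference3}), Assumption~\ref{assumption:productrule}, and the identity $\bigl(\prod_i A_i\bigr)\cap\bigl(\prod_i B_i\bigr)=\prod_i(A_i\cap B_i)$. A recurring ingredient is that, since $m_{\mathscr{Y}}(\mathscr{Y})=1$ forces $m_{\mathscr{Y}}(\llbracket Y(i)\rrbracket)\leq 1$, the bound (\ref{eq:rangeDeltaCheck}) yields $m_{\mathscr{Y}}(\llbracket Y(i)|x(i)\rrbracket)/m_{\mathscr{Y}}(\llbracket Y(i)\rrbracket)>\delta$ for every $i$ and every $x(i)\in\llbracket X(i)\rrbracket$, so every per-letter conditional range carries relative weight strictly exceeding $\delta$. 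A preliminary observation is that (\ref{eq:reference1}) and (\ref{eq:reference3}) together with the flexibility to mix coordinates give $\llbracket Y(1:n)\rrbracket=\prod_i\llbracket Y(i)\rrbracket$.

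Claims (1) and (3) then follow directly. For (1), every $y(1:n)\in\prod_i\llbracket Y(i)\rrbracket$ has each coordinate covered by some $\mathscr{S}_i$ (per-letter families cover $\llbracket Y(i)\rrbracket$), so the point lies in the corresponding product set. For (3), the product form of $\llbracket Y(1:n)|x(1:n)\rrbracket$ together with Property~3 of Definition~\ref{defoverlap} applied coordinate-wise places each factor $\llbracket Y(i)|x(i)\rrbracket$ inside some $\mathscr{S}_i$. The singly $\delta^n$-connected set needed in the second half of (2) is then $\prod_i\llbracket Y(i)|x_i^{\circ}\rrbracket\subseteq\mathscr{S}$, where $\llbracket Y(i)|x_i^{\circ}\rrbracket$ is a singly $\delta$-connected set inside $\mathscr{S}_i$ guaranteed by Property~1 of the per-letter family.

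The main obstacle is the joint $\delta^n$-connectedness of each product set in (2). Given $y_a,y_b\in\mathscr{S}$, I will construct a chain in $\llbracket Y(1:n)|X(1:n)\rrbracket$ by a coordinate-sweep: for each $i$ obtain a per-letter chain $\{\llbracket Y(i)|x_i^{(k)}\rrbracket\}_{k=1}^{N_i}$ realizing the $\delta$-connection of $y_a(i)$ to $y_b(i)$ inside $\mathscr{S}_i$, then splice these into one joint sequence whose consecutive elements $\prod_i\llbracket Y(i)|x^{(j)}(i)\rrbracket$ differ in exactly one coordinate at a time (already-swept coordinates held at $x_i^{(N_i)}$, yet-to-be-swept at $x_i^{(1)}$). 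For two consecutive joint elements differing only in coordinate $i$, Assumption~\ref{assumption:productrule} factorizes the relative overlap into one varying-coordinate factor $>\delta$ (by per-letter $\delta$-connectedness) and $n-1$ frozen-coordinate factors each $>\delta$ (by the recurring ingredient); the product is therefore $>\delta^n$, which is what $\delta^n$-connectedness via $\llbracket Y(1:n)|X(1:n)\rrbracket$ demands, and the endpoints of the chain contain $y_a$ and $y_b$ by construction.

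For (4), if two product sets $\mathscr{S}^1$ and $\mathscr{S}^2$ are distinct they must differ in some coordinate $i_0$, and the product-of-intersections identity combined with Assumption~\ref{assumption:productrule} factorizes $m_{\mathscr{Y}}(\mathscr{S}^1\cap\mathscr{S}^2)$ across coordinates. Property~2 of the per-letter family bounds the $i_0$-th factor by $\delta\,m_{\mathscr{Y}}(\llbracket Y(i_0)\rrbracket)$, while for each remaining $i$ monotonicity of $m_{\mathscr{Y}}$ (a consequence of (\ref{eq:strongtransitivity})) together with the definition of $\hat{\delta}(n)$ in (\ref{eq:upperBoundDeltahat}) gives $m_{\mathscr{Y}}(\mathscr{S}_i^1\cap\mathscr{S}_i^2)\leq\hat{\delta}(n)\,m_{\mathscr{Y}}(\llbracket Y(i)\rrbracket)$. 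Taking the product and dividing by $m_{\mathscr{Y}}(\llbracket Y(1:n)\rrbracket)=\prod_i m_{\mathscr{Y}}(\llbracket Y(i)\rrbracket)$ yields the stated bound $\delta(\hat{\delta}(n))^{n-1}$.
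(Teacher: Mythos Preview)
Your proposal is correct and follows essentially the same approach as the paper's proof: both exploit the product identities $\llbracket Y(1:n)\rrbracket=\prod_i\llbracket Y(i)\rrbracket$ and $(\prod_i A_i)\cap(\prod_i B_i)=\prod_i(A_i\cap B_i)$, factorize via Assumption~\ref{assumption:productrule}, and use (\ref{eq:rangeDeltaCheck}) together with $m_{\mathscr{Y}}(\mathscr{Y})=1$ to show that frozen per-letter coordinates contribute relative weight $>\delta$. The only cosmetic difference is the joint chain in part~(2): you sweep one coordinate at a time while the paper advances all coordinates simultaneously (freezing each as its per-letter chain is exhausted); both constructions rely on the same ``frozen coordinate $>\delta$'' fact and are equally valid.
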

\begin{proof}
The proof  is given in   Appendix \ref{appendix:lemma2Proof}.
\end{proof}

Under    Assumption \ref{assumption:productrule} and Assumption \ref{assumption:triangleInequality}, given two UVs that can be written in Cartesian product form and that are either associated at level $(0,\delta^n)$ or disassociated at level $(1, \delta^n)$, we now obtain an upper bound on the mutual information at level $\delta^n$ in terms of  the sum of the mutual information at level $\delta$ of their components.  An analogous result in the stochastic setting states that
the mutual information between two $n$-dimensional random variables $X^n=\{X_1,\ldots,X_n\}$ and $Y^n=\{Y_1,\ldots,Y_n\}$  is at most the sum of the component-wise mutual information, namely
\begin{equation}
    I(X^n;Y^n)\leq\sum_{i=1}^{n}I(X_i;Y_i),
\end{equation}
where $I(X;Y)$ represents the Shannon mutual information between two random variables $X$ and $Y$.
In contrast to   the stochastic setting, here the mutual information is associated to a confidence parameter $\delta^n$ that  is re-scaled to $\delta$ when this is decomposed into the sum of $n$ terms. 
{

\begin{theorem}\label{thm:tensorization2}
Let $X(1:n)$ and $Y(1:n)$ be two   UVs  such that 
\begin{equation}\label{eq:ref4}
    \llbracket X(1:n)\rrbracket=\llbracket X(1)\rrbracket\times\llbracket X(2)\rrbracket\ldots \llbracket X(n)\rrbracket,
\end{equation}
and for all $x(1:n)\in \llbracket X(1:n)\rrbracket$, we have
\begin{equation}\label{eq:ref5}
    \llbracket Y(1:n)|x(1:n)\rrbracket=\llbracket Y(1)|x(1)\rrbracket\times\ldots \llbracket Y(n)|x(n)\rrbracket.
\end{equation}
Also, let 
\begin{equation}\label{eq:rangeDeltaCheck1}
    0\leq\delta < \frac{\min_{{1\leq i\leq n, \\ x(i)\in \llbracket X(i)\rrbracket}}m_{\mathscr{Y}}(\llbracket Y(i)|x(i)\rrbracket)}{\max_{1\leq i\leq n} |\llbracket X(i)\rrbracket|}.
\end{equation} 
{\color{brown}
}
Finally, let either 
\begin{align}\label{eq:disAssoc1}
(X(1:n),Y(1:n))\stackrel{d}{\leftrightarrow}(0,\delta^n), & \mbox{ or}\\ (X(1:n),Y(1:n))\stackrel{a}{\leftrightarrow}(1,\delta^n). \label{eq:disAssoc2}
\end{align}
Under Assumptions \ref{assumption:productrule} and 2,
we have 
\begin{equation}\label{eq:tensorization}
    I_{\delta^n}(Y(1:n);X(1:n))\leq \sum_{i=1}^n I_{\delta}(Y(i);X(i)).
\end{equation}
\end{theorem}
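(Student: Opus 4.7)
The plan is to reduce \eqref{eq:tensorization} to the cardinality inequality
\begin{equation*}
|\llbracket Y(1:n)|X(1:n)\rrbracket^*_{\delta^n}| \;\leq\; \prod_{i=1}^{n} |\llbracket Y(i)|X(i)\rrbracket^*_{\delta}|,
\end{equation*}
which, after taking logarithms and applying Definition~\ref{defn:information}, immediately yields the bound on $I_{\delta^n}(Y(1:n);X(1:n))$. I would prove this inequality by exhibiting a surjection $\mathscr{D}:\mathcal{P}\to\mathcal{Q}$, where $\mathcal{P}=\prod_{i=1}^n \llbracket Y(i)|X(i)\rrbracket^*_\delta$ and $\mathcal{Q}=\llbracket Y(1:n)|X(1:n)\rrbracket^*_{\delta^n}$. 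All the structural machinery needed to build this surjection has already been packaged in Lemma~\ref{thm:tensorization}.

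The construction proceeds as follows. Given $\mathscr{S}\in\mathcal{P}$, Property~2 of Lemma~\ref{thm:tensorization} supplies a singly $\delta^n$-connected conditional range $\llbracket Y(1:n)|x_\mathscr{S}(1:n)\rrbracket\subseteq\mathscr{S}$; by Property~3 of Definition~\ref{defoverlap} applied to $\mathcal{Q}$, there exists a set $\mathscr{D}(\mathscr{S})\in\mathcal{Q}$ containing it, which I use to define the map. The heart of the argument is then to upgrade this to the stronger containment $\mathscr{S}\subseteq\mathscr{D}(\mathscr{S})$. Arguing by contradiction, suppose some $y_2\in\mathscr{S}$ were to lie in a distinct block $\mathscr{D}_1\in\mathcal{Q}$. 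Since $\mathscr{S}$ is $\delta^n$-connected (Property~2 of Lemma~\ref{thm:tensorization}), there is a chain of conditional ranges witnessing $y_2\stackrel{\delta^n}{\leftrightsquigarrow} y_1$ for some $y_1\in\llbracket Y(1:n)|x_\mathscr{S}(1:n)\rrbracket\subseteq\mathscr{D}(\mathscr{S})$. In the disassociated case, Theorem~\ref{lemma:overlap} guarantees that $\mathcal{Q}$ is a $\delta^n$-isolated partition, immediately contradicting the existence of such a chain between points in two distinct blocks. In the associated case, I would walk along the chain until I isolate two consecutive conditional ranges lying respectively in $\mathscr{D}(\mathscr{S})$ and $\mathscr{D}_1$; invoking strong transitivity \eqref{eq:strongtransitivity} together with the connectivity inequality \eqref{eq:condtionConnectedness}, exactly as in \eqref{eq:lowerboundOverlap}, yields a strict violation of Property~2 of Definition~\ref{defoverlap} for $\mathcal{Q}$. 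Surjectivity of $\mathscr{D}$ is then automatic from Property~1 of Lemma~\ref{thm:tensorization}, which asserts that $\mathcal{P}$ covers $\llbracket Y(1:n)\rrbracket$: every block of $\mathcal{Q}$ must intersect, and hence (by the containment just proved) coincide with the image of, at least one $\mathscr{S}\in\mathcal{P}$.

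The main obstacle will be the containment step $\mathscr{S}\subseteq\mathscr{D}(\mathscr{S})$ in the associated regime, where unlike the disassociated case one cannot appeal to a clean partition structure and must instead carefully propagate the overlap lower bound along the connecting chain. The interplay between the upper bound \eqref{eq:rangeDeltaCheck1} on $\delta$, Assumption~\ref{assumption:triangleInequality}, and the product uncertainty function of Assumption~\ref{assumption:productrule}---all of which feed into the computation of $\hat{\delta}(n)$ in Property~4 of Lemma~\ref{thm:tensorization}---is precisely what keeps the cross-overlap bookkeeping strict and forces the contradiction. Handling both regimes under a single surjection argument, rather than splitting into two essentially different proofs, is the cleanest route and avoids re-deriving the uniqueness machinery of Theorem~\ref{lemma:overlap}.
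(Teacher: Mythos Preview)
Your overall architecture---define $\mathscr{D}:\mathcal{P}\to\mathcal{Q}$ via a singly connected set, prove $\mathscr{S}\subseteq\mathscr{D}(\mathscr{S})$, then read off surjectivity---matches the paper, and the disassociated branch is essentially correct. The gap is in the associated branch.

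Under $(1,\delta^n)$-association, every pair of distinct conditional ranges has normalized overlap at most $\delta^n$, so a $\delta^n$-connectivity chain of length $N\geq 2$ cannot exist: the condition \eqref{eq:condtionConnectedness} fails at the very first step. Hence the chain witnessing $y_1\stackrel{\delta^n}{\leftrightsquigarrow}y_2$ that you invoke consists of a \emph{single} conditional range $\llbracket Y(1{:}n)\mid x\rrbracket$ containing both $y_1$ and $y_2$. There are no ``two consecutive conditional ranges lying respectively in $\mathscr{D}(\mathscr{S})$ and $\mathscr{D}_1$'' to isolate, and the argument of \eqref{eq:lowerboundOverlap} is vacuous. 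You also cannot prepend $\llbracket Y(1{:}n)\mid x_{\mathscr{S}}\rrbracket$ to anchor the chain in $\mathscr{D}(\mathscr{S})$: its overlap with $\llbracket Y(1{:}n)\mid x\rrbracket$ is again $\leq\delta^n$ by association, so the prepended sequence is not a valid chain. The point $y_1$ you chose may well lie in several conditional ranges at once, only one of which sits inside $\mathscr{D}(\mathscr{S})$; nothing forces the single-step chain to be that one.

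The paper closes this gap by first selecting $y_1$ inside $\llbracket Y(1{:}n)\mid x_{\mathscr{S}}\rrbracket$ that belongs to \emph{no other} conditional range; this is Lemma~\ref{lemma:uniquenessProperty}, whose proof is exactly where the sharper bound \eqref{eq:rangeDeltaCheck1} (with $\max_i|\llbracket X(i)\rrbracket|$ in the denominator) and Assumption~\ref{assumption:triangleInequality} are consumed, via the estimate \eqref{eq:rangeOfDelta7}. With this choice any length-one chain through $y_1$ must be $\llbracket Y(1{:}n)\mid x_{\mathscr{S}}\rrbracket$ itself, forcing $y_2\in\mathscr{D}(\mathscr{S})$ and giving the contradiction. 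Your attribution of these hypotheses to ``Property~4 of Lemma~\ref{thm:tensorization}'' and $\hat\delta(n)$ is off: Property~4 is not used anywhere in this proof (it surfaces later, in Theorem~\ref{thm:singleLetterInf}); the entire load is carried by the anchor-point uniqueness.
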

\begin{proof}
First, we will show that for all $\mathscr{S}\in \prod_{i=1}^{n}\llbracket Y(i)|X(i)\rrbracket^*_{\delta}$, there exists a point $x_{\mathscr{S}}(1:n)\in\llbracket X(1:n)\rrbracket$ and a set $\mathscr{D}(\mathscr{S})\in \llbracket Y(1:n)|X(1:n)\rrbracket^*_{\delta^n}$ such that \begin{equation}\label{eq:check1.2}
        \llbracket Y(1:n)|x_{\mathscr{S}}(1:n)\rrbracket\subseteq\mathscr{S},
    \end{equation}
    \begin{equation}\label{eq:check1.3}
        \llbracket Y(1:n)|x_{\mathscr{S}}(1:n)\rrbracket\subseteq\mathscr{D}(\mathscr{S}). 
    \end{equation}
Using this result, we  will then show that 
\begin{equation}\label{eq:TOProve1}
    |\prod_{i=1}^{n}\llbracket Y(i)|X(i)\rrbracket^*_{\delta}|\geq |\llbracket Y(1:n)|X(1:n)\rrbracket^*_{\delta^n}|,
\end{equation}
which immediately implies \eqref{eq:tensorization}. 

Let us begin with the first step. 
{We have
\begin{equation}\label{eq:rangeDeltaCond}
\begin{split}
    \delta&\stackrel{(a)}{<} \frac{\min_{{1\leq i\leq n,  x(i)\in \llbracket X(i)\rrbracket}}m_{\mathscr{Y}}(\llbracket Y(i)|x(i)\rrbracket)}{\max_{1\leq i\leq n} |\llbracket X(i)\rrbracket|},\\
    &\stackrel{(b)}{\leq} \min_{{1\leq i\leq n,  x(i)\in \llbracket X(i)\rrbracket}}m_{\mathscr{Y}}(\llbracket Y(i)|x(i)\rrbracket),
\end{split}
\end{equation}
where $(a)$ follows from \eqref{eq:rangeDeltaCheck1}, and $(b)$ follows from the fact that for all $1\leq i\leq n$, we have $|\llbracket X(i)\rrbracket|\geq 1$. 
}


Now, consider a set $\mathscr{S}\in \prod_{i=1}^{n}\llbracket Y(i)|X(i)\rrbracket^*_{\delta}$. 
Using   \eqref{eq:rangeDeltaCond}, by Lemma   \ref{thm:tensorization} part $2)$ we have that there exist a point $x^\prime(1:n)\in\llbracket X(1:n)\rrbracket$ such that
\begin{equation}\label{eq:check1.4}
    \llbracket Y(1:n)|x^\prime(1:n)\rrbracket\subseteq\mathscr{S}.
\end{equation}
Now, using  \eqref{eq:rangeDeltaCond}, part $1)$ in Lemma \ref{thm:tensorization}, and Definition \ref{defoverlap}, we have 
\begin{equation}\label{eq:covering1}
\begin{split}
     \cup_{\mathscr{S}\in\prod_{i=1}^{n}\llbracket Y(i)|X(i)\rrbracket^*_{\delta}}\mathscr{S}&=\llbracket Y(1:n)\rrbracket,\\
     &=\cup_{\mathscr{D}\in\llbracket Y(1:n)|X(1:n)\rrbracket^*_{\delta^n}}\mathscr{D}.
\end{split}
\end{equation}
Using \eqref{eq:covering1} and  Property 3 of Definition \ref{defoverlap}, there exists a set $\mathscr{D}(x^\prime(1:n))\in \llbracket Y(1:n)|X(1:n)\rrbracket^*_{\delta^n}$ such that \begin{equation}\label{eq:check1.5}
    \llbracket Y(1:n)|x^\prime(1:n)\rrbracket\subseteq\mathscr{D}(x^\prime(1:n)).
\end{equation}
Letting $x_{\mathscr{S}}(1:n)=x^\prime(1:n)$ and $\mathscr{D}(\mathscr{S})=\mathscr{D}(x^\prime(1:n))$ in \eqref{eq:check1.4} and \eqref{eq:check1.5}, we have that \eqref{eq:check1.2} and \eqref{eq:check1.3} follow.

We now proceed with proving \eqref{eq:TOProve1}. 
We distinguish two cases. In the first case, there exists two sets $\mathscr S\in \prod_{i=1}^{n}\llbracket Y(i)|X(i)\rrbracket^*_{\delta}$ and $ \mathscr D_1\in \llbracket Y(1:n)|X(1:n)\rrbracket^*_{\delta^n}$ such that
\begin{equation} \label{eq:satisfying}
    \mathscr D_1\cap \mathscr S\setminus \mathscr{D}(\mathscr{S})\neq\emptyset.
\end{equation} 
In the second case, the sets $\mathscr S$ and $\mathscr{D}_1$ satisfying \eqref{eq:satisfying} do not exist. We will show that the first case is not possible, and in the second case, we have that \eqref{eq:TOProve1} holds.  


To rule out the first case, consider two points 
\begin{equation}
    y_1(1:n)\in \llbracket Y(1:n)|x_{\mathscr{S}}(1:n)\rrbracket \subseteq\mathscr{D}(\mathscr{S})
\end{equation}
and 
\begin{equation}
    y_2(1:n)\in \mathscr D_1\cap \mathscr S\setminus \mathscr{D}(\mathscr{S}).
\end{equation}
If $(X(1:n),Y(1:n)\stackrel{a}{\leftrightarrow}(1,\delta^n)$, then 
{
we have
\begin{equation}\label{eq:rangeOfDelta7}
\begin{split}
&\delta^n |\llbracket X(1:n)\rrbracket|\\
&\stackrel{(a)}{<} \bigg( \frac{\min_{{1\leq i\leq n,  x(i)\in \llbracket X(i)\rrbracket}}m_{\mathscr{Y}}(\llbracket Y(i)|x(i)\rrbracket)}{\max_{1\leq i\leq n} |\llbracket X(i)\rrbracket|}\bigg)^n |\llbracket X(1:n)\rrbracket|,\\
&\stackrel{(b)}{\leq} \bigg(\min_{\substack{1\leq i\leq n, \\ x(i)\in \llbracket X(i)\rrbracket}} m_{\mathscr{Y}}(\llbracket Y(i)|x(i)\rrbracket)\bigg)^n,\\
&\stackrel{(c)}{\leq} \min_{x(1:n)\in\llbracket X(1:n)\rrbracket} m_{\mathscr{Y}}(\llbracket Y(1:n)|x(1:n)\rrbracket),\\
&\stackrel{(d)}{\leq}\frac{\min_{x(1:n)\in\llbracket X(1:n)\rrbracket} m_{\mathscr{Y}}(\llbracket Y(1:n)|x(1:n)\rrbracket)}{m_{\mathscr{Y}}(\llbracket Y(1:n)\rrbracket)},
\end{split}
\end{equation}
where $(a)$ follows from \eqref{eq:rangeDeltaCheck1}, $(b)$
follows from \eqref{eq:ref4}, and the fact that 
\begin{equation}
    |\llbracket X(1)\rrbracket\times\llbracket X(2)\rrbracket\ldots \llbracket X(n)\rrbracket|\leq (\max_{1\leq i\leq n}|\llbracket X(i)\rrbracket|)^n,
\end{equation}
 $(c)$ follows from \eqref{eq:ref5} and Assumption \ref{assumption:productrule}, and $(d)$ follows from \eqref{eq:strongtransitivity}, and the facts that $\llbracket Y(1:n)\rrbracket\subseteq \mathscr{Y}^n$ and $m_\mathscr{Y}(\mathscr{Y}^n)=1$.  
 Combining \eqref{eq:rangeOfDelta7}, Assumption \ref{assumption:triangleInequality} and Lemma \ref{lemma:uniquenessProperty} in   Appendix \ref{sec:AuxResult}, we have that there exists a point $y(1:n)\in \llbracket Y(1:n)|x_{\mathscr{S}}(1:n)\rrbracket$ such that for all $\llbracket Y(1:n)|x(1:n)\rrbracket\in \llbracket Y(1:n)|X(1:n)\rrbracket\setminus \{\llbracket Y(1:n)|x_{\mathscr{S}}(1:n)\rrbracket\}$, we have.
\begin{equation}\label{eq:uni1}
    y(1:n)\notin\llbracket Y(1:n)|x(1:n)\rrbracket. 
\end{equation}
Without loss of generality, let 
\begin{equation}
    y_1(1:n)=y(1:n).
\end{equation}
It now follows that  
}
$y_1(1:n)$ and $y_2(1:n)$ cannot be $\delta^n$-connected. This follows because
\begin{equation}
    y_1(1:n)\in \llbracket Y(1:n)|x_{\mathscr{S}}(1:n)\rrbracket,
\end{equation}
\begin{equation}
    y_2(1:n)\not\in \llbracket Y(1:n)|x_{\mathscr{S}}(1:n)\rrbracket,
\end{equation}
{\eqref{eq:uni1}} and  $(X(1:n),Y(1:n)\stackrel{a}{\leftrightarrow}(1,\delta^n)$, so that there does not exist a sequence $\{\llbracket Y(1:n)|x_{i}(1:n)\rrbracket\}_{i=1}^N$ such that for all $1<i\leq N$
\begin{equation}
    \frac{m_{\mathscr{Y}}(\llbracket Y(1:n)|x_{i}(1:n)\rrbracket\cap \llbracket Y(1:n)|x_{i-1}(1:n)\rrbracket)}{m_{\mathscr{Y}}(\llbracket Y(1:n)\rrbracket)}>\delta^n.
\end{equation}
On the other hand, if $(X(1:n),Y(1:n)\stackrel{d}{\leftrightarrow}(0,\delta^n)$, then using Theorem \ref{lemma:overlap}, we have that $\llbracket Y(1:n)|X(1:n)\rrbracket^*_{\delta^n}$ is a $\delta^n$-isolated partition. Thus, $y_1(1:n)$ and $y_2(1:n)$ are not $\delta^n$-connected, since $y_1(1:n)\in\mathscr{D}(\mathscr{S})$ and $y_2(1:n)\in \mathscr D_1\cap \mathscr S\setminus \mathscr{D}(\mathscr{S})$. However, since $y_1(1:n), y_2(1:n)\in\mathscr{S}$ and $\mathscr{S}$ is $\delta^n$-connected using   \eqref{eq:rangeDeltaCond} and $2)$ in Lemma \ref{thm:tensorization}, we have that $y_1(1:n)\stackrel{\delta^n}{\leftrightsquigarrow}y_2(1:n)$.  This contradiction  implies that $\mathscr{D}_1$ and $\mathscr{S}$ do not exist. 

In the second case, if $\mathscr S$ and $\mathscr{D}_1$ do not exist, then for all $\mathscr{S}^\prime\in \prod_{i=1}^{n}\llbracket Y(i)|X(i)\rrbracket^*_{\delta}$ and $\mathscr{D}^\prime\in \llbracket Y(1:n)|X(1:n)\rrbracket^*_{\delta^n}$, we have 
\begin{equation}\label{eq:check1234}
    \mathscr{D}^\prime\cap \mathscr{S}^\prime\setminus\mathscr{D}(\mathscr{S}^\prime)=\emptyset,
\end{equation}
which implies that 
\begin{align}
    \mathscr{S}^\prime
    &\stackrel{(a)}{=}\cup_{\mathscr{D}^\prime\in \llbracket Y(1:n)|X(1:n)\rrbracket^*_{\delta^n}} (\mathscr{S}^\prime\cap \mathscr{D}^\prime),\nonumber \\
    &\stackrel{(b)}{\subseteq}  \cup_{\mathscr{D}^\prime\in \llbracket Y(1:n)|X(1:n)\rrbracket^*_{\delta^n}}\Big(
    \mathscr{D}(\mathscr{S}^\prime)\cup(\mathscr{S}^\prime\cap \mathscr{D}^\prime\setminus \mathscr{D}(\mathscr{S}^\prime)\Big),\nonumber \\
    &=\mathscr{D}(\mathscr{S}^\prime)\cup_{\mathscr{D}^\prime\in \llbracket Y(1:n)|X(1:n)\rrbracket^*_{\delta^n}}(\mathscr{S}^\prime\cap \mathscr{D}^\prime\setminus \mathscr{D}(\mathscr{S}^\prime)),\nonumber \\
    &\stackrel{(c)}{=} \mathscr{D}(\mathscr{S}^\prime),\label{eq:CheckForContra}
\end{align}
where $(a)$ follows from \eqref{eq:covering1}, $(b)$ follows from the trivial fact that for any three sets $\mathscr{A}$, $\mathscr{B}$ and $\mathscr{C}$, 
\begin{equation}
    \mathscr{A}\cap \mathscr{B}\subseteq \mathscr{C}\cup (\mathscr{A}\cap \mathscr{B}\setminus \mathscr{C}),
\end{equation}
and $(c)$ follows from \eqref{eq:check1234}. Combining \eqref{eq:CheckForContra} and \eqref{eq:covering1}, we have that
\begin{equation}
    |\prod_{i=1}^{n}\llbracket Y(i)|X(i)\rrbracket^*_{\delta}|\geq |\llbracket Y(1:n)|X(1:n)\rrbracket^*_{\delta^n}|.
\end{equation}
The statement of the theorem now follows.
\end{proof}
}
The following corollary shows that the bound in Theorem \ref{thm:tensorization2} is tight in the  zero-error case. 

\begin{corollary}\label{corr:tensorization}
Let $X(1:n)$ and $Y(1:n)$ satisfy  (\ref{eq:ref4}) and \eqref{eq:ref5}. Under Assumptions \ref{assumption:productrule} and \ref{assumption:triangleInequality}, we have that 
\begin{equation}\label{eq:tensorization_zero}
    I_{0}(Y(1:n);X(1:n))= \sum_{i=1}^n I_{0}(Y(i);X(i)). 
\end{equation}
\end{corollary}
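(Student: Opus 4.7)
My plan is to establish equality by proving both inequalities separately, using Theorem~\ref{thm:tensorization2} for one direction and the optimality of the overlap family stated in Theorem~\ref{lemma:overlap} for the other.

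For the direction $I_0(Y(1:n);X(1:n)) \leq \sum_{i=1}^{n} I_0(Y(i);X(i))$, I would invoke Theorem~\ref{thm:tensorization2} with $\delta=0$. The range condition \eqref{eq:rangeDeltaCheck1} is satisfied trivially at $\delta=0$ by \eqref{ass:finiteness}, since each $m_{\mathscr{Y}}(\llbracket Y(i)|x(i)\rrbracket)$ is strictly positive. The disassociation/association hypothesis for the pair $(X(1:n),Y(1:n))$ at $\delta=0$ reduces to a simple dichotomy: if $\mathscr{A}(Y(1:n);X(1:n))$ is empty, then association at $(1,0)$ holds; if it is non-empty, then disassociation at $(0,0)$ holds (since the set $\mathscr{A}$ excludes $0$ by definition). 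Either way the theorem applies and yields the inequality.

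For the reverse direction, I would explicitly exhibit a $0$-isolated partition of $\llbracket Y(1:n)\rrbracket$ with cardinality $\prod_{i=1}^{n} |\llbracket Y(i)|X(i)\rrbracket^*_0|$, namely the Cartesian product
\begin{equation}
\mathcal{P} = \prod_{i=1}^{n} \llbracket Y(i)|X(i)\rrbracket^*_0.
\end{equation}
By Lemma~\ref{thm:tensorization} parts 1 and 2 (applied at $\delta=0$), the set $\mathcal{P}$ covers $\llbracket Y(1:n)\rrbracket$ and each of its members is $0$-connected. For two distinct product sets $\prod_i \mathscr{S}_i$ and $\prod_i \mathscr{S}'_i$ that differ in some coordinate $i$, Property 2 of Definition~\ref{defoverlap} applied to $\llbracket Y(i)|X(i)\rrbracket^*_0$ gives $m_{\mathscr{Y}}(\mathscr{S}_i \cap \mathscr{S}'_i) = 0$, and the product-rule Assumption~\ref{assumption:productrule} propagates this zero through the entire Cartesian product. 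Consequently the two product sets have zero-measure intersection and any attempt to $0$-connect a point in one to a point in the other via a chain of singly-connected sets must fail, so they are $0$-isolated. Invoking Theorem~\ref{lemma:overlap} part 2 then gives $|\llbracket Y(1:n)|X(1:n)\rrbracket^*_0| \geq |\mathcal{P}|$, and taking base-$2$ logarithms yields the desired reverse inequality.

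The main obstacle will be the careful handling of the hypotheses on association/disassociation at level zero so that the two high-level theorems can be invoked. In particular, care is needed because the set $\mathscr{A}$ excludes zero by convention, which creates edge cases (for instance, when some conditional-range intersections vanish identically). I expect these cases to be benign—either one side of \eqref{eq:tensorization_zero} trivially vanishes, or the relevant overlap families coincide with the partition into singletons $\{\llbracket Y|x\rrbracket\}$—but verifying this rigorously is the most delicate part of the argument.
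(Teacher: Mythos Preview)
Your approach matches the paper's at the high level: establish the dichotomy that either $(X(1:n),Y(1:n))\stackrel{a}{\leftrightarrow}(1,0)$ or $(X(1:n),Y(1:n))\stackrel{d}{\leftrightarrow}(0,0)$ holds, then get the upper bound from Theorem~\ref{thm:tensorization2} at $\delta=0$, and the lower bound from the product family $\mathcal{P}=\prod_{i=1}^n\llbracket Y(i)|X(i)\rrbracket^*_0$.

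The difference lies in how you justify the lower bound. You argue that $\mathcal{P}$ is a $0$-isolated partition and then appeal to Theorem~\ref{lemma:overlap} part~2. The paper instead observes that at $\delta=0$ \emph{all four} parts of Lemma~\ref{thm:tensorization} apply, and together they show that $\mathcal{P}$ satisfies every clause of Definition~\ref{defoverlap}; since the $0$-overlap family is by definition a largest such family, the cardinality comparison is immediate.

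Your detour through Theorem~\ref{lemma:overlap} is where the edge case you flagged actually bites: that theorem is stated only under disassociation, so in the association-only branch of the dichotomy (where $\mathscr{A}(Y(1:n);X(1:n))=\emptyset$) you cannot invoke it, and your reverse inequality is left unproved there. The paper's route sidesteps this entirely, because Lemma~\ref{thm:tensorization} is stated under \emph{either} hypothesis, and the conclusion that $\mathcal{P}$ meets Definition~\ref{defoverlap} needs no further case analysis. The fix for your argument is therefore simple: rather than showing $\mathcal{P}$ is a $0$-isolated partition and calling Theorem~\ref{lemma:overlap}, use Lemma~\ref{thm:tensorization} parts~3 and~4 (in addition to~1 and~2) to conclude directly that $\mathcal{P}$ is itself a $0$-overlap family.
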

\begin{proof}
The proof is along the same lines as the one  of Theorem \ref{thm:tensorization2}. For all $X(1:n)$ and $Y(1:n)$, if $\mathcal{A}(Y;X)=\emptyset$, then
\begin{equation}
    (X(1:n),Y(1:n))\stackrel{a}{\leftrightarrow}(1,0),
\end{equation}
otherwise 
\begin{equation}
    (X(1:n),Y(1:n))\stackrel{d}{\leftrightarrow}(0,0). 
\end{equation}
Hence, either \eqref{eq:disAssoc1} or \eqref{eq:disAssoc2} holds for $\delta=0$. Now, by replacing $\delta=0$ in $1)-4)$ of Lemma \ref{thm:tensorization}, we have that $\prod_{i=1}^{n}\llbracket Y(i)|X(i)\rrbracket_0^*$ satisfies all the properties of a $0$-overlap family. Combining this fact and Theorem \ref{thm:tensorization2}, the statement of the corollary follows.
\end{proof}


\subsection{Single letter expressions}
We are now ready to present sufficient conditions leading to single-letter expressions for   $C(\{\delta_n\})_*$, ${C}_{N}^{\delta*}$, $C(\{\delta_n\})^*$ and ${C}_{N}(\{\downarrow 0\})_*$. Under these conditions, the  multi dimensional optimization problem of searching for a codebook that achieves   capacity over  a time interval of size $n$ can be reduced to searching for a codebook over a single time step.


First, we start with the single-letter expression for  $C(\{\delta_n\})^*$. 

\begin{theorem}\label{thm:sufficientConditiondeltaerror}  
For any  stationary  memoryless uncertain channel $N$  and for any $0\leq \delta_1<m_{\mathscr{Y}}(
V_{N})$,  let  $\bar{X} \in \mathscr{F}_{\bar {\delta}}(1)$  be an UV  over one time step associated with a one-dimensional codebook  that achieves the   capacity $C_N(\{\delta_1\})^*=C_N(\{\delta_1\})_* =R_{\delta_1}$, 
and let $\bar{Y}$  be the UV  corresponding to the received { codeword}, namely   
\begin{align}\label{eq:assumptionOnChannel}
C_N(\{\delta_1\})^* &= I_{\bar\delta/|\llbracket \bar X\rrbracket|}(\bar Y;\bar X)  \nonumber \\
&= \sup_{  \substack{X(1)\in \mathscr{F}_{\tilde \delta}(1): \\\tilde{\delta}\leq \delta_1/m_{\mathscr{Y}}(\llbracket Y(1)\rrbracket)}}{I_{\tilde{\delta}/|\llbracket X(1)\rrbracket|}(Y(1); X(1))}.
\end{align}
If for all one-dimensional codewords  $x\in \mathscr{X}\setminus\llbracket \bar{X} \rrbracket$  there exists a set $\mathscr{S}\in \llbracket \bar Y|\bar{X}\rrbracket_{\bar\delta / | \llbracket \bar X\rrbracket|}^*$ such that the uncertainty region $\llbracket  Y|x\rrbracket \subseteq \mathscr{S}$, { $\Bar{\delta}(1+1/|\llbracket \bar{X}\rrbracket|)\leq \delta_1/m_{\mathscr{Y}}(\llbracket \bar{Y}\rrbracket)$} ,  and for all $n>1$  we have $0\leq\delta_n\leq (\bar\delta m_{\mathscr{Y}}(V_N)/|\llbracket \bar X\rrbracket|)^n$,
then under   Assumptions~\ref{assumption:productrule} and \ref{assumption:triangleInequality}   we have that the $n$-dimensional capacity\\
\begin{equation}\label{eq:singleLeterSupDefin}
    C_{N}(\{\delta_n\})^* { = I_{\bar\delta/|\llbracket \Bar{X}\rrbracket|}(\bar{Y};\bar{X})}.
\end{equation} 
\end{theorem}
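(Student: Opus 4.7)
The plan is to invoke the coding theorem (Theorem \ref{thm:capacityandinformationrate} part 2), which identifies
\[
C_N(\{\delta_n\})^* \;=\; \sup_{n \in \mathbb{Z}_{>0}} R^I_{\delta_n},
\]
and then to sandwich this supremum between the single-letter quantity $I_{\bar\delta/|\llbracket \bar X\rrbracket|}(\bar Y;\bar X)$ from both sides. The hypothesis \eqref{eq:assumptionOnChannel}, combined with Theorem \ref{thm:relBwRatenInfoRate} applied at $n=1$, already yields $R^I_{\delta_1} = I_{\bar\delta/|\llbracket \bar X\rrbracket|}(\bar Y;\bar X)$, so monotonicity of the supremum gives the lower bound $C_N(\{\delta_n\})^* \geq I_{\bar\delta/|\llbracket \bar X\rrbracket|}(\bar Y;\bar X)$ immediately. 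The remaining work is the converse inequality $R^I_{\delta_n} \leq I_{\bar\delta/|\llbracket \bar X\rrbracket|}(\bar Y;\bar X)$ for every $n > 1$.

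My approach for the converse is to fix any feasible pair $(X(1:n),\tilde\delta)$ with $\tilde\delta \leq \delta_n/m_{\mathscr{Y}}(\llbracket Y(1:n)\rrbracket)$ and bound the size of the multi-letter overlap family $\llbracket Y(1:n)|X(1:n)\rrbracket^*_{\tilde\delta/|\llbracket X(1:n)\rrbracket|}$ by $|\llbracket \bar Y|\bar X\rrbracket^*_{\bar\delta/|\llbracket \bar X\rrbracket|}|^n$ via a coordinate-wise comparison. By Property 1 of Definition \ref{defoverlap}, every cell of the multi-letter family contains a singly-connected set $\llbracket Y(1:n)|x(1:n)\rrbracket$, which by the memoryless factorization \eqref{eq:facnoise} equals the product $\prod_{i=1}^n \llbracket Y(i)|x(i)\rrbracket$. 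For each coordinate $i$ either $x(i)\in\llbracket \bar X\rrbracket$, in which case Property 3 of Definition \ref{defoverlap} supplies a containing cell $\mathscr{S}_i\in\llbracket \bar Y|\bar X\rrbracket^*_{\bar\delta/|\llbracket \bar X\rrbracket|}$, or $x(i)\in\mathscr{X}\setminus\llbracket \bar X\rrbracket$, in which case the structural hypothesis supplies such an $\mathscr{S}_i$ directly. Thus every singly-connected set of the multi-letter family sits inside one of at most $|\llbracket \bar Y|\bar X\rrbracket^*_{\bar\delta/|\llbracket \bar X\rrbracket|}|^n$ product sets $\mathscr{S}_1\times\cdots\times\mathscr{S}_n$, and this product family serves as the comparison object.

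To convert the counting bound into the inequality on $I$, I would apply Theorem \ref{thm:tensorization2} with $\delta=\bar\delta/|\llbracket \bar X\rrbracket|$ to the product closure in $\llbracket \bar X\rrbracket^n$; its hypotheses \eqref{eq:ref4}--\eqref{eq:ref5} follow from \eqref{eq:facnoise}, the range condition \eqref{eq:rangeDeltaCheck1} follows from $\bar\delta(1+1/|\llbracket \bar X\rrbracket|)\leq \delta_1/m_{\mathscr{Y}}(\llbracket \bar Y\rrbracket)$, and Assumptions \ref{assumption:productrule}--\ref{assumption:triangleInequality} are given. The decay condition $\delta_n\leq(\bar\delta m_{\mathscr{Y}}(V_N)/|\llbracket \bar X\rrbracket|)^n$, together with Assumption \ref{assumption:productrule} and the inclusion $V_N^n\subseteq\llbracket Y(1:n)\rrbracket$, is exactly what is needed to compare $\tilde\delta/|\llbracket X(1:n)\rrbracket|$ against the product overlap bound $\delta\,\hat\delta(n)^{n-1}$ furnished by Lemma \ref{thm:tensorization} part 4), guaranteeing that the product family is a legitimate $\tilde\delta/|\llbracket X(1:n)\rrbracket|$-overlap family. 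Combining the counting bound with Theorem \ref{thm:tensorization2} yields
\[
I_{\tilde\delta/|\llbracket X(1:n)\rrbracket|}(Y(1:n);X(1:n)) \leq n\,I_{\bar\delta/|\llbracket \bar X\rrbracket|}(\bar Y;\bar X),
\]
and dividing by $n$ gives the desired $R^I_{\delta_n}\leq I_{\bar\delta/|\llbracket \bar X\rrbracket|}(\bar Y;\bar X)$ for every $n>1$, completing the proof.

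The main obstacle will be the bookkeeping in the third paragraph: verifying that the product family $\{\mathscr{S}_1\times\cdots\times\mathscr{S}_n\}$ satisfies Property 2 of Definition \ref{defoverlap} at the level $\tilde\delta/|\llbracket X(1:n)\rrbracket|$ requires composing the single-letter overlap bound $\bar\delta/|\llbracket \bar X\rrbracket|$ coordinate-by-coordinate through Assumption \ref{assumption:productrule}, then carefully relating the resulting product bound to $\tilde\delta/|\llbracket X(1:n)\rrbracket|$ via the chain $V_N^n\subseteq\llbracket Y(1:n)\rrbracket\subseteq\mathscr{Y}^n$ with $m_{\mathscr{Y}}(\mathscr{Y}^n)=1$. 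The slack $\bar\delta(1+1/|\llbracket \bar X\rrbracket|)\leq\delta_1/m_{\mathscr{Y}}(\llbracket \bar Y\rrbracket)$ is engineered as a uniformity buffer that absorbs the loss incurred when codewords outside $\llbracket \bar X\rrbracket$ are reclassified into existing cells without upsetting the overlap budget at the multi-letter level.
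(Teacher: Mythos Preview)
Your lower-bound half is correct and matches the paper. The upper-bound half has a real gap.

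The problem is the ``counting bound'' you assert but never establish. You show that each cell of the multi-letter family $\llbracket Y(1{:}n)|X(1{:}n)\rrbracket^*_{\tilde\delta/|\llbracket X(1{:}n)\rrbracket|}$ contains a singly-connected set $\llbracket Y(1{:}n)|x(1{:}n)\rrbracket$ lying inside some product cell $\mathscr{S}_1\times\cdots\times\mathscr{S}_n$; this is fine (and is exactly Claim~2 of the paper's proof). But that containment gives only a \emph{map} from multi-letter cells to product cells, and nothing you write rules out this map being many-to-one. Your two proposed tools do not close the gap: Theorem~\ref{thm:tensorization2} is stated for product inputs $\llbracket X(1{:}n)\rrbracket=\prod_i\llbracket X(i)\rrbracket$ and therefore bounds $I_{\delta^n}(\bar Y(1{:}n);\bar X(1{:}n))$, saying nothing about an arbitrary $X(1{:}n)$; and even if the product family satisfied Properties~1--3 of Definition~\ref{defoverlap} at level $\tilde\delta/|\llbracket X(1{:}n)\rrbracket|$, the $\delta$-overlap family is by definition a \emph{largest} such family, so that would give $|\text{product}|\leq |\llbracket Y(1{:}n)|X(1{:}n)\rrbracket^*|$ --- the wrong direction. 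Moreover Property~1 requires connectedness \emph{via} $\llbracket Y(1{:}n)|X(1{:}n)\rrbracket$, i.e.\ via the conditional ranges of the arbitrary $X(1{:}n)$, not those of $\bar X(1{:}n)$, so the product family is not in general an overlap family for $X(1{:}n)$ at all.

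The paper closes precisely this gap by a contradiction argument that, crucially, uses the \emph{optimality} of $\bar X$ a second time. Assuming some $X(1{:}n)$ beats the product bound, one passes (Claim~1) to an associated $\tilde X(1{:}n)$ with $|\llbracket \tilde X(1{:}n)\rrbracket|>|\llbracket \bar Y|\bar X\rrbracket^*_{\bar\delta/|\llbracket \bar X\rrbracket|}|^n$; pigeon-hole then forces two codewords $\tilde x_1(1{:}n),\tilde x_2(1{:}n)$ into the same product cell (Claim~3), and from the association bound one extracts a coordinate $i^*$ at which the single-letter overlap $m_{\mathscr Y}(\llbracket \tilde Y(i^*)|\tilde x_1(i^*)\rrbracket\cap\llbracket \tilde Y(i^*)|\tilde x_2(i^*)\rrbracket)$ is small. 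Claim~4 uses $\tilde x_1(i^*),\tilde x_2(i^*)$ to build a one-dimensional UV $X'$ with $|\llbracket Y'|X'\rrbracket^*_{\delta^*/|\llbracket X'\rrbracket|}|>|\llbracket \bar Y|\bar X\rrbracket^*_{\bar\delta/|\llbracket \bar X\rrbracket|}|$ at an admissible $\delta^*$ --- and it is here that the slack $\bar\delta(1+1/|\llbracket \bar X\rrbracket|)\leq\delta_1/m_{\mathscr Y}(\llbracket \bar Y\rrbracket)$ is actually spent, to absorb the $+1$ in $|\llbracket X'\rrbracket|\leq|\llbracket \bar X\rrbracket|+1$. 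This contradicts \eqref{eq:assumptionOnChannel}. Your plan treats the structural containment hypothesis as if it alone forces injectivity of the cell-to-product-cell map; it does not, and the optimality of $\bar X$ is indispensable for the upper bound.
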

\begin{proof}
Let 
\begin{equation}\label{eq:Def1}
    \llbracket \bar X(1:n)\rrbracket=\underbrace{ \llbracket \bar X\rrbracket \times  \cdots \times \llbracket \bar X\rrbracket}_n,
\end{equation}
and 
\begin{equation}\label{eq:Def2}
    \llbracket \bar Y(1:n)\rrbracket=\underbrace{ \llbracket \bar Y\rrbracket \times  \cdots \times \llbracket \bar Y\rrbracket}_n. 
\end{equation}
For all $n>0$, we have
\begin{equation}\label{eq:CondUPPDelta}
\begin{split}
    \delta_n&\leq  \bigg(\frac{\bar\delta m_{\mathscr{Y}}(V_N)}{|\llbracket \bar X\rrbracket|}\bigg)^n,\\
    &\stackrel{(a)}{\leq}  \bigg(\frac{\delta_1 m_{\mathscr{Y}}(V_N)}{|\llbracket \bar X\rrbracket| m_{\mathscr{Y}}(\llbracket\bar{Y}\rrbracket)}\bigg)^n,\\
    &\stackrel{(b)}{\leq} \bigg(\frac{\delta_1}{|\llbracket \bar X\rrbracket|}\bigg)^n,\\
    &\stackrel{(c)}{<} (m_{\mathscr{Y}}(V_N))^n,\\
    &\stackrel{(d)}{=}m_{\mathscr{Y}}(V^n_N),
\end{split}
\end{equation}
where $(a)$ follows from the fact that $\bar{\delta}\leq \delta_1/ m_{\mathscr{Y}}(\llbracket\bar{Y}\rrbracket)$, $(b)$ follows from $m_{\mathscr{Y}}(V_N)\leq m_{\mathscr{Y}}(\llbracket\bar{Y}\rrbracket)$, $(c)$ follows from $\delta_1<m_{\mathscr{Y}}(V_N)$ and $|\llbracket\bar{X}\rrbracket|\geq 1$, and $(d)$ follows from Assumption \ref{assumption:productrule}. 


We now proceed in three steps.
First, using \eqref{eq:CondUPPDelta} and Theorem \ref{thm:relBwRatenInfoRate}, we have
\begin{equation}\label{eq:LowBoundSingleLetter}
\begin{split}
      C_{N}(\{\delta_n\})^*&=\sup_{n \in \mathbb{Z}_{>0}} R_{\delta_n}\\
      &\geq R_{\delta_1}= R^I_{\delta_1}=I_{\bar{\delta}/|\llbracket\bar{X}\rrbracket|}(\Bar{Y};\bar{X}).
\end{split}
\end{equation}
Second, we will show that for all $n \in \mathbb{Z}_{>0}$, we have
\begin{equation}\label{eq:ToProve}
\begin{split}
     &\sup_{\substack{X(1:n):\llbracket X(1:n)\rrbracket\subseteq \mathscr{X}^n,\\ \tilde{\delta}\leq \delta_n/m_{\mathscr{Y}}(\llbracket Y(1:n)\rrbracket)}} I_{\tilde{\delta}/|\llbracket X(1:n)\rrbracket|}(Y(1:n);X(1:n))\\
     &\qquad\leq n I_{\bar\delta/|\llbracket \Bar{X}\rrbracket|}(\bar{Y}; \bar{X}).
\end{split}
\end{equation}
Finally, using \eqref{eq:CondUPPDelta}, Theorem \ref{thm:relBwRatenInfoRate} and  \eqref{eq:ToProve},
 for all $n \in \mathbb{Z}_{>0}$, we have that
\begin{equation}
R_{\delta_n}=R^{I}_{\delta_n}\leq 
 I_{\bar{\delta}/|\llbracket\bar{X}\rrbracket|}(\Bar{Y};\bar{X}),
\end{equation}
which implies 
\begin{equation}\label{eq:UppBoundSingleLetter}
    C_{N}(\{\delta_n\})^*=\sup_{n \in \mathbb{Z}_{>0}} R_{\delta_n}\leq  I_{\bar{\delta}/|\llbracket\bar{X}\rrbracket|}(\Bar{Y};\bar{X}).
\end{equation}
Using \eqref{eq:LowBoundSingleLetter} and \eqref{eq:UppBoundSingleLetter}, the result \eqref{eq:singleLeterSupDefin} follows.

Now, we only need to prove \eqref{eq:ToProve}. We will prove this by contradiction.
Consider an UV $X(1:n)$ and
\begin{equation}\label{eq:rangeDelta}
    {\delta}^\prime\leq\delta_n/m_{\mathscr{Y}}(\llbracket Y(1:n)\rrbracket),
\end{equation}
 such that
\begin{equation}\label{eq:ProveContradiction}
\begin{split}
        &|\llbracket {Y}(1:n)|{X}(1:n)\rrbracket_{{\delta}^\prime/|\llbracket {X}(1:n)\rrbracket|} ^{*}|\\
        &> |\prod_{i=1}^n\llbracket \bar Y|\bar X\rrbracket_{\bar\delta/|\llbracket\bar X\rrbracket|}^*|.
\end{split}
\end{equation} 
We will show that \eqref{eq:ProveContradiction} cannot hold using the following four claims, whose proofs appear in  Appendix \ref{sec:5claims}.
\begin{itemize}
\item \textbf{Claim 1:} If \eqref{eq:ProveContradiction} holds, then there exists two UVs $\tilde{X}(1:n)$ and $\tilde{Y}(1:n)$ such that letting 
    \begin{equation}
\begin{split}
     \tilde{\delta}=\frac{\delta^\prime m_{\mathscr{Y}}(\llbracket Y(1:n)\rrbracket)}{m_{\mathscr{Y}}(\llbracket \tilde{Y}(1:n)\rrbracket)},
\end{split}
\end{equation}
we have 
\begin{equation}
    \tilde{\delta}\leq \frac{\delta_n}{m_{\mathscr{Y}}(\llbracket \tilde{Y}(1:n)\rrbracket)},
\end{equation}
\begin{equation}
     (\tilde X(1:n),\tilde{Y}(1:n))\stackrel{a}{\leftrightarrow}(1,\tilde{\delta} /|\llbracket \tilde X(1:n)\rrbracket|),
 \end{equation} 
 and
 \begin{equation}
 \begin{split}
     &|\llbracket \tilde Y(1:n)|\tilde{X}(1:n)\rrbracket^{*}_{\tilde{\delta}/|\llbracket \tilde X(1:n)\rrbracket|}|{>} |\prod_{i=1}^n\llbracket \bar Y|\bar X\rrbracket_{\bar\delta/|\llbracket\bar X\rrbracket|}^*|.
 \end{split}
 \end{equation}
   \item \textbf{Claim 2:} For all $\tilde{x}(1:n)\in\llbracket\tilde{X}(1:n)\rrbracket$, there exists a set $\mathscr{S}\in \prod_{i=1}^n\llbracket \bar{Y}|\bar{X}\rrbracket_{\bar\delta/|\llbracket \bar X\rrbracket|}^*$ such that
   \begin{equation}
       \llbracket\tilde Y(1:n)|\tilde x(1:n)\rrbracket\subseteq \mathscr{S}.
   \end{equation}
   \item \textbf{Claim 3:} Using Claims 1 and 2,  there exists a set $\mathscr{S}\in \prod_{i=1}^n\llbracket \bar Y|\bar X\rrbracket_{\bar\delta/|\llbracket\bar X\rrbracket|}^*$  and two points $\tilde{x}_1(1:n),\tilde{x}_2(1:n)\in\llbracket\tilde{X}(1:n)\rrbracket$ such that
   \begin{equation}
       \llbracket \tilde{Y}(1:n)|\tilde{x}_1(1:n)\rrbracket, \llbracket \tilde{Y}(1:n)|\tilde{x}_2(1:n)\rrbracket\subset \mathscr{S}.
   \end{equation}
   Also,
   there exists a $1\leq i^*\leq n$ such that
   \begin{equation}
   \begin{split}
       &\frac{m_{\mathscr{Y}}(\llbracket \tilde{Y}(i^*)|\tilde{x}_1(i^*)\rrbracket\cap \llbracket \tilde{Y}(i^*)|\tilde{x}_2(i^*)\rrbracket)}{(m_{\mathscr{Y}}(\llbracket\tilde{Y}(1:n)\rrbracket))^{1/n}}\\
      & \leq \frac{\bar{\delta}}{|\llbracket\bar{X}\rrbracket|}. 
   \end{split}
   \end{equation}
   \item \textbf{Claim 4:} Using   Claim 3, we have that there exist  two UVs $X^\prime$ and $Y^\prime$, and $\delta^*\leq \delta_1/m_{\mathscr{Y}}(\llbracket Y^\prime\rrbracket)$ such that 
   \begin{equation}
       |\llbracket Y^\prime|X^\prime\rrbracket^*_{\delta^*/|\llbracket X^\prime\rrbracket|}|>|\llbracket \bar{Y}|\bar{X}\rrbracket^*_{\bar{\delta}/|\llbracket\bar{X}\rrbracket|}|.
   \end{equation}
\end{itemize}
The result in Claim 4 contradicts \eqref{eq:assumptionOnChannel}. It follows that \eqref{eq:ProveContradiction} cannot hold and the proof of Theorem~\ref{thm:sufficientConditiondeltaerror} is complete.

\end{proof}

Since ${C}_{N}^{\delta*}$ is a special case of ${C}_{N}(\{\delta_n\})^*$ for which the sequence $\delta_n$ is constant, it seem natural to use Theorem \ref{thm:sufficientConditiondeltaerror} to obtain a single-letter expression for ${C}_{N}^{\delta*}$ as well. However,
the range of $\delta_n$ in Theorem \ref{thm:sufficientConditiondeltaerror} restricts the obtained  single-letter expression for this case  to the zero-error capacity ${C}_{N}^{0*}$ only. To see this, note that   $\delta_n$ in Theorem \ref{thm:sufficientConditiondeltaerror} 
is constrained to
\begin{align}\label{eq:deltaCapac}
    \delta_n &\leq (\bar{\delta} m_{\mathscr{Y}}(V_N)/|\llbracket\bar{X}\rrbracket|)^n \nonumber \\
    &< (m_{\mathscr{Y}}^2(V_N)/m_{\mathscr{Y}}(\llbracket\bar{Y}\rrbracket))^n. 
\end{align}
It follows that if $m_{\mathscr{Y}}(V_N)<m_{\mathscr{Y}}(\mathscr{Y})=1$, then we have $\delta_n = o(1)$ as $n \rightarrow \infty$. Hence, in this case a single letter expression for ${C}_{N}^{\delta*}$ can only be obtained for $\delta=0$. On the other hand,
if $m_{\mathscr{Y}}(V_N)=m_{\mathscr{Y}}(\mathscr{Y})$, then for any $0\leq\delta<m_{\mathscr{Y}}(V_N)$ the codebook can only contain a single codeword and in this case we have ${C}_{N}^{\delta*}=0$.
We conclude that the only non-trivial single-letter expression is obtained  for the zero-error capacity, as stated next.  
\begin{corollary}\label{corr:zeroerrorsufficientcondition}
For any  stationary  memoryless uncertain channel $N$, let  $\bar{X} \in \mathscr{F}_{0}(1)$  be an UV  over one time step associated with a one-dimensional codebook  that achieves the   capacity $C_N(\{0\})^* = C_N(\{0\})_*=R_0$,
and   let $\bar{Y}$  be the UV  corresponding to the received {codeword}, namely
\begin{align}
C_N(\{0\})^* & = I_{0}(\bar Y;\bar X)\nonumber \\
&= \sup_{  \substack{X(1)\in \mathscr{F}_{0}(1)}}{I_{0}(Y(1); X(1))}.
\end{align}
If for all one-dimensional codewords  $x\in \mathscr{X}\setminus\llbracket \bar{X} \rrbracket$, there exists a  set $\mathscr{S} \in \llbracket \bar Y|\bar{X}\rrbracket_{\bar\delta/|\llbracket \bar X\rrbracket|}^*$ such that   the uncertainty region $\llbracket  Y|x\rrbracket \subseteq \mathscr{S}$,
then  under   Assumptions ~\ref{assumption:productrule} and \ref{assumption:triangleInequality} we have that the $n$-dimensional zero-error capacity\\
\begin{equation}
    C_{N}^{0*} { = I_{0}(\bar{Y};\bar{X})}.
\end{equation}
\end{corollary}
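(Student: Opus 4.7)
The plan is to derive the corollary as the special case $\delta = 0$ of Theorem~\ref{thm:sufficientConditiondeltaerror}. By Definition~\ref{defive1}, the zero-error capacity $C_N^{0*}$ coincides with $C_N(\{\delta_n\})^*$ for the constant confidence sequence $\delta_n = 0$ for all $n \in \mathbb{Z}_{>0}$. Hence the strategy is to verify that the hypotheses of Theorem~\ref{thm:sufficientConditiondeltaerror} are all met by this choice, and then to read off the resulting single-letter formula.

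First, I would instantiate the parameters. Setting $\delta_1 = 0$ is admissible since $m_{\mathscr{Y}}(V_N) > 0$ by the finiteness axiom \eqref{ass:finiteness}. The corollary postulates an UV $\bar{X} \in \mathscr{F}_0(1)$ achieving $C_N(\{0\})^* = C_N(\{0\})_* = R_0 = I_0(\bar{Y}; \bar{X})$, which is exactly the capacity-achieving UV required by Theorem~\ref{thm:sufficientConditiondeltaerror} with $\bar{\delta} = 0$. The inequality $\bar{\delta}(1 + 1/|\llbracket \bar{X} \rrbracket|) \leq \delta_1 / m_{\mathscr{Y}}(\llbracket \bar{Y} \rrbracket)$ reduces to $0 \leq 0$, and the side condition $\bar{\delta} \leq \delta_1 / m_{\mathscr{Y}}(\llbracket \bar{Y} \rrbracket)$ holds trivially. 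For $n > 1$, the bound $\delta_n \leq (\bar{\delta}\, m_{\mathscr{Y}}(V_N)/|\llbracket \bar{X}\rrbracket|)^n = 0$ forces $\delta_n = 0$, which matches the chosen constant sequence.

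Next, the structural hypothesis on codewords outside $\llbracket \bar{X} \rrbracket$ --- namely, that for every $x \in \mathscr{X} \setminus \llbracket \bar{X} \rrbracket$ there exists $\mathscr{S} \in \llbracket \bar{Y}|\bar{X} \rrbracket^*_{\bar{\delta}/|\llbracket \bar{X} \rrbracket|}$ with $\llbracket Y|x \rrbracket \subseteq \mathscr{S}$ --- is imported verbatim from the corollary's statement. Under Assumptions~\ref{assumption:productrule} and \ref{assumption:triangleInequality}, all requirements of Theorem~\ref{thm:sufficientConditiondeltaerror} are therefore met, and the theorem yields $C_N(\{\delta_n\})^* = I_{\bar{\delta}/|\llbracket \bar{X} \rrbracket|}(\bar{Y}; \bar{X}) = I_0(\bar{Y}; \bar{X})$. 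Combining this with the identification $C_N^{0*} = C_N(\{\delta_n\})^*$ for the constant zero sequence completes the proof.

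No genuine obstacle is anticipated: since Theorem~\ref{thm:sufficientConditiondeltaerror} already does the heavy lifting (the multi-step contradiction argument built on Claims~1--4 and the factorization guaranteed by Lemma~\ref{thm:tensorization}), the corollary amounts to a careful verification of hypotheses at the boundary $\delta = 0$. The only detail worth flagging is the degenerate case $m_{\mathscr{Y}}(V_N) = m_{\mathscr{Y}}(\mathscr{Y})$ discussed just before the corollary, in which $|\llbracket \bar{X}\rrbracket| = 1$ and both sides of the claimed equality collapse to zero; this case is automatically consistent with the reduction above and requires no separate argument.
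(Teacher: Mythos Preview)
Your proposal is correct and matches the paper's approach: the corollary is obtained precisely by specializing Theorem~\ref{thm:sufficientConditiondeltaerror} to the constant sequence $\delta_n=0$ (with $\bar\delta=0$), as the paper's discussion immediately preceding the corollary already explains. No separate proof appears in the paper, and your verification of the hypotheses at $\delta=0$ is exactly what is needed.
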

Next, we present the sufficient conditions leading to the single letter expression for $C(\{\delta_n\})_*$.  

\begin{theorem}\label{thm:singleLetterInf}
For any stationary memoryless uncertain channel $N$, and for any $0\leq \delta_1<m_{\mathscr{Y}}(V_N)$, let $\Bar{X} \in \mathscr{F}_{\bar \delta}(1)$ be an UV  over one time step associated with a one-dimensional codebook  that achieves the   capacity $C_N(\{\delta_1\})_*=C_N(\{\delta_1\})^*=R_{\delta_1}$, and let $\bar{Y}$ be the UV  corresponding to the received {codeword}, namely
\begin{align}\label{eq:requiredCondition}
C_N(\{\delta_1\})_* &= I_{\bar\delta/|\llbracket \bar X\rrbracket|}(\bar Y;\bar X)  \nonumber \\
&= \sup_{  \substack{X(1)\in \mathscr{F}_{\tilde \delta}(1), \\\tilde{\delta}\leq \delta_1/m_{\mathscr{Y}}(\llbracket Y(1)\rrbracket)}}{I_{\tilde{\delta}/|\llbracket X(1)\rrbracket|}(Y(1); X(1))}.
\end{align}
Let
\begin{equation}
    \hat{\delta}=\max_{\mathscr{S}\in\llbracket\bar{Y}|\bar{X}\rrbracket^*_{\bar{\delta}/|\llbracket\bar{X}\rrbracket|}}\frac{m_{\mathscr{Y}}(\mathscr{S})}{m_{\mathscr{Y}}(\llbracket\bar{Y}\rrbracket)}.
\end{equation}
If for all $n>1$, we have $\bar{\delta}(\hat{\delta} |\llbracket\bar{X}\rrbracket|)^{n-1}\leq\delta_n<1$, then under    Assumption \ref{assumption:productrule} we have that the $n$-dimensional capacity
\begin{equation}\label{eq:CapacityInf}
     C_{N}(\{ \delta_n \})_*=I_{\bar\delta/|\llbracket \bar X\rrbracket|}(\bar Y;\bar X).
\end{equation}
\end{theorem}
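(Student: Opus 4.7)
My plan is to prove the two matching inequalities for \eqref{eq:CapacityInf}. By Theorem~\ref{thm:capacityandinformationrate} (item 1) together with Theorem~\ref{thm:relBwRatenInfoRate}, we have $C_N(\{\delta_n\})_* = \inf_{n\in\mathbb{Z}_{>0}} R_{\delta_n}$, so it suffices to bound this infimum from both sides by $I_{\bar\delta/|\llbracket\bar X\rrbracket|}(\bar Y;\bar X)$. The upper bound is immediate from evaluating at $n=1$: $\inf_n R_{\delta_n} \leq R_{\delta_1} = I_{\bar\delta/|\llbracket\bar X\rrbracket|}(\bar Y;\bar X)$ by the hypothesis~\eqref{eq:requiredCondition}.

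For the lower bound I will construct an $(N,\delta_n)$-distinguishable product codebook for every $n\geq 1$, so that $R_{\delta_n} \geq I_{\bar\delta/|\llbracket\bar X\rrbracket|}(\bar Y;\bar X)$. Following the final step of the proof of Theorem~\ref{thm:channelCodingTheorem}, a capacity-achieving UV can always be taken to satisfy $(\bar X,\bar Y)\stackrel{a}{\leftrightarrow}(1,\bar\delta/|\llbracket\bar X\rrbracket|)$ with $\bar\delta\leq\delta_1/m_\mathscr{Y}(\llbracket\bar Y\rrbracket)$; this reduction uses Lemma~\ref{lemma:equivalenceAssDiss} if the given $\bar X$ happens to be only $d$-associated. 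Under $a$-association, Theorem~\ref{thm:associativityPArtition} identifies the overlap family with $\llbracket\bar Y|\bar X\rrbracket$, and Lemma~\ref{lemma:cardinalityAssociation} then gives $I_{\bar\delta/|\llbracket\bar X\rrbracket|}(\bar Y;\bar X) = \log|\llbracket\bar X\rrbracket|$.

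The product codebook $\llbracket\bar X(1:n)\rrbracket := \llbracket\bar X\rrbracket^n$ has output conditional ranges $\llbracket\bar Y(1:n)|\bar x(1:n)\rrbracket = \prod_{i=1}^n \llbracket\bar Y|\bar x(i)\rrbracket$ by the memoryless property~\eqref{eq:facnoise}, and by Assumption~\ref{assumption:productrule} with $m_\mathscr{Y}(\mathscr{Y})=1$ the error for two distinct product codewords factorizes as
\begin{equation*}
e_N(\bar x_1(1:n),\bar x_2(1:n)) = \prod_{i=1}^n m_\mathscr{Y}\bigl(\llbracket\bar Y|\bar x_1(i)\rrbracket\cap\llbracket\bar Y|\bar x_2(i)\rrbracket\bigr).
\end{equation*}
Each factor is bounded by $(\bar\delta/|\llbracket\bar X\rrbracket|)\,m_\mathscr{Y}(\llbracket\bar Y\rrbracket)$ when $\bar x_1(i)\neq\bar x_2(i)$ (by $a$-association) and by $\hat\delta\,m_\mathscr{Y}(\llbracket\bar Y\rrbracket)$ when $\bar x_1(i)=\bar x_2(i)$ (by the definition of $\hat\delta$). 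Since an intersection is a subset of its members, $\bar\delta/|\llbracket\bar X\rrbracket|\leq\hat\delta$, so the product is maximized when the two codewords differ in exactly one coordinate, giving
\begin{equation*}
e_N(\bar x_1(1:n),\bar x_2(1:n)) \leq \frac{\bar\delta\,\hat\delta^{n-1}}{|\llbracket\bar X\rrbracket|}\,m_\mathscr{Y}(\llbracket\bar Y\rrbracket)^n \leq \frac{\bar\delta\,(\hat\delta|\llbracket\bar X\rrbracket|)^{n-1}}{|\llbracket\bar X\rrbracket|^n} \leq \frac{\delta_n}{|\llbracket\bar X(1:n)\rrbracket|},
\end{equation*}
where the middle inequality uses $m_\mathscr{Y}(\llbracket\bar Y\rrbracket)\leq 1$ and the last is the hypothesis. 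The product codebook is therefore $(N,\delta_n)$-distinguishable with $|\llbracket\bar X\rrbracket|^n$ codewords, so $R_{\delta_n}\geq \log|\llbracket\bar X\rrbracket| = I_{\bar\delta/|\llbracket\bar X\rrbracket|}(\bar Y;\bar X)$ for every $n$, and taking the infimum matches the upper bound.

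The main obstacle is the reduction to the $a$-associated case, since the hypothesis only asserts $\bar X\in\mathscr{F}_{\bar\delta}(1)$; handling the $d$-associated branch requires invoking Lemma~\ref{lemma:equivalenceAssDiss} to replace $\bar X$ with an $a$-associated UV of the same mutual information, exactly as in the proof of Theorem~\ref{thm:channelCodingTheorem}. A secondary technical point is justifying that the worst case of the product-codebook overlap is the one-coordinate-differing case, which rests on the elementary inequality $\bar\delta/|\llbracket\bar X\rrbracket|\leq\hat\delta$ and the monotonicity of the overlap factors in $n$.
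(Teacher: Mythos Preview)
Your overall architecture matches the paper's: the upper bound $C_N(\{\delta_n\})_*\le R_{\delta_1}=I_{\bar\delta/|\llbracket\bar X\rrbracket|}(\bar Y;\bar X)$ is obtained exactly as you describe, and the lower bound comes from exhibiting an $(N,\delta_n)$-distinguishable $n$-block codebook of the right size. In the $a$-associated branch your computation is correct and is essentially the paper's bound specialized to that case.

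The gap is in the $d$-associated branch. Your per-coordinate bound ``$m_{\mathscr Y}(\llbracket\bar Y|\bar x_1(i)\rrbracket\cap\llbracket\bar Y|\bar x_2(i)\rrbracket)\le(\bar\delta/|\llbracket\bar X\rrbracket|)\,m_{\mathscr Y}(\llbracket\bar Y\rrbracket)$ when $\bar x_1(i)\neq\bar x_2(i)$'' is an $a$-association statement; under $d$-association a nonempty intersection is \emph{strictly larger} than that threshold, so the full product codebook $\llbracket\bar X\rrbracket^n$ need not be $(N,\delta_n)$-distinguishable. Your proposed fix, replacing $\bar X$ via Lemma~\ref{lemma:equivalenceAssDiss}, preserves the overlap-family cardinality but alters $\bar\delta$, $|\llbracket\bar X\rrbracket|$, and $\hat\delta$ (the replacement shrinks $\llbracket\bar Y\rrbracket$, which can enlarge both $\bar\delta'=\bar\delta\,m_{\mathscr Y}(\llbracket\bar Y\rrbracket)/m_{\mathscr Y}(\llbracket\bar Y'\rrbracket)$ and $\hat\delta'$). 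The hypothesis $\bar\delta(\hat\delta|\llbracket\bar X\rrbracket|)^{n-1}\le\delta_n$ is stated for the \emph{given} $\bar X$, and you have not shown it transfers to the replacement; in general it need not.

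The paper sidesteps this entirely by never reducing to $a$-association. It works directly with the product of the one-dimensional overlap family $\prod_{i=1}^n\llbracket\bar Y|\bar X\rrbracket^*_{\bar\delta/|\llbracket\bar X\rrbracket|}$: part~4) of Lemma~\ref{thm:tensorization} bounds the overlap of any two product cells by $(\bar\delta/|\llbracket\bar X\rrbracket|)\hat\delta^{\,n-1}$, and part~2) guarantees each cell contains a conditional range $\llbracket\bar Y(1{:}n)|x_i(1{:}n)\rrbracket$. Selecting one such $x_i(1{:}n)$ per cell yields a codebook of size $|\llbracket\bar Y|\bar X\rrbracket^*_{\bar\delta/|\llbracket\bar X\rrbracket|}|^{\,n}$ (which equals $|\llbracket\bar X\rrbracket|^n$ only in the $a$-associated case) that is $(N,\delta_n)$-distinguishable under the stated hypothesis on $\delta_n$. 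This is the step your argument is missing.
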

\begin{proof}
First, we   show that for all $n  \in \mathbb{Z}_{>0}$ and $\delta_n\geq\bar{\delta}(\hat{\delta}|\llbracket\bar{X}\rrbracket|)^{n-1}$, there exists a codebook $\mathcal{\tilde{X}}(1:n)\in\mathscr{X}^{\delta_n}_{N}(n)$ such that 
\begin{equation}
    |\mathcal{\tilde{X}}(1:n)|=|\prod_{i=1}^{n}\llbracket\bar Y|\Bar{X}\rrbracket^*_{\bar{\delta}/|\llbracket\bar{X}\rrbracket|}|. 
\end{equation}
This, along with   Definition \ref{deflargestachievable}, implies that for all $n  \in \mathbb{Z}_{>0}$ and $\delta_n\geq\bar{\delta}(\hat{\delta}|\llbracket\bar{X}\rrbracket|)^{n-1}$, we have
\begin{equation}\label{eq:Rbound}
    R_{\delta_n}\geq I_{\bar\delta/|\llbracket \bar X\rrbracket|}(\bar Y;\bar X),
\end{equation}
and therefore
\begin{equation}\label{eq:ref11}
C_{N}(\{\delta_n\})_*=\inf_{n \in \mathbb{Z}_{>0}}R_{\delta_n}\geq I_{\bar{\delta}/|\llbracket\bar{X}\rrbracket|}(\bar{Y};\bar{X}).
\end{equation}
Second, we show that 
\begin{equation}\label{eq:ref12}
    C_{N}(\{\delta_n\})_*\leq I_{\bar\delta/|\llbracket \bar X\rrbracket|}(\bar Y;\bar X). 
\end{equation}
Thus, combining \eqref{eq:ref11} and \eqref{eq:ref12}, we have that \eqref{eq:CapacityInf} follows and the proof is complete.

We now start with the first step of showing \eqref{eq:ref11}. 
Without loss of generality, we   assume that $\llbracket\Bar{Y}|\bar{X}\rrbracket^*_{\bar{\delta}/|\llbracket\bar{X}\rrbracket|}>1$, otherwise 
\begin{equation}
    I_{\bar\delta/|\llbracket \bar X\rrbracket|}(\bar Y;\bar X)=0,
\end{equation}
and 
\eqref{eq:CapacityInf} holds trivially by the definition of $C_N(\{\delta_n\})_*$. 
Let 
\begin{equation}\label{eq:chekc123}
    \llbracket \bar X(1:n)\rrbracket=\underbrace{ \llbracket \bar X\rrbracket \times  \cdots \times \llbracket \bar X\rrbracket}_n,
\end{equation}
and 
\begin{equation}
    \llbracket \bar Y(1:n)\rrbracket=\underbrace{ \llbracket \bar Y\rrbracket \times  \cdots \times \llbracket \bar Y\rrbracket}_n. 
\end{equation}
Then, using $4)$ in Lemma \ref{thm:tensorization} and the fact that $N$ is a stationary memoryless channel, for all $\mathscr{S}_1,\mathscr{S}_2 \in \prod_{i=1}^n\llbracket\Bar{Y}|\bar{X}\rrbracket^*_{\bar{\delta}/|\llbracket\bar{X}\rrbracket|}$, we have 
\begin{equation}\label{eq:dissassociationCheck}
\begin{split}
     \frac{m_{\mathscr{Y}}(\mathscr{S}_1\cap\mathscr{S}_2)}{m_{\mathscr{Y}}(\llbracket \bar{Y}(1:n)\rrbracket)}&\leq \frac{\Bar{\delta}\hat{\delta}^{n-1}}{|\llbracket \bar{X}\rrbracket|},\\
     &\stackrel{(a)}{\leq} \frac{\delta_n}{(|\llbracket \bar{X}\rrbracket|)^n},\\
     &\stackrel{(b)}{=}\frac{\delta_n}{|\llbracket \bar{X}(1:n)\rrbracket|},
\end{split}
\end{equation}
where $(a)$ follows from the assumption in the theorem that $\delta_n\geq \bar{\delta}(\hat{\delta} |\llbracket\bar{X}\rrbracket|)^{n-1}$, and $(b)$ follows from \eqref{eq:chekc123}. 

Using $2)$ in Lemma \ref{thm:tensorization}, we have that for all $\mathscr{S}_{i}\in\prod_{i=1}^{n}\llbracket \bar{Y}|\bar{X}\rrbracket_{\bar{\delta}/|\llbracket \bar{X}\rrbracket|}^{*}$, there exists  $x_{i}(1:n)\in \llbracket \bar{X}(1:n)\rrbracket$ such that 
\begin{equation}\label{eq:containSet}
    \llbracket \bar{Y}(1:n)|x_i(1:n)\rrbracket \subseteq \mathscr{S}_{i}.
\end{equation}
 Now, let 
\begin{equation}\label{eq:codebookCard122}
    K=|\prod_{i=1}^{n}\llbracket \bar{Y}|\bar{X}\rrbracket_{\bar{\delta}/|\llbracket \bar{X}\rrbracket|}^{*}|.
\end{equation}
 Consider a new UV $\tilde{X}(1:n)$ whose marginal range is composed of $K$ elements of $\llbracket \bar{X}(1:n)\rrbracket$, namely 
\begin{equation}\label{eq:NewCodebookDesign}
    \llbracket \tilde{X}(1:n)\rrbracket=\{x_1(1:n),\ldots x_K(1:n)\}.
\end{equation}
 Let $\tilde{Y}(1:n)$  be the UV  corresponding to the received variable. For all $x(1:n)\in\llbracket\tilde{X}(1:n)\rrbracket
 $, we have $\llbracket\tilde{Y}(1:n)|x(1:n)\rrbracket=\llbracket\bar{Y}(1:n)|x(1:n)\rrbracket$ since $N$ is a stationary memoryless channel. Using   \eqref{eq:dissassociationCheck}, \eqref{eq:containSet}, it now follows that for all $x(1:n),x^{\prime}(1:n)\in \llbracket\tilde{X}(1:n)\rrbracket$,   we have  
\begin{equation}\label{eq:checkUpp}
\begin{split}
      &\frac{m_{\mathscr{Y}}(\llbracket \tilde{Y}(1:n)|x(1:n)\rrbracket\cap \llbracket \tilde{Y}(1:n)|x^{\prime}(1:n)\rrbracket)}{m_{\mathscr{Y}}(\llbracket \bar{Y}(1:n)\rrbracket)}\\
      &{\leq \frac{\delta_n}{|\llbracket \bar{X}(1:n)\rrbracket|},}\\
      &\stackrel{(a)}{\leq}  \frac{\delta_n}{|\llbracket \tilde{X}(1:n)\rrbracket|},
\end{split}
\end{equation}
where $(a)$ follows from the fact that using \eqref{eq:NewCodebookDesign}, we have $\llbracket\tilde{X}(1:n)\rrbracket\subseteq\llbracket \bar{X}(1:n)\rrbracket$. This implies that for all $x(1:n),x^{\prime}(1:n)\in \llbracket\tilde{X}(1:n)\rrbracket$,
\begin{equation}\label{eq:dissUpperBound}
\begin{split}
&e_{N}(x(1:n),x^\prime(1:n))\\
&=\frac{m_{\mathscr{Y}}(S_N(x(1:n))\cap S_N(x^\prime(1:n)))}{m_{\mathscr{Y}}(\mathscr{Y}^n)},\\
&\stackrel{(a)}{=}\frac{m_{\mathscr{Y}}(\llbracket \tilde{Y}(1:n)|x(1:n)\rrbracket\cap \llbracket \tilde{Y}(1:n)|x^{\prime}(1:n)\rrbracket)}{m_{\mathscr{Y}}(\mathscr{Y}^n)},\\
&\stackrel{(b)}{\leq} \frac{\delta_n}{|\llbracket \tilde{X}(1:n)\rrbracket|} \frac{ m_{\mathscr{Y}}(\llbracket \bar{Y}(1:n)\rrbracket)}{m_{\mathscr{Y}}(\mathscr{Y}^n)},\\
&\stackrel{(c)}{\leq}\frac{\delta_n}{|\llbracket \tilde{X}(1:n)\rrbracket|},
\end{split}
\end{equation}
where $(a)$ follows from the fact that $N$ is stationary memoryless and for all $x(1:n)\in \mathscr{X}^n$, we have
\begin{equation}
    \llbracket Y(1:n)|x(1:n)\rrbracket=S_N(x(1:n)),
\end{equation}
$(b)$ follows from \eqref{eq:checkUpp}, and $(c)$ follows from \eqref{eq:strongtransitivity} and $\llbracket\bar{Y}(1:n)\rrbracket\subseteq\mathscr{Y}^n$. This implies that the codebook $\mathcal{\tilde{X}}(1:n)$ corresponding to the UV $\tilde{X}(1:n)$ is $(N,\delta_n)$-distinguishable. 
It follows that \eqref{eq:Rbound} and \eqref{eq:ref11} hold and the 
the first step of the proof follows. 

Now, we prove the second step. We have 
\begin{equation}\label{eq:UpperBound11}
\begin{split}
   C_{N}(\{\delta_n\})_*&=\inf_{n \in \mathbb{Z}_{>0}}R_{\delta_n},\\
   &\stackrel{(a)}{\leq} R_{\delta_1},\\
   &\stackrel{(b)}{=}  R^{I}_{\delta_1},\\
   &\stackrel{(c)}{=} I_{\bar{\delta}/|\llbracket\bar{X}\rrbracket|}(\bar{Y};\bar{X}),
\end{split}
\end{equation}
where $(a)$ follows from the fact that 
\begin{equation}
    \inf_{n \in \mathbb{Z}_{>0}}R_{\delta_n}\leq R_{\delta_1},
\end{equation}
 $(b)$ follows from the fact that since $\delta_1<m_{\mathscr{Y}}(V_N)$, we have that
 \begin{equation}
     R_{\delta_1}=R^{I}_{\delta_1},
 \end{equation}
  using Theorem \ref{thm:relBwRatenInfoRate}, $(c)$ follows from the fact that 
  \begin{equation}
      R^{I}_{\delta_1}=I_{\bar{\delta}/|\llbracket\bar{X}\rrbracket|}(\bar{Y};\bar{X}),
  \end{equation}
   using \eqref{eq:optimizationAtEachTimeN}, \eqref{eq:RateNInfoRate} and \eqref{eq:requiredCondition}. 
 Hence, the second step of the proof follows. 
\end{proof}
Finally, we present the sufficient conditions leading to the single letter expression for ${C}_{N}(\{\downarrow 0\})_*$.  

\begin{theorem}\label{thm:singleLetterShannon}
Let $0\leq \delta_1<m_{\mathscr{Y}}(V_N)$. For any stationary memoryless uncertain channel $N$,   let ${X}^*\in\mathscr{F}_{\delta^*}(1)$   be an UV  over one time step associated with a one-dimensional codebook  that achieves  the  largest one-dimensional $\delta_1$-capacity,  and let  ${Y}^*$  be the UV  corresponding to the received {codeword}, namely $X^*$ achieves $\sup_{\delta_1<m_{\mathscr{Y}}(V_N)}C_N(\{\delta_1\})_*= \sup_{\delta_1<m_{\mathscr{Y}}(V_N)}C_N(\{\delta_1\})^* = \sup_{\delta_1<m_{\mathscr{Y}}(V_N)}R_{\delta_1}$, and we have
\begin{equation}\label{eq:CA1}
\begin{split}
    &\sup_{\delta_1<m_{\mathscr{Y}}(V_N)}C_N(\{\delta_1\})_*\\
    &=I_{\delta^*/|\llbracket  X^*\rrbracket|}( Y^*; X^*)\\ &=\sup_{\delta_1< m_{\mathscr{Y}}(V_N)} \sup_{  \substack{X(1)\in \mathscr{F}_{\tilde \delta}(1), \\\tilde{\delta}\leq \delta_1/m_{\mathscr{Y}}(\llbracket Y(1)\rrbracket)}}{I_{\tilde{\delta}/|\llbracket X(1)\rrbracket|}(Y(1); X(1))}.
\end{split}
\end{equation}
Let
\begin{equation}
    \hat{\delta}_*=\max_{\mathscr{S}\in\llbracket{Y}^*|{X}^*\rrbracket^*_{{\delta}^*/|\llbracket{X}^*\rrbracket|}}\frac{m_{\mathscr{Y}}(\mathscr{S})}{m_{\mathscr{Y}}(\llbracket{Y}^*\rrbracket)}.
\end{equation}
If $ \hat{\delta}_* |\llbracket X^*\rrbracket|<1$, then under  Assumption \ref{assumption:productrule} we have that the $n$-dimensional capacity
\begin{equation}\label{eq:ShannonAnalogue}
    {C}_{N}(\{\downarrow 0\})_*=I_{\delta^*/|\llbracket  X^*\rrbracket|}( Y^*; X^*).
\end{equation}
\end{theorem}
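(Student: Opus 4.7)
The plan is to sandwich $C_N(\{\downarrow 0\})_*$ between matching bounds, both equal to $I_{\delta^*/|\llbracket X^*\rrbracket|}(Y^*;X^*)$, by combining the single-letter result of Theorem~\ref{thm:singleLetterInf} with the definition of achievable rate.

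For the achievability direction, I would let $\delta_1^*$ denote the value of $\delta_1$ at which $X^*$ realizes the outer supremum in~\eqref{eq:CA1}, so that $X^*\in\mathscr{F}_{\delta^*}(1)$ automatically satisfies the first hypothesis of Theorem~\ref{thm:singleLetterInf}. I would then introduce the confidence sequence
\begin{equation*}
\delta_1=\delta_1^*,\qquad \delta_n=\delta^*\bigl(\hat{\delta}_*\,|\llbracket X^*\rrbracket|\bigr)^{n-1}\quad\text{for }n\geq 2,
\end{equation*}
which satisfies $\delta_n\leq\delta^*<m_{\mathscr{Y}}(V_N)\leq 1$, and which decays to zero precisely because the hypothesis $\hat{\delta}_*\,|\llbracket X^*\rrbracket|<1$ makes the geometric ratio strictly less than one. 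These are exactly the premises of Theorem~\ref{thm:singleLetterInf} with $\bar X=X^*$, $\bar Y=Y^*$, $\bar\delta=\delta^*$, and $\hat\delta=\hat\delta_*$, and applying it yields $C_N(\{\delta_n\})_*=I_{\delta^*/|\llbracket X^*\rrbracket|}(Y^*;X^*)$. Since $\delta_n=o(1)$, this rate is achievable in the sense of Definition~\ref{defS}, giving $C_N(\{\downarrow 0\})_*\geq I_{\delta^*/|\llbracket X^*\rrbracket|}(Y^*;X^*)$.

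For the converse, I would unfold Definitions~\ref{defour} and~\ref{defS} into
\begin{equation*}
C_N(\{\downarrow 0\})_*=\sup_{\{\delta_n\}:\delta_n=o(1)}\inf_{n\in\mathbb{Z}_{>0}}R_{\delta_n}\leq \sup_{\{\delta_n\}:\delta_n=o(1)}R_{\delta_1},
\end{equation*}
and note that the first coordinate of any admissible sequence can be any $\delta_1\in[0,m_{\mathscr{Y}}(V_N))$ while the tail is extended arbitrarily (e.g.\ $\delta_n=\delta_1/n$ for $n\geq 2$). Hence the right-hand side equals $\sup_{\delta_1<m_{\mathscr{Y}}(V_N)}R_{\delta_1}$, which by Theorem~\ref{thm:relBwRatenInfoRate} equals $\sup_{\delta_1<m_{\mathscr{Y}}(V_N)}C_N(\{\delta_1\})_*$, and by the optimality identity~\eqref{eq:CA1} is $I_{\delta^*/|\llbracket X^*\rrbracket|}(Y^*;X^*)$, closing the sandwich.

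The main obstacle is the achievability step: one must exhibit a single sequence $\{\delta_n\}$ that simultaneously decays to zero and dominates the geometric lower bound required by Theorem~\ref{thm:singleLetterInf}. The assumption $\hat\delta_*\,|\llbracket X^*\rrbracket|<1$ is precisely what makes these two demands compatible, so the delicate point is really recognizing this condition as the right sufficient one. A minor subtlety concerns whether the supremum in~\eqref{eq:CA1} is attained; if it is only approached by a sequence $X_k^*$, achievability is applied for each $k$ and one passes to the limit $k\to\infty$, which adds no substantive difficulty.
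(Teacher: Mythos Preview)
Your proposal is correct and follows essentially the same two-step sandwich as the paper: applying Theorem~\ref{thm:singleLetterInf} to the geometric sequence $\delta_n=\delta^*(\hat\delta_*|\llbracket X^*\rrbracket|)^{n-1}$ for achievability, and bounding $\inf_n R_{\delta_n}\le R_{\delta_1}$ together with Theorem~\ref{thm:relBwRatenInfoRate} and~\eqref{eq:CA1} for the converse. The only cosmetic difference is that the paper takes $\delta_1=\delta^*$ rather than your $\delta_1=\delta_1^*$, and does not discuss the non-attainment subtlety you raise at the end.
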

\begin{proof}

Consider a sequence of $\{\delta_n\}$ such that $\delta_1=\delta^*$ and $\delta_n={\delta^*}(\hat{\delta}_* |\llbracket X^*\rrbracket|)^{n-1}$ for $n>1$. 
Then, using Theorem \ref{thm:singleLetterInf} for this sequence   $\{\delta_n\}$, we have  that
\begin{equation}\label{eq:equalityOfCapacity}
    C_{N}(\{\delta_n\})_*=I_{\delta^*/|\llbracket  X^*\rrbracket|}( Y^*; X^*).
\end{equation}
Now, since $\hat{\delta}_* |\llbracket X^*\rrbracket|<1$ using the assumption in the theorem, 
we have
\begin{equation}\label{eq:sequenceProp}
    \lim_{n\to\infty}\delta_n=0.
\end{equation}
 Using \eqref{eq:equalityOfCapacity} and \eqref{eq:sequenceProp}, we have that 
\begin{equation}\label{eq:LowerBound111}
\begin{split}
    {C}_{N}(\{\downarrow 0\})_*&=\sup_{\{\delta^\prime_n\}:\delta^\prime_n=o(1)}C_{N}(\{\delta^\prime_n\})_*,\\
    &\geq C_{N}(\{\delta_n\})_*,\\
    &{=}I_{\delta^*/|\llbracket  X^*\rrbracket|}( Y^*; X^*).
\end{split}
\end{equation}
 We also have
\begin{equation}\label{eq:UpperBound111}
\begin{split}
   {C}_{N}(\{\downarrow 0\})_* &=\sup_{\{\delta^\prime_n\}:\delta^\prime_n=o(1)}\inf_{n \in \mathbb{Z}_{>0}}R_{\delta^\prime_n},\\
   &\stackrel{(a)}{\leq} \sup_{\{\delta^\prime_n\}:\delta^\prime_n=o(1)}R_{\delta^\prime_1},\\
   &\stackrel{(b)}{=}  \sup_{\{\delta^\prime_n\}:\delta^\prime_n=o(1)}R^{I}_{\delta^\prime_1},\\
   &\stackrel{(c)}{=} \sup_{\delta^\prime_1<m_{\mathscr{Y}(V_N)}}R^{I}_{\delta^\prime_1},\\
   &\stackrel{(d)}{=} I_{{\delta}^*/|\llbracket{X}^*\rrbracket|}({Y}^*;{X}^*),
\end{split}
\end{equation}
where $(a)$ follows from the fact that 
\begin{equation}
    \inf_{n \in \mathbb{Z}_{>0}}R_{\delta^\prime_n}\leq R_{\delta^\prime_1},
\end{equation}
 $(b)$ follows from the fact that since $\delta^\prime_1<m_{\mathscr{Y}}(V_N)$, we have
 \begin{equation}
     R_{\delta^\prime_1}=R^{I}_{\delta^\prime_1},
 \end{equation}
  using Theorem \ref{thm:relBwRatenInfoRate}, $(c)$ follows from the fact that $R^{I}_{\delta^\prime_1}$ is only dependent on $\delta^\prime_1$ in the sequence $\{\delta^\prime_n\}$, and $(d)$ follows from \eqref{eq:optimizationAtEachTimeN}, \eqref{eq:RateNInfoRate}, and the definition of $I_{{\delta}^*/|\llbracket{X}^*\rrbracket|}({Y}^*;{X}^*)$ in \eqref{eq:CA1}.

Combining \eqref{eq:LowerBound111} and \eqref{eq:UpperBound111}, the statement of the theorem follows.

\end{proof}


Table \ref{table:comparison} shows  a comparison between the sufficient conditions required to obtain single letter expressions for $C_N(\{\delta_n\})^*$, $C_{N}^{0*}$,  $C_N(\{\delta_n\})_*$ and ${C}_{N}(\{\downarrow 0\})_*$. We   point out that while in Theorems~\ref{thm:sufficientConditiondeltaerror} and \ref{thm:singleLetterInf} any  one-dimensional  $\delta_1$-capacity achieving codebook   can be used to obtain the single-letter expression, in Theorem~\ref{thm:singleLetterShannon} the single-letter expression requires a codebook   that achieves the largest capacity among all $\delta_1$-capacity achieving codebooks.
The sufficient conditions include in all cases Assumption~\ref{assumption:productrule}, which is required to factorize the uncertainty function over $n$ dimensions, and leads to the key Lemma~\ref{thm:tensorization}, and also to the upper bound on the mutual information   between associated, or disassociated   UVs in terms of the sum of the component-wise mutual information   expressed by Theorem~\ref{thm:tensorization2}.  The remaining conditions differ due to the different definitions of capacity.

\begin{table*}[t]
\begin{center}
\begin{tabular}{   m{4.0cm}    m{2cm}  m{2cm}  m{2cm}   m{2cm}   } \hline \hline \\
{SUFFICIENT CONDITIONS}  & $C_N(\{\delta_n\})^*$&$C_N^{0*}$& $C_N(\{\delta_n\})_*$ & ${C}_{N}(\{\downarrow 0\})_*$  \\
& (Theorem \ref{thm:sufficientConditiondeltaerror}) & (Corollary \ref{corr:zeroerrorsufficientcondition}) & (Theorem \ref{thm:singleLetterInf}) & (Theorem \ref{thm:singleLetterShannon})
\\ \\ \hline \\
  
 $m_{\mathscr{Y}}$ Satisfies Assumption 1& \checkmark & \checkmark & \checkmark &\checkmark\\ 
 
 $m_{\mathscr{Y}}$ Satisfies Assumption 2 & \checkmark & \checkmark & & \\ 
 
 1D Uncertainty Region Constraint & \checkmark & \checkmark & &  \\
 
Upper bound on $\delta_1$  & \checkmark & &\checkmark & \checkmark \\ 
 
Upper bound on $\{\delta_n\}_{n>1}$  &\checkmark & &  & \\ 
 
Lower bound  on $\{\delta_n\}_{n>1}$ & & & \checkmark & \\ 
{ Upper bound on $\bar{\delta}$} &\checkmark & & &\\
Upper bound on $\hat{\delta}_*$ & & & &\checkmark\\ 
 \hline \hline
\end{tabular}
\end{center}
\caption{Comparison of the Sufficient Conditions for the 
Existence of a Single-Letter expression\label{table:comparison}}
\end{table*}

\section{Examples}

To cast our sufficient conditions for the existence of   single letter expressions of 
capacity in a concrete setting, we now provide some examples and compute the corresponding capacity. 

In the following, we represent stationary memoryless uncertain channels in graph form. Let $\mathcal{G}(V,E)$ be a directed graph, where $V$ is the set of vertices and $E$ is the set of edges. The vertices  represent input and output codeword symbols, namely  $V=\mathscr{X}\cup\mathscr{Y}$. A directed edge from node $x\in \mathscr{X}$ to node $y\in \mathscr{Y}$, denoted by $x\to y$, shows that given symbol $x$ is transmitted, $y$ may be received at the output of the channel. It follows that  for all $x\in\mathscr{X}$,   the channel transition map representing the noise experienced by each codeword is given by 
\begin{equation}
    N(x)= \{y: (x\to y) \in E\}. 
\end{equation}

\begin{example} \label{ex:1}
We   consider a channel  with 
\begin{equation}
\mathscr{X}=\mathscr{Y}=\{1,2,3,\ldots,19\}.
\end{equation}
To define the channel transition map, we let for all $x\in \{1,2,3,4,5,6\}$
\begin{equation}
    N(x)=\{1,2,3,4,5,6,11\},
\end{equation}
  for all $x\in \{7,8,9,10,11,12\}$
\begin{equation}
    N(x)=\{7,8,9,10,11,12,2\},
\end{equation}
and for all $x\in \{13,14,15,16,17,18,19\}$
\begin{equation}
    N(x)=\{13,14,15,16,17,18,19\}.
\end{equation}
The corresponding   graph   is   depicted in Figure \ref{fig:Chaneldelta}. 
\begin{figure}
\begin{center}
\includegraphics[width=0.45\textwidth ]{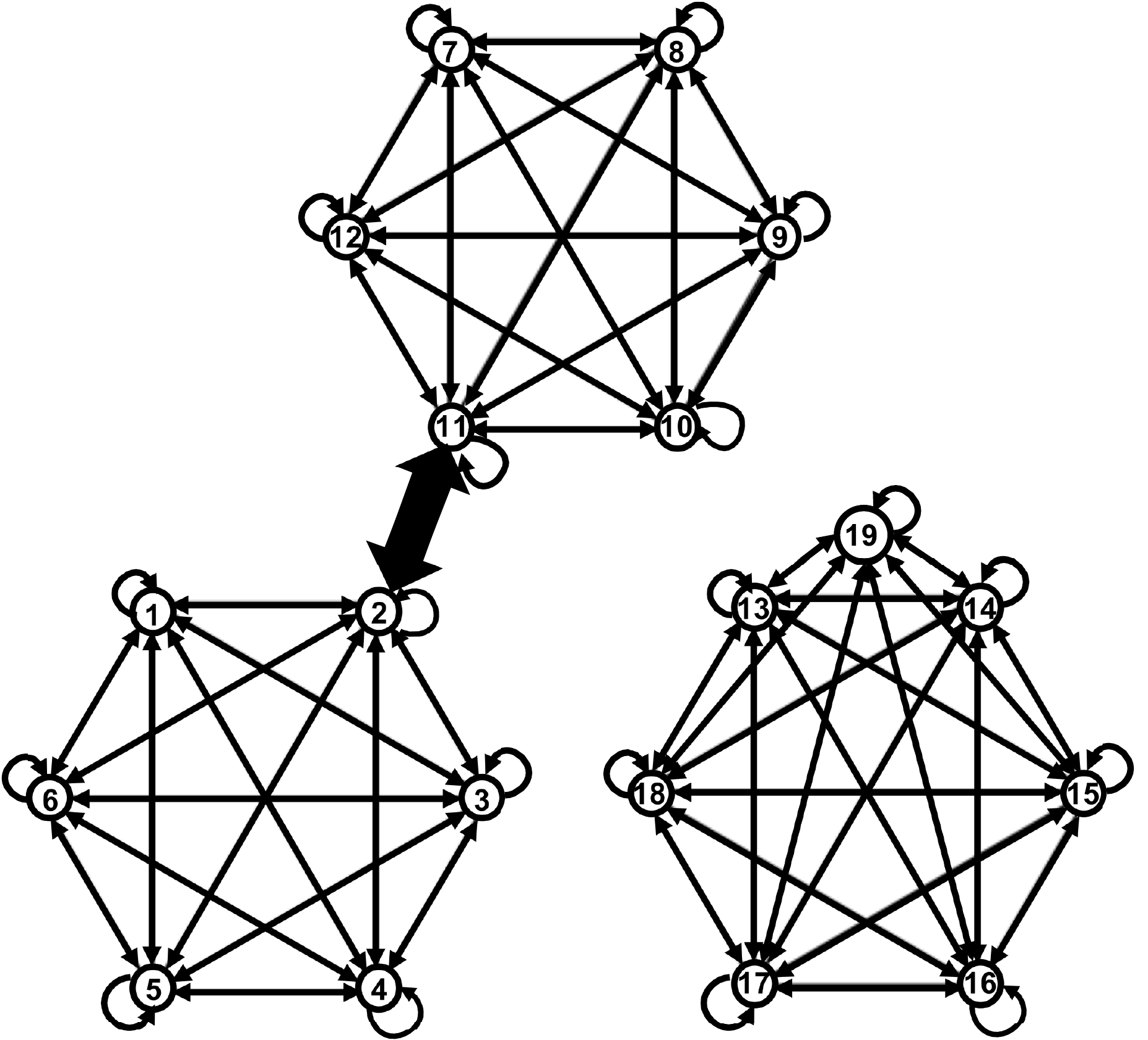}
\end{center}
\caption{Channel described in Example~\ref{ex:1}.
It consists of three complete graphs, and some additional edges. The thick solid arrow into node $y=11$ represents multiple edges connecting all the nodes in the set $\{1,2,3,4,5,6\}$ to node $11$. Similarly, all the nodes in the set $\{7,8,9,10,11,12\}$ are connected to node $2$.
}
\label{fig:Chaneldelta}
\end{figure}[t]
For any $\mathcal{Y}_n\subseteq \mathscr{Y}^n$, we define the uncertainty function  $m_{\mathscr{Y}}(\mathcal{Y}_n)$   in terms of cardinality
\begin{equation}\label{eq:uncertainFunct}
    m_{\mathscr{Y}}(\mathcal{Y}_n)=\frac{|\mathcal{Y}_n|}{|\mathscr{Y}^n|}.
\end{equation}
Note that for all $n  \in \mathbb{Z}_{>0}$, we have $m_{\mathscr{Y}}(\mathscr{Y}^n)=1$. 

It is easy to show that $ m_{\mathscr{Y}}(.)$ satisfies Assumption \ref{assumption:productrule}. Namely, for $n=1$, we have that for all $\mathcal{Y}\subseteq \mathscr{Y}$, 
\begin{equation}\label{eq:UConeDimen}
    m_{\mathscr{Y}}(\mathcal{Y})=\frac{|\mathcal{Y}|}{|\mathscr{Y}|}.
\end{equation}
Let $\mathcal{Y}_n=\mathcal{Y}(1)\times\mathcal{Y}(2)\ldots\times \mathcal{Y}(n)$, where $\mathcal{Y}(i)\subseteq\mathscr{Y}$. Then, we have
\begin{equation}
\begin{split}
&m_{\mathscr{Y}}(\mathcal{Y}_n)\\
&=m_{\mathscr{Y}}(\mathcal{Y}(1)\times\mathcal{Y}(2)\ldots\times \mathcal{Y}(n)),\\
&=\frac{|\mathcal{Y}(1)\times\mathcal{Y}(2)\ldots\times \mathcal{Y}(n)|}{|\mathscr{Y}^n|},\\
&\stackrel{(a)}{=}\frac{|\mathcal{Y}(1)|}{|\mathscr{Y}|} \frac{|\mathcal{Y}(2)|}{|\mathscr{Y}|}\ldots \frac{|\mathcal{Y}(n)|}{|\mathscr{Y}|},\\
&\stackrel{(b)}{=}m_{\mathscr{Y}}(\mathcal{Y}(1)) m_{\mathscr{Y}}(\mathcal{Y}(2))\ldots  m_{\mathscr{Y}}(\mathcal{Y}(n)),
\end{split}
\end{equation}
where $(a)$ follows from the fact that for any two sets $\mathscr{S}_1$ and $\mathscr{S}_2$, $|\mathscr{S}_1\times \mathscr{S}_2|=|\mathscr{S}_1||\mathscr{S}_2|$, and $(b)$ follows from \eqref{eq:UConeDimen}. It follows that $ m_{\mathscr{Y}}(.)$ satisfies Assumption \ref{assumption:productrule}. 

A similar argument shows that $ m_{\mathscr{Y}}(.)$ also satisfies Assumption \ref{assumption:triangleInequality}. Namely, let $\mathcal{Y}_n=\mathcal{Y}(1)\cup\mathcal{Y}(2)\ldots\cup \mathcal{Y}(n)$, where $\mathcal{Y}(i)\in\mathscr{Y}$. Then, we have
\begin{equation}
\begin{split}
&m_{\mathscr{Y}}(\mathcal{Y}_n)\\
&=m_{\mathscr{Y}}(\mathcal{Y}(1)\cup\mathcal{Y}(2)\ldots\cup \mathcal{Y}(n)),\\
&=\frac{|\mathcal{Y}(1)\cup\mathcal{Y}(2)\ldots\cup \mathcal{Y}(n)|}{|\mathscr{Y}|},\\
&\stackrel{(a)}{\leq}\frac{|\mathcal{Y}(1)|}{|\mathscr{Y}|}+ \frac{|\mathcal{Y}(2)|}{|\mathscr{Y}|}+\ldots+ \frac{|\mathcal{Y}(n)|}{|\mathscr{Y}|},\\
&\stackrel{(b)}{=}m_{\mathscr{Y}}(\mathcal{Y}(1))+ m_{\mathscr{Y}}(\mathcal{Y}(2))+\ldots+ m_{\mathscr{Y}}(\mathcal{Y}(n)),
\end{split}
\end{equation}
where $(a)$ follows from the fact that for any two sets $\mathscr{S}_1$ and $\mathscr{S}_2$, $|\mathscr{S}_1\cup \mathscr{S}_2|\leq|\mathscr{S}_1|+|\mathscr{S}_2|$, and $(b)$ follows from \eqref{eq:UConeDimen}. It follows that $ m_{\mathscr{Y}}(.)$ satisfies Assumption \ref{assumption:triangleInequality}.

We now compute the capacity $C_N(\{\delta_n\})^*$ for { $\delta_1=2/9$} 
 and for all $n>1$  $\delta_n=(7/342)^n$. 

Since $V_N$ contains seven elements, we have that $m_{\mathscr{Y}}(V_N)=7/19$, and $\delta_1<m_{\mathscr{Y}}(V_N)$. Consider an UV $\bar{X}$ representing a one-dimensional codebook such that
\begin{equation} \label{ex1a}
    \llbracket\bar{X}\rrbracket=\{1,7,13\}.
\end{equation}
It follows that the corresponding output UV  $\bar{Y}$ is such that
\begin{equation} \label{ex1b}
    \llbracket\bar{Y}\rrbracket=\{1,2,3,\ldots,18,19\}.
\end{equation}
Letting $\bar{\delta}=1/6$, we have that $\bar{\delta}/| \llbracket X \rrbracket | =1/18$ and the   overlap family
\begin{equation} \label{ex1c}
    \llbracket \bar Y|\bar X\rrbracket^*_{1/18}=\{\mathscr{S}_1,\mathscr{S}_2\},
\end{equation}
where
\begin{equation}
    \mathscr{S}_1=\cup_{x\in \{1,7\}}N(x),
\end{equation}
\begin{equation}
    \mathscr{S}_2=\cup_{x\in\{13\}}N(x).
\end{equation}
 
We now show that $\bar{X}$ satisfies the sufficient conditions   in Theorem \ref{thm:sufficientConditiondeltaerror}. 
First, we note that for all $x\in\{2,3,4,5,6,8,9,10,11,12\}$, we have that $\llbracket \bar{Y}|x\rrbracket\subseteq \mathscr{S}_1$, and for all $x\in\{14,15,16,17,18,19\}$, we have  $\llbracket \bar{Y}|x\rrbracket\subseteq \mathscr{S}_2$. 
It follows that for all $x\in \mathscr{X}\setminus\llbracket\bar{X}\rrbracket   =\{2,3,4,5,6,8,9,10,11,12,14,15,16,17,18,19\}$, the uncertainity region $\llbracket \bar{Y}|x\rrbracket\subseteq\mathscr{S}$, where $\mathscr{S}\in \llbracket\bar{Y}|\bar{X}\rrbracket^*_{\bar{\delta}/|\llbracket\bar{X}\rrbracket|}$. 
 Second, we have that 
\begin{equation}
    \bar{\delta} (1+1/|\llbracket\bar{X}\rrbracket|)=2/9\leq \delta_1/m_{\mathscr{Y}}(\llbracket\bar{Y}\rrbracket). 
\end{equation}

 Third, we note that $\bar{\delta}=1/6$,  
 \begin{equation}
     \bar\delta m_{\mathscr{Y}}(V_N)/|\llbracket \bar X\rrbracket|= 7/342.
 \end{equation}
It follows that for all $n>1$, we have that $\delta_n\leq (\bar\delta m_{\mathscr{Y}}(V_N)/|\llbracket \bar X\rrbracket|)^n$.

 Since all the sufficient conditions in Theorem \ref{thm:sufficientConditiondeltaerror} are satisfied, we have
 \begin{equation}
     C_{N}(\{\delta_n\})^*=\log_2|\llbracket\bar Y|\bar{X}\rrbracket^*_{1/18}|=1. 
 \end{equation}
\end{example}
\begin{example}
We now consider the same channel as in Example~\ref{ex:1}, shown in Figure \ref{fig:Chaneldelta}, 
and  we compute the capacity $C_N^{0*}$. We consider the one-dimensional codebook $\bar{X}$ in \eqref{ex1a}, and the corresponding   output UV  $\bar{Y}$ in \eqref{ex1b}. Then, we have
\begin{equation} 
    \llbracket \bar Y|\bar X\rrbracket^*_{0}=\{\mathscr{S}_1,\mathscr{S}_2\},
\end{equation}
where
\begin{equation}
    \mathscr{S}_1=\cup_{x\in \{1,7\}}N(x),
\end{equation}
\begin{equation}
    \mathscr{S}_2=\cup_{x\in\{13\}}N(x).
\end{equation}
We now show that $\bar{X}$ satisfies the sufficient conditions   in Corollary \ref{corr:zeroerrorsufficientcondition}. 
We note that for all $x\in\{2,3,4,5,6,8,9,10,11,12\}$, we have that $\llbracket \bar{Y}|x\rrbracket\subseteq \mathscr{S}_1$, and for all $x\in\{14,15,16,17,18,19\}$, we have  $\llbracket \bar{Y}|x\rrbracket\subseteq \mathscr{S}_2$. 
It follows that for all $x\in \mathscr{X}\setminus\llbracket\bar{X}\rrbracket   =\{2,3,4,5,6,8,9,10,11,12,14,15,16,17,18,19\}$, the uncertainity region $\llbracket \bar{Y}|x\rrbracket\subseteq\mathscr{S}$, where $\mathscr{S}\in \llbracket\bar{Y}|\bar{X}\rrbracket^*_{0}$. 

Since all the sufficient conditions in Corollary \ref{corr:zeroerrorsufficientcondition} are satisfied, we have
 \begin{equation}
     C_{N}^{0*}=\log_2|\llbracket\bar Y|\bar{X}\rrbracket^*_{0}|=1. 
 \end{equation}
\end{example}

\begin{example} \label{ex:2}  \label{exp:Channel1_deltaInfError}
 We now consider the same channel as in Example~\ref{ex:1}, shown in Figure \ref{fig:Chaneldelta}, 
and  we compute the capacity $C_N(\{\delta_n\})_*$ for $\delta_1=(2/6)^3$ and for all $n>1$ $\delta_n=(2/6)^3((7/19)^3 3)^{n-1}$.  

For any $\mathcal{Y}_n\subseteq \mathscr{Y}^n$, we define the uncertainty function  $m_{\mathscr{Y}}(\mathcal{Y}_n)$   in terms of cardinality
\begin{equation}\label{eq:uncertainFunct1}
    m_{\mathscr{Y}}(\mathcal{Y}_n)=\bigg(\frac{|\mathcal{Y}_n|}{|\mathscr{Y}^n|}\bigg  )^3.
\end{equation}
Note that for all $n  \in \mathbb{Z}_{>0}$, we have $m_{\mathscr{Y}}(\mathscr{Y}^n)=1$. It is easy to show that $ m_{\mathscr{Y}}(.)$ satisfies Assumption \ref{assumption:productrule}. For $n=1$, we have that for all $\mathcal{Y}\subseteq \mathscr{Y}$, 
\begin{equation}\label{eq:UConeDimen1}
    m_{\mathscr{Y}}(\mathcal{Y})=\bigg(\frac{|\mathcal{Y}|}{|\mathscr{Y}|}\bigg)^3.
\end{equation}
Let $\mathcal{Y}_n=\mathcal{Y}(1)\times\mathcal{Y}(2)\ldots\times \mathcal{Y}(n)$, where $\mathcal{Y}(i)\in\mathscr{Y}$. Then, we have
\begin{equation}
\begin{split}
&m_{\mathscr{Y}}(\mathcal{Y}_n)\\
&=m_{\mathscr{Y}}(\mathcal{Y}(1)\times\mathcal{Y}(2)\ldots\times \mathcal{Y}(n)),\\
&=\bigg(\frac{|\mathcal{Y}(1)\times\mathcal{Y}(2)\ldots\times \mathcal{Y}(n)|}{|\mathscr{Y}^n|}\bigg)^3,\\
&\stackrel{(a)}{=}\bigg(\frac{|\mathcal{Y}(1)|}{|\mathscr{Y}|}\bigg)^3 \bigg(\frac{|\mathcal{Y}(2)|}{|\mathscr{Y}|}\bigg)^3\ldots \bigg(\frac{|\mathcal{Y}(n)|}{|\mathscr{Y}|}\bigg)^3,\\
&\stackrel{(b)}{=}m_{\mathscr{Y}}(\mathcal{Y}(1)) m_{\mathscr{Y}}(\mathcal{Y}(2))\ldots  m_{\mathscr{Y}}(\mathcal{Y}(n)),
\end{split}
\end{equation}
where $(a)$ follows from the fact that for any two sets $\mathscr{S}_1$ and $\mathscr{S}_2$, $|\mathscr{S}_1\times \mathscr{S}_2|=|\mathscr{S}_1||\mathscr{S}_2|$, and $(b)$ follows from \eqref{eq:UConeDimen1}. It follows that $ m_{\mathscr{Y}}(.)$ satisfies Assumption \ref{assumption:productrule}. 

Since $m_{\mathscr{Y}}(V_N)=(7/19)^3$, we have $\delta_1<m_{\mathscr{Y}}(V_N)$. 
We consider an UV $\bar{X}$ representing a one-dimensional codebook,   a corresponding output UV  $\bar{Y}$, and $\bar{\delta}=(2/6)^3$, so that
\begin{equation}
    \llbracket\bar{X}\rrbracket=\{1,7,13\},
\end{equation}
\begin{equation}
    \llbracket\bar{Y}\rrbracket=\{1,2,\ldots,19\},
\end{equation}
\begin{equation}
     \llbracket\bar{Y}|\bar{X}\rrbracket^*_{1/81}=\{\mathscr{S}_1,\mathscr{S}_2,\mathscr{S}_3\},
\end{equation}
where 
\begin{equation}
    \mathscr{S}_1=\cup_{x\in \{1\}}N(x),
\end{equation}
\begin{equation}
    \mathscr{S}_2=\cup_{x\in \{7\}}N(x),
\end{equation}
\begin{equation}
    \mathscr{S}_3=\cup_{x\in\{13\}}N(x).
\end{equation}


  Since $\bar{\delta}=\delta_1=(2/6)^3$,  we have
 \begin{equation}
     \hat{\delta}=\max_{\mathscr{S}\in\llbracket\bar{Y}|\bar{X}\rrbracket^*_{\bar{\delta}/|\llbracket\bar{X}\rrbracket|}}\frac{m_{\mathscr{Y}}(\mathscr{S})}{m_{\mathscr{Y}}(\llbracket\bar{Y}\rrbracket)}=\bigg(\frac{7}{19}\bigg)^3.
 \end{equation}
It follows that for all $n>1$, we have $\delta_n\geq \bar{\delta}(\hat{\delta} |\llbracket\bar{X}\rrbracket|)^{n-1}$ and all the sufficient conditions in Theorem \ref{thm:singleLetterInf} are satisfied, so that
 \begin{equation}
     C_{N}(\{\delta_n\})_*=\log_2|\llbracket\bar Y|\bar{X}\rrbracket^*_{1/81}|=\log_2(3
     ). 
 \end{equation}
 \end{example}
 
 \begin{example} \label{ex:3}
We again consider the same channel and the same uncertainty function as in Example \ref{ex:2}, shown in Figure \ref{fig:Chaneldelta} and \eqref{eq:uncertainFunct1}, respectively. 
We compute the capacity $C_N(\{\downarrow 0\})_*$. 
Consider an UV $X^*$ representing a one-dimensional codebook such that
\begin{equation}
    \llbracket X^*\rrbracket=\{1,7,13\}.
\end{equation}
It follows that the corresponding output UV $Y^*$ is such that
\begin{equation}
    \llbracket Y^*\rrbracket=\{1,2,3\ldots 19\}. 
\end{equation}
Letting $\delta^*=(2/6)^3$, we have that $\delta^*/|\llbracket X^*\rrbracket|=1/81$ and the overlap family is 
\begin{equation}
    \llbracket  Y^*| X^*\rrbracket^*_{\delta^*/|\llbracket X^*\rrbracket|=1/81}=\{\mathscr{S}_1,\mathscr{S}_2,\mathscr{S}_3\},
\end{equation}
where
\begin{equation}
    \mathscr{S}_1=\cup_{x\in \{1\}}N(x),
\end{equation}
\begin{equation}
    \mathscr{S}_2=\cup_{x\in \{7\}}N(x),
\end{equation}
\begin{equation}
    \mathscr{S}_3=\cup_{x\in\{13\}}N(x).
\end{equation}
 Since $V_N$ contains seven elements, we have that $m_{\mathscr{Y}}(V_N)=(7/19)^3$, and $\delta^*<m_{\mathscr{Y}}(V_N)$. Also, we have that
 \begin{equation}
    \hat{\delta}_*=\max_{\mathscr{S}\in\llbracket{Y}^*|{X}^*\rrbracket^*_{{\delta}^*/|\llbracket{X}^*\rrbracket|}}\frac{m_{\mathscr{Y}}(\mathscr{S})}{m_{\mathscr{Y}}(\llbracket{Y}^*\rrbracket)} = \bigg(\frac{7}{19}\bigg)^3.
 \end{equation}
 It follows that $\hat{\delta}_* |\llbracket X^*\rrbracket|<1$. 

 Since all the sufficient conditions in Theorem \ref{thm:singleLetterShannon} are satisfied, we have
 \begin{equation}
    {C}_{N}(\{\downarrow 0\})_*=\log_2|\llbracket\bar Y|\bar{X}\rrbracket^*_{2/18}|=\log_2(3). 
 \end{equation}
 \end{example}

\subsection{Discussion}
The results in our examples show that 
for the channel presented in Figure \ref{fig:Chaneldelta}, and the uncertainty function \eqref{eq:uncertainFunct}, there exists a vanishing sequence $\delta_1=2/9, \{\delta_n\}_2^{\infty} =\{(7/342)^n\}$ such that
\begin{equation}\label{eq:examp23}
 C_{N}^{0*}=C_{N}(\{\delta_n\})^*.
\end{equation}
For the same channel and  uncertainty function, there is another   vanishing sequence {$\delta_1=4/9,\{\delta_n\}_2^{\infty} = \{(14/342)^n\}$}, such that
\begin{equation}\label{eq:examp23.1}
     C_{N}^{0*}<C_{N}(\{\delta_n\})^*. 
\end{equation}

On the other hand, for the same channel using the uncertainty function  \eqref{eq:uncertainFunct1}, there exists a vanishing sequence $\delta_1=(2/6)^3, \{\delta_n\}_2^{\infty} = \{ (2/6)^3(3(7/19)^3)^{n-1}\}$ such that 
\begin{equation}\label{eq:examp45}
C_{N}(\{\delta_n\})_*={C}_{N}(\{\downarrow 0\})_*.
\end{equation}
For the same channel and   uncertainty function \eqref{eq:uncertainFunct1}, there exists another sequence $\delta_1=(2/19)^3,\{\delta_n\}_2^\infty=(2/19)^3(3(12/19)^3)^{n-1}$ such that
\begin{equation}\label{eq:examp45.1}
    C_{N}(\{\delta_n\})_*<{C}_{N}(\{\downarrow 0\})_*.
\end{equation}

\section{Applications}
We  now  discuss some applications of the developed non-stochastic theory. 
\subsection{Error Correction in Adversarial Channels}


Various adversarial channel models have been considered in the literature. A popular one considers a   binary alphabet, and a codeword of length $n$  that is sent from the transmitter to the receiver.  The channel can flip at most a fraction $0<\tau\leq 1$ of the $n$ symbols in an arbitrary fashion~\cite{alon2018list}. In this case, the input and output spaces are $\mathscr{X}^n=\mathscr{Y}^n=\{0,1\}^n$, and a codebook is $\mathcal{X}_n \subseteq\mathscr{X}^n$.
Due to the constraint on the total number of bit flips, the channel is non-stationary and with memory. 
For any  $x\in \mathscr{X}^n$,
we can let the norm be the Hamming distance from the $n$-dimensional all zero vector representing the origin of the space, namely 
\begin{equation}\label{eq:norm}
    \|x\|= H(x,\{0\}_n)\leq n.
\end{equation}
In this framework,
for any transmitted codeword $x\in\mathcal{X}_n$, the set of possible received codewords is
\begin{equation}\label{eq:advChann}
    S_{\tau n}(x)= \{y\in \mathscr{Y}^n: {H}(x,y)\leq \tau  n\},
\end{equation}
where   $\tau n$  is the analogous of a noise range $\epsilon_n=\epsilon n$ in the non-stochastic channel model described in Section \ref{sec:cap}. 


For all $x_1,x_2\in\mathcal{X}_n$, the equivocation region corresponds to $S_{\tau n}(x_1)\cap S_{\tau n}(x_2)$ and for any $\mathscr{S} \subseteq \mathscr{Y}^n$, we can define the uncertainty function 
\begin{equation}
    m_{\mathscr{Y}^n}(\mathscr{S})=
    \begin{cases}
D(\mathscr{S})+1, \mbox{ if }\mathscr{S}\neq\emptyset,\\
0,  \mbox{ otherwise },
\end{cases}
\end{equation}
where $D$ indicates diameter, namely
\begin{equation}
    D(\mathscr{S})=\max_{y_1,y_2\in\mathscr{S}}H(y_1,y_2).
\end{equation}
With this definition, we have  that
\begin{equation}
    m_{\mathscr{Y}^n}(\mathscr{Y}^n)=n+1.
\end{equation}
For all $x_1,x_2\in\mathcal{X}_n$, we let the error
\begin{equation}
    e_{\tau n}(x_1,x_2)=\frac{m_{\mathscr{Y}^n}(S_{\tau n}(x_1)\cap S_{\tau n}(x_2))}{n+1}.
\end{equation}
As usual, we say that a codebook $\mathcal{X}_n$ is $(\tau n,\delta_n)$-distinguishable if $e_{\tau n}(x_1,x_2)\leq {\delta_n}/{|\mathcal{X}_n|}$, and for all $n \in \mathbb{Z}_{>0}$ the $(\tau n, \delta_n)$ capacity is
\begin{equation}
    C^{\delta_n}_{\tau n}=\sup_{\mathcal{X}_n\in \mathscr{X}^{\delta_n}_{\tau n}}\log_2(|\mathcal{X}_n|),
\end{equation}
where $\mathscr{X}^{\delta_n}_{\tau n}=\{\mathcal{X}_n:\mathcal{X}_n\mbox{ is   }(\tau n,\delta_n)\mbox{-distinguishable}\}$. 




We now show that any $(\tau n, \delta_n)$-distinguishable codebook $\mathcal{X}_n$ can be used to correct a certain number of bit flips. This number depends on how far apart any two codewords are, and a lower bound on this distance can be expressed in terms of the diameter of the equivocation set and of the amount of perturbation introduced by the channel, see Figure~\ref{fig:equivocation2}. The following theorem provides a  lower bound on the Hamming distance $H(x_1,x_2)$ between any two codewords.   The number of bit flips that can be corrected can then be computed using the well-known formula $\lfloor(H(x_1,x_2)-1)/2 \rfloor$. Finally, we point out that non-stochastic, adversarial, error correcting codes  are of interest and have been  studied in the context of multi-label classification in machine learning \cite{ferng2011multi}, and to improve the robustness of neural networks to adversarial attacks \cite{verma2019error}.

\begin{figure}[t]
\begin{center}
\includegraphics[width=.35\textwidth ]{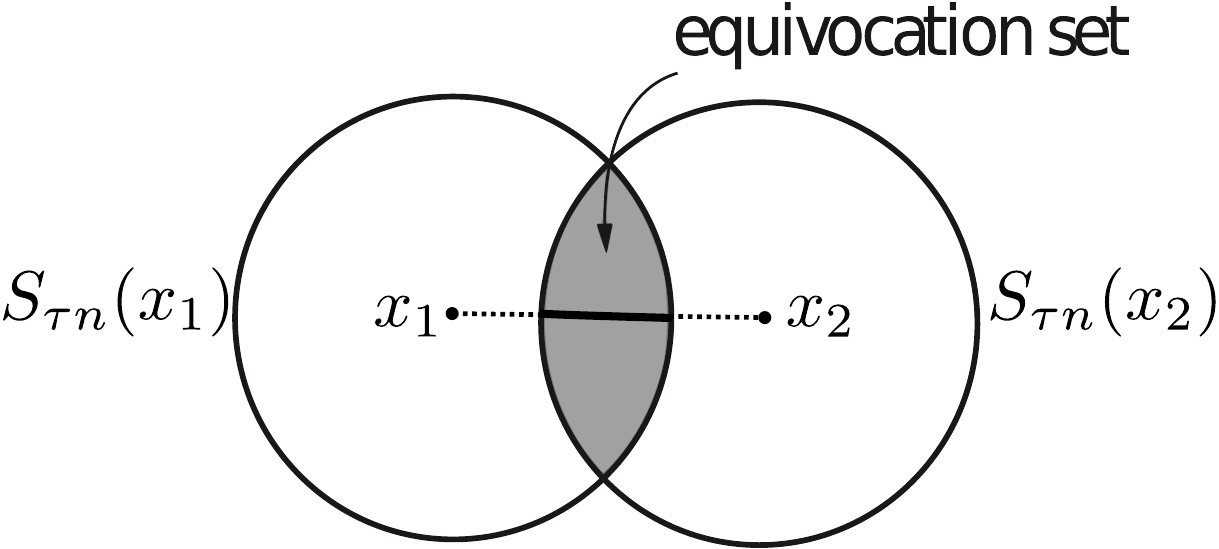}
\end{center}
\caption{The Hamming distance between any two overlapping codewords depends on the code parameters $\tau n$ and $\delta_n$.}
\label{fig:equivocation2}
\end{figure}



\begin{theorem}\label{lemma:application}
Given a channel satisfying \eqref{eq:advChann},  if a codebook $\mathcal{X}_n$ is $(\tau n,\delta_n)$-distinguishable, 
 then  for all $x_1,x_2\in \mathcal{X}_n$, we have
\begin{equation}\label{eq:resultErrorCorrecting}
    {H}(x_1,x_2)\geq \bigg(\frac{2\tau n}{n+1}-\frac{\delta_n}{|\mathcal{X}_n|}\bigg) (n+1)+1. 
\end{equation}
\end{theorem}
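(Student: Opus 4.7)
The plan is to lower bound the Hamming diameter $D = D(S_{\tau n}(x_1)\cap S_{\tau n}(x_2))$ of the equivocation set in terms of the codeword distance $d = H(x_1,x_2)$, and then invoke $(\tau n,\delta_n)$-distinguishability. The target diameter bound is $D \geq 2\tau n - d$, after which the claim follows by substituting into $m_{\mathscr{Y}^n}(S_{\tau n}(x_1)\cap S_{\tau n}(x_2)) = D+1 \leq (n+1)\delta_n/|\mathcal{X}_n|$ and rearranging. I would first dispatch the trivial case: if $d > 2\tau n$, the triangle inequality forces $S_{\tau n}(x_1)\cap S_{\tau n}(x_2)=\emptyset$, and since Hamming distances are integers and $\delta_n\geq 0$, one has $d \geq 2\tau n + 1 \geq 2\tau n + 1 - (n+1)\delta_n/|\mathcal{X}_n|$. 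So I may assume $d\leq 2\tau n$, in which case the intersection is nonempty and $D$ is well defined.

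To prove the diameter bound, I would construct two explicit points $y_1, y_2 \in S_{\tau n}(x_1)\cap S_{\tau n}(x_2)$ with $H(y_1,y_2) \geq 2\tau n - d$. Let $A$ denote the $d$ coordinates on which $x_1$ and $x_2$ disagree and $B$ the remaining $n-d$ coordinates. If $d\leq \tau n$, I would select disjoint subsets $B_1, B_2 \subseteq B$ of size $\tau n - d$ each (which fit inside $B$ whenever $2(\tau n - d)\leq n-d$, i.e., in the standard error-correction regime $\tau \leq 1/2$), define $y_1$ to coincide with $x_2$ on $A$ and to flip the common value of $x_1=x_2$ on $B_1$ only, and define $y_2$ symmetrically, equal to $x_1$ on $A$ and flipping on $B_2$. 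A direct computation gives $H(y_i,x_1), H(y_i,x_2) \leq \tau n$ for $i=1,2$, so both points lie in the intersection; summing $d$ disagreements on $A$ with the $2(\tau n - d)$ disagreements on $B_1 \cup B_2$ yields $H(y_1,y_2)=2\tau n - d$. If instead $\tau n < d \leq 2\tau n$, no flips on $B$ are required: I would place the $x_1$-agreement sets of $y_1$ and $y_2$ on disjoint subsets of $A$ of sizes $\tau n$ and $d-\tau n$ (which together exhaust $A$), making $y_1, y_2$ disagree on all of $A$, so $H(y_1, y_2) = d \geq 2\tau n - d$ because $d\geq \tau n$.

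Chaining the resulting inequality $D \geq 2\tau n - d$ with distinguishability $D+1 \leq (n+1)\delta_n/|\mathcal{X}_n|$ gives $2\tau n - d + 1 \leq (n+1)\delta_n/|\mathcal{X}_n|$, and rearranging produces the stated bound \eqref{eq:resultErrorCorrecting}. The main obstacle will be the construction step: verifying that all four pairwise distances $H(y_i,x_j)$ truly lie within $\tau n$, and that the flip blocks $B_1, B_2$ actually fit disjointly inside $B$ (this is where the error-correction regime $\tau \leq 1/2$ enters through $2(\tau n - d) \leq n-d$). The remainder of the argument is essentially bookkeeping.
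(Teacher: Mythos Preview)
Your proposal is correct and follows the same strategy as the paper: both establish $D(S_{\tau n}(x_1)\cap S_{\tau n}(x_2)) \geq 2\tau n - H(x_1,x_2)$ by exhibiting two far-apart points in the intersection (the paper via boundary points $A,B$ of the equivocation set lying ``collinearly'' with $x_1,x_2$, you via explicit coordinate-flip constructions), and then combine with the distinguishability bound $D+1 \leq (n+1)\delta_n/|\mathcal{X}_n|$. Your write-up is in fact slightly more complete---you dispatch the empty-intersection case $d>2\tau n$ and give an explicit construction for $\tau n < d \leq 2\tau n$ where the paper simply asserts the inequality---and you correctly flag the regime constraint $2(\tau n - d)\leq n-d$ (hence $\tau\leq 1/2$) needed for the disjoint flip blocks, which the paper's boundary-point argument also tacitly requires.
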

\begin{proof}

Let $\mathcal{X}_n\in\mathscr{X}^{\delta}_{\tau n}$. Then,  for all $x_1,x_2\in \mathcal{X}_n$, we have
\begin{equation}\label{eq:upperBe}
\begin{split}
    e_{\tau n}(x_1,x_2)&=\frac{m_{\mathscr{Y}^n}(S_{\tau n}(x_1)\cap S_{\tau n}(x_2))}{n+1},\\
    &\leq \frac{\delta_n}{|\mathcal{X}_n|}.
\end{split}
\end{equation}


First, we consider the case when 
\begin{equation}\label{eq:SecondSet}
    x_1,x_2\in S_{\tau n}(x_1)\cap S_{\tau n}(x_2). 
\end{equation}
Let  $B(x_1,x_2)$ be the boundary of the equivocation set, namely
\begin{equation}\label{}
\begin{split}
    B(x_1,x_2) &=\{x\in S_{\tau n}(x_1)\cap S_{\tau n}(x_2)):\\ &\qquad\exists x^\prime\notin S_{\tau n}(x_1)\cap S_{\tau n}(x_2)) \mbox{ such that }\\
    &\qquad H(x,x^\prime)=1\}. 
\end{split}
\end{equation}
By \eqref{eq:SecondSet}, there exist $A,B\in  B(x_1,x_2)$ such that
\begin{equation}
    H(A,B)= H(A,x_1)+H(x_1,x_2)+H(x_2,B).
\end{equation}
Then, we have
\begin{equation}
\begin{split}
     H(A,B)&= H(A,x_1)+H(x_1,x_2)+H(x_2,B),\\
     &\stackrel{(a)}{=}2\tau n-2H(x_1,x_2) +H(x_1,x_2),\\
     &\stackrel{}{=} 2\tau n -H(x_1,x_2),
\end{split}
\end{equation}
where $(a)$ follows from the fact that $A,B\in  B(x_1,x_2)$, which implies 
\begin{equation}
    H(A,x_1)=H(A,x_2)-H(x_1,x_2)=\tau n-H(x_1,x_2),
\end{equation}
and
\begin{equation}
    H(B,x_2)=H(B,x_1)-H(x_1,x_2)=\tau n-H(x_1,x_2).
\end{equation}
We now have  that
\begin{equation}\label{eq:secondDirection}
\begin{split}
    H(x_1,x_2)&= 2\tau n -H(A,B),\\
    &\stackrel{(a)}{\geq} 2\tau n - D(S_{\tau n}(x_1)\cap S_{\tau n}(x_2)),
\end{split}
\end{equation}
where $(a)$ follows from the fact that using $A,B\in S_{\tau n}(x_1)\cap S_{\tau n}(x_2))$, we have $H(A,B)\leq D(S_{\tau n}(x_1)\cap S_{\tau n}(x_2))$. 

Now, we consider the case when
\begin{equation}
    x_1,x_2\notin S_{\tau n}(x_1)\cap S_{\tau n}(x_2).
\end{equation}
In this case, we have 
\begin{equation}\label{eq:firstDirection}
\begin{split}
    &H(x_1,x_2)\\
    &\stackrel{}{\geq} 2\tau n-D(S_{\tau n}(x_1)\cap S_{\tau n}(x_2)).\\
    \end{split}
\end{equation}

The result now follows by combining  \eqref{eq:secondDirection}, \eqref{eq:firstDirection} and \eqref{eq:upperBe}.


\end{proof}
\subsection{Robustness of   Neural Networks to Adversarial Attacks}
\label{sec:robustness}
Motivated by security considerations, the robustness of neural networks to adversarial examples has recently received great attention  \cite{weng2018evaluating,weng2018towards,wang2018efficient}. While different algorithms have been proposed to improve robustness \cite{weng2018towards,wang2018efficient}, studies quantifying the limits of robustness have been limited \cite{weng2018evaluating,szegedy2013intriguing}.  We argue that the
 non-stochastic framework introduced in this paper can be a viable way to quantify   robustness, and   can be used as a baseline to evaluate the performance of different algorithmic solutions. 

We follow the framework of \cite{weng2018evaluating} and consider a neural network trained to classify the incoming data among $L$ possible labels in the set $\mathscr{L}=\{1,2,\ldots, L\}$. Let $x_0\in \mathbb{R}^d$ denote an input data point consisting of a feature vector of $d$ dimensions. A neural network can be modelled using a classification function $f:\mathbb{R}^d\to \mathbb{R}^L$ whose $\ell$th component $f_\ell(\cdot )$ indicates the belief that a given data point is of label $\ell$.
The network   classifies each input data point $x_0$ as being of  label
\begin{equation}
    c(x_0)=\mbox{argmax}_{\ell\in \{1,\ldots,L\}} f_{\ell}(x_0).
\end{equation}

We let $\mathcal{D}(\ell) \subseteq \mathcal{D}$ denote the set of points in a data set $\mathcal{D}$  that are classified as being of label $\ell$, namely
 \begin{equation}
     \mathcal{D}(\ell)=\{x_0\in\mathcal{D}: c(x_0)=\ell\}. 
 \end{equation}
When the points in this set are subject to an $\epsilon$-attack, they become part of  the perturbed input data set  
 \begin{equation}
 \begin{split}
      S_{\epsilon}(\ell)&=\{x\in\mathbb{R}^d:  \; \|x-x_0\|\leq \epsilon,   
      x_0\in\mathcal{D}(\ell)  \}.
 \end{split}
 \end{equation}

An $\epsilon$-attack on input point $x_0$ is successful if  there exists a noise vector $e\in\mathbb{R}^d$ such that 
\begin{equation}
    c(x_0)\neq c(x_0+e), \mbox{ and } \|e\| \leq\epsilon.
\end{equation}

 For any two labels $\ell_1,\ell_2\in \{1,2,\ldots,L\}$, we let
 \begin{equation}
 \begin{split}
     \mathcal{P}_{\epsilon}(\ell_1,\ell_2)=&\{x\in  S_{\epsilon}(\ell_1): 
     \mbox{ either } c(x)=\ell_1 \\ &\mbox{ or }c(x)=\ell_2\}.
 \end{split}
 \end{equation}
Then,  
$\mathcal{P}_{\epsilon}(\ell_1,\ell_2)\cap \mathcal{P}_{\epsilon}(\ell_2,\ell_1)$ represents the set of points in $\mathcal{S}_{\epsilon}(\ell_1)$ and $\mathcal{S}_{\epsilon}(\ell_2)$ that can lead to a successful  $\epsilon$-attack.

For all $\ell_1,\ell_2\in \{1,2,\ldots,L\}$, we let the error
\begin{equation}
    e_{\epsilon}(\ell_1,\ell_2)=\frac{m_{\mathscr{L}}(\mathcal{P}_{\epsilon}(\ell_1,\ell_2)\cap \mathcal{P}_{\epsilon}(\ell_2,\ell_1))}{m_{\mathscr{L}}(\mathscr{L})}.
\end{equation}
Finally, we say that a subset of labels (viz. a codebook) $\mathcal{L}\subseteq {\mathscr{L}}$ is $(\epsilon,\delta)$-robust if  for all $\ell_1,\ell_2\in\mathcal{L}$, we have $ e_{\epsilon}(\ell_1,\ell_2)\leq \delta/|\mathcal{L}|$. This implies that whenever a 
label in an $(\epsilon,\delta)$-robust codebook is assigned to any input point that is subject to an  attack, this  is the same as the label assigned to the same input in the absence of the attack, with confidence at least $1-\delta$.

We can then define the $(\epsilon,\delta)$- robust capacity of the neural network  as the logarithm of the maximum number of labels that can be robustly classified with confidence $1-\delta$ in the presence of an  $\epsilon$-attack, namely
\begin{equation}
    C^{\delta}_{\epsilon}(\mathcal{D})=\sup_{\mathcal{L}\in\mathscr{L}^{\delta}_{\epsilon}(\mathcal{D})} \log_2(|\mathcal{L}|),
\end{equation}
where $\mathscr{L}^{\delta}_{\epsilon}(\mathcal{D})=\{\mathcal{L}: \mathcal{L} \mbox{ is } (\epsilon,\delta)\mbox{-robust}\}$. 
This capacity represents the largest amount of information that the labeling task of the neural network can convey, at a given level of confidence, under a perturbation attack. This information is independent of whether the neural network classifies the input data correctly or not.

For $\epsilon=0$, for any two labels $\ell_1,\ell_2\in\mathscr{L}$, we have that $\mathcal{D}(\ell_1)=S_{\epsilon}(\ell_1)$ and $\mathcal{P}_{\epsilon}(\ell_1,\ell_2)=\mathcal{D}(\ell_1)$, which implies that 
\begin{equation}
    \mathcal{P}_{\epsilon}(\ell_1,\ell_2)\cap \mathcal{P}_{\epsilon}(\ell_2,\ell_1)=\emptyset,
\end{equation}
and the capacity $C^{\delta}_{\epsilon}(\mathcal{D})=\log_2(L)$, regardless of the value of $\delta$. This means that in the absence an attack, the amount of information conveyed by the network is simply the logarithm of the number of labels it classifies the data into.

The framework described above has been studied in the special case of $\delta=0$ and for a single input data point $x_0$ in \cite{weng2018evaluating}. 
Our $(\epsilon,\delta)$- robust capacity generalizes the notion of robustness
from a single point $x_0$ to the whole data set $\mathcal{D}$ and can quantify   the overall robustness of a neural network. 


\subsection{Performance of Classification Systems}
The non-stochastic $(N,\delta)$ capacity can also be used as a performance measure of classification systems operating on a given data set. Consider a system trained to classify the incoming data among $L$ possible labels in the set $\mathscr{L}=\{1,2,\ldots,L\}$. Let $x_0$ be an input data point, and $c(x_0)$ be the label assigned by the neural network to $x_0$. 
{ For a given data set $\mathcal{D}$, let $N(\ell)\subseteq \{1,2,\ldots,L\}$ 
denote the subset of labels such that
\begin{equation}
\begin{split}
      N(\ell)=&\{\ell^\prime\in\mathscr{L}: \mbox{ there exists a data point } x_0\in \mathcal{D} \\
      &  \mbox{ such that }\mbox{the correct label of $x_0$ is $\ell$ and }\\
      &c(x_0)=\ell^\prime\}.
\end{split}
\end{equation}
  If $\ell\in N(\ell)$, then  there exists a data point that is correctly classified as $\ell$. If $\ell' \in N(\ell)$ such that $\ell' \neq \ell$, then there exists a data point that is incorrectly classified  as $\ell^\prime$, and the correct label of this data point is $\ell$.}

For any two labels $\ell_1,\ell_2\in\mathscr{L}$, we have the equivocation region   $N(\ell_1)\cap N(\ell_2)$, and we let the error
\begin{equation}
    e_{N}(\ell_1,\ell_2)= \frac{|N(\ell_1)\cap N(\ell_2)|}{|\mathscr{L}|}.
\end{equation}
We say that a subset of labels $\mathcal{L}\subseteq\mathscr{L}$ is $(N,\delta)$-classifiable if for all $\ell_1,\ell_2\in \mathcal{L}$, we have $e_{N}(\ell_1,\ell_2)\leq \delta/|\mathcal{L}|$, and the $(N,\delta)$-capacity of the classification system is 
\begin{equation}
    C_{N}^{\delta}=\sup_{\mathcal{L}\in \mathscr{L}^{\delta}_{N}}\log_2(|\mathcal{L}|),
\end{equation}
where $\mathscr{L}^{\delta}_{N}=\{\mathcal{L}:\mathcal{L}\mbox{ is   }(N,\delta)\mbox{-classifiable}\}$. Given the set of labels, this capacity quantifies the amount of information that  the classifier is able to extract from a given   data set, in terms of the logarithm of the largest number of labels that can be identified with  confidence greater than $1-\delta$. In contrast to the robust capacity described in Section~\ref{sec:robustness}, here the capacity refers to the ability of the network to perform the classification task \emph{correctly} in the absence of an attack, rather than to its ability of performing classification \emph{consistently} (but not necessarily correctly) in the presence of an attack.

\section{Conclusion}
In this paper, we presented   a non-stochastic theory of information  that is based on a  notion of information with  worst-case confidence that is independent of   stochastic modeling assumptions. Using the non-stochastic variables  framework of Nair~\cite{nair2012nonstochastic}, we showed that the capacity of several channel models  equals the largest amount of information conveyed by the transmitter to the receiver, with a given level of confidence. These results are the natural generalization of Nair's ones, obtained in a zero-error framework, and provide an information-theoretic interpretation for the geometric problem of sphere packing with overlap, studied by Lim and Franceschetti~\cite{lim2017information}.  More generally, they show that the path laid by Shannon can be extended to a non-stochastic setting, which is an idea that dates back to Kolmogorov~\cite{kolmogorov1956certain}. 

Non-stochastic approaches to information, and their usage to quantify the performance of engineering systems   have recently received   attention  in the context of  estimation,   control, security, communication over non-linear optical channels, and learning systems~\cite{saberi2018estimation,saberi2019state,wiese2016uncertain,borujeny2020signal,weng2018evaluating,verma2019error,ferng2011multi}.  We  hope that the theory developed here can be useful in some of these contexts. To this end, we pointed out some possible applications in the context of classification systems and communication over adversarial channels. 


While refinements and extension of the theory are certainly of interest, further exploration of application domains is of paramount importance. There is evidence in the literature for the need of a non-stochastic approach to study the flow of information in complex systems, and there is   a certain tradition in computer science and especially in the field of  online learning to study various problems in both a stochastic and a non-stochastic   setting \cite{auer2002nonstochastic,agrawal1995sample, rangi2019unifying,rangi2019online,rangi2018multi}. Nevertheless, it seems that only a few isolated efforts have been made  towards the formal development of a non-stochastic information theory. A wider involvement of the community in developing alternative, even competing, theories is certainly advisable to eventually fulfill  the  need of several application areas.

 \bibliographystyle{IEEEtran}
\bibliography{nonStochatsicInfor}
\appendix
\subsection{Proof of Lemma \ref{lemma:Dissassociation}}\label{sec:Lemma1}
\begin{proof}
Let   $(X, Y) \stackrel{a}{ \leftrightarrow } (\delta_{1},\delta_{2})$. Then,
\begin{equation}\label{eq:Lemma1.1}
 \mathscr{A}(X;Y)\preceq\delta_{1},
\end{equation}
\begin{equation}\label{eq:Lemma1.2}
 \mathscr{A}(Y;X)\preceq \delta_{2}.
\end{equation}
Let 
\begin{equation}
    \mathscr{S}_{1}=\bigg\{(y_{1},y_{2}):\frac{m_{\mathscr{X}}(\llbracket X|y_{1}\rrbracket\cap \llbracket X|y_{2}\rrbracket|)}{m_{\mathscr{X}}(\llbracket X\rrbracket)}=0\bigg\}.
\end{equation}
 Then, for all $(y_{1},y_2)\in\mathscr{S}_{1}$, we have
\begin{equation}\label{eq:Lemma1.3}
\frac{m_{\mathscr{X}}(\llbracket X|y_{1}\rrbracket\cap \llbracket X|y_{2}\rrbracket|)}{m_{\mathscr{X}}(\llbracket X\rrbracket)}=0\leq \delta_{1}.
\end{equation}
Also, if $(y_1,y_2)\in \mathscr{S}_{1}$, then 
\begin{equation}
    \frac{m_{\mathscr{X}}(\llbracket X|y_{1}\rrbracket\cap \llbracket X|y_{2}\rrbracket|)}{m_{\mathscr{X}}(\llbracket X\rrbracket)}\notin \mathscr{A}(X;Y),
\end{equation}
and if $(y_1,y_2)\notin \mathscr{S}_{1}$, then using \eqref{ass:finiteness}, we have
\begin{equation}
    \frac{m_{\mathscr{X}}(\llbracket X|y_{1}\rrbracket\cap \llbracket X|y_{2}\rrbracket|)}{m_{\mathscr{X}}(\llbracket X\rrbracket)}\in \mathscr{A}(X;Y).
\end{equation}
 This  along with (\ref{eq:Lemma1.1}) and (\ref{eq:Lemma1.3}) implies that (\ref{eq:DissX1}) follows.

Likewise, let
\begin{equation}
\mathscr{S}_{2}=\bigg\{(x_{1},x_{2}): \frac{m_{\mathscr{Y}}(\llbracket Y|x_{1}\rrbracket\cap \llbracket Y|x_{2}\rrbracket|)}{m_{\mathscr{Y}}(\llbracket Y\rrbracket)}=0\bigg\}.    
\end{equation}
  Then, for all $(x_{1},x_2)\in\mathscr{S}_{2}$,
\begin{equation}\label{eq:Lemma1.4}
\frac{m_{\mathscr{Y}}(\llbracket Y|x_{1}\rrbracket\cap \llbracket Y|x_{2}\rrbracket|)}{m_{\mathscr{Y}}(\llbracket Y\rrbracket)}=0\leq \delta_{2}.
\end{equation}
Also, if $(x_1,x_2)\in \mathscr{S}_{2}$, then 
\begin{equation}
    \frac{m_{\mathscr{Y}}(\llbracket Y|x_{1}\rrbracket\cap \llbracket Y|x_{2}\rrbracket|)}{m_{\mathscr{Y}}(\llbracket Y\rrbracket)}\notin \mathscr{A}(Y;X),
\end{equation}
and if $(y_1,y_2)\notin \mathscr{S}_{2}$, then using \eqref{ass:finiteness}, we have
\begin{equation}
    \frac{m_{\mathscr{Y}}(\llbracket Y|x_{1}\rrbracket\cap \llbracket Y|x_{2}\rrbracket|)}{m_{\mathscr{Y}}(\llbracket Y\rrbracket)}\in \mathscr{A}(Y;X).
\end{equation}
This  along with (\ref{eq:Lemma1.2}) and (\ref{eq:Lemma1.4}) implies that (\ref{eq:DissX}) follows.

Now, we prove the opposite direction of the statement.   Given that for all $y_{1},y_{2}\in \llbracket Y\rrbracket$, we have
\begin{equation}\label{eq:Lemma1.5}
\frac{m_{\mathscr{X}}(\llbracket X|y_{1}\rrbracket\cap \llbracket X|y_{2}\rrbracket|)}{m_{\mathscr{X}}(\llbracket X\rrbracket)}\leq \delta_{1},
\end{equation}
and for all  $x_{1},x_{2}\in \llbracket X\rrbracket$, we have
\begin{equation}\label{eq:Lemma1.6}
\frac{m_{\mathscr{Y}}(\llbracket Y|x_{1}\rrbracket\cap \llbracket Y|x_{2}\rrbracket|)}{m_{\mathscr{Y}}(\llbracket Y\rrbracket)}\leq \delta_{2}.
\end{equation}
Then, using the definition of $\mathscr{A}(X;Y)$ and $\mathscr{A}(Y;X)$, we have
\begin{equation}\label{eq:Lemma1.7}
 \mathscr{A}(X;Y)\preceq\delta_{1},
\end{equation}
\begin{equation}\label{eq:Lemma1.8}
 \mathscr{A}(Y;X)\preceq \delta_{2}.
\end{equation}
The statement of the lemma follows.
\end{proof}
\subsection{Proof of Theorem \ref{thm:relBwRatenInfoRate}}\label{sec:Theorem9}
\begin{proof} We will show \eqref{eq:optimizationAtEachTimeN}.  Then, using Lemma \ref{lemma:cardinalityAssociation} in   Appendix \ref{sec:AuxResult}, 
\eqref{eq:RateNInfoRate}  follows using the same argument as in the proof of Theorem \ref{cor3}.

We proceed in three steps. 
First, we show that for all $n>0$, there exists an UV $X(1:n)$ and $\tilde{\delta}\leq \delta_n/m_{\mathscr{Y}}(\llbracket Y(1:n)\rrbracket)$ such that ${X}(1:n)\in\mathscr{F}_{\tilde{\delta}}(n)$, which implies $\mathscr{F}_{\tilde{\delta}}(n)$ is not empty, so that the supremum is well defined.
Second, for all $n>0$, and $X(1:n)$ and $\tilde{\delta}$ such that 
\begin{equation}
    X(1:n)\in \mathscr{F}_{\tilde \delta}(n),
\end{equation}
and
\begin{equation}
    \tilde{\delta}\leq \delta_n/m_{\mathscr{Y}}(\llbracket Y(1:n)\rrbracket),
\end{equation}
we show that 
\begin{equation*}
\frac{I_{\tilde{\delta}/|\llbracket X(1:n)\rrbracket|}(Y(1:n);X(1:n))}{n}   \leq R_{\delta_n}.
\end{equation*}
Finally,  for all $n>0$, we show the existence of  $X(1:n)\in  \mathscr{F}_{\tilde{\delta}}(n)$ and $\tilde\delta\leq\delta_n/m_{\mathscr{Y}}(\llbracket Y(1:n)\rrbracket)$ such that 
\begin{equation}
    \frac{I_{\tilde{\delta}/|\llbracket X(1:n)\rrbracket|}(Y(1:n);X(1:n))}{n} =R_{\delta_n}.
\end{equation}

Let us begin with the first step. Consider a point $x(1:n)\in\mathscr{X}^n$. Let $X(1:n)$ be a UV such that 
\begin{equation}\label{eq:Xval_1}
    \llbracket X(1:n)\rrbracket=\{x(1:n)\}.
\end{equation}
Then, we have that the marginal range of the UV $Y(1:n)$ corresponding to the received variable
is 
\begin{equation}\label{eq:Yval_1}
    \llbracket Y(1:n)\rrbracket=\llbracket Y(1:n)|x(1:n)\rrbracket,
\end{equation}
and therefore for all $y(1:n)\in\llbracket Y(1:n)\rrbracket$, we have 
\begin{equation}\label{eq:Yval_1.1}
    \llbracket X(1:n)|y(1:n)\rrbracket=\{x(1:n)\}.
\end{equation}
Using Definition \ref{def:AssSets} and \eqref{eq:Xval_1}, we have that
\begin{equation}\label{eq:Xsetdef_1}
    \mathscr{A}(Y(1:n);X(1:n))=\emptyset,
\end{equation}
because  $\llbracket X(1:n)\rrbracket$ consists of a single point, and therefore the set in \eqref{eqdef1} is empty. 

On the other hand, using Definition \ref{def:AssSets} and \eqref{eq:Yval_1.1}, we have
\begin{equation}\label{eq:Ysetdef_1}
    \mathscr{A}(X(1:n);Y(1:n))
    =\begin{cases}
    \{1\}  \mbox{ if } \exists y_1(1:n), y_2(1:n) \\
    \qquad \qquad \in \llbracket Y(1:n)\rrbracket,\\
    \emptyset \quad\mbox{ otherwise}.
    \end{cases}
\end{equation}
Using \eqref{eq:Xsetdef_1} and since $\mathscr{A}\preceq \delta$ holds for $\mathscr{A} = \emptyset$, we have 
\begin{equation}\label{eq:cond1_1_1}
    \mathscr{A}(Y(1:n);X(1:n))\preceq \delta_n/(|\llbracket X(1:n)\rrbracket|m_{\mathscr{Y}}(\llbracket Y(1:n)\rrbracket)).
\end{equation}
Similarly, using \eqref{eq:Ysetdef_1}   we have 
\begin{equation}\label{eq:cond1_2_1}
    \mathscr{A}(X(1:n);Y(1:n))\preceq 1.
\end{equation}
Now, combining \eqref{eq:cond1_1_1}  and \eqref{eq:cond1_2_1}, we have 
\begin{equation}
    (X(1:n),Y(1:n))\stackrel{a}{\leftrightarrow}(1,\delta_n/(|\llbracket X(1:n)\rrbracket|m_{\mathscr{Y}}(\llbracket Y(1:n)\rrbracket))).
\end{equation}
Letting $\tilde{\delta}=\delta_n/m_{\mathscr{Y}}(\llbracket Y(1:n)\rrbracket)$, this implies that  $X(1:n)\in \mathscr{F}_{\tilde\delta}(n)$ and the first step of the proof is complete. 

To prove the second step, we define 
\begin{equation}
\begin{split}
    \mathscr{G}(n)=&\bigg\{X(1:n):\llbracket X(1:n)\rrbracket\subseteq\mathscr{X}^n,\\
    &\exists \tilde{\delta}\leq \delta_n/m_{\mathscr{Y}}(\llbracket Y(1:n)\rrbracket) \mbox{ such that } \\
    &\forall \mathscr{S}_{1},\mathscr{S}_{2}\in \llbracket Y(1:n)|X(1:n)\rrbracket,\\
    &  \frac{m_{\mathscr{Y}}(\mathscr{S}_{1}\cap \mathscr{S}_{2})}{m_{\mathscr{Y}}(\llbracket Y(1:n)\rrbracket)}\leq \frac{\tilde{\delta}}{|\llbracket X(1:n)\rrbracket|}\bigg\},
\end{split}
\end{equation}
which is a larger set than the one containing all UVs $X(1:n)$ that are $(1,\tilde{\delta}/|\llbracket X(1:n)\rrbracket|)$ associated to $Y(1:n)$. Similar to \eqref{eq:TwoSetsEqual_1}, it can be shown that 
\begin{equation}\label{eq:TwoSetsEqual2}
  X(1:n)\in \mathscr{G}(n) \implies\mathcal{X}(1:n)\in \mathscr{X}_{N}^{\delta_n}(n)
\end{equation}
Consider now a pair   $X(1:n)$ and $\tilde{\delta}$ such that $\tilde{\delta}\leq\delta_n/m_{\mathscr{Y}}(\llbracket Y(1:n)\rrbracket)$, and
\begin{equation}\label{eq:setin1}
     X(1:n) \in \mathscr{F}_{\tilde{\delta}}(n).
\end{equation}
If $(X(1:n),Y(1:n))\stackrel{d}{\leftrightarrow}(0, \tilde{\delta}/|\llbracket X(1:n)\rrbracket|)$, then  using Lemma \ref{lemma:equivalenceAssDiss} in   Appendix \ref{sec:AuxResult}, there exist  UVs $\bar{X}(1:n)$ and $\bar{Y}(1:n)$ and $\bar{\delta}\leq \delta_n/m_{\mathscr{Y}}(\llbracket\bar{Y}(1:n)\rrbracket)$ such that
\begin{equation}\label{eq:chec12}
    (\bar X(1:n), \Bar{Y}(1:n))  
\stackrel{a} \leftrightarrow (1,\bar\delta/|\llbracket\bar{X}(1:n)\rrbracket|),
\end{equation}  and
\begin{equation}\label{eq:chec13}
\begin{split}
     &|\llbracket Y(1:n)|X(1:n)\rrbracket^*_{\tilde\delta/|\llbracket X(1:n)\rrbracket|}|\\
     &= |\llbracket \Bar Y(1:n)|\bar{X}(1:n) \rrbracket_{\bar\delta/|\llbracket\bar{X}(1:n)\rrbracket|}^*|.
\end{split}
\end{equation}
On the other hand, if $ ( X(1:n), {Y}(1:n))  
\stackrel{a} \leftrightarrow (1,\tilde\delta/|\llbracket{X}(1:n)\rrbracket|)$, then \eqref{eq:chec12} and \eqref{eq:chec13} also trivially hold.
It then follows that \eqref{eq:chec12} and \eqref{eq:chec13} hold for all $X(1:n)\in \mathscr{F}_{\tilde{\delta}}(n)$.
We now have
\begin{equation}
\begin{split}
&I_{\tilde{\delta}/|\llbracket X(1:n)\rrbracket|}(Y(1:n);X(1:n))\\
&=\log(|\llbracket Y(1:n)|X(1:n)\rrbracket_{\tilde{\delta}/|\llbracket X(1:n)\rrbracket|}^*|)\\
&\stackrel{(a)}{=}\log(|\llbracket \bar Y(1:n)|\bar X(1:n) \rrbracket_{\bar{\delta}/|\llbracket \bar X(1:n)\rrbracket|}^*|)\\
&\stackrel{(b)}{\leq} \log(|\llbracket \bar X(1:n)\rrbracket|),\\
&\stackrel{(c)}{=} \log(|\bar{\mathcal{X}}(1:n)|),\\
&\stackrel{(d)}{\leq}n R_{\delta_n},
\end{split}
\end{equation}
where $(a)$ follows from  \eqref{eq:chec12} and \eqref{eq:chec13}, $(b)$ follows from Lemma \ref{lemma:RelationToCardinality} in   Appendix \ref{sec:AuxResult} since $\bar{\delta}\leq\delta_n/m_{\mathscr{Y}}(\llbracket \bar Y(1:n)\rrbracket) < m_{\mathscr{Y}}(V^n_{N})/m_{\mathscr{Y}}(\llbracket \bar Y(1:n)\rrbracket)$, $(c)$ follows by defining the codebook $\bar{\mathcal{X}}(1:n)$ corresponding to the UV $\llbracket \bar X(1:n) \rrbracket$ , and $(d)$ follows from the fact that using \eqref{eq:chec12} and Lemma \ref{lemma:Dissassociation}, we have  $\bar{X}(1:n)\in \mathcal{G}(n)$, which implies by  \eqref{eq:TwoSetsEqual2} that $\bar{\mathcal{X}}(1:n)\in\mathscr{X}_{N}^{\delta_n}(n)$.

For any $n \in \mathbb{Z}_{>0}$,  let 
\begin{equation}
\mathcal{X}_n^* =\mbox{argsup}_{\mathcal{X}_n\in\mathscr{X}^{\delta_n}_{N}(n)}\frac{\log(|\mathcal{X}_n|)}{n},    
\end{equation}
which  achieves the rate $R_{\delta_n}$. 
Let ${{X}}^*$ be the UV whose marginal range corresponds to the codebook $\mathcal{{X}}_n^*$.
It follows that for all $\mathscr{S}_{1},\mathscr{S}_{1}\in \llbracket Y^*|{X}^*\rrbracket$, we have 
\begin{equation}
    \frac{m_{\mathscr{Y}}(\mathscr{S}_{1}\cap \mathscr{S}_{1})}{m_{\mathscr{Y}}(\mathscr{Y}^n)}\leq \frac{\delta_n}{|\llbracket{X}^*\rrbracket|},
\end{equation}
which implies using the fact that $m_{\mathscr{Y}}(\mathscr{Y}^n)=1$,
\begin{equation}
    \frac{m_{\mathscr{Y}}(\mathscr{S}_{1}\cap \mathscr{S}_{1})}{m_{\mathscr{Y}}(\llbracket Y^*\rrbracket)}\leq \frac{\delta_n}{(|\llbracket{X}^*\rrbracket| m_{\mathscr{Y}}(\llbracket Y^*\rrbracket))}.
\end{equation}
Letting $\delta^*=\delta_n/m_{\mathscr{Y}}(\llbracket Y^*\rrbracket)$, and
using Lemma \ref{lemma:Dissassociation},  we have that $(X^*,{Y}^*) \stackrel{a}\leftrightarrow (1,\delta^*/|\llbracket X^*\rrbracket|)$, which implies
\begin{equation}
{X}^*\in  \cup_{\tilde{\delta}\leq\delta_n/m_{\mathscr{Y}}(\llbracket Y^*\rrbracket)}\mathscr{F}_{\tilde{\delta}}(n),
\end{equation}
and \eqref{eq:optimizationAtEachTimeN} follows.
\end{proof}
\subsection{Proof of Lemma \ref{thm:tensorization}}\label{appendix:lemma2Proof}
 \begin{proof} 
Let us begin with part $1)$. 
We have
\begin{equation}\label{eq:reference2}
\begin{split}
      &\llbracket Y(1:n)\rrbracket\\
      &\stackrel{}{=} \cup_{x(1:n)\in \llbracket X(1:n)\rrbracket}\llbracket Y(1:n)|x(1:n)\rrbracket,\\
      &\stackrel{(a)}{=}\cup_{x(1:n)\in \llbracket X(1:n)\rrbracket} \llbracket Y(1)|x(1)\rrbracket\times\ldots \llbracket Y(n)|x(n)\rrbracket,\\
      &\stackrel{(b)}{=}\cup_{x(1)\in \llbracket X(1)\rrbracket}\llbracket Y(1)|x(1)\rrbracket\times\\
      &\quad\ldots \times\cup_{x(n)\in \llbracket X(n)\rrbracket}\llbracket Y(n)|x(n)\rrbracket,\\
      &=\llbracket Y(1)\rrbracket\times \llbracket Y(2)\rrbracket\times \ldots \times\llbracket Y(n)\rrbracket,\\
\end{split}
\end{equation}
where 
$(a)$ follows from \eqref{eq:reference3}, and $(b)$ follows from \eqref{eq:reference1}.
Now, we have 
\begin{equation}
\begin{split}
\bigcup_{\mathscr{S}\in \prod_{i=1}^{n}\llbracket Y(i)|X(i)\rrbracket^*_{\delta}} \mathscr{S}&
\stackrel{(a)}{=} \prod_{i=1}^{n} (\bigcup_{\mathscr{S}\in \llbracket Y(i)|X(i)\rrbracket^*_{\delta}} \mathscr{S} ),\\
&\stackrel{(b)}{=} \prod_{i=1}^{n} \llbracket Y(i)\rrbracket,\\
&\stackrel{(c)}{=}  \llbracket Y(1:n)\rrbracket,
\end{split}
\end{equation}
where $(a)$ follows from the fact that the cartesian product is distributive over union, namely
\begin{equation}
    \cup_{(i,j)\in I\times J} A_i\times B_j = (\cup_{i\in I} A_i)\times (\cup_{j\in J} B_j),
\end{equation}
$(b)$ follows from the fact that  for all $1\leq i\leq n$, $\llbracket Y(i)|X(i)\rrbracket^*_{\delta}$  is a covering of $\llbracket Y(i)\rrbracket$ by Definition \ref{defoverlap}, and $(c)$ follows from \eqref{eq:reference2}. Hence, part $1)$ follows.


Now, we prove part $2)$. Here, we will first show that for all $\mathscr{S}\in \prod_{i=1}^{n}\llbracket Y(i)|X(i)\rrbracket^*_{\delta}$, we have that $\mathscr{S}$ is $\delta^n$-connected. Second, we will show that $\mathscr{S}$ contains at least one singly $\delta^n$-connected set.

Let us begin with the first step of part $2)$. Consider a set $\mathscr{S}\in \prod_{i=1}^{n}\llbracket Y(i)|X(i)\rrbracket^*_{\delta}$. Then, there exists a sequence $\{\mathscr S_i\}_{i=1}^n$ such that 
\begin{equation}\label{eq:Subset2.0}
  \mathscr S=\mathscr S_1\times \mathscr S_2 \times \ldots \times\mathscr S_n, 
\end{equation}
and for all $1\leq i\leq n$,
\begin{equation}\label{eq:exist2}
    \mathscr S_i\in \llbracket Y(i)|X(i)\rrbracket^*_{\delta}.
\end{equation}
Now, consider two points $y_1(1:n),y_2(1:n)\in \mathscr S$. Then, using \eqref{eq:reference3}, \eqref{eq:reference2} and \eqref{eq:Subset2.0}, for all $1\leq i\leq n$, we have that 
\begin{equation}
    y_1(i),y_2(i)\in \mathscr S_i.
\end{equation}
Also, since $\mathscr S_i$ is $\delta$-connected using \eqref{eq:exist2} and Property 1 of Definition \ref{defoverlap}, we have
\begin{equation}
    y_1(i)\stackrel{\delta}{\leftrightsquigarrow}y_2(i),
\end{equation}
 namely there exists a sequence $\{\llbracket Y(i)|x_{k}(i)\rrbracket\}_{k=1}^{N(i)}$  such that  
 \begin{equation}\label{eq:insideOverlap}
    y_1(i)\in  \llbracket Y(i)|x_{1}(i)\rrbracket,    y_2(i)\in  \llbracket Y(i)|x_{N(i)}(i)\rrbracket,
 \end{equation}
 and 
 for all $1\leq k< N(i)$,
\begin{equation}\label{eq:consecutiveintersection}
    \frac{m_{\mathscr{Y}}(\llbracket Y(i)|x_{k}(i)\rrbracket\cap \llbracket Y(i)|x_{k+1}(i)\rrbracket)}{m_{\mathscr{Y}}(\llbracket Y(i)\rrbracket)}>\delta.
\end{equation}
Without loss of generality, let 
\begin{equation}
    N(1)\leq N(2)\leq \ldots\leq N(n).
\end{equation}
Now, consider the following sequence of conditional ranges
\begin{equation}\label{eq:constructionOfPath}
\begin{split}
    &\llbracket Y(1:n)|x_{1}(1),x_{1}(2)\ldots x_{1}(n)\rrbracket,\\
    &\llbracket Y(1:n)|x_{2}(1),x_{2}(2)\ldots x_{2}(n)\rrbracket,\\
    & \ldots \\
    &\llbracket Y(1:n)|x_{N(1)}(1),x_{N(1)}(2)\ldots x_{N(1)}(n)\rrbracket,\\
    &\llbracket Y(1:n)|x_{N(1)}(1),x_{N(1)+1}(2)\ldots x_{N(1)+1}(n)\rrbracket,\\
    & \ldots \\
    &\llbracket Y(1:n)|x_{N(1)}(1),x_{N(2)}(2)\ldots x_{N(n)}(n)\rrbracket.\\
\end{split}
\end{equation}
In this sequence, for all $1\leq k < N(n)$, if $x_k(i)=x_{k+1}(i)$, then we have
 \begin{equation}\label{eq:equalIf}
 \begin{split}
     &\frac{ m_{\mathscr{Y}}(\llbracket Y(i)|x_{k}(i)\rrbracket\cap \llbracket Y(i)|x_{k+1}(i)\rrbracket )}{m_{\mathscr{Y}}(\llbracket Y(i)\rrbracket)}\\
     &\stackrel{(a)}{=}\frac{ m_{\mathscr{Y}}(\llbracket Y(i)|x_{k}(i)\rrbracket )}{m_{\mathscr{Y}}(\llbracket Y(i)\rrbracket)},\\
     &\stackrel{(b)}{>} \frac{ \delta}{m_{\mathscr{Y}}(\llbracket Y(i)\rrbracket)},\\
     &\stackrel{(c)}{\geq}
     \frac{\delta}{m_{\mathscr{Y}}(\mathscr{Y})},\\
     &\stackrel{(d)}{>} \delta, 
 \end{split}
     \end{equation}
where $(a)$ follows from the fact that $x_k(i)=x_{k+1}(i)$, $(b)$ follows from \eqref{eq:rangeDeltaCheck}, $(c)$ follows from the fact  that $\llbracket Y(i)\rrbracket\subseteq\mathscr{Y}$ and \eqref{eq:strongtransitivity} holds, and $(d)$ follows from the fact that $m_{\mathscr{Y}}(\mathscr{Y})=1$. Additionally, in the sequence \eqref{eq:constructionOfPath}, for all $1\leq k < N(n)$, if $x_k(i)\neq x_{k+1}(i)$, then we have  that (\ref{eq:consecutiveintersection}) holds.  This along with  \eqref{eq:equalIf} implies that  for all $1\leq k < N(n)$, we have
\begin{equation}\label{eq:lowerBoundDElta}
    \frac{ m_{\mathscr{Y}}(\llbracket Y(i)|x_{k}(i)\rrbracket\cap \llbracket Y(i)|x_{k+1}(i)\rrbracket )}{m_{\mathscr{Y}}(\llbracket Y(i)\rrbracket)}>\delta.
\end{equation}

Now, using \eqref{eq:reference3} and \eqref{eq:insideOverlap}, we have 
\begin{equation}\label{eq:startingPoint}
    y_{1}(1:n)\in \llbracket Y(1:n)|x_{1}(1),\ldots x_{1}(n)\rrbracket,
\end{equation}
and 
\begin{equation}\label{eq:endingPoint}
    y_2(1:n)\in \llbracket Y(1:n)|x_{N(1)}(1),\ldots x_{N(n)}(n)\rrbracket.
\end{equation}
Also, for all $1\leq k< N(n)$, the uncertainty associated with the intersection of the two consecutive conditional ranges in the sequence \eqref{eq:constructionOfPath} is
{
\begin{equation}\label{eq:intersectionLowBound}
\begin{split}
      &\frac{m_{\mathscr{Y}}(\llbracket Y(1:n)|x_{k}(1:n)\rrbracket\cap \llbracket Y(1:n)|x_{k+1}(1:n)\rrbracket)}{m_{\mathscr{Y}}(\llbracket Y(1:n)\rrbracket)},\\
      &\stackrel{(a)}{=}\prod_{i=1}^{n}\frac{ m_{\mathscr{Y}}(\llbracket Y(i)|x_{k}(i)\rrbracket\cap \llbracket Y(i)|x_{k+1}(i)\rrbracket )}{m_{\mathscr{Y}}(\llbracket Y(i)\rrbracket)},\\
      &\stackrel{(b)}{>}\delta^n,
\end{split}
\end{equation} }
where $(a)$ follows from Assumption \ref{assumption:productrule}, \eqref{eq:reference3}, \eqref{eq:reference2}  and the fact that
\begin{equation}\label{eq:productrule}
    \prod_{i=1}^n \mathscr S_i\cap \prod_{i=1}^n \mathscr T_i=(\mathscr S_1\cap \mathscr T_1)\times \ldots\times (\mathscr S_n\cap \mathscr T_n),
\end{equation}
 $(b)$ follows from \eqref{eq:lowerBoundDElta}.
Hence, using \eqref{eq:startingPoint}, \eqref{eq:endingPoint} and \eqref{eq:intersectionLowBound}, we have
\begin{equation}
    y_{1}(1:n) \stackrel{\delta^n}{\leftrightsquigarrow} y_{2}(1:n).
\end{equation} Hence, $\mathscr S$ is $\delta^n$-connected. 

Now, let us prove the second step of part $2)$. For all $1\leq i\leq n$, since $\llbracket Y(i)|X(i)\rrbracket^*_{\delta}$ satisfies Property 1 of Definition \ref{defoverlap}, there exists an $x_i\in\llbracket X(i)\rrbracket$ such that 
\begin{equation}\label{eq:subset1}
\llbracket Y(i)|x_i\rrbracket\subseteq \mathscr S_i.
\end{equation}
Therefore, for $x(1:n)=[x_1,x_2,\ldots ,x_n]$, we have
\begin{equation}
\begin{split}
&\llbracket Y(1:n)|x(1:n)\rrbracket\\
&\stackrel{(a)}{=} \llbracket Y(1)|x(1)\rrbracket\times\ldots\times\llbracket Y(n)|x(n)\rrbracket,\\
&\stackrel{(b)}{=}\llbracket Y(1)|x_1\rrbracket\times\ldots\times\llbracket Y(n)|x_n\rrbracket,\\
&\stackrel{(c)}{\subseteq}\mathscr S_1\times \mathscr S_2 \times \ldots \mathscr S_n,\\
&\stackrel{(d)}{=} \mathscr S,
\end{split}
\end{equation}
where $(a)$ follows from  \eqref{eq:reference3}, $(b)$ follows from the fact that $x(1:n)=[x_1,x_2,\ldots ,x_n]$, $(c)$ follows from \eqref{eq:subset1}, and $(d)$ follows from \eqref{eq:Subset2.0}. Hence, $\mathscr{S}$ contains at least one singly $\delta^n$-connected set, which concludes the second step of part $2)$.   

Now, let us prove part $3)$. For all $1\leq i\leq n$,  
since $\llbracket Y(i)|X(i)\rrbracket^*_{\delta}$ satisfies Property 3 of Definition \ref{defoverlap}, for all $x(i)\in\llbracket X(i)\rrbracket$, 
there exist a set $\mathscr{S}(x(i))\in \llbracket Y(i)|X(i)\rrbracket^*_{\delta}$ such that 
\begin{equation}\label{eq:check13}
    \llbracket Y(i)|x(i)\rrbracket \subseteq \mathscr{S}(x(i)).
\end{equation}
Then    for all $x(1:n)\in\llbracket X(1:n)\rrbracket$, we have  
\begin{equation}
\begin{split}
      &\llbracket Y(1:n)|x(1:n)\rrbracket\\
      &\stackrel{(a)}{=} \llbracket Y(1)|x(1)\rrbracket\times\ldots\times\llbracket Y(n)|x(n)\rrbracket,\\
      &\stackrel{(b)}{\subseteq} \mathscr{S}(x(1))\times\ldots\times \mathscr{S}(x(n))\\
      &\in \prod_{i=1}^{n}\llbracket Y(i)|X(i)\rrbracket^*_{\delta},
\end{split}
\end{equation}
where $(a)$ follows from \eqref{eq:reference3}, and $(b)$ follows from \eqref{eq:check13}. 
Hence, part $3)$ follows.

Finally, let us prove part $4)$. Consider two distinct sets $\mathscr{S}_1,\mathscr{S}_2\in \prod_{i=1}^{n}\llbracket Y(i)|X(i)\rrbracket^*_{\delta}$. Then, we have
\begin{equation}\label{eq:def1.1}
    \mathscr{S}_1=\mathscr{S}_{11}\times \mathscr{S}_{12}\times\ldots \mathscr{S}_{1n},
\end{equation}
\begin{equation}\label{eq:def1.2}
    \mathscr S_2=\mathscr S_{21}\times \mathscr S_{22}\times\ldots \mathscr S_{2n},
\end{equation}
where for all $1\leq i\leq n$
\begin{equation}\label{eq:subs1}
    \mathscr S_{1i},\mathscr S_{2i}\in  \llbracket Y(i)|X(i)\rrbracket^*_{\delta}.
\end{equation}
 Since $\mathscr{S}_1\neq \mathscr{S}_2$, there exists   $1\leq i^*\leq n$ such that 
 \begin{equation}
     \mathscr{S}_{1i^*}\neq \mathscr{S}_{2i^*}.
 \end{equation}
  Then, by Property 2 of Definition \ref{defoverlap} and \eqref{eq:subs1}, we have   
\begin{equation}\label{eq:OneUpperBound}
    \frac{m_{\mathscr{Y}}(\mathscr S_{1i^*}\cap \mathscr S_{2i^*})}{m_{\mathscr{Y}}(\llbracket Y(i^*)\rrbracket)} \leq \delta.
\end{equation}
Also, using \eqref{eq:upperBoundDeltahat}, we have that for all $1\leq i\leq n$,
\begin{equation}\label{eq:TwoUpperBound}
    \frac{m_{\mathscr{Y}}(\mathscr{S}_{1i})}{m_{\mathscr{Y}}(\llbracket Y(i)\rrbracket)}\leq \hat{\delta}(n).
\end{equation}
Then, we have
\begin{equation}
\begin{split}
    &\frac{m_{\mathscr{Y}}(\mathscr{S}_1\cap \mathscr{S}_2)}{m_{\mathscr{Y}}(\llbracket Y(1:n)\rrbracket)}\\
    &\stackrel{(a)}{=}\frac{m_{\mathscr{Y}}((\mathscr S_{11}\times\ldots \mathscr S_{1n})\cap(\mathscr S_{21}\times\ldots \mathscr S_{2n}))}{m_{\mathscr{Y}}(\llbracket Y(1:n)\rrbracket)},\\
    &\stackrel{(b)}{=} \frac{m_{\mathscr{Y}}((\mathscr S_{11}\cap \mathscr S_{21})\times \ldots \times(\mathscr S_{1n}\cap \mathscr S_{2n}))}{m_{\mathscr{Y}}(\llbracket Y(1:n)\rrbracket)},\\
    &\stackrel{(c)}{=} \frac{m_{\mathscr{Y}}(\mathscr S_{11}\cap \mathscr S_{21})  \ldots  m_{\mathscr{Y}}(\mathscr S_{1n}\cap \mathscr S_{2n})}{\prod_{i=1}^n m_{\mathscr{Y}}(\llbracket Y(i)\rrbracket)},\\
    &\stackrel{(d)}{\leq}\frac{m_{\mathscr{Y}}(\mathscr S_{1i^*}\cap \mathscr S_{2i^*})}{m_{\mathscr{Y}}(\llbracket Y(i^*)\rrbracket)}\prod_{i\neq i^*}\frac{m_{\mathscr{Y}}(\mathscr{S}_{1i})}{m_{\mathscr{Y}}(\llbracket Y(i)\rrbracket)}\\
    &\stackrel{(e)}{\leq} \delta (\hat{\delta}(n))^{n-1},\\
\end{split}
\end{equation}
where $(a)$ follows from \eqref{eq:def1.1} and \eqref{eq:def1.2}, $(b)$ follows from the  fact that for all sequences of sets $\{\mathscr S_i\}_{i=1}^n$ and $\{\mathscr T_i\}_{i=1}^n$ , we have
\begin{equation}
    \prod_{i=1}^n \mathscr S_i\cap \prod_{i=1}^n \mathscr T_i=(\mathscr S_1\cap \mathscr T_1)\times \ldots\times(\mathscr S_n\cap \mathscr T_n),
\end{equation}
$(c)$ follows from  Assumption \ref{assumption:productrule} and \eqref{eq:reference2}, $(d)$ follows from \eqref{eq:strongtransitivity} and the fact that for all $1\leq i\leq n$, $\mathscr{S}_{1i}\cap \mathscr{S}_{2i}\subseteq \mathscr{S}_{1i}$, and $(e)$ follows from \eqref{eq:OneUpperBound} and \eqref{eq:TwoUpperBound}.
Hence, part $4)$ follows. 
\end{proof}
\subsection{Auxiliary Results}\label{sec:AuxResult}
\begin{lemma}\label{lemma:equivalenceAssDiss}
Given a $\delta<m_{\mathscr{Y}}(V_N)$, two UVs $X$ and $Y$ satisfying \eqref{uno} and \eqref{due}, and  
a  $\tilde{\delta}\leq\delta/m_{\mathscr{Y}}(\llbracket Y\rrbracket)$ such that 
\begin{equation}
    (X,Y)\stackrel{d}{\leftrightarrow}(0,\tilde\delta/|\llbracket X\rrbracket|).
\end{equation}
 Then, there exists  two UVs $\Bar{X}$  and $\Bar{Y}$ satisfying \eqref{uno} and \eqref{due}, and there exists a $\bar{\delta}\leq\delta/m_{\mathscr{Y}}(\llbracket\bar{Y}\rrbracket)$ such that 
\begin{equation}
    (\bar X, \Bar{Y})  
\stackrel{a} \leftrightarrow (1,\bar\delta/|\llbracket\bar{X}\rrbracket|),
\end{equation}  and
\begin{equation}
    |\llbracket Y|X\rrbracket^*_{\tilde\delta/|\llbracket X\rrbracket|}|= |\llbracket \Bar Y|\bar{X} \rrbracket_{\bar\delta/|\llbracket\bar{X}\rrbracket|}^*|.
\end{equation}
\end{lemma}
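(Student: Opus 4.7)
The plan is to pass from the disassociated pair $(X,Y)$ to an associated pair $(\bar X,\bar Y)$ by picking one representative codeword from each set of the overlap family, exactly in the spirit of the argument used in the proof of Theorem~\ref{cor} and the construction in Theorem~\ref{thm:channelCodingTheorem}. Concretely, since $(X,Y)\stackrel{d}{\leftrightarrow}(0,\tilde\delta/|\llbracket X\rrbracket|)$, I would invoke Theorem~\ref{lemma:overlap} to conclude that the overlap family $\llbracket Y|X\rrbracket^*_{\tilde\delta/|\llbracket X\rrbracket|}$ exists, is unique, and is in fact a partition of $\llbracket Y\rrbracket$ whose distinct elements $\mathscr{S}_1,\ldots,\mathscr{S}_K$ have zero-measure pairwise intersection. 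By Property~1 of Definition~\ref{defoverlap}, each $\mathscr{S}_i$ contains at least one singly $\tilde\delta/|\llbracket X\rrbracket|$-connected set of the form $\llbracket Y|x_i\rrbracket$; I would pick such representatives and define $\bar X$ by $\llbracket \bar X\rrbracket=\{x_1,\ldots,x_K\}$, with $\bar Y$ the UV at the receiver output when $\bar X$ is transmitted. Since the channel satisfies \eqref{uno}, we have $\llbracket \bar Y|x_i\rrbracket=\llbracket Y|x_i\rrbracket$ for each $i$.

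The key estimate is then straightforward: because $\llbracket Y|x_i\rrbracket\subseteq \mathscr{S}_i$ and the $\mathscr{S}_i$'s have zero-measure pairwise intersections, we get $m_{\mathscr{Y}}(\llbracket \bar Y|x_i\rrbracket\cap\llbracket\bar Y|x_j\rrbracket)=0$ for all $i\neq j$. By Definition~\ref{def:AssSets} this means the set $\mathscr{A}(\bar Y;\bar X)$ is empty, so $\mathscr{A}(\bar Y;\bar X)\preceq \bar\delta/|\llbracket\bar X\rrbracket|$ holds vacuously for any choice of $\bar\delta\geq 0$; together with the trivial bound $\mathscr{A}(\bar X;\bar Y)\preceq 1$, Lemma~\ref{lemma:Dissassociation} yields $(\bar X,\bar Y)\stackrel{a}{\leftrightarrow}(1,\bar\delta/|\llbracket\bar X\rrbracket|)$. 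To pick a specific $\bar\delta$ compatible with the lemma's hypothesis, I would set $\bar\delta=\tilde\delta\,m_{\mathscr{Y}}(\llbracket Y\rrbracket)/m_{\mathscr{Y}}(\llbracket\bar Y\rrbracket)$; since $\tilde\delta\leq \delta/m_{\mathscr{Y}}(\llbracket Y\rrbracket)$, this gives $\bar\delta\leq \delta/m_{\mathscr{Y}}(\llbracket\bar Y\rrbracket)$ as required, and $\bar\delta<m_{\mathscr{Y}}(V_\epsilon)/m_{\mathscr{Y}}(\llbracket\bar Y\rrbracket)$ since $\delta<m_{\mathscr{Y}}(V_\epsilon)$.

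For the cardinality equality, I would use Lemma~\ref{lemma:cardinalityAssociation} in Appendix~\ref{sec:AuxResult}: association at level $(1,\bar\delta/|\llbracket\bar X\rrbracket|)$ together with the smallness bound $\bar\delta<m_{\mathscr{Y}}(V_\epsilon)/m_{\mathscr{Y}}(\llbracket\bar Y\rrbracket)$ forces $|\llbracket \bar Y|\bar X\rrbracket^*_{\bar\delta/|\llbracket\bar X\rrbracket|}|=|\llbracket\bar X\rrbracket|=K$, which by construction equals $|\llbracket Y|X\rrbracket^*_{\tilde\delta/|\llbracket X\rrbracket|}|$. Combining these three pieces closes the proof.

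The main obstacles are essentially bookkeeping rather than substantive: one must verify that the representatives $x_i$ can indeed be chosen distinct (which follows from the fact that the $\mathscr{S}_i$ are distinct partition elements, hence each singly-connected set lies in exactly one of them), and one must carefully track the normalization so that the uncertainty function computed relative to $\llbracket\bar Y\rrbracket$ (rather than $\llbracket Y\rrbracket$) yields the correct $\bar\delta$ satisfying both the association bound and the hypothesis of Lemma~\ref{lemma:cardinalityAssociation}. The only slightly delicate step is confirming that $\bar\delta$ sits strictly below the threshold $m_{\mathscr{Y}}(V_\epsilon)/m_{\mathscr{Y}}(\llbracket\bar Y\rrbracket)$ needed to apply Lemma~\ref{lemma:cardinalityAssociation}, which is where the assumed strict inequality $\delta<m_{\mathscr{Y}}(V_N)$ is used.
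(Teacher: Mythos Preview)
Your proposal is correct and follows essentially the same approach as the paper: pick one representative $x_i$ from each set of the overlap family $\llbracket Y|X\rrbracket^*_{\tilde\delta/|\llbracket X\rrbracket|}$, define $\bar X$ accordingly, set $\bar\delta=\tilde\delta\,m_{\mathscr{Y}}(\llbracket Y\rrbracket)/m_{\mathscr{Y}}(\llbracket\bar Y\rrbracket)$, verify association via Lemma~\ref{lemma:Dissassociation}, and finish with Lemma~\ref{lemma:cardinalityAssociation}. The only minor variation is that you invoke the partition conclusion of Theorem~\ref{lemma:overlap} to force $m_{\mathscr{Y}}(\llbracket\bar Y|x_i\rrbracket\cap\llbracket\bar Y|x_j\rrbracket)=0$ directly, whereas the paper simply uses Property~2 of Definition~\ref{defoverlap} to bound this overlap by $(\tilde\delta/|\llbracket X\rrbracket|)\,m_{\mathscr{Y}}(\llbracket Y\rrbracket)$; both routes yield the needed association and the same final cardinality equality.
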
 
\begin{proof}
Let the cardinality
\begin{equation}\label{eq:count}
    |\llbracket Y|X\rrbracket^*_{\tilde\delta/|\llbracket X\rrbracket|}|=K. 
\end{equation}
By Property 1 of Definition \ref{defoverlap}, we have that 
for all $\mathscr{S}_{i}\in \llbracket Y|X\rrbracket^*_{\tilde\delta/|\llbracket X\rrbracket|}$, there exists a $x_{i}\in \llbracket X\rrbracket$ such that $\llbracket Y|x_{i}\rrbracket\subseteq \mathscr{S}_{i}$. 
Now, consider a new UV $\bar{X}$ whose marginal range is composed of $K$ elements of $\llbracket X\rrbracket$, namely
\begin{equation}\label{eq:NCodebook2}
    \llbracket\bar{X}\rrbracket=\{x_1,x_2,\ldots,x_K\}. 
\end{equation}
Let $\bar{Y}$  be the UV  corresponding to the received variable. 
Using the fact that for all $x\in\mathscr{X}$, we have $\llbracket \Bar{Y}|x\rrbracket=\llbracket{Y}|x\rrbracket$ since \eqref{uno} holds,   and using Property 2 of Definition~\ref{defoverlap}, for all $x,x^\prime\in\llbracket\bar{X}\rrbracket$, we have  
\begin{equation}
\begin{split}
     \frac{ m_{\mathscr{Y}}(\llbracket\bar{Y}|x\rrbracket\cap \llbracket\bar{Y}|x^\prime\rrbracket)}{m_{\mathscr{Y}}(\llbracket Y\rrbracket)}&\leq \frac{\tilde{\delta}}{|\llbracket X\rrbracket|},\\
      &\stackrel{(a)}{\leq} \frac{\tilde{\delta}}{|\llbracket\bar{X}\rrbracket|},
\end{split}
\end{equation}
where $(a)$ follows from the fact that $\llbracket\bar{X}\rrbracket\subseteq\llbracket X\rrbracket$ using \eqref{eq:NCodebook2}. Then, for all $x,x^\prime\in\llbracket\bar{X}\rrbracket$,  we have that
\begin{equation}
\begin{split}
      \frac{m_{\mathscr{Y}}(\llbracket\bar{Y}|x\rrbracket\cap \llbracket\bar{Y}|x^\prime\rrbracket)}{m_{\mathscr{Y}}(\llbracket \bar{Y}\rrbracket)}&\leq \frac{\tilde{\delta} m_{\mathscr{Y}}(\llbracket{Y}\rrbracket)}{|\llbracket X\rrbracket| m_{\mathscr{Y}}(\llbracket\bar{Y}\rrbracket)},\\
      &\stackrel{(a)}{\leq} \frac{\bar{\delta}}{|\llbracket\bar{X}\rrbracket|},
\end{split}
\end{equation}
where $\bar{\delta}=\tilde{\delta} m_{\mathscr{Y}}(\llbracket{Y}\rrbracket)/m_{\mathscr{Y}}(\llbracket\bar{Y}\rrbracket)$. Then, by Lemma \ref{lemma:Dissassociation} it follows that
\begin{equation}\label{eq:dis11}
    (\bar{X},\bar{Y})\stackrel{a}{\leftrightarrow}(1,\bar{\delta}/|\llbracket\bar{X}\rrbracket|).
\end{equation}
Since $\tilde{\delta}\leq\delta/m_{\mathscr{Y}}(\llbracket Y\rrbracket)$, we have
\begin{equation}\label{eq:dis21}
    \bar{\delta}\leq \delta/ m_{\mathscr{Y}}(\llbracket\bar{Y}\rrbracket)<m_{\mathscr{Y}}(V_\epsilon)/ m_{\mathscr{Y}}(\llbracket\bar{Y}\rrbracket).
\end{equation}
Using  \eqref{eq:dis11} and \eqref{eq:dis21}, we now have that 
 \begin{equation}
 \begin{split}
     |\llbracket \bar Y|\bar{X}\rrbracket^{*}_{\bar{\delta}/|\llbracket \bar X\rrbracket|}| &\stackrel{(a)}{=}|\llbracket\bar{X}\rrbracket| \\
&     \stackrel{(b)}{=}  | \llbracket Y|X\rrbracket_{{\tilde{\delta}}/|\llbracket X\rrbracket|}^{*}|,
 \end{split}
 \end{equation}
 where $(a)$ follows from Lemma 
 \ref{lemma:cardinalityAssociation} in   Appendix \ref{sec:AuxResult}, and $(b)$ follows from \eqref{eq:count} and \eqref{eq:NCodebook2}.
 Hence, the statement of the lemma follows.
\end{proof}
\begin{lemma}\label{lemma:commutativeProperty}
Let  
 \begin{equation}
 ( X, Y)\stackrel{d}{\leftrightarrow}(\delta,\delta_{2}).    
 \end{equation}
If $x\stackrel{\delta}{\leftrightsquigarrow} x_{1}$ and  $x\stackrel{\delta}{\leftrightsquigarrow}x_{2}$, then we have that $x_{1}\stackrel{\delta}{\leftrightsquigarrow} x_{2}$.
\end{lemma}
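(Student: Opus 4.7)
The plan is to prove this transitivity-like property by explicitly constructing a $\delta$-connecting sequence from $x_1$ to $x_2$ obtained by concatenating a reversal of the witness sequence for $x \stackrel{\delta}{\leftrightsquigarrow} x_1$ with the witness sequence for $x \stackrel{\delta}{\leftrightsquigarrow} x_2$. First I would invoke Definition~\ref{defn:overlapCon} to extract a finite sequence $\{\llbracket X|y_i\rrbracket\}_{i=1}^{N_1}$ with $x_1 \in \llbracket X|y_1\rrbracket$, $x \in \llbracket X|y_{N_1}\rrbracket$, and a finite sequence $\{\llbracket X|y'_j\rrbracket\}_{j=1}^{N_2}$ with $x \in \llbracket X|y'_1\rrbracket$, $x_2 \in \llbracket X|y'_{N_2}\rrbracket$, both satisfying \eqref{eq:condtionConnectedness}. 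Stringing them together as $\llbracket X|y_1\rrbracket, \ldots, \llbracket X|y_{N_1}\rrbracket, \llbracket X|y'_1\rrbracket, \ldots, \llbracket X|y'_{N_2}\rrbracket$ gives a candidate sequence whose endpoints are $x_1$ and $x_2$, and whose internal overlap conditions are inherited except at the single ``junction'' pair $(\llbracket X|y_{N_1}\rrbracket, \llbracket X|y'_1\rrbracket)$.

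The main step will be verifying that the junction also satisfies the overlap inequality in \eqref{eq:condtionConnectedness}. Here I would split into two cases. If $\llbracket X|y_{N_1}\rrbracket = \llbracket X|y'_1\rrbracket$ (in particular if the labels $y_{N_1}$ and $y'_1$ coincide), I collapse the two into a single element of the sequence; no new overlap condition is introduced and the inherited ones suffice. Otherwise the two sets are distinct conditional ranges that both contain $x$, so their intersection is nonempty, and by the finiteness axiom~\eqref{ass:finiteness} we have $m_{\mathscr{X}}(\llbracket X|y_{N_1}\rrbracket \cap \llbracket X|y'_1\rrbracket) > 0$. The corresponding normalized overlap is therefore a nonzero element of $\mathscr{A}(X;Y)$, and the disassociation hypothesis $\mathscr{A}(X;Y) \succ \delta$ from \eqref{eq:AssocX} forces it to exceed $\delta$, which is exactly the condition \eqref{eq:condtionConnectedness} required for $\delta$-connectedness.

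The only real obstacle is ruling out the pathological possibility that the constructed sequence fails the overlap condition at the junction because the self-overlap of a repeated conditional range is not controlled by $\mathscr{A}(X;Y)$ (which excludes diagonal pairs by the convention that indexed elements are distinct). Handling this via the case split above is what makes the disassociation assumption essential: without $\mathscr{A}(X;Y) \succ \delta$, the intersection at the junction could have positive but arbitrarily small relative uncertainty, and the chaining would break down. Once both cases are handled, the concatenated (possibly collapsed) sequence witnesses $x_1 \stackrel{\delta}{\leftrightsquigarrow} x_2$ and the lemma follows.
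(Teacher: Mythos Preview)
Your proposal is correct and essentially identical to the paper's proof: both concatenate the two witnessing sequences (one reversed) and use the disassociation hypothesis $\mathscr{A}(X;Y)\succ\delta$ together with the finiteness axiom \eqref{ass:finiteness} to force the junction overlap to exceed $\delta$. Your explicit case split on whether the junction conditional ranges coincide is a minor refinement; the paper simply writes the concatenated sequence and invokes disassociation at the junction without separating out that degenerate case.
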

\begin{proof}
 Let $\{\llbracket X|{y}_{i}\rrbracket\}_{i=1}^{{N}}$ be the sequence of conditional range connecting $x$ and $x_1$. Likewise,  let $\{\llbracket X|\tilde{y}_{i}\rrbracket\}_{i=1}^{\tilde{N}}$ be the sequence of conditional range connecting $x$ and $x_2$.

 Now, by Definition \ref{defn:overlapCon}, we have
 \begin{equation}
     x_1\in  \llbracket X|{y}_N\rrbracket,
 \end{equation}
  \begin{equation}
     x_2\in  \llbracket X|\tilde{y}_{\tilde N}\rrbracket,
 \end{equation}
\begin{equation}
    x\in \llbracket X|{y}_1\rrbracket,
\end{equation}
and
\begin{equation}
    x\in \llbracket X|\tilde{y}_1\rrbracket
\end{equation}
Then, using \eqref{ass:finiteness}, we have that
\begin{equation}
    \frac{m_{\mathscr{X}}(\llbracket X|{y}_1\rrbracket\cap \llbracket X|\tilde{y}_1\rrbracket|}{m_{\mathscr{X}}(\llbracket X\rrbracket)}>0,
\end{equation}
 which implies that
  \begin{equation}\label{eq:associationSetIncl}
     \frac{m_{\mathscr{X}}(\llbracket X|{y}_1\rrbracket\cap \llbracket X|\tilde{y}_1\rrbracket|}{m_{\mathscr{X}}(\llbracket X\rrbracket)}\in \mathscr{A}(X;Y).
 \end{equation}

Using the fact that
\begin{equation}\label{eq:dissCondi}
    ( X, Y)\stackrel{d}{\leftrightarrow}(\delta,\delta_{2}),
\end{equation}
 we will now show that 
\begin{equation}\label{eq:sequence23}
    \{\llbracket X|{y}_{{N}}\rrbracket,\llbracket X|{y}_{{N}-1}\rrbracket,\ldots \llbracket X|{y}_{1}\rrbracket,\llbracket X|\tilde{y}_1\rrbracket,\ldots \llbracket X|\tilde{y}_{\tilde{N}}\rrbracket \},
\end{equation} is a sequence of conditional ranges connecting $x_1$ and $x_2$.  Using \eqref{eq:associationSetIncl} and \eqref{eq:dissCondi}, we have that
\begin{equation}\label{eq:condt1}
     \frac{m_{\mathscr{X}}(\llbracket X|{y}_1\rrbracket\cap \llbracket X|\tilde{y}_1\rrbracket|}{m_{\mathscr{X}}(\llbracket X\rrbracket)}>\delta. 
\end{equation}
Also, for all $1<i\leq N$ and $1<j\leq \tilde{N}$, we have
\begin{equation}\label{eq:condt2}
    \frac{m_{\mathscr{X}}(\llbracket X|{y}_i\rrbracket\cap \llbracket X|{y}_{i-1}\rrbracket|}{m_{\mathscr{X}}(\llbracket X\rrbracket)}>\delta,
\end{equation}
and 
\begin{equation}\label{eq:condt3}
    \frac{m_{\mathscr{X}}(\llbracket X|\tilde{y}_j\rrbracket\cap \llbracket X|\tilde{y}_{j-1}\rrbracket|}{m_{\mathscr{X}}(\llbracket X\rrbracket)}>\delta. 
\end{equation}
Also, we have 
\begin{equation}\label{eq:condt4}
    x_1\in \llbracket X|{y}_{{N}}\rrbracket, \mbox{ and } x_2\in \llbracket X|\tilde{y}_{\tilde{N}}\rrbracket.
\end{equation}
Hence, combining \eqref{eq:condt1}, \eqref{eq:condt2}, \eqref{eq:condt3} and \eqref{eq:condt4}, we have that $x_1\stackrel{\delta}{\leftrightsquigarrow}x_2$ via \eqref{eq:sequence23}.  
\end{proof}
\begin{lemma}\label{lemma:RelationToCardinality}
Consider two UVs $X$ and $Y$. Let
\begin{equation}\label{eq:defDeltat}
    \delta^*=\frac{\min_{y\in \llbracket Y\rrbracket}m_{\mathscr{X}}(\llbracket X|y\rrbracket)}{m_{\mathscr{X}}(\llbracket X\rrbracket)}.
\end{equation}
 If $\delta_1<\delta^*$, then we have 
  \begin{equation}
     |\llbracket X|Y\rrbracket_{\delta_1}^*|\leq|\llbracket Y\rrbracket|.
 \end{equation}
\end{lemma}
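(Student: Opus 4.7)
The plan is to construct an injection from the $\delta_1$-overlap family $\llbracket X|Y\rrbracket^*_{\delta_1}$ into the set $\llbracket Y\rrbracket$, which will immediately give the desired cardinality bound. For each set $\mathscr{S} \in \llbracket X|Y\rrbracket^*_{\delta_1}$, Property~1 of Definition~\ref{defoverlap} guarantees the existence of a point $y(\mathscr{S}) \in \llbracket Y \rrbracket$ such that the singly $\delta_1$-connected set $\llbracket X|y(\mathscr{S})\rrbracket$ is contained in $\mathscr{S}$. Define $\phi: \llbracket X|Y\rrbracket^*_{\delta_1} \to \llbracket Y\rrbracket$ by $\phi(\mathscr{S}) = y(\mathscr{S})$ (making an arbitrary choice when several $y$'s work).

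The main step is to verify that $\phi$ is injective, and this is where the hypothesis $\delta_1 < \delta^*$ enters. Suppose for contradiction that there exist distinct $\mathscr{S}_1,\mathscr{S}_2 \in \llbracket X|Y\rrbracket^*_{\delta_1}$ with $\phi(\mathscr{S}_1) = \phi(\mathscr{S}_2) = y$. Then $\llbracket X|y\rrbracket \subseteq \mathscr{S}_1 \cap \mathscr{S}_2$, so by strong transitivity \eqref{eq:strongtransitivity} we have
\begin{equation}
m_{\mathscr{X}}(\mathscr{S}_1 \cap \mathscr{S}_2) \geq m_{\mathscr{X}}(\llbracket X|y\rrbracket) \geq \delta^* \, m_{\mathscr{X}}(\llbracket X\rrbracket) > \delta_1 \, m_{\mathscr{X}}(\llbracket X\rrbracket),
\end{equation}
where the middle inequality is the definition \eqref{eq:defDeltat} of $\delta^*$ and the strict inequality is the standing hypothesis $\delta_1 < \delta^*$. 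This contradicts Property~2 of Definition~\ref{defoverlap}, which requires the overlap of two distinct sets in the family to be at most $\delta_1 \, m_{\mathscr{X}}(\llbracket X\rrbracket)$. Hence $\phi$ is injective.

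From the injectivity of $\phi$ we immediately obtain $|\llbracket X|Y\rrbracket^*_{\delta_1}| \leq |\llbracket Y\rrbracket|$, which is the claim of the lemma. I do not anticipate any substantive obstacle: the argument is a direct one-line application of the definition of the $\delta_1$-overlap family together with strong transitivity of the uncertainty function. The only subtlety to watch for is the convention that two conditional ranges $\llbracket X|y_1\rrbracket$ and $\llbracket X|y_2\rrbracket$ may coincide as sets for different $y_1 \neq y_2$; but this only helps, since it makes $|\llbracket Y\rrbracket|$ at least as large as the number of distinct conditional ranges, which is what the injection actually produces.
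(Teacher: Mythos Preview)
Your proof is correct and essentially the same as the paper's: both use Property~1 of Definition~\ref{defoverlap} to associate to each set in the family a $y$ with $\llbracket X|y\rrbracket\subseteq\mathscr{S}$, and then use strong transitivity together with the hypothesis $\delta_1<\delta^*$ to derive a contradiction with Property~2 if two distinct sets share the same $y$. The paper phrases this as a direct pigeonhole contradiction, while you make the injection explicit, but the argument is identical.
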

\begin{proof}
We will prove this by contradiction. Let
 \begin{equation}\label{eq:contrdict122}
     |\llbracket X|Y\rrbracket_{\delta_1}^*|>|\llbracket Y\rrbracket|.
 \end{equation}
 Then, by Property 1 of Definition \ref{defoverlap}, there exists two sets $\mathscr{S}_1,\mathscr{S}_2\in \llbracket X|Y\rrbracket_{\delta_1}^*$ and one singly $\delta_1$-connected set $\llbracket X|y\rrbracket$  such that 
 \begin{equation}\label{eq:reeff1}
     \llbracket X|y\rrbracket\subseteq\mathscr{S}_1, \mbox{ and } \llbracket X|y\rrbracket\subseteq\mathscr{S}_2.
 \end{equation}
 Then, we have 
 \begin{equation}\label{eq:contradict1}
\begin{split}
    \frac{m_{\mathscr{X}}(\mathscr{S}_1\cap \mathscr{S}_2)}{ m_{\mathscr{X}}(\llbracket X\rrbracket)}
    &\stackrel{(a)}{\geq}\frac{m_{\mathscr{X}}(\llbracket X|y\rrbracket)}{ m_{\mathscr{X}}(\llbracket X\rrbracket)},\\
    &\stackrel{(b)}{\geq} \delta^*,\\
    &\stackrel{(c)}{>}\delta_1,
\end{split}
 \end{equation}
 where $(a)$ follows from \eqref{eq:reeff1} and \eqref{eq:strongtransitivity}, $(b)$ follows from \eqref{eq:defDeltat}, and $(c)$ follows from the fact that $\delta_1<\delta^*$. However, by Property 2 of Definition \ref{defoverlap}, we have
 \begin{equation}\label{eq:check34}
    \frac{ m_{\mathscr{X}}(\mathscr{S}_1\cap \mathscr{S}_2)}{ m_{\mathscr{X}}(\llbracket X\rrbracket)}\leq \delta_1.
 \end{equation}
 Hence, we have that \eqref{eq:contradict1} and \eqref{eq:check34} contradict each other, which implies \eqref{eq:contrdict122} does not hold. Hence, the statement of the theorem follows.
\end{proof}

\begin{lemma}\label{lemma:cardinalityAssociation}
Consider two UVs $X$ and $Y$. Let
\begin{equation}\label{eq:neqDelta}
    \delta^*=\frac{\min_{y\in \llbracket Y\rrbracket}m_{\mathscr{X}}(\llbracket X|y\rrbracket)}{m_{\mathscr{X}}(\llbracket X\rrbracket)}.
\end{equation}
 For   all $\delta_1<\delta^*$ and $\delta_2\leq 1$,  if $(X,Y)\stackrel{a}{\leftrightarrow}(\delta_1,\delta_2)$, then  we have
 \begin{equation}
     |\llbracket X|Y\rrbracket_{\delta_1}^*|=|\llbracket Y\rrbracket|.
 \end{equation}
 Additionally, $\llbracket X|Y\rrbracket$ is a $\delta_1$-overlap family.
\end{lemma}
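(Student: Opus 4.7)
\textbf{Proof Plan for Lemma \ref{lemma:cardinalityAssociation}.} The plan is to prove the two conclusions in the order in which they best support each other: first I would identify $\llbracket X|Y\rrbracket$ itself as a $\delta_1$-overlap family, which is the second claim of the lemma and which I can essentially read off from the proof of Theorem \ref{thm:associativityPArtition}; then I would use this identification to pin down the cardinality.

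For the first part, observe that Theorem \ref{thm:associativityPArtition} (with $\delta=\delta_1$ and the second level playing the role of $\delta_2$) explicitly verifies Properties $1$–$3$ of Definition \ref{defoverlap} for the collection $\llbracket X|Y\rrbracket=\{\llbracket X|y\rrbracket:y\in \llbracket Y\rrbracket\}$ under the assumption $(X,Y)\stackrel{a}{\leftrightarrow}(\delta_1,\delta_2)$. Since Definition \ref{defoverlap} stipulates that a $\delta_1$-overlap family must be a \emph{largest} distinct family with these properties, showing that $\llbracket X|Y\rrbracket$ satisfies the three properties is enough to conclude that it has the maximal cardinality and is therefore itself a valid $\delta_1$-overlap family; in particular $|\llbracket X|Y\rrbracket^*_{\delta_1}|=|\llbracket X|Y\rrbracket|$.

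For the cardinality conclusion, the plan is to show that the map $y\mapsto \llbracket X|y\rrbracket$ is injective on $\llbracket Y\rrbracket$, so that $|\llbracket X|Y\rrbracket|=|\llbracket Y\rrbracket|$, from which the desired equality $|\llbracket X|Y\rrbracket^*_{\delta_1}|=|\llbracket Y\rrbracket|$ follows. For injectivity I argue by contradiction: suppose $y_1\neq y_2$ in $\llbracket Y\rrbracket$ but $\llbracket X|y_1\rrbracket=\llbracket X|y_2\rrbracket$. Then
\begin{equation*}
\frac{m_{\mathscr{X}}(\llbracket X|y_1\rrbracket\cap\llbracket X|y_2\rrbracket)}{m_{\mathscr{X}}(\llbracket X\rrbracket)}=\frac{m_{\mathscr{X}}(\llbracket X|y_1\rrbracket)}{m_{\mathscr{X}}(\llbracket X\rrbracket)}\geq \delta^{*}>\delta_1,
\end{equation*}
by the definition of $\delta^*$ in \eqref{eq:neqDelta} and the hypothesis $\delta_1<\delta^*$. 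But Lemma \ref{lemma:Dissassociation} applied to $(X,Y)\stackrel{a}{\leftrightarrow}(\delta_1,\delta_2)$ forces this ratio to be at most $\delta_1$, a contradiction. Hence the map is injective. One could, as a sanity check, also combine this with Lemma \ref{lemma:RelationToCardinality} to get the upper bound $|\llbracket X|Y\rrbracket^*_{\delta_1}|\leq|\llbracket Y\rrbracket|$, closing the equality from the other side.

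The only delicate point — and the one I would be most careful about — is making sure the distinctness requirement in Definition \ref{defoverlap} interacts correctly with the set-builder notation $\llbracket X|Y\rrbracket=\{\llbracket X|y\rrbracket:y\in\llbracket Y\rrbracket\}$; the injectivity step above is what guarantees that this collection already consists of $|\llbracket Y\rrbracket|$ distinct sets, so it counts as a genuine family of that size for the purposes of the overlap-family definition. Once injectivity is established, the rest of the argument is essentially bookkeeping against the properties already verified in Theorem \ref{thm:associativityPArtition}.
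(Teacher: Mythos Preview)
Your approach matches the paper's: verify Properties 1--3 of Definition~\ref{defoverlap} for $\llbracket X|Y\rrbracket$ (exactly as in Theorem~\ref{thm:associativityPArtition}), then squeeze the cardinality using Lemma~\ref{lemma:RelationToCardinality}. Your injectivity argument for $y\mapsto\llbracket X|y\rrbracket$ is in fact more careful than the paper, which simply asserts $|\llbracket X|Y\rrbracket|=|\llbracket Y\rrbracket|$ without justification.

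However, there is one logical slip you should fix. You write that ``showing that $\llbracket X|Y\rrbracket$ satisfies the three properties is enough to conclude that it has the maximal cardinality and is therefore itself a valid $\delta_1$-overlap family.'' This does not follow: satisfying Properties 1--3 only certifies that $\llbracket X|Y\rrbracket$ is a \emph{candidate} family, hence gives the lower bound $|\llbracket X|Y\rrbracket|\leq|\llbracket X|Y\rrbracket^*_{\delta_1}|$. It does not by itself rule out a strictly larger family. The upper bound $|\llbracket X|Y\rrbracket^*_{\delta_1}|\leq|\llbracket Y\rrbracket|$ from Lemma~\ref{lemma:RelationToCardinality} is therefore an essential step, not a sanity check; combined with your injectivity argument ($|\llbracket X|Y\rrbracket|=|\llbracket Y\rrbracket|$) it closes the sandwich and also establishes that $\llbracket X|Y\rrbracket$ really is a $\delta_1$-overlap family of maximal size. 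Reorder the argument so that Lemma~\ref{lemma:RelationToCardinality} is invoked as a genuine ingredient and the proof is complete.
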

\begin{proof} We  show that 
\begin{equation}
    \llbracket X|Y\rrbracket=\{\llbracket X|y\rrbracket:y\in \llbracket Y\rrbracket\}
\end{equation}
is a  $\delta_1$-overlap family.  First, note that $\llbracket X|Y\rrbracket$ is a cover of $\llbracket X\rrbracket$, since 
$\llbracket X\rrbracket=\cup_{y\in\llbracket Y\rrbracket}\llbracket X|y\rrbracket$. Second, each set in the family $\llbracket X|Y\rrbracket$  is singly $
\delta_1$-connected via $\llbracket X|Y\rrbracket$, since trivially any two points $x_1,x_2 \in \llbracket X|y \rrbracket$ are singly $\delta_1$-connected via the same set.
It follows that Property 1 of Definition~\ref{defoverlap} holds. 

Now, since  $(X, Y) \stackrel{a}{ \leftrightarrow } (\delta_1,\delta_{2})$,  then by Lemma \ref{lemma:Dissassociation} for all $y_1,y_2 \in \llbracket Y\rrbracket$  we have
\begin{equation}
    \frac{m_{\mathscr{X}}(\llbracket X|y_1\rrbracket\cap\llbracket X|y_2\rrbracket)}{m_{\mathscr{X}}(\llbracket X\rrbracket)}\leq \delta_1,
\end{equation}
which shows that Property 2 of Definition~\ref{defoverlap} holds.
Finally, it is also easy to see that Property 3 of Definition~\ref{defoverlap} holds, since $\llbracket X|Y\rrbracket$  contains all   sets $\llbracket X|y\rrbracket$. Hence, $\llbracket X|Y\rrbracket$ satisfies all the properties of $\delta_1$-overlap family, which implies
\begin{equation}\label{eq:LowBoun1}
   |\llbracket X|Y\rrbracket|\leq |\llbracket X|Y\rrbracket^*_{\delta_1}|.
\end{equation}
Since $|\llbracket X|Y\rrbracket|=|\llbracket Y\rrbracket|$, using Lemma \ref{lemma:RelationToCardinality}, we also have 
\begin{equation}\label{eq:UppBoun1}
    |\llbracket X|Y\rrbracket|\geq |\llbracket X|Y\rrbracket^*_{\delta_1}|.
\end{equation}
Combining \eqref{eq:LowBoun1}, \eqref{eq:UppBoun1} and the fact that $\llbracket X|Y\rrbracket$ satisfies all the properties of $\delta_1$-overlap family, the statement of the lemma follows. 
\end{proof}
\begin{lemma}\label{lemma:uniquenessProperty}
Consider two UVs $X$ and $Y$. Let
\begin{equation}\label{eq:neqDelta123}
    \delta^*=\frac{\min_{y\in \llbracket Y\rrbracket}m_{\mathscr{X}}(\llbracket X|y\rrbracket)}{m_{\mathscr{X}}(\llbracket X\rrbracket)}.
\end{equation}
 For   all $\delta_1<\delta^*$ and $\delta_2\leq 1$,  if $(X,Y)\stackrel{a}{\leftrightarrow}(\delta_1/|\llbracket Y\rrbracket|,\delta_2)$, then under Assumption \ref{assumption:triangleInequality}, for all $\llbracket X|y\rrbracket\in \llbracket X|Y\rrbracket$, there exists a point $x\in\llbracket X|y\rrbracket$ such that for all $\llbracket X|y^\prime\rrbracket\in \llbracket X|Y\rrbracket\setminus\{\llbracket X|y\rrbracket\}$,
 \begin{equation}\label{eq:contr0}
     x\notin \llbracket X|y^\prime\rrbracket.
 \end{equation}
\end{lemma}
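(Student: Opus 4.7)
The plan is to argue by contradiction: suppose some conditional range $\llbracket X|y\rrbracket$ contains no ``unique'' point, meaning that every $x\in\llbracket X|y\rrbracket$ also lies in at least one other $\llbracket X|y'\rrbracket$ with $y'\neq y$. I will then show this forces $\llbracket X|y\rrbracket$ to be so small that it contradicts the defining lower bound $\delta^*$ on the size of conditional ranges.

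The first step is to rewrite the negation of the claim as a covering identity. Under the contradiction hypothesis, every point of $\llbracket X|y\rrbracket$ lies in some $\llbracket X|y\rrbracket\cap \llbracket X|y'\rrbracket$, so
\begin{equation*}
\llbracket X|y\rrbracket \;=\; \bigcup_{y'\in\llbracket Y\rrbracket\setminus\{y\}} \bigl(\llbracket X|y\rrbracket\cap \llbracket X|y'\rrbracket\bigr).
\end{equation*}
The second step is to apply Assumption~\ref{assumption:triangleInequality} (extended by induction to finite unions) to obtain
\begin{equation*}
m_{\mathscr{X}}(\llbracket X|y\rrbracket) \;\leq\; \sum_{y'\neq y} m_{\mathscr{X}}\bigl(\llbracket X|y\rrbracket\cap \llbracket X|y'\rrbracket\bigr).
\end{equation*}
The third step is to invoke the association hypothesis $(X,Y)\stackrel{a}{\leftrightarrow}(\delta_1/|\llbracket Y\rrbracket|,\delta_2)$ together with Lemma~\ref{lemma:Dissassociation}, which yields
\begin{equation*}
m_{\mathscr{X}}\bigl(\llbracket X|y\rrbracket\cap \llbracket X|y'\rrbracket\bigr) \;\leq\; \frac{\delta_1}{|\llbracket Y\rrbracket|}\, m_{\mathscr{X}}(\llbracket X\rrbracket)
\end{equation*}
for every pair, so that
\begin{equation*}
m_{\mathscr{X}}(\llbracket X|y\rrbracket) \;\leq\; \frac{|\llbracket Y\rrbracket|-1}{|\llbracket Y\rrbracket|}\,\delta_1\, m_{\mathscr{X}}(\llbracket X\rrbracket) \;<\; \delta_1\, m_{\mathscr{X}}(\llbracket X\rrbracket).
\end{equation*}

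The final step is the contradiction: by the definition of $\delta^*$ in \eqref{eq:neqDelta123} we have $m_{\mathscr{X}}(\llbracket X|y\rrbracket)\geq \delta^* m_{\mathscr{X}}(\llbracket X\rrbracket)$, while the hypothesis $\delta_1<\delta^*$ gives $\delta_1 m_{\mathscr{X}}(\llbracket X\rrbracket)<\delta^* m_{\mathscr{X}}(\llbracket X\rrbracket)\leq m_{\mathscr{X}}(\llbracket X|y\rrbracket)$, so the previous display would imply $m_{\mathscr{X}}(\llbracket X|y\rrbracket)<m_{\mathscr{X}}(\llbracket X|y\rrbracket)$. This forces the existence of the desired $x\in\llbracket X|y\rrbracket$ outside every other conditional range. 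No step looks genuinely difficult; the only subtlety is that the factor $1/|\llbracket Y\rrbracket|$ in the association level is exactly what is needed so that summing the pairwise intersection bounds over at most $|\llbracket Y\rrbracket|-1$ other conditional ranges stays strictly below $\delta_1$, which is why the scaling in the hypothesis appears. The role of Assumption~\ref{assumption:triangleInequality} is essential: without a union bound on $m_{\mathscr{X}}$, one cannot aggregate the pairwise overlap estimates into a bound on $m_{\mathscr{X}}(\llbracket X|y\rrbracket)$ itself.
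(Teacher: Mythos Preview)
Your proof is correct and follows essentially the same contradiction argument as the paper: assume $\llbracket X|y\rrbracket$ is covered by its pairwise intersections with the other conditional ranges, use Assumption~\ref{assumption:triangleInequality} together with the association bound from Lemma~\ref{lemma:Dissassociation} to show $m_{\mathscr{X}}(\llbracket X|y\rrbracket)\leq\delta_1\,m_{\mathscr{X}}(\llbracket X\rrbracket)$, and contradict the definition of $\delta^*$. The only cosmetic difference is that the paper bounds the number of summands by $|\llbracket Y\rrbracket|$ rather than $|\llbracket Y\rrbracket|-1$, which still yields the needed contradiction since the lower bound is strict.
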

\begin{proof}
We will prove this by contradiction. Consider a set $\llbracket X|y\rrbracket$. Let $x$ satisfying \eqref{eq:contr0} do not exist. Then, for all $x^\prime\in\llbracket X|y\rrbracket$, there exists a set $\llbracket X|y^\prime\rrbracket\in \llbracket X|Y\rrbracket\setminus\{\llbracket X|y\rrbracket\}$ such that
\begin{equation}\label{eq:contr2}
    x^\prime\in \llbracket X|y^\prime\rrbracket.
\end{equation}
Thus, we have
\begin{equation}\label{eq:contr3}
\begin{split}
&m_{\mathscr{X}}(\cup_{\llbracket X|y^\prime\rrbracket\in \llbracket X|Y\rrbracket\setminus\{\llbracket X|y\rrbracket\}} (\llbracket X|y\rrbracket\cap \llbracket X|y^\prime\rrbracket)),\\
&\stackrel{(a)}{\geq} m_{\mathscr{X}}(\llbracket X|y\rrbracket),\\
&\stackrel{(b)}{\geq} \delta^* m_{\mathscr{X}}(\llbracket X\rrbracket),\\
&\stackrel{(c)}{>}\delta_1 m_{\mathscr{X}}(\llbracket X\rrbracket),
\end{split}
\end{equation}
where $(a)$ follows from \eqref{eq:contr2}, $(b)$ follows from \eqref{eq:neqDelta123}, and $(c)$ follows from the fact that $\delta_1<\delta^*$. On the other hand, since $(X,Y)\stackrel{a}{\leftrightarrow}(\delta_1/|\llbracket Y\rrbracket|,\delta_2)$, we have
 \begin{equation}\label{eq:contr4}
    \begin{split}
      &m_{\mathscr{X}}(\cup_{\llbracket X|y^\prime\rrbracket\in \llbracket X|Y\rrbracket\setminus\{\llbracket X|y\rrbracket\}} (\llbracket X|y\rrbracket\cap \llbracket X|y^\prime\rrbracket)),\\
      &\stackrel{(a)}{\leq} \sum_{\llbracket X|y^\prime\rrbracket\in \llbracket X|Y\rrbracket\setminus\{\llbracket X|y\rrbracket\}} m_{\mathscr{X}}( \llbracket X|y\rrbracket\cap \llbracket X|y^\prime\rrbracket),\\
      &\stackrel{(b)}{\leq} |\llbracket Y\rrbracket|\delta_1 m_{\mathscr{X}}(\llbracket X\rrbracket)/|\llbracket Y\rrbracket|,\\
      &= \delta_1 m_{\mathscr{X}}(\llbracket X\rrbracket),
    \end{split}
 \end{equation}
 where $(a)$ follows from Assumption \ref{assumption:triangleInequality},  $(b)$ follows from Lemma \ref{lemma:Dissassociation}. It follows that \eqref{eq:contr3} and \eqref{eq:contr4} contradict each other, and therefore $x$ satisfying \eqref{eq:contr0} exists. The statement of the lemma follows. 
\end{proof}

\subsection{Proof of $4$ claims in Theorem \ref{thm:sufficientConditiondeltaerror}}\label{sec:5claims}
\noindent
\textit{Proof of Claim 1.} By Property 1 of Definition~\ref{defoverlap}, we have that for all $\mathscr{S}_{i}\in\llbracket {Y}(1:n)|{X}(1:n)\rrbracket_{{\delta}^\prime/|\llbracket {X}(1:n)\rrbracket|} ^{*}$, there exists a $\tilde{x}_{i}(1:n)\in \llbracket X(1:n)\rrbracket$ such that $\llbracket Y(1:n)|\tilde{x}_{i}(1:n)\rrbracket \subseteq \mathscr{S}_{i}$. Now, consider a new UV $\tilde{X}(1:n)$ whose marginal range is composed of  elements of $\llbracket {X}(1:n)\rrbracket$, namely 
\begin{equation}\label{eq:newCodebook1}
    \llbracket \tilde{X}(1:n)\rrbracket=\{\tilde x_1(1:n),\ldots \tilde x_K(1:n)\},
\end{equation}
where
\begin{equation}\label{eq:count23}
    K=|\llbracket {Y}(1:n)|{X}(1:n)\rrbracket_{{\delta}^\prime/|\llbracket {X}(1:n)\rrbracket|} ^{*}|.
\end{equation}
Let $\tilde{Y}(1:n)$ be the UV corresponding to the  received variable. Then, similar to \eqref{eq:existence1}, by Property 2 of Definition~\ref{defoverlap} and since $N$ is stationary memoryless channel,  for all $x(1:n),x^{\prime}(1:n)\in \llbracket\tilde{X}(1:n)\rrbracket$, we have   
\begin{equation}
\begin{split}
      &\frac{m_{\mathscr{Y}}(\llbracket \tilde{Y}(1:n)|x(1:n)\rrbracket\cap \llbracket \tilde{Y}(1:n)|x^{\prime}(1:n)\rrbracket)}{m_{\mathscr{Y}}(\llbracket {Y}(1:n)\rrbracket)}\\
      &\leq \frac{{\delta}^\prime}{|\llbracket X(1:n)\rrbracket|},\\
      &\stackrel{(a)}{\leq}  \frac{{\delta}^\prime}{|\llbracket \tilde{X}(1:n)\rrbracket|},
\end{split}
\end{equation}
where $(a)$ follows from the fact that $\llbracket\tilde{X}(1:n)\rrbracket\subseteq\llbracket X(1:n)\rrbracket$. Similar to \eqref{eq:existence2},  for all $x(1:n),x^{\prime}(1:n)\in \llbracket\tilde{X}(1:n)\rrbracket$, we have that
\begin{equation}
\begin{split}
      &\frac{m_{\mathscr{Y}}(\llbracket \tilde{Y}(1:n)|x(1:n)\rrbracket\cap \llbracket \tilde{Y}(1:n)|x^{\prime}(1:n)\rrbracket)}{m_{\mathscr{Y}}(\llbracket \tilde{Y}(1:n)\rrbracket)}\\
      &\stackrel{}{\leq} 
      \frac{{\delta}^\prime m_{\mathscr{Y}}(\llbracket Y(1:n)\rrbracket)}{|\llbracket \tilde{X}(1:n)\rrbracket| m_{\mathscr{Y}}(\llbracket \tilde{Y}(1:n)\rrbracket)},\\
      &=\frac{\tilde{\delta}}{|\llbracket \tilde{X}(1:n)\rrbracket|},
\end{split}
\end{equation}
where 
\begin{equation}
\begin{split}
     \tilde{\delta}=\frac{\delta^\prime m_{\mathscr{Y}}(\llbracket Y(1:n)\rrbracket)}{m_{\mathscr{Y}}(\llbracket \tilde{Y}(1:n)\rrbracket)}.
\end{split}
\end{equation}
Then, by Lemma~\ref{lemma:Dissassociation} it follows  that 
 \begin{equation}
     (\tilde X(1:n),\tilde{Y}(1:n))\stackrel{a}{\leftrightarrow}(1,\tilde{\delta} /|\llbracket \tilde X(1:n)\rrbracket|). \label{eq:asc1}
 \end{equation} 
Using \eqref{eq:rangeDelta}, we also have
\begin{equation}\label{eq:DelUPP}
    \tilde{\delta}\leq \frac{\delta_n}{m_{\mathscr{Y}}(\llbracket \tilde{Y}(1:n)\rrbracket)}. 
\end{equation}
Additionally, we have 
 \begin{equation}\label{eq:deltaConditionSatis}
 \begin{split}
      \tilde{\delta}&\leq \frac{\delta_n}{m_{\mathscr{Y}}(\llbracket \tilde{Y}(1:n)\rrbracket)}\\
      &\stackrel{(a)}{\leq} \bigg(\frac{\bar{\delta} m_{\mathscr{Y}}(V_N)}{|\llbracket\bar{X}\rrbracket|}\bigg)^n\frac{1}{m_{\mathscr{Y}}(\llbracket \tilde{Y}(1:n)\rrbracket)}\\
      &\stackrel{(b)}{\leq} \frac{(\bar{\delta}  m_{\mathscr{Y}}(V_N))^n}{m_{\mathscr{Y}}(\llbracket \tilde{Y}(1:n)\rrbracket)},\\
      &\stackrel{(c)}{<}\frac{(m_{\mathscr{Y}}(V_N))^n}{m_{\mathscr{Y}}(\llbracket \tilde{Y}(1:n)\rrbracket)},\\
      &\stackrel{(d)}{=}\frac{m_{\mathscr{Y}}(V^n_N)}{m_{\mathscr{Y}}(\llbracket \tilde{Y}(1:n)\rrbracket)},
 \end{split}
 \end{equation}
 where $(a)$ follows from the assumption in the theorem that
 \begin{equation}
     0\leq\delta_n\leq (\bar\delta m_{\mathscr{Y}}(V_N)/|\llbracket \bar X\rrbracket|)^n,
 \end{equation}
 $(b)$ follows from the fact that $|\llbracket\bar{X}\rrbracket|\geq 1$, $(c)$ follows from the fact that using $\delta_1<m_{\mathscr{Y}}(V_N)$, we have
 \begin{equation}
     \bar{\delta}\leq \frac{\delta_1}{m_{\mathscr{Y}}(\llbracket\bar{Y}\rrbracket)}< \frac{m_{\mathscr{Y}}(V_N)}{m_{\mathscr{Y}}(\llbracket\bar{Y}\rrbracket)}\leq 1,
 \end{equation}
 and $d)$ follows from Assumption \ref{assumption:productrule}.
Now, 
we  have  
 \begin{equation}\label{eq:equalityOfFamily}
 \begin{split}
     &|\llbracket \tilde Y(1:n)|\tilde{X}(1:n)\rrbracket^{*}_{\tilde{\delta}/|\llbracket \tilde X(1:n)\rrbracket|}|\\ &\stackrel{(a)}{=}|\llbracket\tilde{X}(1:n)\rrbracket| \\
&     \stackrel{(b)}{=}  | \llbracket Y(1:n)|X(1:n)\rrbracket_{{\tilde{\delta}}/|\llbracket X(1:n)\rrbracket|}^{*}|,
 \end{split}
 \end{equation}
 where $(a)$ follows by combining \eqref{eq:asc1}, \eqref{eq:deltaConditionSatis} and Lemma 
 \ref{lemma:cardinalityAssociation}, and $(b)$ follows from \eqref{eq:newCodebook1} and \eqref{eq:count23}. This along with \eqref{eq:ProveContradiction} implies that we have
 \begin{equation}\label{eq:ProveContradiction_1}
 \begin{split}
       &|\llbracket \tilde Y(1:n)|\tilde{X}(1:n)\rrbracket^{*}_{\tilde{\delta}/|\llbracket \tilde X(1:n)\rrbracket|}|\\
       &>|\prod_{i=1}^n\llbracket \bar Y|\bar X\rrbracket_{\bar\delta/|\llbracket\bar X\rrbracket|}^*|.
 \end{split}
 \end{equation}
This concludes the proof of Claim 1.
\hfill $\square$

\noindent
\textit{Proof of Claim 2}. Since \eqref{eq:asc1} and \eqref{eq:deltaConditionSatis} holds, using Lemma 
 \ref{lemma:cardinalityAssociation}, we have 
\begin{equation}\label{eq:overlapCardinality}
\begin{split}
    &\llbracket \tilde Y(1:n)|\tilde{X}(1:n)\rrbracket^{*}_{\tilde{\delta}/|\llbracket \tilde X(1:n)\rrbracket|} \\
    &= \llbracket \tilde{Y}(1:n)|\tilde{X}(1:n)\rrbracket.
\end{split}
 \end{equation}
Using \eqref{eq:overlapCardinality} and Property 1 of Definition \ref{defoverlap}, we have that for all $\mathscr{S}\in \llbracket \tilde Y(1:n)|\tilde{X}(1:n)\rrbracket^{*}_{\tilde{\delta}/|\llbracket \tilde X(1:n)\rrbracket|}$, there exists a $\tilde{x}(1:n)\in \llbracket\Tilde{X}(1:n)\rrbracket$ such that 
 \begin{equation}\label{eq:FormOfFamily}
     \mathscr{S}=\llbracket\tilde{Y}(1:n)|\tilde{x}(1:n)\rrbracket.
 \end{equation}

Now, for all $x\in\mathscr{X}$, let $\mathscr{S}(x)\in\llbracket\bar{Y}|\bar{X}\rrbracket^*_{\Bar{\delta}/|\llbracket\Bar{X}\rrbracket|}$ be such that
\begin{equation}\label{eq:condition123}
    \llbracket\bar{Y}|x\rrbracket\subseteq \mathscr{S}(x).
\end{equation}
For all $x\in \mathscr{X}\setminus\llbracket \bar{X} \rrbracket$, the set $\mathscr{S}(x)$ exists from the assumption in the theorem. 
Also, for all $x\in \llbracket \bar{X} \rrbracket$, the set $\mathscr{S}(x)$ exists using Property 3 in Definition \ref{defoverlap}. Hence, for all $x\in\mathscr{X}$, we have that $\mathscr{S}(x)$ satisfying \eqref{eq:condition123} exists. 

Hence, for all $\tilde{x}(1:n)\in\llbracket\tilde{X}(1:n)\rrbracket$, we have that 
\begin{equation}\label{eq:subseteq}
\begin{split}
    &\llbracket\tilde Y(1:n)|\tilde x(1:n)\rrbracket\\
    &\stackrel{(a)}{=}\llbracket \tilde Y(1)|
    \tilde x(1)\rrbracket\times \ldots\times\llbracket \tilde Y(n)|\tilde x(n)\rrbracket,\\
    &\stackrel{(b)}{\subseteq} \mathscr{S}(x(1))\times \ldots\times \mathscr{S}(x(n)),\\
    &\stackrel{(c)}{\in} \prod_{i=1}^n\llbracket \bar{Y}|\bar{X}\rrbracket_{\bar\delta/|\llbracket \bar X\rrbracket|}^*,
\end{split}
\end{equation}
where $(a)$ follows from the fact that $N$ is a stationary memoryless uncertain channel, $(b)$ follows from the fact that for all $x\in\mathscr{X}$,  $\mathscr{S}(x)$ exists, and $(c)$ follows from the fact that for all $x\in\mathscr{X}$,
$\mathscr{S}(x)\in\llbracket\bar{Y}|\bar{X}\rrbracket^*_{\Bar{\delta}/|\llbracket\Bar{X}\rrbracket|}$.  Hence, {Claim 2} is proved.
\hfill $\square$

\noindent
\textit{Proof of Claim 3}.
Combining \eqref{eq:ProveContradiction_1} and \eqref{eq:overlapCardinality}, we have that
\begin{equation}
    |\llbracket\tilde{Y}(1:n)|\tilde{X}(1:n)\rrbracket|>\prod_{i=1}^n\llbracket \bar{Y}|\bar{X}\rrbracket_{\bar\delta/|\llbracket \bar{X} \rrbracket|}^*.
\end{equation}
This  along with  \eqref{eq:subseteq} implies that  there exists a set $\mathscr S\in \prod_{i=1}^n\llbracket \bar{Y}|\bar{X}\rrbracket_{\bar\delta/|\llbracket \bar{X} \rrbracket|}^*$ which contains at least two sets $\mathscr D_1, \mathscr D_2 \in \llbracket \tilde{Y}(1:n)|\tilde{X}(1:n)\rrbracket^{*} _{\tilde{\delta}/|\llbracket \tilde X(1:n)\rrbracket|}$, namely
\begin{equation}\label{eq:subset2}
    \mathscr D_1\subset \mathscr{S},
\end{equation}
\begin{equation}\label{eq:subset3}
    \mathscr D_2\subset \mathscr{S}. 
\end{equation}
Using \eqref{eq:FormOfFamily}, without loss of generality, let 
\begin{equation}\label{eq:singlySet1}
    \mathscr D_1=\llbracket \Tilde{Y}(1:n)|\tilde{x}_1(1:n)\rrbracket,
\end{equation}
\begin{equation}\label{eq:singlySet2}
    \mathscr D_2=\llbracket \Tilde{Y}(1:n)|\tilde{x}_2(1:n)\rrbracket. 
\end{equation}
Also, let 
\begin{equation}\label{eq:cartesianProdSets}
    \mathscr S=\mathscr S_1\times\ldots \times \mathscr S_n,
\end{equation}
where $\mathscr S_1,\ldots, \mathscr S_n \in \llbracket \bar{Y}|\bar{X}\rrbracket_{\bar\delta/|\llbracket \bar{X} \rrbracket|}^*$.
Also, we have
\begin{equation} \label{eq:RangeDel}
    \frac{\bar \delta}{|\llbracket\bar{X}\rrbracket|}\leq \bar{\delta}\leq \frac{\delta_1}{m_{\mathscr{Y}}(\llbracket\bar Y\rrbracket)}<\frac{m_{\mathscr{Y}}(V_N )}{m_{\mathscr{Y}}(\llbracket\bar Y\rrbracket)}.
\end{equation}
{
Now, we have
\begin{equation}\label{eq:lowerBoundConnectivity}
\begin{split}
     \frac{\tilde\delta}{|\llbracket \tilde X(1:n)\rrbracket|}&\leq \frac{{\delta}_n}{m_{\mathscr{Y}}(\llbracket\tilde{Y}(1:n)\rrbracket)},\\
     &\stackrel{(a)}{\leq} \bigg(\frac{\bar\delta m_{\mathscr{Y}}(V_N)}{|\llbracket \bar X\rrbracket|}\bigg)^n\frac{1}{m_{\mathscr{Y}}(\llbracket\tilde{Y}(1:n)\rrbracket)},\\
     &\stackrel{(b)}{\leq} \bigg(\frac{\bar\delta}{|\llbracket \bar X\rrbracket|}\bigg)^n,
\end{split}
\end{equation} 
where $(a)$ follows from the assumption in the theorem that 
\begin{equation}
    \delta_n\leq \bigg(\frac{\bar\delta m_{\mathscr{Y}}(V_N)}{|\llbracket \bar X\rrbracket|}\bigg)^n, 
\end{equation}
and $(b)$ follows from the fact that using Assumption \ref{assumption:productrule}, we have
\begin{equation}
    m_{\mathscr{Y}}(V_{N}^n)= (m_{\mathscr{Y}}(V_{N}))^n\leq m_{\mathscr{Y}}(\llbracket\tilde{Y}(1:n)\rrbracket).
\end{equation}
Combining  Lemma \ref{lemma:Dissassociation} and \eqref{eq:asc1}, we have
\begin{equation}\label{eq:maxOverlap1}
\begin{split}
     &\frac{m_{\mathscr{Y}}(\llbracket \tilde{Y}(1:n)|\tilde{x}_1(1:n)\rrbracket \cap \llbracket \tilde{Y}(1:n)|\tilde{x}_2(1:n)\rrbracket)}{m_{\mathscr{Y}}(\llbracket\tilde{Y}(1:n)\rrbracket)}\\ 
     &\leq \frac{\tilde\delta}{|\llbracket \tilde X(1:n)\rrbracket|}\\
     &\stackrel{(a)}{\leq} \bigg(\frac{\bar{\delta}}{|\llbracket\bar{X}\rrbracket|}\bigg)^n,
\end{split}
\end{equation}
where $(a)$ follows from   \eqref{eq:lowerBoundConnectivity}.
This implies that there exists a $1\leq i^*\leq n$ such that 
\begin{equation}\label{eq:boundOverlap}
    \frac{m_{\mathscr{Y}}(\llbracket \tilde{Y}(i^*)|\tilde{x}_1(i^*)\rrbracket \cap \llbracket \tilde{Y}(i^*)|\tilde{x}_2(i^*)\rrbracket)}{(m_{\mathscr{Y}}(\llbracket\tilde{Y}(1:n)\rrbracket))^{1/n}}\leq \frac{\bar{\delta}}{|\llbracket\bar{X}\rrbracket|},
\end{equation}
otherwise \eqref{eq:maxOverlap1} does not hold, namely
\begin{equation}
\begin{split}
&\frac{m_{\mathscr{Y}}(\llbracket \tilde{Y}(1:n)|\tilde{x}_1(1:n)\rrbracket \cap \llbracket \tilde{Y}(1:n)|\tilde{x}_2(1:n)\rrbracket)}{m_{\mathscr{Y}}(\llbracket\tilde{Y}(1:n)\rrbracket)}\\
&\stackrel{(a)}{=}  \prod_{i=1}^n\bigg(\frac{m_{\mathscr{Y}}(\llbracket \tilde{Y}(i)|\tilde{x}_1(i)\rrbracket \cap \llbracket \tilde{Y}(i)|\tilde{x}_2(i)\rrbracket)}{(m_{\mathscr{Y}}(\llbracket\tilde{Y}(1:n)\rrbracket))^{1/n}} \bigg),\\
&\stackrel{(b)}{>} \bigg(\frac{\bar{\delta}}{|\llbracket\bar{X}\rrbracket|}\bigg)^n,
\end{split}
\end{equation}
where $(a)$ follows from Assumption \ref{assumption:productrule} and the fact that $N$ is stationary memoryless, $(b)$ follows from the hypothesis that $i^*$ satisfying \eqref{eq:boundOverlap} does not exist.
\hfill $\square$
}

\noindent
\textit{Proof of Claim 4.}  
Now, consider a UV   $X^\prime$ such that
\begin{equation}\label{eq:XprimeDef}
\begin{split}
    \llbracket X^\prime\rrbracket&=(\llbracket\bar{X}\rrbracket\setminus \{x\in \mathscr{X}:\llbracket Y|x\rrbracket\subseteq \mathscr S_{i^*}\} )\\
     &\qquad\cup \{\tilde{x}_1(i^*)\} \cup\{\tilde{x}_2(i^*)\}. 
\end{split}
\end{equation}
For $\llbracket X^{\prime}_1\rrbracket= (\llbracket \bar{X}\rrbracket\setminus \{x \in \mathscr{X}:\llbracket Y|x\rrbracket\subseteq \mathscr S_{i^*}\} )$, the 
$\delta^\prime_1$- overlap family of $\llbracket Y^\prime_1|X^\prime_1\rrbracket$ satisfies 
\begin{equation}
   |\llbracket \bar Y|\bar X\rrbracket^*_{\bar\delta/|\llbracket \bar X\rrbracket
|}|-1\stackrel{
(a)}{\leq} |\llbracket Y^{\prime}_1|X^{\prime}_1\rrbracket^*_{\delta^\prime_1}|,
\end{equation}
where 
\begin{equation}
    \delta^\prime_1=(\bar\delta m_{\mathscr{Y}}(\llbracket\bar{Y}\rrbracket))/(|\llbracket \bar X\rrbracket
| m_{\mathscr{Y}}(\llbracket{Y}^\prime_1\rrbracket),
\end{equation}
and 
$(a)$ follows from the fact that
\begin{equation}\label{eq:SetOneOne}
    \mathcal{S}_1=\{\mathscr{S}^\prime\in \llbracket \bar Y|\bar X\rrbracket^*_{\bar\delta/|\llbracket \bar X\rrbracket|}:\mathscr{S}^\prime\neq \mathscr{S}_{i^*}\}
\end{equation} 
satisfies all the properties of $\llbracket Y_1^\prime|X_1^\prime\rrbracket^*_{\delta^\prime_1}$ in Definition \ref{defoverlap}.

Now,  consider the UV $X^\prime$ such that $\llbracket X^\prime\rrbracket=\llbracket X^\prime_1\rrbracket\cup  \{\tilde{x}_1(i^*)\} \cup\{\tilde{x}_2(i^*)\}$. We will show that 
\begin{equation}\label{eq:S_3Def}
\begin{split}
    \mathcal{S}_3
    &=\mathcal{S}_1\cup\{\llbracket \tilde{Y}(i^*)|\tilde{x}_1(i^*)\rrbracket\}\cup\{\llbracket \tilde{Y}(i^*)|\tilde{x}_2(i^*)\rrbracket\}.
\end{split}
\end{equation}
satisfies the property of $\llbracket Y^\prime |X^\prime\rrbracket^*_{\delta^*/|\llbracket\Bar{X}^\prime\rrbracket|}$, where
\begin{equation}
    \delta^*= \frac{\Bar{\delta} |\llbracket {X}^\prime\rrbracket| m_{\mathscr{Y}}(\llbracket\bar{Y}\rrbracket)}{|\llbracket\Bar{X}\rrbracket| m_{\mathscr{Y}}(\llbracket{Y}^\prime\rrbracket)}.
\end{equation}
Using \eqref{eq:subset2}, \eqref{eq:subset3} and Claim 2, we have 
\begin{equation}\label{eq:SetContain1}
    \llbracket \tilde{Y}(i^*)|\tilde{x}_1(i^*)\rrbracket, \llbracket \tilde{Y}(i^*)|\tilde{x}_2(i^*)\rrbracket\subseteq\mathscr{S}_{i^*}.
\end{equation}
This along with the fact that $\llbracket \bar Y|\Bar{X}\rrbracket_{\Bar{\delta}/|\llbracket \Bar{X}\rrbracket|}$ is an overlap family implies that for all $\mathscr{S}^\prime\in \mathcal{S}_1$, 
\begin{equation}\label{eq:S1_2}
    m_{\mathscr{Y}}(\llbracket \tilde{Y}(i^*)|\tilde{x}_1(i^*)\rrbracket\cap \mathscr{S}^\prime)\leq \frac{\Bar{\delta} m_{\mathscr{Y}}(\llbracket\Bar{Y}\rrbracket)}{|\llbracket\bar X\rrbracket|},
\end{equation}
and 
\begin{equation}\label{eq:S1_3}
    m_{\mathscr{Y}}(\llbracket \tilde{Y}(i^*)|\tilde{x}_2(i^*)\rrbracket\cap \mathscr{S}^\prime)\leq \frac{\Bar{\delta} m_{\mathscr{Y}}(\llbracket\Bar{Y}\rrbracket)}{|\llbracket\bar X\rrbracket|}.
\end{equation}
Also,  we have that
\begin{equation}\label{eq:S1_4}
\begin{split}
 & m_{\mathscr{Y}}(\llbracket \tilde{Y}(i^*)|\tilde{x}_1(i^*)\rrbracket\cap\llbracket \tilde{Y}(i^*)|\tilde{x}_2(i^*)\rrbracket)\\
 &\stackrel{(a)}{\leq} \frac{\bar{\delta}(m_{\mathscr{Y}}(\llbracket \tilde{Y}(1:n)\rrbracket))^{1/n}}{|\llbracket\bar{X}\rrbracket|}\\
 &\stackrel{(b)}{\leq} \frac{\Bar{\delta} (m_{\mathscr{Y}}(\llbracket\Bar{Y}(1:n)\rrbracket))^{1/n}}{|\llbracket\bar X\rrbracket|}\\
 &\stackrel{(c)}{=} \frac{\Bar{\delta}   m_{\mathscr{Y}}(\llbracket\Bar{Y}\rrbracket)}{|\llbracket\bar X\rrbracket|}, 
\end{split}
\end{equation}
where $(a)$ follows from \eqref{eq:boundOverlap},  
$(b)$ follows from \eqref{eq:strongtransitivity} and  $\llbracket\tilde{Y}(1:n)\rrbracket\subseteq\llbracket Y(1:n)\rrbracket$ by Claim 2, and $(c)$ follows from Assumption \ref{assumption:productrule} and \eqref{eq:Def2}.  Additionally, $\llbracket \tilde{Y}(i^*)|\tilde{x}_1(i)\rrbracket$ and $\llbracket \tilde{Y}(i^*)|\tilde{x}_2(i)\rrbracket$ are singly ${\delta}^*/|\llbracket
\Bar{X}^\prime\rrbracket|$ connected sets. This along with \eqref{eq:S1_2}, \eqref{eq:S1_3} and \eqref{eq:S1_4} implies that $\mathcal{S}_3$ satisfies all the properties of $\llbracket Y^\prime|X^\prime\rrbracket^*_{\delta^*/|\llbracket X^\prime\rrbracket|}$. 
It follows that
\begin{equation} \label{eq:lowerBoundCheck1}
\begin{split}
     |\llbracket Y^\prime|X^\prime\rrbracket^*_{\delta^*/|\llbracket X^\prime\rrbracket|}|&\geq |\mathcal{S}_3|,\\
     &\stackrel{(a)}{=}|\mathcal{S}_1|+2,\\
     &\stackrel{(b)}{=} |\llbracket\bar{Y}|\bar{X}\rrbracket^*_{\bar{\delta}/|\llbracket\bar{X}\rrbracket|}|+1,
\end{split}
\end{equation}
where $(a)$ follows from  \eqref{eq:S_3Def}, and $(b)$ follows from
\eqref{eq:SetOneOne}. 

Now, we will show that
\begin{equation}\label{eq:Bound3}
    |\llbracket X^\prime\rrbracket|\leq |\llbracket \bar{X}\rrbracket|+1.
\end{equation}
We split the analysis into two mutually exclusive cases: $\tilde{x}_1(i^*)\in \llbracket\bar{X}\rrbracket$ or 
$\tilde{x}_2(i^*)\in \llbracket\bar{X}\rrbracket$; and $\tilde{x}_1(i^*), \tilde{x}_2(i^*) \notin \llbracket\bar{X}\rrbracket$. 
In the first case, if $\tilde{x}_1(i^*)\in \llbracket\bar{X}\rrbracket$ or 
$\tilde{x}_2(i^*)\in \llbracket\bar{X}\rrbracket$, then using \eqref{eq:XprimeDef}, we have
\begin{equation}\label{eq:Bound1}
    |\llbracket X^\prime\rrbracket|\leq |\llbracket \bar{X}\rrbracket|+1.
\end{equation}
In the second case, if $\tilde{x}_1(i^*), \tilde{x}_2(i^*) \notin \llbracket\bar{X}\rrbracket$, then using \eqref{eq:SetContain1}, there exists a non-empty set $\mathscr{P}\subseteq \llbracket \bar{X}\rrbracket$ such that 
\begin{equation}\label{eq:claim1}
    \llbracket \tilde{Y}(i^*)|\tilde{x}_1(i^*)\rrbracket\cup \llbracket \tilde{Y}(i^*)|\tilde{x}_2(i^*)\rrbracket\subseteq \cup_{x\in\mathscr{P}} \llbracket \bar{Y}|x\rrbracket.
\end{equation}
Also, there exists a $x^\prime\in\mathcal{P}$ such that
\begin{equation}\label{eq:xprimeExist}
     \llbracket \bar{Y}|x^\prime\rrbracket\subseteq \mathscr{S}_{i^*}.
\end{equation}
This can be proved by contradiction. Let $x^\prime\in\mathcal{P}$ satisfying \eqref{eq:xprimeExist} does not exist. We have \begin{equation}\label{eq:contradiction1}
\begin{split}
     &m_{\mathscr{Y}}(\cup_{\mathscr{{S}}^\prime\in \llbracket \bar{Y}|\bar{X}\rrbracket_{\bar\delta/|\llbracket \bar{X} \rrbracket|}^*:\mathscr{{S}}^\prime\neq \mathscr{S}_{i^*}}(\mathscr{S}_{i^*}\cap \mathscr{{S}}^\prime))\\
     &\stackrel{(a)}{\geq} m_{\mathscr{Y}}(\cup_{x: x\in\mathcal{P}}(\mathscr{S}_{i^*} \cap \llbracket \bar{Y}|{x}\rrbracket)),\\
     &\stackrel{(b)}{\geq} m_{\mathscr{Y}}(\llbracket \tilde{Y}(i^*)|\tilde{x}_1(i^*)\rrbracket\cup \llbracket \tilde{Y}(i^*)|\tilde{x}_2(i^*)\rrbracket),\\
     &\stackrel{(c)}{\geq} m_{\mathscr{Y}}(V_N)\\
     &\stackrel{(d)}{>}\delta_1\\
     &\stackrel{(e)}{\geq} \bar\delta  m_{\mathscr{Y}}(\llbracket \bar Y\rrbracket),\\
\end{split}
\end{equation}
where $(a)$ follows from the fact that combining $\mathcal{P}\subseteq\llbracket X\rrbracket$, Property 3 of Definition \ref{defoverlap}, and the hypothesis that $x^\prime$ does not exist, we have
\begin{equation}
     \cup_{x\in\mathscr{P}} \llbracket \bar{Y}|x\rrbracket\subseteq \mathscr{{S}}^\prime\in \cup_{\llbracket \bar{Y}|\bar{X}\rrbracket_{\bar\delta/|\llbracket \bar{X} \rrbracket|}^*:\mathscr{{S}}^\prime\neq \mathscr{S}_{i^*}} \mathscr{{S}}^\prime,
\end{equation}
$(b)$ follows from \eqref{eq:SetContain1} and \eqref{eq:claim1}, 
$(c)$ follows from \eqref{eq:strongtransitivity} and the fact that for all $x\in\mathscr{X}$,
\begin{equation}
    m_{\mathscr{Y}}(V_N)\leq m_{\mathscr{Y}}(\llbracket Y|x\rrbracket),
\end{equation}
 $(d)$ follows from the fact that $\delta_1<m_{\mathscr{Y}}(V_N)$, and $(e)$ follows from the fact that $\Bar{\delta}\leq \delta_1/m_{\mathscr{Y}}(\llbracket\bar{Y}\rrbracket)$. On the other hand, since $\llbracket\Bar{Y}|\bar{X}\rrbracket^*_{\bar{\delta}/|\llbracket\Bar{X}\rrbracket|}$ is an overlap family,  we have
\begin{equation}\label{eq:contradiction2}
\begin{split}
&m_{\mathscr{Y}}(\cup_{\mathscr{{S}}^\prime\in \llbracket \bar{Y}|\bar{X}\rrbracket_{\bar\delta/|\llbracket \bar{X} \rrbracket|}^*:\mathscr{{S}}^\prime\neq \mathscr{S}_{i^*}}(\mathscr{S}_{i^*}\cap \mathscr{{S}}^\prime)),\\
&\stackrel{(a)}{\leq}\sum_{\mathscr{{S}}^\prime\in \llbracket \bar{Y}|\bar{X}\rrbracket_{\bar\delta/|\llbracket \bar{X} \rrbracket|}^*:\mathscr{{S}}^\prime\neq \mathscr{S}_{i^*}} m_{\mathscr{Y}}(\mathscr{S}_{i^*}\cap \mathscr{{S}}^\prime),\\
&\stackrel{(b)}{\leq}\frac{\Bar{\delta}  |\llbracket \Bar{Y}|\bar{X}\rrbracket^*_{\Bar{\delta}/|\llbracket\bar{X}\rrbracket|}| m_{\mathscr{Y}}(\llbracket\bar Y\rrbracket)}{|\llbracket\Bar{X}\rrbracket|},\\
&\stackrel{(c)}{\leq} \Bar{\delta} m_{\mathscr{Y}}(\llbracket\bar Y\rrbracket)
\end{split}
\end{equation}
where $(a)$ follows from Assumption \ref{assumption:triangleInequality}, $(b)$ follows from Property 2 of Definition \ref{defoverlap}, and $(c)$ follows from the fact that using \eqref{eq:RangeDel}, Lemma \ref{lemma:RelationToCardinality} holds. Hence, \eqref{eq:contradiction1} and \eqref{eq:contradiction2} contradict each other, which implies $x^\prime$ satisfying \eqref{eq:xprimeExist} exists.
Now, using \eqref{eq:xprimeExist} and \eqref{eq:XprimeDef} ,  we have that 
\begin{equation}\label{eq:Bound2}
    |\llbracket X^\prime\rrbracket|\leq |\llbracket \bar{X}\rrbracket|+1.
\end{equation}
Hence, \eqref{eq:Bound3} holds. 

Finally, we have
\begin{equation}\label{eq:deltaBoundUpper}
\begin{split}
\delta^*&=\frac{\Bar{\delta}|\llbracket{X}^\prime\rrbracket| m_{\mathscr{Y}}(\llbracket\bar{Y}\rrbracket)}{|\llbracket\Bar{X}\rrbracket| m_{\mathscr{Y}}(\llbracket{Y}^\prime\rrbracket)},\\
&\stackrel{(a)}{\leq }\frac{\Bar{\delta}  m_{\mathscr{Y}}(\llbracket\bar{Y}\rrbracket)}{ m_{\mathscr{Y}}(\llbracket{Y}^\prime\rrbracket)}\bigg(1+\frac{1}{|\llbracket\bar{X}\rrbracket|}\bigg),\\
&\stackrel{(b)}{\leq} \frac{\delta_1}{ m_{\mathscr{Y}}(\llbracket{Y}^\prime\rrbracket)},
\end{split}
\end{equation}
where $(a)$ follows from \eqref{eq:Bound3}, and $(b)$ follows from the assumption in the theorem that $\Bar{\delta}(1+1/|\llbracket\bar{X}\rrbracket|)\leq \delta_1/m_{\mathscr{Y}}(\llbracket\bar{Y}\rrbracket)$. 
Now, using \eqref{eq:lowerBoundCheck1} and \eqref{eq:deltaBoundUpper}, we have that there exists a $\delta^*\leq \delta_1/m_{\mathscr{Y}}(\llbracket{Y}^\prime\rrbracket)$ such that
\begin{equation}
    |\llbracket Y^\prime|X^\prime\rrbracket^*_{\delta^*/|\llbracket X^\prime\rrbracket|}|>|\llbracket\bar{Y}|\bar{X}\rrbracket^*_{\bar{\delta}/|\llbracket\bar{X}\rrbracket|}|.
\end{equation}
This concludes the proof of Claim 4. 
\hfill $\square$

\subsection{Taxicab symmetry of the mutual information}\label{sec:taxicabSection}
\begin{definition}\label{defn:taxicanConn}{$(\delta_1,\delta_2)$-taxicab connectedness and $(\delta_1,\delta_2)$-taxicab isolation.}
\begin{itemize}
\item Points $(x,y),(x^\prime,y^\prime)\in \llbracket X, Y\rrbracket$ are $(\delta_{1},\delta_{2})$-taxicab connected via $\llbracket X,Y\rrbracket$, and are denoted by $(x,y)\stackrel{\delta_{1},\delta_{2}}{\leftrightsquigarrow} (x^\prime,y^\prime)$, if there exists a finite  sequence $\{(x_{i},y_{i})\}_{i=1}^{N}$ of points in $\llbracket X,Y \rrbracket$  such that $(x,y)=(x_1,y_1)$, $(x^\prime,y^\prime)=(x_N,y_N)$ and for all $2<i\leq N$, we have either
\begin{equation*}\label{eq:condition1}
\begin{split}
       A_{1}&=\{x_{i}=x_{i-1} \, \mbox{\emph{and} } \, \frac{m_{\mathscr{X}}(\llbracket X|y_{i}\rrbracket\cap \llbracket X|y_{i-1}\rrbracket)}{m_{\mathscr{X}}(\llbracket X\rrbracket)}>\delta_{1}\},
\end{split}
\end{equation*}
or
\begin{equation*}\label{eq:condition2}
\begin{split}
       A_{2}&=\{y_{i}=y_{i-1}  \, \mbox{\emph{and} } \,\frac{m_{\mathscr{Y}}(\llbracket Y|x_{i}\rrbracket\cap \llbracket Y|x_{i-1}\rrbracket)}{m_{\mathscr{Y}}(\llbracket Y\rrbracket)}>\delta_{2}\}.
\end{split}
\end{equation*}
If $(x,y)\stackrel{\delta_{1},\delta_{2}}{\leftrightsquigarrow} (x^\prime,y^\prime)$ and  $N=2$, then we say that $(x,y)$ and $(x^\prime,y^\prime)$ are  singly $(\delta_{1},\delta_{2})$-taxicab connected, i.e. either $y=y^{\prime}$ and $x,x^\prime\in \llbracket X|y\rrbracket$ or $x=x^{\prime}$ and $y,y^\prime\in \llbracket Y|x\rrbracket$.
 \item A set $\mathscr{S}\subseteq\llbracket X,Y\rrbracket$ is (singly) $(\delta_{1},\delta_{2})$-taxicab connected via $\llbracket X,Y\rrbracket$ if every pair of points in the set is (singly) $(\delta_{1},\delta_{2})$-taxicab connected in $\llbracket X, Y\rrbracket$. 
\item Two sets $\mathscr{S}_1,\mathscr{S}_2\subseteq\llbracket X,Y\rrbracket$ are $(\delta_{1},\delta_{2})$-taxicab isolated via $\llbracket X,Y\rrbracket$ if no point in $\mathscr{S}_1$ is  $(\delta_{1},\delta_{2})$-taxicab connected to any point in $\mathscr{S}_2$.
\end{itemize}
\end{definition}
\begin{definition}
 Projection of a set\\
 \begin{itemize}
     \item  The projection $\mathscr{S}^+_x$ of a set $\mathscr{S}\subseteq\llbracket X,Y\rrbracket$ on the $x$-axis is defined as
     \begin{equation}
    \mathscr{S}^+_x=\{x:(x,y)\in\mathscr{S}\}. 
\end{equation}
\item The projection $\mathscr{S}^+_y$ of a set $\mathscr{S}\subseteq\llbracket X,Y\rrbracket$ on the $y$-axis is defined as
     \begin{equation}
    \mathscr{S}^+_y=\{y:(x,y)\in\mathscr{S}\}. 
\end{equation}
 \end{itemize}
\end{definition}

 \begin{definition}{$(\delta_1,\delta_2)$-taxicab family}\label{def:taxicabFamily}\\
 A   $(\delta_{1},\delta_{2})$-taxicab family of $\llbracket X,Y\rrbracket$, denoted by $\llbracket X,Y\rrbracket^*_{(\delta_1,\delta_2)}$, is a largest family of distinct sets covering $\llbracket X,Y\rrbracket$ such that:
 \begin{enumerate}
     \item Each set in the family is  $(\delta_{1},\delta_{2})$-taxicab connected and contains at least one  singly $\delta_1$-connected set of form $\llbracket X|y\rrbracket\times \{y\}$, and at least one singly $\delta_2$-connected set of the form $\llbracket Y|x\rrbracket\times \{x\}$.
     \item The measure of overlap between the projections on the $x$-axis and $y$-axis of any two distinct sets in the family are at most $\delta_{1} m_{\mathscr{X}}(\llbracket X\rrbracket)$ and $\delta_{2} m_{\mathscr{Y}}(\llbracket Y\rrbracket)$ respectively.
     \item For every singly $(\delta_1,\delta_2)$-connected set, there exists a set in the family containing it. 
 \end{enumerate}
 \end{definition}
We now show that when   $( X, Y)\stackrel{d}{\leftrightarrow}(\delta_{1},\delta_{2})$ hold, the cardinality of $(\delta_1,\delta_2)$-taxicab family is same as the cardinality of the $\llbracket X|Y\rrbracket$ $\delta_1$-overlap family and  $\llbracket Y|X\rrbracket$ $\delta_2$-overlap family. 
 
\noindent
\\
{\textit{Proof of Theorem \ref{corr:symmetricPropOfInfo}}}

\noindent
We will show that $|\llbracket X,Y\rrbracket_{(\delta_1,\delta_2)}^*|=|\llbracket X|Y\rrbracket^*_{\delta_1}|$. Then, $|\llbracket X,Y\rrbracket_{(\delta_1,\delta_2)}^*|=|\llbracket Y|X\rrbracket^*_{\delta_2}|$ can be derived along the same lines. Hence, the statement of the theorem follows. 

First, we will show that 
\begin{equation}
    \mathcal{D}=\{\mathscr{S}^+_x: \mathscr{S}\in\llbracket X,Y\rrbracket_{(\delta_1,\delta_2)}^*\},
\end{equation}
satisfies all the properties of  $\llbracket X|Y\rrbracket^*_{\delta_1}$.

Since $\llbracket X,Y\rrbracket_{(\delta_1,\delta_2)}^*$ is a covering of $\llbracket X,Y\rrbracket$, we have
\begin{equation}\label{eq:CoveringofX}
    \cup_{\mathscr{S}^+_x\in\mathcal{D}}\mathscr{S}^+_x=\llbracket X\rrbracket,
\end{equation}
which implies $\mathcal{D}$ is a covering of $\llbracket X\rrbracket$.

Consider a set $\mathscr{S}\in \llbracket X,Y\rrbracket_{(\delta_1,\delta_2)}^*$. For all $(x,y),(x^\prime,y^\prime)\in\mathscr{S}$, $(x,y)$ and $(x^\prime,y^\prime)$ are $(\delta_{1},\delta_{2})$-taxicab connected. Then, there exists a taxicab sequence of the form
\[(x,y),(x_{1},y),(x_{1},y_{1}),\ldots (x_{n-1},y^{\prime}),(x^\prime,y^\prime).\]
such that either $A_{1}$ or $A_{2}$ in Definition \ref{defn:taxicanConn} is true. 
Then, the sequence $\{y,y_1,\ldots,y_{n-1},y^\prime\}$ yields a sequence of conditional range $\{\llbracket X|\tilde{y}_{j}\rrbracket\}_{j=1}^{n+1}$ such that for all $1<j\leq n+1$,
\begin{equation}
    \frac{m_{\mathscr{X}}(\llbracket X|\tilde y_{j}\rrbracket\cap \llbracket X|\tilde y_{j-1}\rrbracket)}{m_{\mathscr{X}}(\llbracket X\rrbracket)}>\delta_{1},
\end{equation}
\begin{equation}
    x\in \llbracket X|\tilde{y}_1\rrbracket,\mbox{ and } x^\prime\in \llbracket X|\tilde{y}_{n+1}\rrbracket.
\end{equation}
 Hence, $x\stackrel{\delta_1}{\leftrightsquigarrow}x^\prime$  via $\llbracket X|Y\rrbracket$.
 Hence,  $\mathscr{S}^+_x$
 is $\delta_{1}$-connected via $\llbracket X|Y\rrbracket$. Also, $\mathscr{S}$ contains at least one singly $\delta_1$-connected set of the form $\llbracket X|y\rrbracket\times\{y\}$, which implies $\llbracket X|y\rrbracket\subseteq\mathscr{S}^+_x$. Hence, $\mathscr{S}^+_x$ contains at least one singly $\delta_1$-connected set of the form $\llbracket X|y\rrbracket$. Hence, $\mathcal{D}$ satisfies Property 1 in Definition \ref{defoverlap}.
 
 For all $\mathscr{S}_1,\mathscr{S}_2\in\llbracket X,Y\rrbracket_{(\delta_1,\delta_2)}^*$, we have 
 \begin{equation}
     m_{\mathscr{X}}(\mathscr{S}_{1,x}^+\cap \mathscr{S}_{2,x}^+)\leq \delta_1 m_{\mathscr{X}}(\llbracket X\rrbracket),
 \end{equation}
 using Property 2 in Definition \ref{def:taxicabFamily}. Hence, $\mathcal{D}$ satisfies Property 2 in Definition \ref{defoverlap}. 
 
Using Property 3 in Definition \ref{def:taxicabFamily}, we have that for all $\llbracket X|y\rrbracket\times\{y\}$, there exists a set $\mathscr{S}(y)\in \llbracket X,Y\rrbracket_{(\delta_1,\delta_2)}^*$ containing it. This implies that for all $\llbracket X|y\rrbracket\in \llbracket X|Y\rrbracket$, we have 
\begin{equation}
    \llbracket X|y\rrbracket\subseteq \mathscr{S}(y)^+_x. 
\end{equation}
Hence, $\mathcal{D}$ satisfies Property 3 in Definition \ref{defoverlap}.

Thus, $\mathcal{D}$ satisfies all the three properties of  $\llbracket X|Y\rrbracket_{\delta_1}^{*}$. This implies along with Theorem \ref{lemma:overlap} that 
\begin{equation}\label{eq:uppBound1}
    |\mathcal{D}|= |\llbracket X|Y\rrbracket_{\delta_1}^*|,
\end{equation}
which implies
\begin{equation}
    |\llbracket X,Y\rrbracket^*_{(\delta_1,\delta_2)}|=|\llbracket X|Y\rrbracket_{\delta_1}^*|. 
\end{equation}
Hence, the statement of the theorem follows.
\hfill $\square$
\end{document}